\documentclass[11pt]{article}
\usepackage{amsmath}
\usepackage{amsthm}
\usepackage{amssymb}
\usepackage{algorithm}
\usepackage{color}
\usepackage[english]{babel}
\usepackage{graphicx}
\usepackage{grffile}
\usepackage{wrapfig,epsfig}
\usepackage{epstopdf}
\usepackage{url}
\usepackage{color}
\usepackage{epstopdf}
\usepackage{algpseudocode}
\usepackage[T1]{fontenc}
\usepackage{bbm}
\usepackage{comment}
\usepackage{dsfont}
\usepackage{thm-restate}
\usepackage{bm}
\usepackage{tcolorbox}
\usepackage{subcaption}

\usepackage{enumitem}

\usepackage{tikz}
\usetikzlibrary{arrows}

\usepackage[margin=1in]{geometry}

\graphicspath{{./figs/}}
\usepackage{mathtools}

\newtheorem{theorem}{Theorem}[section]
 
\newtheorem{lemma}[theorem]{Lemma}
\newtheorem{definition}[theorem]{Definition}

\newtheorem{proposition}[theorem]{Proposition}

\newtheorem{observation}[theorem]{Observation}

\newtheorem{remark}[theorem]{Remark}
\newtheorem*{remark*}{Remark}
\newtheorem{claim}[theorem]{Claim}

\newcommand{\wh}{\widehat}
\newcommand{\wt}{\widetilde}

\newcommand{\eps}{\epsilon}
\newcommand{\N}{\mathcal{N}}

\renewcommand{\varepsilon}{\epsilon}
\renewcommand{\tilde}{\wt}
\renewcommand{\hat}{\wh}
\renewcommand{\bar}{\overline}
\renewcommand{\eps}{\epsilon}

\newcommand{\nil}{\mathsf{nil}}

\newcommand{\D}{\mathcal{D}}
\newcommand{\mH}{\mathcal{H}}

\newcommand{\mU}{\mathcal{U}}

\newcommand{\mP}{\mathcal{P}}
\newcommand{\mE}{\mathcal{E}}
\newcommand{\mF}{\mathcal{F}}
\newcommand{\mL}{\mathcal{L}}
\newcommand{\mQ}{\mathcal{Q}}
\newcommand{\mN}{\mathcal{N}}

\newcommand{\mI}{\mathcal{I}}
\newcommand{\mW}{\mathcal{W}}
\newcommand{\suc}{\mathsf{suc}}

\newcommand{\IC}{\mathsf{IC}}

\DeclareMathOperator*{\E}{{\mathbb{E}}}

\DeclareMathOperator{\ALG}{ALG}

\DeclareMathOperator{\poly}{poly}
\DeclareMathOperator{\polylog}{polylog}
\DeclareMathOperator{\argmax}{argmax}

\DeclareMathOperator{\out}{out}

\usepackage{braket}

\algnewcommand\algorithmicforeach{\textbf{for each}}
\algdef{S}[FOR]{ForEach}[1]{\algorithmicforeach\ #1\ \algorithmicdo}

\makeatletter
\newcommand*{\RN}[1]{\expandafter\@slowromancap\romannumeral #1@}
\makeatother

\title{Near Optimal Memory-Regret Tradeoff for Online Learning}
\author{}
\author{  
Binghui Peng\footnote{Supported by NSF CCF-1703925, IIS-1838154, CCF-2106429, CCF-2107187, CCF-1763970, AF2212233, COLL2134095, COLL2212745} \\ Columbia University \\  \texttt{bp2601@columbia.edu} 
\and Aviad Rubinstein\footnote{Supported by NSF CCF-1954927, and a David and Lucile Packard Fellowship} \\ Stanford University \\ \texttt{aviad@cs.stanford.edu}
}
\date{}

\begin{document}
\maketitle

\begin{abstract}
In the experts problem, on each of $T$ days, an agent needs to follow the advice of one of $n$ ``experts''. After each day, the loss associated with each expert's advice is revealed. 
A fundamental result in learning theory says that the agent can achieve vanishing regret, i.e. their cumulative loss is within $o(T)$ of the cumulative loss of the best-in-hindsight expert.

Can the agent perform well without sufficient space to remember all the experts?
We extend a nascent line of research on this question in two directions:
\begin{enumerate}
    \item We give a new algorithm against the oblivious adversary, improving over the memory-regret tradeoff obtained by [PZ23], and nearly matching the lower bound of [SWXZ22].
    \item We also consider an adaptive adversary who can observe past experts chosen by the agent. In this setting we give both a new algorithm and a novel lower bound, proving that roughly $\sqrt{n}$ memory is both necessary and sufficient for obtaining $o(T)$ regret. 
\end{enumerate}
\end{abstract}

\setcounter{page}{0}
\thispagestyle{empty}

\newpage
\section{Introduction}
\label{sec:intro}
Consider the {\em online learning} problem of an agent who has to make decisions in an unknown and dynamically changing environment. The standard discrete abstraction of this problem considers a sequence of $T$ days; every morning $n$  {\em experts} offer their ``advice'' to the agent, who chooses to follow one of them; every night the agent observes the loss incurred by her choice as well as all the other choices she could have made%
\footnote{The model where the agent learns the loss-in-hindsight on actions she didn't take is called {\em full feedback} . In this work we do not study the alternative {\em bandit} model where the agent has to explore the different actions to learn their loss distribution.}%
. 
The benchmark for this problem is minimization of {\em regret}, aka
$$ (\text{Loss of agent's actions}) - (\text{Loss of single best-in-hindsight expert}).$$

Amazingly, simple algorithms like multiplicative weights update (MWU) achieve total regret that is asymptotically vanishing as a fraction of the total possible loss, even in the face of an adversarially chosen loss sequence. 
This celebrated result has numerous applications in both theory and practice, including boosting \cite{freund1997decision}, learning in games \cite{freund1999adaptive,cesa2006prediction}, approximate algorithm for max flow \cite{christiano2011electrical} and etc.

Recently, Srinivas, Woodruff, Xu and Zhou~\cite{woodruff2022memory} initiated the study of memory bounds for online learning. They gave a near-tight characterization of the memory-regret tradeoff in the case where each day the loss is drawn independently from the same distribution~(alternatively, there is a fixed set of loss vectors that arrives in a random order). In particular, they gave an algorithm that obtains vanishing regret while only using $\polylog(nT)$ memory. 
\cite{woodruff2022memory} posed as an open question what memory-regret tradeoff is possible without the i.i.d.~or random order assumptions. The first progress on this problem was obtained by Peng and Zhang~\cite{peng2022online} who gave a low memory algorithm with vanishing regret. 
Our first contribution is a new online learning algorithm in this setting, improving over~\cite{peng2022online}'s regret-memory tradeoff, and essentially matching~\cite{woodruff2022memory}'s lower bound.\footnote{
	Our space-efficient algorithms do not need to access the entire loss vector at once ---  (single-pass) streaming access suffices. 
	Meanwhile, our lower bound continues to hold against stronger algorithms that have arbitrary internal memory on each day for processing the loss vector, and are only limited in how much information they store between days.}

\begin{restatable}[Algorithm, oblivious adversary]{theorem}{Oblivious}
	\label{thm:oblivious-main}
	Let $n , T$ be sufficiently large integers and let $\polylog (nT) \leq S \leq n$. 
	There is an online learning algorithm that uses at most $S$ space and achieves $\widetilde{O}\left(\sqrt{\frac{nT}{S}}\right)$ regret against an oblivious adversary, with probability at least $1-1/(nT)^{\omega(1)}$.
\end{restatable}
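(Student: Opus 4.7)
The plan is to construct an epoched Hedge algorithm. I partition $[T]$ into $E$ epochs of equal length $L = T/E$, chosen with $E = \widetilde\Theta(n/S)$ (so $L = \widetilde\Theta(TS/n)$). During each epoch $e$, the algorithm maintains an active set $A_e \subseteq [n]$ of size $\Theta(S)$ and runs standard Hedge over $A_e$, which incurs regret $O(\sqrt{L \log S})$ per epoch against the best expert in $A_e$. Summed over $E$ epochs the total in-epoch Hedge regret is $E \cdot O(\sqrt{L \log S}) = \widetilde O(\sqrt{nT/S})$, which already matches the target bound up to polylogarithmic factors. The memory footprint --- $\Theta(S)$ expert identifiers plus $\Theta(S)$ Hedge weights, i.e.\ $O(S\log(nT))$ bits --- fits in $S$ space whenever $S = \Omega(\polylog(nT))$.

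Between epochs, $A_{e+1}$ is obtained from $A_e$ by retaining a handful of \emph{defenders} (experts with the smallest tracked cumulative loss since entry) and repopulating the rest with fresh candidates. Setting $E = \widetilde\Theta(n/S)$ makes a union bound ensure that the globally best expert $i^*$ is sampled into some $A_e$ with probability $1 - 1/(nT)^{\omega(1)}$. The main obstacle is then to bound the loss on the many epochs in which $i^* \notin A_e$: on those days the best expert in $A_e$ can be as bad as an arbitrary expert, so a naive random refresh would pay a trivial $\Omega(T)$ regret before $i^*$ is ever retained, which is far worse than the target.

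To overcome this, the plan is to exploit two features of the setting. First, \emph{the algorithm may examine the entire loss vector each day} --- only the information carried between days is limited --- so between epochs the active set may be repopulated via an adaptive screening rule rather than pure uniform sampling; concretely, a probation rule of the form ``a newly sampled candidate joins $A_{e+1}$ only if its empirical loss over a short burn-in window lies below a quantile threshold of the observed losses'' sharply reduces the number of epochs until $i^*$ is both sampled and retained as a defender. Second, I use the \emph{obliviousness} of the adversary: since the loss sequence is fixed independently of the algorithm's randomness, standard martingale (Azuma/Bernstein) concentration bounds the deviation of empirical losses from their long-run averages by $\widetilde O(\sqrt{L})$ per epoch, which in turn controls the gap between the defender's cumulative loss and that of $i^*$. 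Putting everything together --- in-epoch Hedge regret, an exploration regret kept to $\widetilde O(\sqrt{nT/S})$ via the screening mechanism, and the concentration-based defender analysis --- yields the stated $\widetilde O(\sqrt{nT/S})$ bound; the high-probability guarantee $(nT)^{-\omega(1)}$ is obtained by running all concentration steps at a boosted confidence level and union-bounding over the $O(T)$ events.

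The piece I expect to be most delicate is quantifying the probation/screening step: one must show that, with only $\widetilde O(1)$-day burn-ins, $i^*$ passes screening w.h.p.\ while typical non-best experts fail, all against an oblivious adversarial loss sequence. If a clean ``quantile threshold'' rule does not suffice, a fallback is a multi-scale variant of the scheme with geometrically increasing probation lengths, so that short probations filter crude outliers cheaply while longer probations refine the defender set as more loss is accumulated; the total exploration regret remains $\widetilde O(\sqrt{nT/S})$ by telescoping the per-scale contributions.
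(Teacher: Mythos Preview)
Your epoch decomposition and per-epoch Hedge budget are set up correctly: with $E=\widetilde\Theta(n/S)$ epochs of length $L=\widetilde\Theta(TS/n)$, the in-pool regret $E\cdot O(\sqrt{L\log S})=\widetilde O(\sqrt{nT/S})$ is exactly the target. You also identify the real obstacle, namely bounding the loss on epochs where $i^*\notin A_e$.

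The gap is in your proposed fix. A short burn-in ``probation'' that admits a candidate only if its loss on a $\widetilde O(1)$-day window is below a quantile threshold does not work against an oblivious adversary, because there is no reason $i^*$ should look good on short windows. The adversary may make $i^*$ incur loss $1$ on an initial $\sqrt{T}$-length prefix and loss $0$ thereafter; $i^*$ is globally optimal but fails every short screening test in the first phase. Conversely, experts that are only $o(1)$ worse than $i^*$ on average cannot be separated from $i^*$ by any window shorter than $\Theta(T)$. Obliviousness plus martingale concentration only tells you that empirical losses track their own time averages, not that $i^*$'s short-window loss is small relative to other experts'. Your multi-scale fallback inherits the same problem: no probation length short of $\Theta(L)$ distinguishes $i^*$ from near-competitors, and at that length you have already paid $\Theta(L)$ regret per epoch before admitting $i^*$.

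The paper avoids trying to identify $i^*$ at all. Its eviction rule compares each pooled expert against a strong benchmark---the per-epoch best-in-pool loss over the expert's lifetime---and evicts experts that are ``covered'' by this benchmark. The regret analysis then splits epochs into \emph{good} (where, had $i^*$ been inserted, it would eventually be covered---so the pool already matches $i^*$ without containing it) and \emph{bad} (where $i^*$ would survive forever). Bad epochs are bounded by $\widetilde O(n)$ because each one gives $i^*$ an independent $\Omega(1/n)$ chance to enter the pool and stay, while the pool size is provably $\polylog(nT)$. Making this argument rigorous requires a delicate ``robustification'' (the active/passive expert distinction and the randomized \textsc{Merge} procedure) so that hypothetically inserting $i^*$ does not perturb the rest of the pool. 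Achieving the optimal rate then requires a further multi-thread amortization and a low-memory interval-regret subroutine (\textsc{MonocarpicExpert}); finally the general $S$ follows from a grouping trick (run the $\polylog$-space algorithm on $\widetilde O(S)$ groups of $\widetilde O(n/S)$ experts each), not from directly maintaining an $S$-sized pool as you propose.
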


Note that, up to polylogarithmic factors, this memory-regret tradeoff is tight, hence resolving the open question posed by~\cite{woodruff2022memory}. 
It is also interesting to note that~\cite{woodruff2022memory}'s lower bound holds for the special case of i.i.d.~generated loss. Thus our algorithm implies that the much more general oblivious adversary is actually not strictly harder.

All the low memory algorithms discussed so far hold against an {\em oblivious} adversary who fixes the loss sequence in advance. But an even more amazing property of the classical MWU algorithm is that it achieves vanishing regret even in the face of an {\em adaptive adversary} that can update the current loss vector depending on the agent's past choices. This extension to adaptive adversary is crucial for analyzing settings where the agent {\em interacts} with the environment, as is the case in the most important applications to game theory (see e.g.~\cite[Remark 3.2]{Roughgarden13}).

We give an online learning algorithm with sublinear memory against an adaptive adversary. Specifically, our algorithm can achieve $o(T)$ regret while using only $\tilde{O}(\sqrt{n})$ memory.

\begin{restatable}[Algorithm, adaptive adversary]{theorem}{Adaptive}
	\label{thm:adaptive-upper-main}
	Let $n, T$ be sufficiently large integers. There is an online learning algorithm that uses up to $S$ space and achieves $\wt{O}\left(\max\left\{\sqrt{\frac{nT}{S}}, \frac{\sqrt{n}T}{S}\right\}\right)$ regret against an adaptive adversary, with probability at least $1-1/(nT)^{\omega(1)}$.
\end{restatable}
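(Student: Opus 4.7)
The plan is to design a phase-based algorithm and combine it with a lightweight ``incumbent'' mechanism to cope with the adaptive adversary. Partition the $T$ rounds into epochs of length $L$, to be tuned at the end. At the beginning of each epoch, use private randomness to sample a fresh pool $\mathcal{P}$ of roughly $S/2$ experts uniformly at random from $[n]$ and run classical multiplicative-weights (MWU) on $\mathcal{P}$ throughout the epoch. The remaining memory maintains an ``incumbent'' list of $O(S/2)$ experts that have performed well in past epochs, together with approximate cumulative-loss estimates. Within each round, play a hedged mixture of the within-pool MWU distribution and the incumbent list, using a meta-MWU between the two super-arms ``pool'' and ``incumbent''.

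The first building block is the within-epoch regret of MWU. Standard analysis gives $\wt{O}(\sqrt{L \log S})$ regret against the best expert in $\mathcal{P}$, and this bound holds even against adaptive adversaries since the MWU update only uses past losses and plays. Applied inside each epoch, this yields a cumulative ``in-pool'' regret of $\wt{O}((T/L)\sqrt{L \log S}) = \wt{O}(\sqrt{T^2 \log S / L})$. The second ingredient is a cross-epoch argument. Because $\mathcal{P}$ is a uniformly random $S$-subset drawn from private randomness, the best-in-hindsight expert $i^*$ lies in $\mathcal{P}$ in roughly an $S/n$ fraction of the epochs. In such ``good'' epochs MWU produces a low-loss trajectory for $i^*$, and the incumbent-update rule promotes $i^*$ (or an equally good substitute) into the incumbent list. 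A short verification sub-phase at the end of each epoch, with probes chosen by private randomness, then demotes any incumbent whose estimated loss drifts upward, preventing the adaptive adversary from spoofing bad experts as incumbents.

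Combining these, the regret decomposes as
\[
\wt{O}\!\left( \sqrt{\frac{T^2 \log S}{L}} \right) \;+\; \wt{O}\!\left( \frac{T\sqrt{n}}{S}\right),
\]
where the first term is the within-epoch MWU contribution and the second is the price of maintaining and following the incumbent list: each round in which $i^*$ is not yet locked into the incumbent costs $\wt{O}(\sqrt{n}/S)$ on expectation over the pool randomness. Optimizing $L = \Theta(TS/n)$ balances the two terms and yields the claimed $\wt{O}(\max\{\sqrt{nT/S},\, \sqrt{n}T/S\})$ bound.

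The main obstacle is the interaction between the adaptive adversary and the incumbent: once the agent plays the incumbent with non-negligible probability, the adversary observes this and can attempt to raise its loss in later rounds. This will be handled by (i) shielding the incumbent's identity with a small explicit exploration distribution on $\mathcal{P}$, so that the adversary cannot cheaply distinguish incumbent-plays from exploration-plays; (ii) freezing the incumbent's loss estimate between scheduled verifications, so that a transient adversarial spike cannot immediately trigger a costly switch; and (iii) a martingale concentration argument, taken over the agent's private randomness, that bounds the total loss the adversary can extract from the incumbent across the $T$ rounds. Pushing this through the standard potential-function proof of MWU then closes the analysis against the adaptive adversary.
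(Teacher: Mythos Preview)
Your proposal has a genuine gap at exactly the point you flag as ``the main obstacle.'' Once the incumbent is played with non-negligible probability, the adaptive adversary learns its identity from the action sequence and can assign it loss $1$ forever after; none of your three mitigations (i)--(iii) prevents this. Shielding via an exploration distribution on $\mathcal{P}$ does not hide which arm was actually pulled, freezing the loss estimate does not stop the adversary from degrading the incumbent's true future loss, and a martingale argument over private randomness cannot recover once the incumbent's identity has leaked into the adversary's filtration. Consequently, the regret decomposition you write down is not supported: the claim that ``each round in which $i^*$ is not yet locked into the incumbent costs $\wt{O}(\sqrt{n}/S)$'' has no derivation, and in fact in epochs where $i^*\notin\mathcal{P}$ (a $1-S/n$ fraction) the per-epoch regret can be $\Theta(L)$, which already sums to $\Theta(T)$.

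The paper's algorithm avoids this trap by separating \emph{observation} from \emph{commitment}. Its {\sc LongExpert} component samples a fresh set of $\Theta(\sqrt{n}/\eps)$ experts each epoch and merely \emph{watches} their losses (without playing them), so the adversary never learns which experts are being evaluated. An expert is promoted to the pool only if it beats the {\sc RandomExpert} baseline by a margin; once promoted, it is kept for $\Theta(\sqrt{n})$ epochs. The regret argument is then an amortization: for the adversary to make the algorithm suffer on one epoch, it must ensure some top expert is far better than the baseline, but then with probability $\Theta(1/(\eps\sqrt{n}))$ that expert is privately sampled and locked in for $\sqrt{n}$ epochs, so the adversary must ``punish'' it for $\sqrt{n}$ epochs to neutralize it, paying $\Theta(1/\eps)$ in the benchmark's loss per unit of regret it inflicts. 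The {\sc RandomExpert} side uses a separate fresh pool \emph{per day} (not per epoch) so that adaptivity gives the adversary nothing on that component. Your proposal has neither the observe-without-play mechanism nor the amortization argument, and both are essential.
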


Finally, we give a novel lower bound for the adaptive adversary case, showing that $\tilde{\Omega}(\sqrt{n})$ memory is also necessary for guaranteeing vanishing regret. 
\begin{theorem}[Lower bound against adaptive adversary]
	\label{thm:lower-main}
	Let $\eps \in (\frac{1}{\sqrt{n}}, 1)$, any online learning algorithm guaranteeing $o(\eps^2 T)$ regret against an adaptive adversary requires space $\tilde{\Omega}(\frac{\sqrt{n}}{\eps})$.
\end{theorem}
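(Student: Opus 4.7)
By Yao's minimax principle, it suffices to exhibit a distribution over adaptive adversaries such that any deterministic algorithm using $S = o(\sqrt{n}/(\eps \polylog(nT)))$ memory bits incurs expected regret $\Omega(\eps^2 T)$. The plan is to construct an adaptive ``hide-and-seek'' adversary structured as $k = \tilde{\Theta}(\sqrt{n}/\eps)$ epochs of length $B = T/k$, in each of which the algorithm must identify a secret good expert from a freshly drawn pool of $\sqrt{n}$ candidates.

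The adversary samples a good expert $g^* \in [n]$ uniformly at random as the fixed best in hindsight. In each epoch $\ell$, it samples a random pool $R_\ell \subseteq [n]$ of size $\sqrt{n}$ with $g^* \in R_\ell$, and uses a short $O(\log n)$-round burn-in (negligible regret) to reveal $R_\ell$ to the algorithm: experts outside get loss $1$, those inside get loss $1/2$. For the remaining $B - O(\log n)$ rounds, $g^*$ has loss $1/2 - \eps$, other experts in $R_\ell$ have loss $1/2$, and experts outside $R_\ell$ remain dominated. Adaptivity is used to correlate the identities of the distractors in $R_\ell$ with the algorithm's past plays: based on the pattern of plays, the adversary identifies the $O(S)$ experts the algorithm appears to track and ensures $R_\ell \setminus \{g^*\}$ is drawn among ``fresh'' experts that the algorithm has no prior bias toward.

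The per-epoch regret bound is standard: unless the algorithm plays $g^*$ on at least a $1 - O(\eps)$ fraction of rounds in epoch $\ell$, its regret within the epoch is $\Omega(\eps^2 B)$, which sums to $\Omega(\eps^2 T)$ over a constant fraction of failing epochs. The memory lower bound reduces to an information-theoretic claim: to avoid failing, the algorithm must maintain estimates about $g^*$'s identity at $\eps$-precision across the adversary's adaptive refresh of $R_\ell$, and a careful chain-rule mutual-information accounting yields the $\tilde{\Omega}(\sqrt{n}/\eps)$ lower bound. This is formalized via a reduction to a multi-round streaming communication game, where the adversary's adaptivity encodes $k$ interactive rounds and the algorithm's $S$-bit memory serves as the inter-round channel; a direct-sum / round-elimination argument then produces the desired bound.

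The main obstacle is ensuring the adversary's adaptive refresh of $R_\ell$ defeats any algorithmic strategy that attempts to amortize memory across epochs: a priori, an algorithm could identify $g^*$ in one epoch, cache its $\log n$-bit identity, and play optimally thereafter. The proof must formalize that the adaptive coupling of $R_\ell$ with the algorithm's past plays prevents such caching, via a posterior-entropy argument showing that the distribution of $g^*$ conditioned on the algorithm's $S$-bit state and observed transcript does not concentrate sharply enough for cross-epoch reuse unless $S = \tilde{\Omega}(\sqrt{n}/\eps)$.
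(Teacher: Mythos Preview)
Your construction has a single fixed best expert $g^*$ that is present in every pool $R_\ell$ and always receives loss $1/2-\eps$. You explicitly identify the obstacle that an algorithm could learn $g^*$ once and cache its $O(\log n)$-bit identity forever---and indeed this obstacle is fatal to your approach. Your proposed fix, that the adversary ``correlate distractors with past plays'' and draw $R_\ell\setminus\{g^*\}$ from fresh experts, does nothing: an algorithm that knows $g^*$ simply plays $g^*$ every round and ignores the distractors entirely, achieving zero regret with $O(\log n)$ memory. No adaptive choice of the \emph{other} experts in $R_\ell$ can hurt an algorithm that already knows and commits to $g^*$. The posterior-entropy argument you sketch cannot go through, because the posterior over $g^*$ can in fact concentrate after a single epoch using only $O(\log n)$ bits.

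The paper's construction avoids this by a fundamentally different design. Instead of one hidden expert, there are $N=\sqrt{n}/\eps$ special experts---one per block, each the intersecting coordinate of an independent \textsc{SetDisjointness} instance of size $\eps\sqrt{n}$. Adaptivity is used not to refresh distractors but to \emph{punish recently played blocks}: any block the algorithm has touched in the current epoch has its loss fixed to $1/2$, so the algorithm cannot keep replaying a single good expert. To achieve low loss, the algorithm must commit $\Omega(N)$ \emph{distinct} intersecting indices within an epoch. The lower bound then goes through a three-party communication problem (a prophet Charlie sends $S$ bits of advice encoding the memory state, then Alice and Bob simulate the epoch), followed by a guessing trick that removes the advice at the cost of $2^{-O(S\log n)}$ success probability, and finally the direct-product theorem of Braverman et al.\ for $N$ copies of \textsc{SetDisjointness}. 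The many-experts structure and the communication-complexity machinery are the missing ideas in your plan.
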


\begin{remark*}[Running time]
	The focus of this paper is memory. We note that both algorithms (Theorem \ref{thm:oblivious-main} and Theorem \ref{thm:adaptive-upper-main}) are computationally efficient ---   the running time (per round) scales linearly with the memory. 
\end{remark*}

\subsection{Related work}

\paragraph{Comparison with previous work~\cite{woodruff2022memory, peng2022online}} The work of \cite{woodruff2022memory,peng2022online} are closely related to us and we provide a detailed discussion.
Srinivas, Woodruff, Xu and Zhou \cite{woodruff2022memory} initiate the study of memory bounds for online learning and provide a regret lower bound of $\Omega(\sqrt{\frac{nT}{S}})$ that holds even for i.i.d.~loss sequence. 
Their lower bound reduces from the coin problem, and it is proved via a direct-sum argument and a strong data-processing inequality. 
Both the construction and the proof technique are quite different in comparison with our lower bound for adaptive adversary. \cite{woodruff2022memory} also provide an algorithm with $O(\sqrt{\frac{nT}{S}})$ regret, when the loss sequence arrives in {\em random order}. It is not surprising that their analysis heavily relies on the random nature of loss sequence. 

Peng and Zhang \cite{peng2022online} provide the first sublinear space online learning algorithm for oblivious adversary. 
Their algorithm obtains a total regret of $\wt{O}(n^2T^{\frac{2}{2+\delta}})$ using $n^{\delta}$ space (in concurrent work,~\cite{aamand2023improved} improved this regret bound to $\wt{O}(n^2T^{\frac{1}{1+\delta}})$). Our baseline algorithm shares some similarity with \cite{peng2022online,aamand2023improved}, but as we shall discuss soon, the key component, i.e., the eviction rule is quite different. Our robustifying procedure, amortization technique, and low memory ``monocarpic'' experts 
procedure are all new components that enable us to obtain near-optimal tradeoffs for the oblivious adversary. 

Peng and Zhang \cite{peng2022online} also provide a space lower bound of $\Omega(\min\{\eps^{-1}\log n, n\})$ for any algorithm with $O(\eps T)$ regret, against a {\em strong} adaptive adversary.
As noted in \cite{peng2022online}, the strong adversary could see the distribution/mixed strategy of the algorithm in the {\em current} round, which is stronger than the adaptive adversary usually considered in the literature.
Our lower bound is quantitatively stronger, and holds even against the weaker and more standard notion of adaptive adversary.
In terms of technique, \cite{peng2022online} establish their lower bound via a counting argument and a connection to learning in games (which requires the strong adversary assumption). In contrast, our lower bound instance is constructed based on the standard notion of adaptive adversary who can attack the algorithm's recent actions;  the proof relies on direct-product theorems from communication complexity.
In fact, using our methodology, it is easy to derive an $\Omega(n)$ lower bound for any sub-linear regret algorithm $o(T)$, for the strong adaptive adversary (see Section \ref{sec:strong-adap-app}).

\paragraph{Comparison with work of~\cite{woodruff2023streaming} on adaptive adversary}
\cite{woodruff2023streaming} also provide lower bounds and algorithms for an adaptive adversary, but both the set up and the technique are quite different from ours.
In particular: (1) They focus on the binary prediction task, while we study the general model where an adversary observes the specific experts chosen by algorithm in previous rounds. (2) They consider an additional parameter, $M$, the number of incorrect binary predictions made by the best expert.  In this setting they give a tight bound on the space-regret tradeoff for deterministic algorithms, and give a randomized algorithm that can improve over this tradeoff when $M = o(T/\log^2 n)$. Even though the settings are different, note that our algorithm for the adaptive adversary obtains sub-linear regret with sub-linear memory without any assumption on the loss of the best expert, and that our lower bound holds against randomized algorithms.

\paragraph{Online learning} 
The classic MWU algorithm has shown to achieve optimal regret \cite{littlestone1989weighted,ordentlich1998cost} for online learning with experts advice.
As a meta approach, it has found a wide range of applications in algorithm design and analysis, includes boosting \cite{freund1997decision}, learning in games \cite{freund1999adaptive,cesa2006prediction}, approximate algorithm for max flow \cite{christiano2011electrical} and etc \cite{garber2016sublinear,klivans2017learning,hopkins2020robust}. 
See the survey \cite{arora2012multiplicative} for a complete treatment.

The MWU algorithm falls into the broader framework of following the randomized/perturbed leader \cite{kalai2005efficient,hazan2016introduction}, a general framework that subsumes almost all existing online learning algorithms (except \cite{peng2022online}).
As observed in \cite{woodruff2022memory,peng2022online}, identifying even a (constant) approximately best expert requires $\Omega(n)$ memory, hence new ideas are required for a low memory online learning algorithm. 
The online learning problem has also been studied in other context, examples including oracle efficient algorithm \cite{hazan2016computational,dudik2020oracle}, smooth adversary \cite{nika2022smooth, nika2022oracle, adam2022smooth} and bandit feedback \cite{auer2002nonstochastic,bubeck2012regret}.
Other extended regret notions have been considered, including second order regret \cite{koolen2015second}, interval regret \cite{hazan2007adaptive}, sleeping expert \cite{freund1997using,blum2007external} and others \cite{besbes2015non,chen2020hedging,anagnostides2022near,wei2018more,bubeck2019improved}.

\paragraph{Memory bounds for learning}Memory bounds have been established in {\em stochastic} multi-arm bandit and pure exploration \cite{liau2018stochastic, assadi2020exploration,maiti2021multi, agarwal2022sharp}. They are not worst case or adversarial in nature. 

The role of memory for learning has been extensively studied in the past decade, including time-space trade-off for parity learning \cite{raz2017time,raz2018fast,garg2018extractor,garg2019time,garg2021memory}, memory bounds for convex optimization \cite{marsden2022efficient,blanchard2023quadratic}, continual learning \cite{chen2022memory}, generalization \cite{feldman2020does, brown2021memorization} and many others \cite{steinhardt2016memory, sharan2019memory, gonen2020towards,brown2022strong,dagan2018detecting}.

Communication complexity has been a useful approach for establishing memory lower bound. In particular, our lower bound proof critically utilizes the machinary of direct-product theorem \cite{raz1995parallel,holenstein2007parallel,braverman2015interactive,braverman2013direct,klauck2010strong} and the information cost of set disjointness \cite{bar2004information, sherstov2014communication}.

\subsection{Discussion}
\label{sec:future}

In this work we study the regret-memory tradeoffs for the experts problem which is a fundamental problem on its own. 
The experts problem is also used as a sub-routine in numerous applications in game theory \cite{freund1999adaptive}, machine learning \cite{klivans2017learning} and optimization \cite{plotkin1995fast}, it is interesting to see if our results lead to sub-space algorithms for any of those applications.

In many settings there are further structures over the set of experts (e.g. bounded Littlestone dimension) or the loss vector has succinct representation. These structural assumptions open opportunities of better memory-regret trade-off, even beyond our worst-case bounds. Understanding the memory-regret tradeoffs of such important special cases of the experts problem is an exciting direction for future work.\footnote{Note that even with structural assumptions that allow storing the payouts on any day succinctly, naively remembering those payouts over $T$ days would not be space efficient for large $T$.}

\section{Overview of techniques}

We provide a high level overview on our techniques. 

\subsection{Algorithm for oblivious adversary}

We focus on the case of $S = \polylog(nT)$ and aim for a regret bound of $\wt{O}\left(\sqrt{nT}\right)$, the extension to a general dependence on $S$ is obtained via a grouping trick that is deferred to the end.

\subsubsection{The baseline building block}
We first present a baseline subroutine that obtains a total regret of $\wt{O}(nB + \frac{T}{\sqrt{B}})$. 
The baseline algorithm proceeds epoch by epoch, with a total of $\frac{T}{B}$ epochs and  $B$ days in each epoch. 
The algorithm maintains a pool $\mP$ of experts and runs MWU over them in each epoch; the pool is fixed within the epoch. 
The pool is updated every epoch, it randomly samples and adds $O(1)$ experts at the beginning, and the algorithm prunes the pool at the end.
So far this is the same basic setup as~\cite{peng2022online}:  the algorithm is now competitive to the best experts in the pool and the task is to maintain a good set of experts.

\paragraph{Eviction rule -- A first attempt} 
The first key step and the deviation from previous work of \cite{peng2022online} is a new eviction rule. 
Recall we have only polylogarithmic space and need to keep a small set of core experts.
For each expert $i \in \mP$ we compare its aggregate loss over its lifetime in the pool with the following very powerful benchmark:
\begin{quote}
	{\bf The Covering Benchmark:} Consider a hypothetical algorithm that in each epoch switches to the single best (in hindsight) expert in the pool {\em for that epoch}. Our benchmark is the total aggregate loss of this hypothetical algorithm\footnote{Contrast this benchmark with the standard notion of regret where the benchmark is a single expert for the entire stream.}. 
\end{quote} 
If the expert does not have a strictly better loss than this benchmark, we say that it is {\em covered} by the other experts in the pool and it is subject to eviction. 
Intuitively, while this benchmark is strong, it suffices that an expert is covered by the pool to guarantee that running MWU over the pool would have low regret compared to that expert.

\begin{itemize}
	\item {\bf Memory.} The immediate advantage of this powerful benchmark is the saving of memory. 
	For a set of experts that survive the eviction, it is easy to see that the suffix spanned by the last $r$ experts has an overall advantage of at least $2^{r-1}$.
	This indicates $|\mP|\leq O(\log T)$, otherwise the maximum loss exceeds $T$. 
	In other words, only $O(\log T)$ experts survive after the pruning step!
	Finally, we note that to avoid recording the loss of each epoch, it actually suffices to record the pairwise loss and the memory scales quadratically with the size of the pool.
	
	\item {\bf Regret analysis (wishful) } Let $i^{*}$ be the best expert. For the regret analysis, we divide the entire epochs into two types. An epoch is ``good'', if had we sampled  $i^{*}$ in this epoch, it will eventually be covered and evicted from the pool. Consider the interval from the current good epoch until $i^*$ is kicked out of the pool. We want to argue the algorithm's performance on the entire interval is comparable to $i^*$  even if $i^*$ is never sampled. An epoch is ``bad'', if had we sampled  $i^{*}$ in this epoch, it will never be kicked out from the pool. We have no control over the regret in a bad epoch, 
	but intuitively we can only suffer a few bad epochs before we sample $i^{*}$ and keep it forever.
\end{itemize}

There is a fatal issue in formalizing the above outlined approach.
We want to analyse the regret by considering a hypothetical situation that $i^{*}$ is sampled into the pool at a given epoch, but the eviction time could depend on experts enter lately in the pool. 
That is to say, we need to fix the entire sequence of selected experts when defining the good/bad epochs. However, once $i^{*}$ is really sampled into the pool, it interferes with the rest of pool, because adding $i^{*}$ into the pool might kick out other experts, the kicked out expert might allow the pool to keep other experts (that were supposed to be removed), etc. This issue seems subtle but fixing it turns out to be technically very challenging!

\paragraph{A robust algorithm} We depict the high level intuition for fixing the above issue. The overall idea is to make the entire algorithm insensitive or ``robust'' to the addition of a single new expert in any epoch. 
Instead of performing the pruning step in every epoch, we maintain $L +1 = O(\log T)$ disjoint sub-pools $\mP = \mP_0 \cup \cdots \cup \mP_{L}$, where the $\ell$-th sub-pool $\mP_{\ell}$ ($\ell \in [0:L]$) contains experts added in the most recent $2^{\ell}$ epochs, and it is merged into the higher sub-pool $\mP_{\ell+1}$ every $2^{\ell}$ epochs. This modification is critical for a robust algorithm as it ensures an expert can potentially participate in the eviction process at most $O(\log T)$ times (instead of every epoch). 

Given two adjacent sub-pools $\mP_{\ell}, \mP_{\ell+1}$, the merging step needs to ensure (1) the output sub-pool has small size; and (2) the outcome should be insensitive to a single change\footnote{At this stage, it seems like we need a ``differential private merging/pruning'' procedure. Differential privacy techniques were indeed an inspiration, but they do not directly apply. 
	The memory analysis is very delicate and sensitive to even a small perturbation caused by a standard DP mechanism. 
}.
The final merging and pruning step is an iterative procedure. 
At each iteration, we estimate the size of $\mP_{\ell} \cup \mP_{\ell+1}$ (in an insensitive way), and if it is too large, then we sample $\frac{1}{\polylog (nT)}$-fraction of experts and use them to filter the sub-pool. This is repeated for $O(\log(nT))$ times and we prove that it removes $\Omega(\frac{1}{\log (nT)})$-fraction of experts at each iteration w.h.p.

\paragraph{A more detailed description of the baseline algorithm} In summary, the baseline algorithm maintains $L + 1$ sub-pools $\mP$, and for  $\ell \in [0:L]$ the $\ell$-th sub-pool $\mP_{\ell}$ is merged with $\mP_{\ell+1}$ every $2^{\ell}$ epochs.  The merge and pruning step proceeds in at most $K = O(\log (nT))$ iterations. At each iteration, it first estimates the size of $\mP_\ell\cup \mP_{\ell + 1}$ by sub-sampling and proceeds only if the size is large; 
it then samples $\frac{1}{\polylog(nT)}$-fraction of the experts as the {\em filter} set $\mF$. An expert is removed if it is covered by $\mF$.

\begin{itemize}
	\item {\bf Memory analysis. } The memory analysis now becomes involved and our goal is to prove $\Omega(\frac{1}{\log(nT)})$-fraction of experts are removed per iteration. To do so, we {\em recursively} identify (at most) $O(\log T)$ pivotal experts in $\mF$ such that the suffix of the last $r$ pivotal experts is at least $2^{r-1}$ better than a large fraction of remaining experts in the pool, unless $\Omega(\frac{1}{\log(nT)})$-fraction of experts has already been removed during the recursive procedure.
	\item {\bf Regret analysis. } An expert is {\em passive} if it never ``participates'' in the eviction process, i.e., it is never sampled in the filter set $\mF$ and the size estimation. It is active otherwise. The probability of being a passive expert is large, i.e., at least $(1 - \frac{1}{\polylog(nT)})^{2L\times 2K} \geq 1/2$.
	In the analysis, we fix the set of {\em sampled and active} experts at each epoch and we divide the entire epochs into two types, defined similarly as before. 
	The set of {\em sampled and passive} experts are not fixed, but they do not influence the execution of the baseline algorithm (i.e., the estimate size, the filter set, the alive set of active experts).
	The regret is controlled well for good epochs. 
	While for a bad epoch, if the expert $i^{*}$ is sampled and at the same time, being passive, then it would never be kicked out. 
	As the expert $i^{*}$ has probability $\Omega(\frac{1}{n})$ being {\em sampled and passive}, and we have proved the pool size rarely exceeds $\polylog(nT)$, the number of bad epoch is bounded by $\wt{O}(n)$.
\end{itemize}

In summary, the regret of baseline comes from three source (1) $\wt{O}(\sqrt{B})$ per epoch due to MWU; (2) $1$ for each interval containing good epochs; (3) $B$ for each bad epoch, hence the regret is upper bounded as
$
\wt{O}(\sqrt{B}\cdot (T/B) + (T/B) + B \cdot n) = \wt{O}(T/\sqrt{B} + nB).
$

\paragraph{A simple but sub-optimal bootstrapping } The baseline approach obtains $\wt{O}(n^{1/3}T^{2/3})$ regret after optimizing the epoch size.
A simple bootstrapping approach (similar and even simpler than the counterpart of \cite{peng2022online}) can be used to decrease the regret and achieve optimal dependence on $T$. 
Concretely, we have used a fairly crude regret bound of $B$ for bad epochs, this can be reduced by running a baseline (with sequence length $B$ instead of $T$) separately for each epoch, and the final decision is determined by running MWU over this epoch-wise baseline (sequence length $B$) and the entire baseline (sequence length $T$).
By doing so, the regret of a bad epoch is reduced to $\wt{O}(n^{1/3}B^{2/3})$ instead of $B$, and the regret of good epochs remains the same order. 
The total regret becomes $\wt{O}(T/\sqrt{B} + n^{4/3}B^{2/3})$, and after balancing the epoch size, one obtains an improved regret of $\wt{O}(n^{4/7}T^{4/7})$.
The bootstrap method can be repeated for multiple times, attaining an algorithm of regret $\wt{O}(n\sqrt{T})$.
This already obtains the optimal dependence on $T$, but suboptimal by a factor of $\sqrt{n}$.

\subsubsection{Optimal boosting against an oblivious adversary}
The final oblivious adversary algorithm maintains multiple threads of the baseline with different parameters, and amortizes the regret carefully. 
The new ideas (comparing with \cite{peng2022online}) are (1) inheriting pools from a high frequency baseline to a low frequency baseline (instead of only the regret bound); (2) separating the pool maintenance and regret control.

\begin{figure}[!ht]
	\centering
	\includegraphics[scale=0.48]{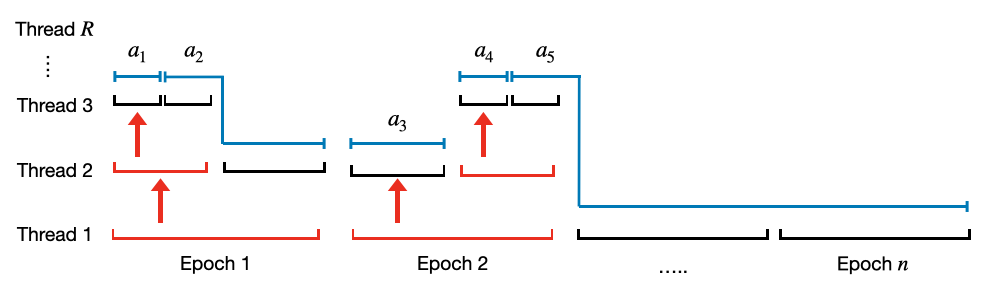}
	\caption{Epoch assignment. Red segments are bad epochs of each thread, blue segments are intervals generated by the walk. \\\\
		{\bf Detailed description of the walk in the figure:} As the first epochs in threads 1 and 2 are bad, the walk moves to the first epoch of thread 3, where $i^*$ can be covered and evicted (interval $a_1$). In the next epoch, $i^*$ is not immediately covered so it moves down to thread 2 where it is covered (interval $a_2$). The walk moves to the next epoch (epoch 2 in thread 1), which is bad. The walk moves up to thread 2 where it finds a good epoch that immediately covers $i^*$ (interval $a_3$). The next thread-2 is epoch is bad, so the walk moves up to thread 3; there $i^*$ is covered immediately (interval $a_4$). After that, $i^*$ survives until it is finally covered in the last epoch (interval $a_5)$.
	}
	\label{fig:walk}
\end{figure}

The full algorithm maintains $R = O(\log T)$ threads of baseline algorithms. The lowest thread has $n$ epochs and $B_1 = \frac{T}{n}$ days per epoch, the $r$-th ($r \in [2:R]$) thread has $2$ epochs only and each epoch contains $B_r = \frac{T}{2^{r-1} n}$ days -- it restarts every $T_{r} = B_{r-1}$ days. 

Every time we restart the $r$-th thread, we also start a new epoch of a lower thread (longer epochs); instead of discarding the experts in the $r$-th thread's pool, we move them to the lowest thread with a new epoch.
This is a key difference compared to the sub-optimal bootstrapping method. 
It makes intuitive sense because one may hope the optimal expert has a better chance of being retained after a few rounds of sampling/pruning in the higher thread (though the final analysis turns out to be quite different). The memory bounds largely follow from the baseline and we outline the regret analysis.

\paragraph{Amortizing regret} In the regret analysis, as before, we fix the set of {\em active and sampled} experts for each thread.
To account for all the regret incurred by the algorithm, our analysis splits the entire sequence into a collection of intervals, with different regret guarantees.
The splitting is defined by a walk over epochs at the different threads.
We note again that this walk is for analysis purposes only. 

The walk starts from the first epoch of the bottom thread (longest epochs). We say that an epoch considered in the walk is {\em bad} if the set of {\em alive active experts} (union over all threads) would never cover $i^{*}$ until the end of the algorithm, had the expert $i^{*}$ entered the pool at the beginning of the epoch.
\begin{enumerate}
	\item Even if the current epoch is bad, we wouldn't want to pay maximal loss for the entirety of the epoch. Instead, the walk moves up to a higher thread with shorter epochs, in hope that we could at least cover $i^{*}$ for part of the epoch. (It continues moving up while the epoch is bad and while there are higher threads with shorter epochs, if the walk reached the maximum thread, we can afford a short bad epoch.) 
	\item On good epochs, we can consider the interval from that epoch to the epoch where $i^{*}$ would be kicked out; it is guaranteed that the algorithm suffers a small regret over this interval (from the current epoch to the kicked-out epoch). By the definition of our tight boosting algorithm, if $i^{*}$ stays for sufficiently long, it will go down to lower threads (every time we restart $i^{*}$'s current thread, $i^{*}$  goes to the lowest thread with a new). Our walk follows $i^{*}$'s trajectory across epochs from different threads, until it is evicted. 
	\item Once $i^{*}$ is evicted, we continue the walk moves to the next epoch that $i^{*}$ would have gone to if it were not evicted. 
	We see if that epoch is good or bad, and then continue with Steps 1 or 2 above. 
\end{enumerate}

In the analysis, we consider the intervals defined by Step 2 of the walk. On each interval $\mI$, the regret scales like $\sqrt{|\mI|}$, so we particularly want to bound the total time spent on short intervals. To do this, we show that every lower thread epoch that fully contains $\mI$ must be bad, and as we argued before there cannot be too many bad epochs.

\paragraph{Interval regret guarantee for monocarpic experts} 
One important detail is missing -- the regret analysis goes through if we can guarantee an $O(\sqrt{|\mI|})$ regret (compared with the ``epoch-wise'' minimum of alive active experts) over any intervals $\mI$. 
This is non-trivial and no longer guaranteed by ``running MWU over each epoch''. 
In particular, an interval could consist of epochs from different threads, and more importantly, the set of alive active experts need not come from the same thread (they could come from all threads). 
We resolve the issue with a general procedure {\sc MonocarpicExpert}, which obtains {\em interval regret} over any alive expert.

We consider a variant of the sleeping experts problem~\cite{freund1997using,hazan2007adaptive} that we call {\em monocarpic experts}\footnote{The name is inspired by {\em monocarpic plants}, aka plants that only bloom once.}. In the monocarpic expert problem, the experts could enter and exit the pool at any time (but each expert only ever wakes up once) and we want an interval regret guarantee w.r.t.~any alive expert. 
That is, the algorithm is competitive with an expert at any time interval $\mI$ during its lifetime, with a regret of $O(\sqrt{|\mI|})$. 
The monocarpic experts problem is closely related to the sleeping expert problem, but the difference is that a sleeping expert could re-enter the pool (i.e., its lifetime is not a consecutive period). 
In this paper, we give a low memory algorithm for monocarpic experts with interval regret guarantees, whose memory scales linearly with the maximum number of alive experts. 
This is obtained by binary division tricks and utilizes the recent advance of second-order regret guarantee (i.e., the {\sc Squint} algorithm \cite{koolen2015second}).

\paragraph{Grouping trick} Finally, to obtain $\wt{O}(\sqrt\frac{nT}{S})$ regret with $S$ space, we partition the experts into $\wt{O}(S)$ groups, and each group contains $\wt{O}(n/S)$ experts. We maintain a separate meta-thread of algorithm for each group and run the MWU on top of these $\wt{O}(S)$ threads  of algorithms. 
This reduces the expert size from $n$ to $\wt{O}(n/S)$ and attains the  $\wt{O}(\sqrt{\frac{nT}{S}})$ regret.

\subsection{Lower bound for adaptive adversary}

We first delineate the high-level principle of constructing an adaptive adversary against a low memory (online learning) algorithm.
The adversary 
\begin{enumerate}
	\item Attacks the most recent $S_r = \Omega(S)$ actions picked by the algorithm; 
	\item Maintains a set of special experts $I$ ($ |I| = \Omega(S) \gg S_r$) that are good most of time.
\end{enumerate}
Intuitively, the first principle demonstrates the memory advantage of an adaptive adversary. 
While a low space online learning algorithm keeps only $S$ ``good'' experts in memory, an adaptive algorithm can make them bad once the algorithm releases them. The second principle ensures the existence of one good expert in $I$, this is because at any time, most experts in $I$ are not attacked (not played in the last $S_r$ rounds) and receive small loss.

For simplicity of exposition, we focus on the regime when the space $S = o(\sqrt{n})$ and we take the number of special experts $N = \sqrt{n}$. 
The adversary first draws $(x_{A}, x_{B}) \in \{0,1\}^n\times \{0, 1\}^n \sim \mu^{N}$, where $\mu$ is some hard distribution for the set disjointness problem, with $\frac{n}{N} = \sqrt{n}$ coordinates.
The loss sequence is divided into $\frac{10T}{N}$ epochs and each epoch contains $\frac{N}{10}$ days. 
The loss sequence of each epoch is constructed separately. During each epoch, the loss vector is set to $1 - x_{A}$ or $1- x_B$ with equal probability $1/2$, except for those coordinates that have been played by the algorithm during the current epoch -- they are fixed to be $1/2$.
It is easy to see that the best expert receives at most $\frac{T}{20}$ loss over the entire sequence.
To obtain $o(T)$ regret, an online learning algorithm needs to commit at least $\Omega(N)$ different intersecting coordinates (of the $N$ set disjointness instance) at some epoch (otherwise the loss received is $\frac{T}{2} - o(T)$).

A natural three-party communication problem can be formulated as follow. 
There is a prophet, Charlie, who knows all the $x_{A}$ and $x_B$, but it is only allowed to send $S$ bits of information at the beginning (this captures the initial memory state of an epoch). 
Alice and Bob receive the input $x_A$ and $x_B$ separately, and they are allowed to communicate at most $\frac{N}{10}\cdot S$ bits.
The goal is to output $\Omega(N)$ intersecting coordinate at the end.

The challenging part is the $S$ bits advice from the prophet.
Our key observation is that one can get rid of the advice by seeking for a direct-product theorem.
In particular, a too-good-to-be-true three-party communication protocol would imply a two-party communication protocol that solves the $N$-copy set disjointness problem, with a total of $S\cdot \frac{N}{10} = o(n)$ bits communication, and success probability at least $\exp(-\wt{O}(S))$ (without advice!). 
Roughly speaking, this holds from the fact that they can use the public randomness to guess the advice and it succeeds with probability $2^{-S}$ on average.
The set disjoint problem requires $\Omega(\sqrt{n})$ internal information cost per copy and
the direct-product theorem of \cite{braverman2015interactive} implies $\wt{\Omega}(n)$ bits of communication are necessary to resolve the problem with success probability at least $\exp(-\wt{O}(N)) \ll \exp(-\wt{O}(S))$.

\subsection{Algorithm against adaptive adversary}
\label{sec:tech-adaptive}


Our algorithm for adaptive adversary completely differs from the one for oblivious adversary and it requires a set of new ideas.
Recall that we aim for an algorithm using $O(\sqrt{n}/\eps)$ space and obtaining $\eps T$ regret. 
A natural idea is to observe (but not commit) an expert for some time before actually committing the action (or adding it to the pool). 
However, the challenge is that an adaptive adversary can easily counter this type of algorithm by making an expert perform bad once it is released by the algorithm. 
In other words, it seems that an expert becomes useless as soon as it is revealed, despite the enormous effort and difficulty of finding such a good expert.

Our solution is conceptually simple and it consists of two components.
We provide a sub-optimal version first. 
\begin{itemize}
	\item {\sc RandomExpert}. The {\sc RandomExpert} procedure sub-samples a new set of $\sqrt{n}/\eps$ experts per epoch, where each epoch contains $1/\eps^2$ days.
	The goal of {\sc RandomExpert} is to be $\eps$-competitive with the top $\sqrt{n}/\eps$-th expert within the epoch (i.e., to suffer at most $\eps$ more average loss than the $\sqrt{n}/\eps$-th expert of the epoch).
	To handle the aforementioned challenge of adversary making the expert bad immediately after it is selected, we apply a common trick to make an algorithm work against adaptive adversary: The {\sc RandomExperts} divides the set into $1/\eps^2$ groups, with $\eps \sqrt{n}$ experts per group.  
	It runs MWU separately for each group of experts, and at day $b$ ($b \in [1/\eps^2]$), it follows the decision from the $b$-th group.
	The algorithm is guaranteed to be $\eps$-competitive with the top $\sqrt{n}/\eps$-th expert, because w.h.p. the {\sc RandomExpert} samples one from the top $\sqrt{n}/\eps$ experts for each group.
	\item {\sc LongExpert}. The {\sc LongExpert} procedure maintains a pool $\mP$ of experts and runs MWU over $\mP$ during each epoch. 
	At each epoch, it samples a new random set of $\sqrt{n}/\eps$ experts and only {\em observes} their loss. 
	It compares their performance with the {\sc RandomExpert} at the end of the epoch, and an expert is added to $\mP$, if its performance is $\Omega(\eps)$ better than the {\sc RandomExpert}. The expert is kept for $\sqrt{n}$ epochs.
\end{itemize}
The final algorithm runs MWU over the {\sc RandomExpert} and the {\sc LongExpert} procedure.

The analysis is fairly intuitively
\begin{itemize}
	\item {\bf Memory bounds.} The {\sc LongExpert} reserves $O(1/\eps^2)$ experts per epoch, since only the top $\sqrt{n}/\eps$ experts would be kept (due to the regret guarantee of {\sc RandomExpert}), and it samples $\sqrt{n}/\eps$ experts per epoch. 
	An expert is kept in $\mP$ for $O(\sqrt{n})$ epochs, so the total memory used is $O(\sqrt{n}/\eps^2)$.
	
	\item {\bf Regret analysis.} 
	For any expert $i\in [n]$, if it is not among the top $\sqrt{n}/\eps$ experts or it is not $\eps$ worse than the {\sc RandomExpert}, then it causes at most $\eps$-regret in an epoch.
	On the other hand (i.e., it is a top $\sqrt{n}/\eps$ expert and performs significantly better than the {\sc RandomExpert}), with probability $\frac{1}{\eps\sqrt{n}}$, it is sampled and kept for $\sqrt{n}$ epochs (and afflicts $\eps$ average regret over these $\sqrt{n}$ epochs). 
	Intuitively speaking, this means, in order to make our algorithm perform bad on one epoch (and causes at most 1 average regret / epoch),
	the adversary should punish this top expert for $\sqrt{n}$ epochs, worsening the loss of the average top expert by 
	$\frac{1}{\eps\sqrt{n}} \cdot \sqrt{n} = 1/\eps$ epochs in expectation, so an averaged of $\eps$ regret is obtained.
\end{itemize}

\paragraph{Optimizing the regret} 
The above algorithm obtains $O(\eps T)$ regret using $O(\sqrt{n}/\eps^2)$ space. 
The sub-optimality comes from that experts that are $\alpha$-better than the {\sc RandomExpert} are treated equally, for any $\alpha \in (\eps, 1]$. 
To circumvent this inefficiency, we maintain $R = \log_2(1/\eps)$ threads of {\sc RandomExpert}. The epoch size is still $1/\eps^2$ and let $\eps_r = 2^{r}\eps$ ($r\in [R]$). 
The $r$-th thread {\sc RandomExpert} samples $\eps_r^2\sqrt{n}/\eps$ experts, runs MWU for $1/\eps_r^2$ days and  repeats for $\eps_r^2/\eps^2$ times per epoch. 
It is guarantee to be $\eps_r$-competitive with the top $\eps \sqrt{n}/\eps_r^2$-th expert (a worse regret guarantee, but comparable against top performance experts). 
The {\sc LongExpert} maintains $R$ sub-pools $\mP_1, \ldots, \mP_{R}$, the $r$-th pool samples $\sqrt{n}/\eps$ experts and observes their performance for $1/\eps_r^2$ days (it also restarts  for $\eps_r^2/\eps^2$ times per epoch). It only keeps experts that are $\eps_r$-better than the $r$-th thread {\sc RandomExpert}, and an expert is kept for $\eps \sqrt{n}$ epochs (so $\sqrt{n}/\eps$ days in total).
A {\sc MonocarpicExpert} procedure is executed on $\mP$.
We note an expected number of $O(1/\eps_r^2) \cdot (\eps_r^2/\eps^2) = O(1/\eps^2)$ experts are added to thread $r$ pool per epoch $\times$ they stay there for $\eps \sqrt{n}$ epochs, so the memory is bounded by $O(\sqrt{n}/\eps)$.
The regret analysis becomes a bit more complicated and we only sketch the intuition here.
If an expert is $\Theta(\eps_r)$ better than the $\eps \sqrt{n}/\eps_r^2$-th expert, then it incurs an average of $\eps_r$ loss over $1/\eps_r^2$ days. At the same time, with probability $1/\eps\sqrt{n}$, it is sampled and kept in the sub-pool $\mP_{r}$ for $\sqrt{n}/\eps$ days. Hence it has $(1/\eps_r) \cdot (\eps\sqrt{n}) / (\sqrt{n}/\eps) \leq \eps$ regret on average. 
The complication  arises from the expert might not be $\Omega(\eps_r)$ worse in the first $1/\eps_r^2$ days, in that case, the {\sc RandomExpert} procedure only guarantees $O(\eps_r)$ (instead of $O(\eps)$) regret and we need to carefully attribute the epoch in the analysis. We leave the technical details to Section \ref{sec:adaptive-upper}.

\begin{remark}[Beyond black-box reduction via differential privacy]
	One may tempt to obtain an algorithm against adaptive adversary using the recent developed technique of differential privacy (DP) \cite{hasidim2020adversarially,beimel2022dynamic} on top of our oblivious algorithm.
	This approach does not apply to our main setting because the algorithm is required to output {\em the actual action}, instead of a single binary prediction value which can be hidden via a DP median mechanism.
	This seems to be an inherent limit of the DP approach and it is much more challenging to hide the actual action, which is critical for broader applications of the online learning algorithm.\footnote{For example, in the application of game theory, the opponent observes the actual action; in many applications of machine learning, the opponent observes the hypothesis outputs by the algorithm }
	
	Meanwhile, if we narrow down the online learning task and only require to output a binary prediction (e.g., the task is to predict whether it is going to rain or not, and the expert offers 0/1 advice), then the DP technique is indeed applicable, but only obtains a sub-optimal regret bound of $\wt{O}(\frac{n^{1/4}T}{\sqrt{S}})$.  
	We provide a short explanation here. 
	
	The typical setup of the DP approach is to maintain $K$ copies of the oblivious algorithm, each uses $S'$ space. The total space is $S = KS'$ and we need to balance $K$ and $S'$. The typical choice is to take $K = O(\sqrt{T})$ and apply a DP median mechanism each round. However, there is substantial difference between online learning and the typical streaming applications. In a typical streaming application, each copy is guaranteed to be correct with high probability so taking median is a good idea, while for online learning, there is no such guarantee and it is easy to construct an example showing the median approach suffers $\Omega(1)$ regret. Instead, we should apply a DP mean mechanism and choose $K = \sqrt{T}/ \eps$ to obtain $\eps$ regret on average. Balancing these terms, the total regret is of order $\wt{\Theta}(\frac{n^{1/4} T}{\sqrt{S}})$, which is worse than our Theorem \ref{thm:adaptive-upper-main}.
\end{remark}


\subsection{Organization of the paper}
Section \ref{sec:pre} provides formal definitions and introduces the background. We provide the algorithm for oblivious adversary in Section \ref{sec:oblivious}. Section \ref{sec:adaptive-upper} and Section \ref{sec:adaptive-lower} provide the upper and lower bound for adaptive adversary.

\section{Preliminary}
\label{sec:pre}

We consider the standard model of online learning with expert advice.

\begin{definition}[Online learning with expert advice]
In an online learning problem, there are $n$ experts and the algorithm is initiated with a memory state $M_0$. At the $t$-th day ($t\in [T]$),
\begin{itemize}
\item The algorithm picks an expert $i_t \in [n]$ bases on its internal state $M_{t-1}$;
\item The nature reveals the loss vector $\ell_{t} \in [0,1]^n$;
\item The algorithm receives loss $\ell_t(i_t)$, observes the loss vector $\ell_t$ and update its memory state from $M_{t-1}$ to $M_{t}$.
\end{itemize}
The loss vector is chosen adversarially:
\begin{itemize}
\item{\textbf{Oblivious adversary.}} An oblivious adversary (randomly) chooses the loss sequence $\ell_1, \ldots, \ell_T$ in advance.
\item{\textbf{Adaptive adversary.}} An adaptive adversary chooses the loss $\ell_t$ at the beginning of day $t$ and it could depend on the past action $i_1, \ldots, i_{t-1}$.
\end{itemize}
An algorithm uses at most $S$ words of memory if $\max_{t\in [T]}|M_t| \leq S$.
\end{definition}

The performance of the algorithm is measured as regret, which is defined as
\[
R(T) := \E\left[\sum_{t=1}^{T}\ell_t(i_t) - \min_{i^{*} \in [n]} \sum_{t=1}^{T}\ell_{t}(i^{*})\right]
\]
where the expectation is taken over the randomness of algorithm and the choice of $\ell_t$. The average regret is defined as $R(T)/T$.

\paragraph{Notation} In an online learning problem, let $T$ be the total number of days, $n$ be the number of experts. Let $[n] = \{1,2, \ldots, n\}$ and $[n_1:n_2] = \{n_1,\ldots, n_2\}$. We sometimes use $\N$ to denote the set of experts. For any integer $t$, let $\mathsf{pw}(t)$ be the largest integer such that $t$ is a multiple of $2^{\mathsf{pw}(t)}$. Let $\Delta_n$ contain all probability distributions over $[n]$.

\paragraph{Regret guarantee of existing online learning algorithm}

We make use of existing algorithms in online learning literature. The multiplicative weight update (see Algorithm \ref{algo:mwu}) is the most fundamental algorithm.

\begin{algorithm}[!htbp]
\caption{MWU \cite{arora2012multiplicative}}
\label{algo:mwu}
\begin{algorithmic}[1] 
\For{$t=1,2, \ldots, T$}
\State Compute $p_t \in \Delta_n$ over experts such that $p_t(i) \propto \exp(-\eta \sum_{\tau=1}^{t-1}\ell_\tau(i))$ for $i\in [n]$
\State Sample an expert $i_t \sim p_t$ and observe the loss vector $\ell_t \in [0, 1]^n$
\EndFor
\end{algorithmic}
\end{algorithm}

\begin{lemma}[\cite{arora2012multiplicative}]
Let $n, T \geq 1$, $\eta > 0$ and the loss $\ell_t \in [0,1]^n$ ($t\in [T]$). The MWU algorithm guarantees
\begin{align*}
    \sum_{t=1}^{T} \langle p_t, \ell_t\rangle - \min_{i^{*}\in [n]}\sum_{t=1}^{T}\ell_t(i^{*}) \leq \frac{\log n}{\eta} + \eta T,
\end{align*}
Taking $\eta = \sqrt{\log (n/\delta)/T}$, with probability at least $1-\delta$,
\begin{align*}
    \sum_{t=1}^{T}  \ell_t(i_t) - \min_{i^{*}\in [n]}\sum_{t=1}^{T}\ell_t(i^{*}) \leq O\left(\sqrt{T \log (n/\delta)}\right).
\end{align*}
\end{lemma}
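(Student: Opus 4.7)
The plan is to establish the expectation bound via the classical potential argument and then lift it to a high-probability statement using Azuma--Hoeffding on a bounded martingale difference sequence. First I would define the (unnormalized) potential $\Phi_t = \sum_{i \in [n]} w_t(i)$ with $w_t(i) = \exp\bigl(-\eta \sum_{\tau < t} \ell_\tau(i)\bigr)$, so that $p_t(i) = w_t(i)/\Phi_t$ and $\Phi_1 = n$. The key one-step inequality comes from $e^{-\eta x} \leq 1 - \eta x + \eta^2 x^2$ valid for $\eta x \in [0,1]$ and $x \in [0,1]$, which yields
\[
\Phi_{t+1} = \sum_i w_t(i)\, e^{-\eta \ell_t(i)} \leq \Phi_t\bigl(1 - \eta \langle p_t, \ell_t\rangle + \eta^2 \langle p_t, \ell_t^2\rangle\bigr) \leq \Phi_t \exp\bigl(-\eta \langle p_t, \ell_t\rangle + \eta^2\bigr),
\]
using $\ell_t(i)^2 \leq \ell_t(i)$ is unnecessary --- we just use $\langle p_t, \ell_t^2\rangle \leq 1$.

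Telescoping gives $\Phi_{T+1} \leq n \exp\bigl(-\eta \sum_{t} \langle p_t, \ell_t\rangle + \eta^2 T\bigr)$. On the other hand, for every fixed $i^* \in [n]$,
\[
\Phi_{T+1} \geq w_{T+1}(i^*) = \exp\bigl(-\eta \textstyle\sum_{t} \ell_t(i^*)\bigr).
\]
Taking logarithms and rearranging recovers the first inequality $\sum_t \langle p_t, \ell_t\rangle - \sum_t \ell_t(i^*) \leq (\log n)/\eta + \eta T$, and minimizing over $i^*$ gives the stated bound on expected regret of the distributional player.

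To obtain the high-probability bound on the realized losses $\sum_t \ell_t(i_t)$, I would use the fact that conditional on the filtration $\mathcal{F}_{t-1}$ generated by $(i_1, \dots, i_{t-1})$ and (in the adaptive case) the adversary's choice of $\ell_t$ given this history, the increment $X_t := \ell_t(i_t) - \langle p_t, \ell_t\rangle$ has zero conditional mean and lies in $[-1,1]$. Thus $\{X_t\}$ is a bounded martingale difference sequence, and Azuma--Hoeffding gives $\sum_t X_t \leq \sqrt{2T\log(1/\delta)}$ with probability at least $1 - \delta$. Combining this with the deterministic bound from the first part, setting $\eta = \sqrt{\log(n/\delta)/T}$ so that the two terms $(\log n)/\eta$ and $\eta T$ are balanced, and absorbing all resulting $\sqrt{T\log(n/\delta)}$ and $\sqrt{T\log(1/\delta)}$ contributions into the $O(\cdot)$ yields the claimed $O\bigl(\sqrt{T\log(n/\delta)}\bigr)$ bound.

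The proof involves no real obstacle: both the potential/telescoping calculation and Azuma--Hoeffding are entirely standard, and the only point worth checking carefully is the martingale property in the adaptive setting, which holds because $\ell_t$ is $\mathcal{F}_{t-1}$-measurable by definition of the adaptive adversary and $i_t$ is sampled from $p_t$ (itself $\mathcal{F}_{t-1}$-measurable) independently of the randomness used to produce $\ell_t$.
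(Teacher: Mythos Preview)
Your proposal is correct and follows the standard textbook argument. Note that the paper does not actually give a proof of this lemma: it is stated as a known result with a citation to \cite{arora2012multiplicative}, so there is no paper-side argument to compare against. Your potential-function derivation of the first inequality and the Azuma--Hoeffding lift to the high-probability bound are exactly the canonical proof one would find in the cited reference.
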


Define $v_{t}(i) = \langle p_t, l_t\rangle - \ell_t(i)$ be the loss difference between expert $i$ and the algorithm, the {\sc Squint} algorithm (See Algorithm \ref{algo:squint}) provides second order guarantee.
\begin{algorithm}[!htbp]
\caption{{\sc Squint} \cite{koolen2015second}}
\label{algo:squint}
\begin{algorithmic}[1] 
\For{$t=1,2, \ldots, T$}
\State Compute $p_t \in \Delta_n$ over experts such that 
$p_t(i)  \sim \E_{\eta}\left[\eta \cdot \exp\left(\eta \sum_{\tau=1}^{t-1}v_{\tau}(i) - \eta^2 \sum_{\tau=1}^{t-1}v_{\tau}^2(i)\right)\right] $
\State Sample an expert $i_t \sim p_t$ and observe the loss vector $\ell_t \in [0, 1]^n$
\EndFor
\end{algorithmic}
\end{algorithm}

\begin{lemma}[Theorem 3 of \cite{koolen2015second}]
\label{lem:squint}
Suppose the learning rate $\eta$ is sampled from the prior distribution of $\gamma$ over $[0, 1/2]$ such that $\gamma(\eta) \propto \frac{1}{\eta\log^2(\eta)}$. For any expert $i \in [n]$, the {\sc Squint} guarantees that
\begin{align*}
\E\left[\sum_{t=1}^{T}\ell_t(i_t) - \sum_{t=1}^{T}\ell_{t}(i)\right] \leq O\left(\sqrt{V_T^i \log (nT)}\right).
\end{align*}
where $V_T^i = \sum_{t=1}^{T}(\E_{i_t}[\ell_t(i_t)] - \ell_t(i))^2$ is the loss variance of expert $i$.
\end{lemma}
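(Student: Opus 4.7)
The plan is to use the standard potential-function analysis for exponential-weights algorithms with an additional mixing over the learning rate $\eta$. For each expert $i$ and each $\eta \in [0, 1/2]$, define the running weight
\[
w_t(i, \eta) := \eta \exp\Big(\eta \sum_{\tau=1}^{t-1} v_\tau(i) - \eta^2 \sum_{\tau=1}^{t-1} v_\tau(i)^2\Big),
\]
and the aggregate potential $\Phi_t := \frac{1}{n}\sum_{i=1}^n \int_0^{1/2} \gamma(\eta)\, w_t(i,\eta)\, d\eta$. The Squint sampling rule $p_t(i) \propto \int_0^{1/2}\gamma(\eta) w_t(i,\eta)\,d\eta$ forces, essentially by construction, $\E_{i\sim p_t}[v_t(i)] = \langle p_t,\ell_t\rangle - \langle p_t,\ell_t\rangle = 0$, a property I will invoke repeatedly.

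The heart of the argument is to show $\Phi_{t+1} \leq \Phi_t$. Starting from the elementary inequality $\exp(\eta x - \eta^2 x^2) \leq 1 + \eta x$ valid for $|\eta x| \leq 1/2$ (which covers $v_t(i) \in [-1,1]$ paired with $\eta \in [0,1/2]$), one gets pointwise $w_{t+1}(i,\eta) \leq w_t(i,\eta)\bigl(1 + \eta v_t(i)\bigr)$. Multiplying by $\gamma(\eta)$, integrating in $\eta$, and summing against $p_t$, Fubini together with the unbiasedness of $p_t$ makes the linear $v_t(i)$ term vanish, leaving $\Phi_{t+1} \leq \Phi_t$ (the normalization constant of $p_t$ is independent of the inner summand and harmlessly carries through the telescoping).

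Telescoping and $\Phi_1 = \int_0^{1/2}\gamma(\eta)\eta\,d\eta = O(1)$ yield $\Phi_T = O(1)$. Dropping all but a fixed target expert $i^\star$ and restricting the $\eta$-integral to a window $[\eta^\star/2,\eta^\star]$ around the optimum $\eta^\star := \min\{1/2,\sqrt{\log n/V_T^{i^\star}}\}$ gives
\[
\frac{\eta^\star}{n\log^2(1/\eta^\star)}\exp\Big(\eta^\star R_T(i^\star) - (\eta^\star)^2 V_T^{i^\star}\Big) \leq O(1),
\]
where $R_T(i^\star) := \sum_t v_t(i^\star)$. Taking logs, rearranging, and using $V_T^{i^\star} \leq T$ to bound $\log(1/\eta^\star)$ gives $R_T(i^\star) \leq O\bigl(\sqrt{V_T^{i^\star}\log(nT)}\bigr)$, which is the claimed bound once one uses $\E_{i_t\sim p_t}[\ell_t(i_t)] = \langle p_t,\ell_t\rangle$ to convert $\sum_t v_t(i^\star)$ into $\E[\sum_t \ell_t(i_t) - \ell_t(i^\star)]$.

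The main obstacle will be executing the learning-rate integration carefully: a naive pointwise lower bound on $\Phi_T$ at $\eta^\star$ pays a factor of $1/\eta^\star$ from the density $\gamma$ that could be polynomially large, and this has to be absorbed by integrating over a window of width $\Theta(\eta^\star)$. The $1/\log^2(1/\eta)$ weight in the prior is precisely the mechanism that caps the mixing penalty at $\log(nT)$ rather than letting it grow polynomially with $\log(1/\eta^\star)$. The remaining components (verifying $\exp(\eta x - \eta^2 x^2) \leq 1 + \eta x$ on the stated range and checking that the induced mixture $p_t$ is well-defined) are routine.
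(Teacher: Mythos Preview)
The paper does not prove this lemma; it is stated as a citation of Theorem~3 in \cite{koolen2015second} and used as a black box. So there is no ``paper's own proof'' to compare against. Your sketch is essentially the original Koolen--van Erven argument, and the high-level structure (potential $\Phi_t$ that is nonincreasing via the prod-style inequality $\exp(\eta x-\eta^2 x^2)\le 1+\eta x$, then a lower bound on $\Phi_T$ by restricting to a single expert and an $\eta$-window) is correct.

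There is one technical slip worth fixing. You put the factor $\eta$ inside $w_t(i,\eta)$ and then define $\Phi_t=\tfrac{1}{n}\sum_i\int\gamma(\eta)w_t(i,\eta)\,d\eta$. With that convention, the linear term that appears in the update is $\tfrac{1}{n}\sum_i v_t(i)\int\gamma(\eta)\,\eta\, w_t(i,\eta)\,d\eta$, i.e.\ it carries $\eta^2\exp(\cdots)$, whereas the Squint sampling rule makes $p_t(i)$ proportional to $\int\gamma(\eta)\,\eta\exp(\cdots)\,d\eta$ (one power of $\eta$). These do not match, so the linear term does not vanish as written, and ``summing against $p_t$'' does not yield $\Phi_{t+1}\le\Phi_t$ since $\Phi$ is a uniform sum. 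The fix is standard: define the potential without the $\eta$ factor, $\Phi_t=\tfrac{1}{n}\sum_i\int\gamma(\eta)\exp(\eta R_{t-1}(i)-\eta^2 V_{t-1}(i))\,d\eta$; then the linear term in $\Phi_{t+1}-\Phi_t$ is exactly $\tfrac{Z_t}{n}\sum_i p_t(i)v_t(i)=0$, where $Z_t$ is the normalizer of $p_t$. With this correction the rest of your outline goes through.
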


\paragraph{Information theory} We use standard notation from information. Let $X, Y$ be random variables, $H(X)$ be the entropy and $I(X; Y)$ be the mutual information. In a two-party communication problem, the internal information cost is defined as follows.
\begin{definition}[Internal information]
The internal information cost of a communication protocol $\Pi$ over inputs drawn from $\mu$ on $X\times Y$ is defined as
\[
\IC_{\mu}(\Pi) := I(X; \Pi |Y) + I(Y; \Pi |X).
\]
\end{definition}

\section{Near optimal bound for oblivious adversary}
\label{sec:oblivious}

We have the following regret guarantee for an oblivious adversary.
\Oblivious*

\subsection{The baseline-subroutine building block}
\label{sec:base}

We first provide a baseline subroutine that achieves the following regret guarantee.
\begin{proposition}
\label{prop:oblivious-base}
Let $n , T\geq 1$ be sufficiently large integers, for any $B \in [T]$, there is an online learning algorithm that uses at most $\polylog(nT)$ space and achieves $\widetilde{O}\left(nB + \frac{T}{\sqrt{B}}\right)$ regret against an oblivious adversary.
\end{proposition}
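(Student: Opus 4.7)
The plan is to formalize the robust baseline sketched in Section~\ref{sec:intro} and verify the two claims: $\polylog(nT)$ memory and $\widetilde{O}(nB + T/\sqrt{B})$ regret. First I would fix the algorithm: partition the stream into $T/B$ epochs of length $B$; maintain $L+1 = O(\log(T/B))$ sub-pools $\mP_0, \ldots, \mP_L$; at the start of each epoch sample $O(1)$ fresh experts uniformly into $\mP_0$; every $2^\ell$ epochs merge $\mP_\ell$ into $\mP_{\ell+1}$ via a Prune subroutine; within an epoch, play MWU over $\mP = \bigcup_\ell \mP_\ell$ using only pairwise aggregate losses among experts currently in the pool. The Prune subroutine iterates $K = O(\log(nT))$ times; at each iteration it first subsamples a $1/\polylog(nT)$ fraction to produce an insensitive estimate of $|\mP_\ell \cup \mP_{\ell+1}|$, and if the estimate is large, draws an independent filter set $\mF$ of the same rate, evicting every expert whose aggregate loss over its current lifetime is not strictly smaller than the Covering Benchmark restricted to $\mF$ (the sum, across epochs, of the best-in-epoch loss among experts in $\mF$).

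For the memory bound I would show that after each Prune call the pool has size $O(\log^2(nT))$ w.h.p. The core structural lemma states: if a Prune iteration survives $\Omega(1)$-fraction of the pool, then one can recursively extract a chain of $r = O(\log T)$ pivotal experts from $\mF$ whose geometric suffix-loss advantage doubles, capping $r$ since losses are bounded by $B \cdot 2^\ell \leq T$. Consequently a single Prune iteration either removes an $\Omega(1/\log(nT))$ fraction of experts or the pool is already small; chaining $K$ iterations gives $|\mP_\ell| \leq O(\log^2(nT))$. Storing only the pairwise aggregate losses among surviving experts costs $\polylog(nT)$ words, and the subsampled estimate/filter stages can be computed in a streaming fashion within the same budget.

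For the regret I would condition on the random set of \emph{sampled and active} experts across epochs, where an expert is \emph{passive} if it never lands in any filter set or estimator sample during its lifetime. Since each expert faces at most $O(\log T)$ Prune calls at rate $1/\polylog(nT)$, the passive probability is at least $1/2$. Call an epoch \emph{good} if, hypothetically inserting $i^*$ at its start, the resulting active pool would eventually cover and evict $i^*$, and \emph{bad} otherwise. Good epochs are organized into intervals $\mI$ running from insertion to eviction; on each such interval the MWU guarantee inside $\mP$ (which is at least as strong as against $i^*$ by the covering definition) yields $\widetilde{O}(\sqrt{|\mI|})$ regret, summing via Cauchy--Schwarz to $\widetilde{O}(T/\sqrt{B})$ across total length $T$. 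For bad epochs, at each one the probability that $i^*$ is both sampled and passive is $\Omega(1/n)$, and once this occurs $i^*$ remains in $\mP$ thereafter (the active experts alone cannot evict it), so the total number of bad epochs is $\widetilde{O}(n)$ with high probability, each contributing at most $B$ regret, giving $\widetilde{O}(nB)$.

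The main obstacle will be the decoupling argument underlying the robustness: the set of active experts is used to define good/bad epochs and the associated eviction decisions, but the actual Prune subroutine sees all of $\mP$, including passive experts. I need to show that adding or removing any single passive expert perturbs neither the size estimates nor the filter sets nor the eviction outcomes, so that the hypothetical trajectory analyzed for regret really does track the executed trajectory. This is precisely what the multi-level sub-pool scheduling plus the low-rate sampling are engineered for, but the quantitative union bound across $O(T/B)$ epochs, $O(\log T)$ Prune levels, and $K$ iterations per call is delicate; balancing the failure probabilities against the $\polylog(nT)$ sampling rate is the step I expect to require the most care.
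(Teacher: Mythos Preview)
Your proposal is essentially the paper's own proof: the sub-pool hierarchy with merges every $2^\ell$ epochs, the insensitive size-estimate/filter iterations, the pivotal-chain argument for the memory bound, the active/passive distinction, and the good/bad epoch decomposition all match Section~\ref{sec:base}. One small slip: in the baseline the per-interval regret against $i^*$ is not $\widetilde{O}(\sqrt{|\mI|})$ but $\widetilde{O}(|\mI|/\sqrt{B}) + O(1)$, because MWU restarts each epoch (contributing $\widetilde{O}(\sqrt{B})$ per epoch) and the covering benchmark adds $+1$ per interval; both expressions sum to $\widetilde{O}(T/\sqrt{B})$, so your final bound is right, but no Cauchy--Schwarz is needed here---that step only enters in the full boosting algorithm where {\sc MonocarpicExpert} actually delivers $\widetilde{O}(\sqrt{|\mI|})$ interval regret.
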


The pseudocode of the baseline subroutine is presented in Algorithms \ref{algo:base}--\ref{algo:filter}.
The \textsc{Baseline} (i.e. Algorithm \ref{algo:base}) proceeds epoch by epoch, and each epoch contains $B$ days.
It maintains a pool $\mP$ of experts and at the beginning of each epoch, it adds $O(1)$ random experts into the pool $\mP$ (Line \ref{line:base_new} of Algorithm \ref{algo:base}). 
It runs MWU over experts in $\mP$ within the epoch.
The pool size grows quickly and the algorithm prunes the pool at the end of each epoch.

The entire pool $\mP$ is divided into $L+1 =  \log_2(T)  + 1$ disjoint sub-pools $\mP = \mP_{0}\cup\cdots \cup \mP_{L}$. Roughly speaking, $\mP_{\ell}$ only contains experts sampled in the most recent $2^{\ell}$ epochs and it is merged into $\mP_{\ell+1}$ every $2^{\ell}$ epochs.

The \textsc{Merge} procedure (Algorithm \ref{algo:merge}) merges the sub-pool $\mP_{\ell}$ into $\mP_{\ell+1}$ and performs subsequent pruning step to control the pool size.
In particular, it proceeds in $K = 16  \log(nT) $ iterations, and at the $k$-th iteration, whenever the sub-pool size is greater than a threshold of (roughly) $\log^9 (nT)$, it samples $\frac{1}{\log^4 (nT)}$-fraction of experts $\mF_{k}$ and use them to filter $\mP_{\ell+1}$. 
Looking ahead, we shall prove that each iteration removes at least $\Omega(\frac{1}{\log (nT)})$-fraction of experts in $\mP_{\ell+1}$, unless  $\mP_{\ell+1}$ is already small.
We note that both the size estimation and the sampling step (i.e.~Algorithms \ref{algo:estimate} and \ref{algo:sample})  are performed in an {\em insensitive} way, which is critical to the memory analysis.

The core idea lies in the {\sc Filter} procedure (Algorithm \ref{algo:filter}), it takes a set of experts $\mF$ and a pool $\mQ$, and an expert $i$ is removed from $\mQ$ if it is {\em covered} by experts in $\mF$ (see Definition \ref{def:cover}).

\paragraph{Notation} To formally introduce the concept of covering, we need some notations. 
For any epoch $t\in [T/B]$, let $\mP_{\ell}^{(t)}$ ($\ell \in [0: L]$) be the sub-pool $\mP_{\ell}$ at the beginning of epoch $t$ (after Line \ref{line:base_new} of Algorithm \ref{algo:base}) and 
similarly let  $\mP^{(t)}$ denote the entire pool at the same time.
Our algorithm samples experts with replacement; even if it samples the same expert twice (in different epochs), we will treat it as two separate experts. 
For expert $i \in \mP^{(t)}$, let $E_i \in [T/B]$ denote the time when expert $i$ was sampled and entered the pool. 
For experts $i, j \in \mP^{(t)}$, we write $i \preceq j$ if $E_{i} \leq E_{j}$, i.e., the expert $i$ enters the pool no later than the expert $j$. 
Let $\Gamma_{i, j}$ be the time interval $\{E_{i}, \ldots, E_{j} - 1\}$ ($\Gamma_{i, j} = \emptyset$ if $j$ enters the pool no later than $i$), and slightly abuse of notation, let $\Gamma_{i, t}$ be the time interval $\{E_i, \ldots,  t\}$.
For any interval $\mI \subseteq [T/B]$, the {\em cumulative} loss of expert $i$ over $\mI$ is defined as
\begin{align*}
\mL_{\mathcal{I}}(i) := \sum_{t\in \mathcal{I}}\sum_{b=1}^{B}\ell_{(t-1)B +b}(i).
\end{align*}
In particular, define
\begin{align*}
\mL_{\Gamma_{i, j}}(i) := \left\{
\begin{matrix}
\sum_{t\in \Gamma_{i, j}}\sum_{b=1}^{B}\ell_{(t-1)B +b}(i) & \Gamma_{i, j} \neq \emptyset \\
+\infty & \Gamma_{i, j} = \emptyset
\end{matrix}
\right.
\end{align*}
be the cumulative loss of expert $i$ during interval $\Gamma_{i, j}$. Slightly abusing notation, we set $\mL_{t}(i) = \sum_{b=1}^{B}\ell_{(t-1)B +b}(i)$ to be the total loss of expert $i$ during epoch $t$.

Now we can formally define the {\em cover}.
\begin{definition}[Cover]
\label{def:cover}
Given a set of experts $\mF = \{i_1, \ldots, i_{|\mF|}\}$ and an expert $j\in [n]$. Let $i_1 \preceq i_2 \cdots \preceq i_r \preceq j \preceq i_{r+1} \cdots \preceq i_{|\mF|}$ for some $r \in \{0, 1,2,, \ldots, |\mF|\}$.
At epoch $t$, we say expert $j$ is covered by $\mF$, if 
\begin{align}
\mathcal{L}_{\Gamma_{j,t}}(j) \geq  \min_{i_{r}^{*} \in \mF}\mL_{\Gamma_{j, i_{r+1}}}(i_r^{*}) + \left(\sum_{s = r+1}^{|\mF|}\min_{i_{s}^{*} \in \mF}\mL_{\Gamma_{i_{s}, i_{s+1}}}(i_s^{*})\right)  - 1.
\end{align}
Here we slightly abuse notation and denote $\Gamma_{i_{|\mF|}, i_{|\mF|+1}} = \Gamma_{i_{|\mF|}, t}$.
\end{definition}

Finally, it is worth noting that the {\sc Baseline} only needs to record the loss $\mathcal{L}_{\Gamma_{i, j}}(i)$ for $i, j \in \mP$, which scales quadratically with the size of pool.

\begin{algorithm}[!htbp]
\caption{\textsc{Baseline}}
\label{algo:base}
\begin{algorithmic}[1]
\State Initiate the pool $\mP \leftarrow \emptyset$ and sub-pools $\mP_{\ell} \leftarrow \emptyset$ ($\ell \in [0:L]$) \Comment{$L = \log_2 (T)$}
\For{$t = 1,2, \ldots, T/B$} \Comment{Epoch $t$}
\State $\mP_0 \leftarrow \textsc{Sample}(\N, 1/n)$ \Comment{Sample new experts into the pool}\label{line:base_new}
\State Run MWU over $\mP$ for the next $B$ days \Comment{$\mP = \mP_0 \cup \cdots \cup \mP_{L}$}
\For{$\ell = 0,1, \ldots, \mathsf{pw}(t)$} \Comment{$\mathsf{pw}(t)$ is the largest integer such that $(t/2^{\mathsf{pw}(t)}) \in \mathbb{N}$}
\State $\mP_{\ell+1} \leftarrow \textsc{Merge}(\mP_{\ell+1}, \mP_{\ell})$
\State $\mP_{\ell} \leftarrow \emptyset$
\EndFor 
\EndFor
\end{algorithmic}
\end{algorithm}

\begin{algorithm}[!htbp]
\caption{\textsc{Merge}($\mQ_A, \mQ_B$)}
\label{algo:merge}
\begin{algorithmic}[1]
\State $\mQ_C\leftarrow \mQ_A \cup \mQ_B$
\For{$k=1,2, \ldots, K $} \Comment{$K = 16  \log(nT) $}
\State $s \leftarrow \textsc{EstimateSize}(\mQ_C, \frac{1}{\log^4 (nT)})$   \Comment{Estimate the size of $\mQ_C$}\label{line:merge-estimate}
\If{$s \leq \log^5 (nT)$}
\State Go to Line \ref{line:merge-return} \label{line:threshold}
\Else
\State $\mF_k \leftarrow \textsc{Sample}(\mQ_C, \frac{1}{\log^4 (nT)})$ \Comment{Sample $\frac{1}{\log^4 (nT)}$-fraction of experts}\label{line:merge-sample}
\State $\mQ_C \leftarrow \textsc{Filter}(\mF_{k},\mQ_C) \cup \mF_{k}$
\EndIf
\EndFor
\State Return $\mQ_C$ \label{line:merge-return}
\end{algorithmic}
\end{algorithm}

\begin{algorithm}[!htbp]
\caption{\textsc{EstimateSize}($\mQ, p$)}
\label{algo:estimate}
\begin{algorithmic}[1]
\State $p_{i} \leftarrow \text{Bernoulli}(p)$ for every $i \in \mQ$
\State Return $\sum_{i \in \mQ}p_i$
\end{algorithmic}
\end{algorithm}

\begin{algorithm}[!htbp]
\caption{\textsc{Sample}($\mQ, p$)}
\label{algo:sample}
\begin{algorithmic}[1]
\State $p_{i} \leftarrow \text{Bernoulli}(p)$ for every $i \in \mQ$
\State Return $\{i\in \mQ: p_{i} = 1\}$
\end{algorithmic}
\end{algorithm}

\begin{algorithm}[!htbp]
\caption{\textsc{Filter}($\mF, \mQ$)}
\label{algo:filter}
\begin{algorithmic}[1]
\For{each expert $i \in \mQ$}
\If{$\mF$ covers $i$} \Comment{Definition \ref{def:cover}}
\State $\mQ \leftarrow \mQ \setminus \{i\}$
\EndIf
\EndFor
\State Return $\mQ$
\end{algorithmic}
\end{algorithm}

\subsubsection{Memory bound}

We first provide the memory guarantee for \textsc{Baseline}.
\begin{lemma}[Memory bounds for {\sc Baseline}]
\label{lem:memory-baseline}
With probability at least $1-1/(nT)^{\omega(1)}$, the total memory used by Algorithm \ref{algo:base} is at most $O(\log^{20} (nT))$ words.
\end{lemma}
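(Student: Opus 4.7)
The plan is to reduce the memory bound to a uniform bound on the size of each sub-pool. Since the baseline stores the pairwise cumulative losses $\mL_{\Gamma_{i,j}}(i)$ for $i,j \in \mP$, its memory scales as $O(|\mP|^2)$. Writing $\mP = \bigcup_{\ell=0}^{L} \mP_\ell$ with $L = O(\log T)$, if I can show that for every epoch $t$ and every $\ell$, the sub-pool satisfies $|\mP_\ell^{(t)}| \leq O(\log^{9}(nT))$ with probability $1-1/(nT)^{\omega(1)}$, then $|\mP| \leq O(\log^{10}(nT))$ and the overall memory is $O(\log^{20}(nT))$ words, as claimed. Since every sub-pool is the output of some chain of \textsc{Merge} calls seeded by singletons drawn at the start of an epoch, it suffices to prove that a single call $\textsc{Merge}(\mQ_A,\mQ_B)$ terminates with $|\mQ_C| \leq O(\log^{9}(nT))$ w.h.p., then union bound over the at most $O((T/B)\log T)$ merge calls.

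\textbf{Size estimate and shrinkage setup.} Fix one call to \textsc{Merge}. By a Chernoff bound applied to the independent Bernoulli$(1/\log^4(nT))$ coins of \textsc{EstimateSize}, with probability $1-1/(nT)^{\omega(1)}$ every estimate $s$ computed on Line~\ref{line:merge-estimate} is within a factor of $2$ of $|\mQ_C|/\log^4(nT)$ whenever $|\mQ_C| \geq \log^{5}(nT)/10$. Conditioning on this clean event, whenever the loop exits via the $s \leq \log^5(nT)$ branch we have $|\mQ_C| \leq O(\log^{9}(nT))$ deterministically, so the whole task reduces to the following shrinkage claim: in any iteration that does \emph{not} exit, the \textsc{Filter} step removes a $\Omega(1/\log(nT))$-fraction of the experts in $\mQ_C$ with probability $1-1/(nT)^{\omega(1)}$. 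Given this, after $K = 16\log(nT)$ iterations either the pool has dropped below the threshold, or it has been multiplicatively shrunk by $(1-\Omega(1/\log(nT)))^{16\log(nT)} \leq 1/(nT)^{\Omega(1)}$ from an initial size of at most $n$, which again puts it under $O(\log^{9}(nT))$.

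\textbf{The key pivotal-chain argument.} The main obstacle is establishing the shrinkage claim. I would argue by contradiction: suppose the set $U$ of experts in $\mQ_C$ not covered by $\mF_k$ has $|U| \geq (1-c/\log(nT))|\mQ_C|$ for a sufficiently small constant $c>0$. By Definition~\ref{def:cover}, every $j \in U$ satisfies, for its position $r(j)$ in the $\preceq$-order of $\mF_k$, a strict beating of the $\mF_k$-piecewise-optimal benchmark by at least $1$. The plan is to inductively construct a chain of pivotal experts $i_1 \preceq i_2 \preceq \cdots \preceq i_r$ inside $\mF_k$ together with a nested sequence of subsets $U = U_0 \supseteq U_1 \supseteq \cdots \supseteq U_r$ with $|U_{s+1}| \geq (1-1/\log(nT))|U_s|$, such that each $U_s$ sits ``above'' $i_s$ in the $\preceq$-order and the cumulative loss of $i_s$ on $\Gamma_{i_s,t}$ drops by a factor of $2$ at each step: $\mathcal{L}_{\Gamma_{i_{s+1},t}}(i_{s+1}) \leq \mathcal{L}_{\Gamma_{i_s,t}}(i_s)/2$. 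At each step, $i_{s+1}$ is chosen as the pivot witnessing the $\min$ in the cover benchmark for the largest fraction of $U_s$, which exists because most experts of $U_s$ must use a common later pivot (pigeonhole over $|\mF_k| \leq O(\log^5(nT))$ choices). Since losses are non-negative and at most $TB$, the chain has length $r = O(\log(nT))$; but because $\mF_k$ is drawn as an independent Bernoulli$(1/\log^4(nT))$ subsample, the probability that $\mF_k$ simultaneously realizes such a pivotal chain and leaves the $(1-c/\log(nT))$-fraction of $U$ uncovered is $1/(nT)^{\omega(1)}$, contradicting our supposition.

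\textbf{Conclusion.} Combining the three ingredients \textemdash{} clean size estimates, the shrinkage claim per iteration, and the doubling bound on chain length \textemdash{} gives that each \textsc{Merge} terminates with $|\mQ_C| \leq O(\log^{9}(nT))$ except on an event of probability $1/(nT)^{\omega(1)}$. A final union bound over the $O((T/B)L)$ merges and the $K = O(\log(nT))$ iterations inside each merge preserves the overall failure probability, yielding the lemma. The delicate point throughout is Paragraph three: the pivotal chain must be extracted consistently with the $\preceq$-order of $\mF_k$ and the $r$ splits appearing in Definition~\ref{def:cover}, so that each step of the recursion both shrinks $U_s$ by only a $(1-1/\log(nT))$ factor and forces a factor-$2$ loss improvement on the corresponding pivot's suffix loss.
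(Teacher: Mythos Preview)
Your high-level reduction --- bound each $|\mP_\ell^{(t)}|$ by $O(\log^9(nT))$, sum over $\ell$, square for the pairwise losses --- matches the paper. However, there is a genuine gap in how you get from the per-iteration shrinkage claim to the sub-pool bound.

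\textbf{The shrinkage-to-polylog step fails.} You assert that after $K = 16\log(nT)$ iterations, each removing an $\Omega(1/\log(nT))$-fraction, the pool has been shrunk by $(1-\Omega(1/\log(nT)))^{16\log(nT)} \leq 1/(nT)^{\Omega(1)}$. This is an arithmetic error: that product is $e^{-\Theta(1)}$, a constant, not polynomially small. Consequently a single \textsc{Merge} call on inputs of total size $M$ only guarantees output size $\leq \max\{O(\log^9(nT)),\; c\cdot M\}$ for some constant $c < 1$; it does \emph{not} reduce an arbitrary input to polylog. The paper handles this by an induction over levels (its Lemma~\ref{lem:pool_size}): $\mP_0^{(t)}$ has size $O(1)$ w.h.p.\ by Chernoff on the initial sample, and inductively $\mP_{\ell+1}$ is the merge of two sub-pools each already of size $\leq 2\log^9(nT)$, so the constant-factor shrinkage (or the threshold branch) keeps the output at $2\log^9(nT)$. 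Your proposal skips this induction, so as written the memory bound does not follow.

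\textbf{The pivotal-chain argument is also incomplete.} Your sketch in the third paragraph has several loose ends. First, the pigeonhole over ``$|\mF_k| \leq O(\log^5(nT))$ choices'' presupposes the very bound you are trying to prove: if $|\mQ_C|$ is large then so is $|\mF_k|$. Second, the halving of the pivot's suffix loss is asserted without justification; the paper's invariant is different --- each \emph{surviving} expert's loss on the segment $\Gamma_{j_m, j_{m-1}}$ exceeds that of the pivot $j_m$ by $2^{m-1}$ --- and this is what actually forces termination once $2^r > T$. Third, your concluding sentence, ``the probability that $\mF_k$ simultaneously realizes such a pivotal chain and leaves $\ldots$ $U$ uncovered is $1/(nT)^{\omega(1)}$'', is a non-sequitur: the chain is constructed deterministically from $\mF_k$, so nothing improbable has been exhibited. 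The paper's argument instead shows that the recursion must reach a step where a set $R_r$ of size $\geq M/(4\log(nT))$ exists, each of whose members has $\geq \log^6(nT)$ near-neighbors in the pool; since each near-neighbor lands in $\mF_k$ independently with probability $1/\log^4(nT)$, w.h.p.\ every expert in $R_r$ has a witness in $\mF_k$ and is therefore removed by \textsc{Filter}.
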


The key observation comes from the following guarantee of \textsc{Merge} procedure.
\begin{lemma}
\label{lem:merge_size}
With probability at least $1-1/(nT)^{\omega(1)}$, the output $\mQ_{C}$ of {\sc Merge} procedure satisfies \begin{align*}
|\mQ_{C}| \leq \max\left\{2\log^9 (nT), \frac{1}{4}(|\mQ_A| + |\mQ_{B}|)\right\}.
\end{align*}
\end{lemma}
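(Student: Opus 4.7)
The plan is to split the analysis of \textsc{Merge} into two cases according to whether the procedure exits early at Line \ref{line:threshold} or completes all $K = 16\log(nT)$ iterations, and in each case to reduce the bound to concentration of the random subsampling together with a structural covering claim about one \textsc{Filter} step.

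\textbf{Case 1 (early exit).} If at some iteration the size estimate $s \leq \log^5(nT)$, the procedure returns $\mQ_C$ unchanged. Since $s$ is $\mathrm{Binomial}(|\mQ_C|, 1/\log^4(nT))$ with mean $|\mQ_C|/\log^4(nT)$, a multiplicative Chernoff bound shows that whenever $|\mQ_C| > 2\log^9(nT)$ we have $s > \log^5(nT)$ except with probability $\exp(-\Omega(\log^5(nT))) = 1/(nT)^{\omega(1)}$. Thus in this case $|\mQ_C| \leq 2\log^9(nT)$ w.h.p., matching the first term of the claimed maximum.

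\textbf{Case 2 (all $K$ iterations complete).} Here $s > \log^5(nT)$ at every iteration, which by the dual Chernoff direction forces $|\mQ_C| \geq \tfrac{1}{2}\log^9(nT)$ throughout, w.h.p. The main step is then to establish the following \emph{key per-iteration claim}: whenever $|\mQ_C| \geq \tfrac{1}{2}\log^9(nT)$, the step \textsc{Filter}$(\mF_k, \mQ_C)$ removes at least a $\tfrac{1}{c\log(nT)}$-fraction of $\mQ_C$ with probability $1-1/(nT)^{\omega(1)}$, for an absolute constant $c$. Granting the key claim and union-bounding over the $K = 16\log(nT)$ iterations, after all iterations $|\mQ_C| \leq (|\mQ_A|+|\mQ_B|)(1-\tfrac{1}{c\log(nT)})^{16\log(nT)} \leq (|\mQ_A|+|\mQ_B|)/4$ by choosing $c$ suitably small, which matches the second term of the maximum.

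\textbf{Main obstacle: proving the key claim.} This is the recursive pivotal-experts argument sketched in the overview. Suppose, toward a contradiction, that fewer than a $\tfrac{1}{c\log(nT)}$-fraction of experts in $\mQ_C$ are covered by $\mF_k$. I plan to greedily extract a chain of pivotal experts $i_1 \preceq i_2 \preceq \cdots \preceq i_r$ in $\mF_k$ such that the suffix benchmark formed by $i_s, \ldots, i_r$ — the sum of interval-wise minima appearing in Definition \ref{def:cover} — is at least $2^{r-s}$ smaller than the corresponding loss of any uncovered candidate entering between $i_s$ and $i_{s+1}$. Because the cumulative loss of any expert over the whole stream is bounded by $T$, the chain length is forced to satisfy $r = O(\log T)$. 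On the other hand, at each greedy step a fresh pivotal expert must be sampled by $\mF_k$ out of a pool of at least $\tfrac{|\mQ_C|}{c\log(nT)} \geq \tfrac{\log^8(nT)}{2c}$ candidate witnesses (otherwise the current benchmark would already cover every remaining uncovered expert, contradicting the hypothesis on $c$); since $\mF_k$ samples each expert independently with probability $1/\log^4(nT)$, the probability of missing a witness at a single chain step is at most $(1-1/\log^4(nT))^{\Omega(\log^8(nT))} \leq \exp(-\Omega(\log^4(nT)))$, and a union bound over the $O(\log T)$ chain steps yields the required $1/(nT)^{\omega(1)}$ failure probability. The delicate part is to verify that the geometric-doubling property of the benchmark in Definition \ref{def:cover} is actually preserved as the chain grows — specifically, that adjoining a new pivotal to the left of an existing suffix strengthens the benchmark by the required factor of $2$ modulo the $-1$ slack term — and to couple this combinatorial recursion cleanly with the random sampling of $\mF_k$.
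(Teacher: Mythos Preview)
Your high-level decomposition matches the paper: the early-exit versus full-iteration split, the Chernoff control of \textsc{EstimateSize}, and reducing to a per-iteration $\Omega(1/\log(nT))$-fraction removal via a pivot chain of length $O(\log T)$. The place you flag as ``delicate'' is exactly where the argument is incomplete, and the specific mechanism you describe does not go through.

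The paper's chain construction does \emph{not} argue that a fresh pivot is sampled from a pool of size $\geq M/(c\log(nT))$. Instead, at each stage $r$ it defines $D_r(i) := \{j \in \mW_r : \mL_{\Gamma_{i,j_{r-1}}}(i) \geq \mL_{\Gamma_{i,j_{r-1}}}(j) - 2^{r-1}\}$ and $R_r := \{i \in \mW_r : |D_r(i)| \geq \log^6(nT)\}$, and runs a dichotomy. If $|R_r| \geq M/(4\log(nT))$ then every $i \in R_r$ is covered w.h.p.\ (by any $j \in D_r(i)\cap\mF_k$ together with the earlier pivots $j_{r-1},\dots,j_1$; the sum $\sum_{m<r}2^{m-1}=2^{r-1}-1$ exactly absorbs the $-2^{r-1}$ coming from $j\in D_r(i)$, leaving the $-1$ slack of Definition~\ref{def:cover}), and the iteration is finished. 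Otherwise the next pivot $j_r$ is the \emph{latest} element of $(\mW_r\setminus R_r)\cap\mF_k$; this candidate pool has size close to $M$, not $M/\log(nT)$. The doubling is not an intrinsic property of the benchmark suffix --- it comes from the choice $j_r\notin R_r$: since $|D_r(j_r)|<\log^6(nT)$, after excising $D_r(j_r)$ every survivor $i\in\mW_{r+1}$ automatically satisfies $\mL_{\Gamma_{j_r,j_{r-1}}}(i)>\mL_{\Gamma_{j_r,j_{r-1}}}(j_r)+2^{r-1}$ by the very definition of $D_r$.

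Your justification ``otherwise the current benchmark would already cover every remaining uncovered expert'' does not follow: a small candidate-pivot pool in no way forces the existing pivots to cover the rest. What forces coverage is the $R_r$ side of the dichotomy --- many experts each having a large individual witness set $D_r(i)$ --- which is orthogonal to the size of the pivot pool. Your sketch is missing this $D_r/R_r$ machinery (with its carefully tuned $2^{r-1}$ threshold, which is what makes the telescoping work) as well as the ``latest sampled'' rule for $j_r$ that keeps the set $O_r$ of later-arriving experts small.
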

\begin{proof}
Suppose the {\sc Filter} procedure has been called for $k_{\max} \in [0:K]$ times and let $\mQ_{C,k}$ be the status of pool $\mQ_{C}$ at the beginning of $k$-th loop.
Note that $|\mQ_{C,1}| = |\mQ_{A}| + |\mQ_{B}|$. 
For any $k \in [k_{\max}]$, our goal is to prove, with probability at least $1-1/(nT)^{\omega(1)}$,
\begin{enumerate}
\item Either $|\mQ_{C, k}| \leq 2 \log^9 (nT)$; or
\item $|\mQ_{C, k+1}| \leq (1 - \frac{1}{6\log (nT)})|\mQ_{C,k}|$.
\end{enumerate}
Suppose we have proved the claim, then taking a union bound over $k \in [k_{\max}]$, it guarantees that 
\begin{align*}
|\mQ_{C,k_{\max}+1}| \leq &~ \max\left\{ 2\log^9 (nT), \left(1 - \frac{1}{6\log (nT)}\right)^{16\log(nT)} |\mQ_{C,1}|\right\}\\
\leq &~ \max\left\{ 2\log^9 (nT), \frac{1}{4}(|\mQ_A| + |\mQ_{B}|)\right\}.
\end{align*}

To prove the claim, fix an iteration $k \in [k_{\max}]$ and assume $|\mQ_{C,k}| > 2\log^9 (nT)$, it suffices to prove $|\mQ_{C, k+1}| \leq (1 - \frac{1}{6\log (nT)})|\mQ_{C, k}|$ holds with high probability.
First, it passes the threshold test (Line \ref{line:threshold} of {\sc Merge}) with high probability, since
\begin{align*}
\Pr[s \leq \log^{5}(nT)] = \Pr\left[\sum_{i\in \mQ_{C,k}}p_i \leq \log^{5}(nT)\right] \leq \exp(-\log^4(nT)/4) = 1/(nT)^{\omega(1)}.
\end{align*}
The second step follows from the Chernoff bound and the fact that (1) $\{p_i\}_{i \in \mQ_{C,k}}$ are independent Bernoulli variables with mean $\frac{1}{\log^4(nT)}$; (2) $|\mQ_{C, k}| \geq 2\log^{9}(nT)$.

We next analyse the {\sc Filter} procedure and prove with high probability (over the choice of $\mF_k$), at least $\frac{1}{6\log nT}$-fraction of experts would be removed.  We prove that this set of experts exists iteratively. 
\begin{itemize}
    \item At the $r$-th iteration of the analysis we identify a subset $R_r$ that are removed with high probability over the choice of $\mF_k$. If $R_r$ is not too small ($\Omega(1/\log(nT))$-fraction of the sub-pool), we are guaranteed to reduce the size of the sub-pool so we're done.
    \item Otherwise, we identify $r$ pivot experts of $\mF_k$ and a large ($1-O(r/\log(nT))$-fraction of the subset $\mW_{r+1}$ whose loss is quite large. Specifically, consider the suffix of the epochs spanned by the $r$ experts in $\mF_k$. Any expert in $\mW_{r+1}$ is covered by those $r$ experts over this suffix --- and furthermore this continues to hold even if we give the experts in $\mW_{r+1}$ an additive $2^{r-1}$ loss reduction.  This cannot continue to hold if the loss reduction exceeds the total possible loss of $T$, hence the previous case of large $R_r$ must hold for some $r \le \log(nT)$.
\end{itemize}

\paragraph{Setting up notation for the analysis:} Let $M = |\mQ_{C, k}|$. For any $r \geq 1$, we iteratively define 
\begin{align*}
\mW_{r+1} := \mW_{r} \backslash (R_{r} \cup O_{r} \cup D_{r}(j_{r})) \quad \text{and} \quad \mW_{1} = \mQ_{C,k}.
\end{align*}
The definitions of $j_r, R_{r}, O_{r}$ and $D_{r}(j_{r})$ are presented subsequently.

We first define the set $D_r(i)$ for every expert $i \in \mW_r$, let
\begin{align*}
D_r(i) := \{j \in \mW_r \setminus \{i\}, \mL_{\Gamma_{i, j_{r-1}}}(i) \geq \mL_{\Gamma_{i, j_{r-1}}}(j) - 2^{r-1} \}, \forall i \in \mW_{r}.
\end{align*}
Here we slightly abuse notation and take $\Gamma_{i, j_0} = \Gamma_{i,t}$.
In other words, the set $D_r(i)$ contains experts that are comparable with expert $i$ over the interval $\Gamma_{i, j_{r-1}}$.

We next define the set $R_r$, let
\begin{align*}
R_r = \{i: i \in \mW_r, |D_r(i)| \geq  \log^6(nT)\}.
\end{align*}
The set $R_r$ contains every expert $i \in \mW_r$ that has a large size $D_r(i)$; i.e. we shall argue these are the experts that are easily removed because they are covered by the combination of $j_1, \ldots, j_{r-1}$ and any single expert from $D_r$.

Now, define the expert $j_r$ as the latest expert in $\mW_{r}\backslash R_{r}$ that is also sampled into $\mF_{k}$, i.e.,
\begin{align*}
j_r := \argmax_{j\in (\mW_r\backslash R_r) \cap \mF_{k}} E_{j}.
\end{align*}
Note if no such expert exists, i.e.~$(\mW_r\backslash R_r) \cap \mF_{k} = \emptyset$, then we leave it undefined.

The set $O_r$ includes experts enter the pool later than $j_r$, i.e.
\begin{align*}
O_r = \{i: i \in \mW_r\backslash R_r, E_{i} \geq E_{j_r} \}.
\end{align*}

\paragraph{Inductive hypothesis.}
Let $r \geq 1$, our inductive hypothesis conditions on the event that $R_{m} \leq \frac{1}{4\log(nT)}M$ for all $m \in [r]$. Looking ahead, these are the easy cases because the set of experts $R_m$ are easily removed.
We inductively prove 
\begin{enumerate}
\item $|\mW_{r+1}| \geq (1 - \frac{r}{2\log (nT)})M$;
\item The experts $j_{1}, \ldots, j_{r} \in \mF_{k}$ and follow the order of $j_{r} \preceq j_{r-1} \preceq \cdots \preceq j_1$ ; 
\item For any $m \in [r]$ and expert $i \in \mW_{r+1}$,
\begin{align}
\mL_{\Gamma_{j_{m}, j_{m-1}}}(i) \geq \mL_{\Gamma_{j_{m}, j_{m-1}}}(j_m) + 2^{m-1}.\label{eq:ind1}
\end{align}
\end{enumerate}
We prove the claims by induction and we note the base case of $r=0$ holds trivially. Suppose the induction holds up to $r-1$.

\paragraph{The analysis of $r$-th iteration.}
 We now consider two cases: in the first case, the set of easily removable experts $R_r$ is large, so we can just remove them!
In the second case, while it is hard to cover the typical expert with a single expert, we could continue the induction and prove the typical expert can eventually be covered by pooling together several experts.

\vspace{+2mm}
\textbf{Case 1.} If $|R_r| \geq \frac{1}{4\log (nT)}M$. We prove with high probability that experts in $R_r$ are removed by the {\sc Filter} procedure.
For any expert $i \in R_r$, each expert $j \in D_r(i)$ is included into $\mF_k$ with (independent) probability $\frac{1}{\log^4(nT)}$, hence, 
\begin{align*}
\Pr\left[j \notin \mF_{k}, \forall \, j \in D_r(i)\right] \leq \left(1 - \frac{1}{\log^4 (nT)}\right)^{\log^6 (nT)} \leq 1/(nT)^{\omega(1)}.
\end{align*}
The first step follows from $|D_r(i)| \geq \log^{6}(nT)$ holds for any $i \in R_r$.

Therefore, with probability at least $1 - 1/(nT)^{\omega(1)}$, there exists at least one expert $j \in D_r(i)\cap \mF_k$. We claim that the expert $i$ is covered by $j, j_{r-1}, j_{r-2}, \ldots, j_{1}$. In particular, we have
\begin{align*}
\mL_{\Gamma_{i,t}}(i) = &~  \sum_{m=1}^{r-1}\mL_{\Gamma_{j_m, j_{m-1}}}(i) + \mL_{\Gamma_{i, j_{r-1}}}(i)\\
\geq &~ \sum_{m=1}^{r-1}(\mL_{\Gamma_{j_m, j_{m-1}}}(j_{m}) + 2^{m-1}) + \mL_{\Gamma_{i, j_{r-1}}}(j) - 2^{r-1}\\
= &~ \sum_{m=1}^{r-1}\mL_{\Gamma_{j_m, j_{m-1}}}(j_{m}) + \mL_{\Gamma_{i, j_{r-1}}}(j) -1,
\end{align*}
where we split the loss of $i$ in the first step, the second step holds due to the inductive hypothesis (see Eq.~\eqref{eq:ind1}) and $j \in \D_r(i)$.

Hence, with probability at least $1 - 1/(nT)^{\omega(1)}$, the set of experts $R_r\setminus \mF_{k}$ are removed. Note that $|R_r| \geq \frac{1}{4\log (nT)}M$ and the size of $\mF_k$ satisfies
\begin{align*}
\Pr\left[|\mF_{k}| \geq \frac{2}{\log^4(nT)} M \right] \leq \exp(-M/4\log^4(nT)) \leq 1/(nT)^{\omega(1)},
\end{align*}
where the first step holds due to Chernoff bound.

Taking a union bound, we have $|R_r\setminus \mF_{k}| \geq \frac{1}{6\log (nT)}M$ with probability at least $1 - 1/(nT)^{\omega(1)}$ and we have already identified a large set to be removed.

\vspace{+2mm}
\textbf{Case 2.} If $|R_r| < \frac{1}{4\log (nT)}M$.
Recall $j_{r}$ is defined as the latest expert of $\mW_{r}\setminus R_{r}$ that is sampled into $\mF_{k}$. 
With probability at least $1-1/(nT)^{\omega(1)}$, at most $\log^6 (nT)$ experts arrive later than $j_{r}$, i.e., $|O_r|\leq \log^6 (nT)$. Recall the set $\mW_{r+1}$ is recursively defined as $\mW_{r+1} = \mW_{r} \setminus (R_{r}\cup O_r \cup D_r(j_{r}))$. We prove our inductive hypothesis continues to hold.
First, the size of $\mW_{r+1}$ satisfies
\begin{align*}
|\mW_{r+1}| \geq &~ |\mW_{r}| - |R_r| - |O_r| - |D_r(j_r)| \\
\geq &~ \left(1 - \frac{r-1}{2\log (nT)}\right) M - \frac{1}{4\log (nT)}M - \log^6 (nT) - \log^6 (nT) \\
\geq &~ \left(1- \frac{r}{2\log (nT)}\right)M.
\end{align*}
Second, it is clear that $j_r \in \mF_k$ by definition and it comes earlier than $j_{r-1}, \ldots, j_1$.
The third claim holds since $D_{r}(j_r) \cap \mW_{r+1} = \emptyset$, we have proved the induction.

Finally, if the event $R_{r} < \frac{1}{4\log(nT)} M$ continues to hold to $r =\log(nT) + 1$, then one has $D_{r}(i) = \mW_r$ for every expert $i \in \mW_{r}$, since the loss is at most $T$. Consequently, one as $|R_r| = |\mW_{r}| \geq \frac{1}{2}M$. 
We conclude the proof here.
\end{proof}

Now we can bound the pool size. 
\begin{lemma}[Memory bounds for sub-pool]
\label{lem:pool_size}
For any epoch $t\in [T/B]$ and sub-pool $\ell \in [0: L]$, with probability at least $1-1/(nT)^{\omega(1)}$, one has $|\mP_{\ell}^{(t)}| \leq 2\log^9 (nT)$.
\end{lemma}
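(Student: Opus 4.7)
The plan is to prove the bound by induction on the sub-pool index $\ell$, using Lemma \ref{lem:merge_size} as the workhorse at the inductive step, and then union-bounding over all epochs and levels.

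For the base case $\ell = 0$, observe that at the start of each epoch $t$, $\mP_0^{(t)}$ is produced by $\textsc{Sample}(\mathcal{N}, 1/n)$, i.e.\ each of the $n$ experts is included independently with probability $1/n$. Hence $\E[|\mP_0^{(t)}|] = 1$, and by a standard Chernoff bound, $|\mP_0^{(t)}| \leq \log^2(nT)$ with probability at least $1 - 1/(nT)^{\omega(1)}$, which is much smaller than $2\log^9(nT)$.

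For the inductive step, fix $\ell \geq 0$ and assume by induction that throughout the execution, every sub-pool at level $\le \ell$ (at every epoch where it is instantiated) has size at most $2\log^9(nT)$, except on an event of probability $1/(nT)^{\omega(1)}$. I want to show the same bound for level $\ell+1$. The sub-pool $\mP_{\ell+1}$ is only modified when $\textsc{Baseline}$ calls $\textsc{Merge}(\mP_{\ell+1}, \mP_{\ell})$ (this happens every $2^{\ell}$ epochs). Immediately before such a call, by the induction hypothesis both $|\mP_{\ell}|$ and $|\mP_{\ell+1}|$ are at most $2\log^9(nT)$ (the latter from the previous merge at level $\ell+1$ if any, and trivially equal to $0$ at initialization). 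Applying Lemma \ref{lem:merge_size} to this call yields
\begin{align*}
|\mP_{\ell+1}^{\text{new}}| \;\le\; \max\left\{2\log^9(nT),\, \tfrac{1}{4}\big(|\mP_{\ell}| + |\mP_{\ell+1}^{\text{old}}|\big)\right\} \;\le\; \max\left\{2\log^9(nT),\, \log^9(nT)\right\} \;=\; 2\log^9(nT),
\end{align*}
which preserves the invariant. Between merges $\mP_{\ell+1}$ does not change, so the bound continues to hold at every epoch $t$.

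Finally, the total number of stochastic events invoked (one Chernoff bound per epoch at level $0$, and one invocation of Lemma \ref{lem:merge_size} per merge call) is at most $(T/B)\cdot(L+1) = \poly(T)$, each failing with probability at most $1/(nT)^{\omega(1)}$. Taking a union bound over all of these events preserves the overall failure probability $1/(nT)^{\omega(1)}$ and completes the proof. The only real subtlety, and the step I would be most careful with, is checking that the inductive hypothesis is actually available at the moment $\textsc{Merge}$ is called at level $\ell+1$, i.e.\ that all prior calls to $\textsc{Merge}$ at levels $\le \ell$ have already succeeded in keeping the sub-pools small; this is guaranteed because the union bound is taken over \emph{all} merge calls at all levels simultaneously.
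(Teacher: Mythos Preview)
Your proposal is correct and follows essentially the same approach as the paper: induction on $\ell$, with the base case handled by a Chernoff bound on $\textsc{Sample}(\mathcal{N},1/n)$ and the inductive step by feeding the two size-$\le 2\log^9(nT)$ sub-pools into Lemma~\ref{lem:merge_size}. Your version is in fact slightly more explicit about the union bound over all merge calls and about $\mP_{\ell+1}$ being unchanged between merges, which the paper leaves implicit.
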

\begin{proof}
We prove by induction on $\ell$. For the base case, at any epoch $t$, the sub-pool $\mP_{0}^{(t)}$ consists of the experts sampled at the beginning of epoch $t$ and one has
\begin{align*}
\Pr[|\mP_{0}^{(t)}| \geq 2\log^9(nT)] \leq \exp(- \log^9(nT)/3) \leq 1/(nT)^{\omega(1)},
\end{align*}
this comes from Chernoff bound and the fact that each expert is sampled with probability $1/n$.

Suppose the induction holds up to $\ell-1$ ($\ell \geq 1$), the sub-pool $\mP_{\ell}$ is empty at beginning, and it is merged with $\mP_{\ell-1}$ every $2^{\ell-1}$ epochs. That is to say, the input sub-pools ($\mP_{\ell}$ and $\mP_{\ell-1}$) of {\sc Merge} both have size no more than $2\log^9(nT)$, hence by Lemma \ref{lem:merge_size}, with probability at least $1-1/(nT)^{\omega(1)}$, the output sub-pool $|\mP_{\ell}^{(t)}| \leq 2\log^9 (nT)$.
\end{proof}

Now we are ready to prove Lemma \ref{lem:memory-baseline}.
\begin{proof}[Proof of Lemma \ref{lem:memory}]
By Lemma \ref{lem:pool_size}, at epoch $t$, the size of the pool is bounded by:
\[
|\mP^{(t)}| \leq \sum_{\ell=0}^{L}|\mP_{\ell}^{(t)}| \leq 2\log^9 (nT) \cdot 2\log(nT) = O(\log^{10} (nT)).
\]
To execute the {\sc Filter} procedure,  it suffices to record loss of expert $i \in \mP^{(t)}$ over interval $\Gamma_{i, j}$ for any $i, j \in \mP^{(t)}$, it takes $O(\log^{10} (nT))$ words per expert and up to $O(\log^{20} (nT))$ words of memory in total. We have finished the memory analysis of {\sc Baseline}.
\end{proof}

\subsubsection{Regret analysis}
\label{sec:base-regret}

Let $i^{*}$  denote the best expert.
At a high level, the regret analysis considers two types of epochs, which can be informally described as follows:
\begin{itemize}
    \item An epoch is ``good'', if had we sampled  $i^{*}$ in this epoch, it will eventually be covered and kicked out of the pool. Consider the interval from the current good epoch until $i^*$ is kicked out of the pool: We argue that the algorithm's performance on the entire interval  is comparable to $i^*$ even if $i^*$ is never sampled.
    \item The set of ``bad'' epochs, denoted by $\mH$, are ones where $i^*$ would survive until the end of the stream.
\end{itemize}
Our plan is to argue that number of bad epochs must be small --- otherwise the pool would overflow with many copies of $i^*$; but we had already proved in Lemma~\ref{lem:pool_size_full} that the pool size is polylogarithmic.  For each bad epoch we have no control over the regret, but the total regret from bad epochs is small because there aren't many of them. (And as we explained above, on good epochs the algorithm is guaranteed to have low regret.)

There is a subtle issue with the above approach%
\footnote{The issue is subtle but fixing was perhaps the most difficult part in coming up with the algorithm and analyzing it.}:
We want to analyze the regret by considering a hypothetical situation that $i^*$ is sampled into the pool at a given epoch. However, sampling $i^*$ also interferes with the rest pool going forward because $i^*$ may kick out other experts; kicking out some experts may allow the pool to keep other experts that should have been kicked out, etc. This makes it very difficult to compare the pool the algorithm actually uses and the hypothetical pool with $i^*$.

To handle the above subtlety, we carefully designed the algorithm so that even conditioned on $i^*$ having been sampled into the pool, w.h.p.~it is never considered in the {\sc Sample} and {\sc EstimateSize} subroutines. Hence it does not interfere with the rest of the pool.

\paragraph{Random bits} 
It is useful to understand the random bits used for pool selection first.
For each epoch $t \in [T/B]$, let $\xi_{t} = (\xi_{t, 1}, \ldots, \xi_{t, n})$, where $\xi_{t, i}$ is used for expert $i$ ($i \in [n]$).
The random bits $\xi_{t, i} = (\xi_{t, i,1}, \xi_{t, i,2})$ consists of two parts, where $\xi_{t, i, 1} \in \{0, 1\}$ is a Bernoulli variable with mean $1/n$, and it is used for sampling expert $i$ into the pool (i.e. Line \ref{line:base_new} of Algorithm \ref{algo:base}). {\sc Baseline} adds expert $i$ at epoch $t$ if and only if $\xi_{t, i,1} = 1$.
The second part of the random bits $\xi_{t, i,2} \in \{0, 1\}^{2L \times 2K}$ are used for {\sc EstimateSize} and {\sc Sample} when calling the {\sc Merge} procedure. Each coordinate of $\xi_{t, i,2}$ is a Bernoulli random variable with mean $\frac{1}{\log^4 (nT)}$ and the expert $i$ gets sampled for {\sc Sample} and {\sc EstimateSize} if the corresponding random bit equals $1$.

We introduce the concept of passive/active expert.
\begin{definition}[Active/Passive expert]
At any epoch $t\in [T/B]$, an expert $i$ is said to be {\em passive}, if $\xi_{t, i, 2} = \vec{0}$. It is said to be an {\em active} expert otherwise.
\end{definition}

\paragraph{Epoch assignment} 
Fix the loss sequence $\ell_1, \ldots, \ell_{T}$ and the set of {\em sampled and active} experts $Y_t \subseteq [n]$ of each epoch $t$ (note that the set of sampled and passive experts are still not determined yet).
The key observation is that the estimate size $s$, the filter set $\mF$ as well as the set of alive active experts are fixed at any time during the execution of {\sc Baseline}. They are completely determined by the loss sequence $\{\ell_t\}_{t\in [T]}$ and the sampled while active experts $\{Y_t\}_{t\in [T/B]}$.

We would divide the entire epochs into two types.  Consider the following epoch assignment mechanism. We start with $\tau \leftarrow 1$, $a_{1} \leftarrow 1$ and $\mH \leftarrow \emptyset$. Here the set $\mH \subseteq [T/B]$ collects all bad epochs that the algorithm has no controls over the regret.
Let $t(a_\tau) \in [a_{\tau}: T/B] \cup \{+\infty\}$ be the first time such that $i^{*}$ (with entering time $a_{\tau}$) is covered by the set of alive active experts at epoch $t(a_{\tau})$.
We note that the notion of covering is well-defined regardless of $i^{*}$ being sampled or not. We divide into two cases.
\begin{itemize}
\item Expert $i^{*}$ is never covered (i.e., $t(a_\tau) =+\infty$). Then we augment the set of bad epochs $\mH \leftarrow \mH \cup\{a_{\tau}\}$, and update $a_{\tau+1} = a_{\tau} + 1$, $\tau \leftarrow \tau + 1$;
\item Otherwise, update $a_{\tau + 1} \leftarrow t(a_{\tau}) + 1$ and $\tau \leftarrow \tau + 1$.
\end{itemize}

The assignment process terminates once $a_{\tau}$ reaches $\frac{T}{B}$ and let $\tau_{\max}$ be the total number of steps.

We decompose the regret of {\sc Baseline} based on the above assignment process and bound it in terms of the size of bad epochs $\mH$. 
\begin{lemma}
\label{lem:regret-decompose}
Fixing the loss sequence as well as the set of {\em sampled and active} experts of each epoch, with probability at least $1 - 1/(nT)^{\omega(1)}$ over the choice of MWU, one has
\begin{align*}
\sum_{t=1}^{T/B} \sum_{b=1}^{B} \ell_{(t-1)B + b}(i_{(t-1)B + b}) - \sum_{t=1}^{T/B}\mL_t(i^{*}) \leq O\left(\frac{T}{\sqrt{B}}\log (nT)\right)+ B \cdot |\mH|.
\end{align*}
\end{lemma}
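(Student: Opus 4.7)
The plan is to split the total loss of \textsc{Baseline} according to the epoch assignment $a_1 < a_2 < \cdots < a_{\tau_{\max}+1}$, bounding the contribution of each good interval $I_\tau := [a_\tau, t(a_\tau)]$ against $\mL_{I_\tau}(i^*)$ and that of each bad epoch $t \in \mH$ by the trivial per-epoch bound $B$. Within a single epoch $t$, \textsc{Baseline} runs MWU for $B$ days over the pool $\mP^{(t)}$, whose size is polylogarithmic in $nT$ by Lemma \ref{lem:pool_size}. Applying the standard MWU regret guarantee and a union bound over all $T/B$ epochs, with probability $1 - 1/(nT)^{\omega(1)}$ the algorithm's loss during epoch $t$ exceeds $\min_{i \in \mP^{(t)}} \mL_t(i)$ by at most $O(\sqrt{B\log(nT)})$; summing these bounds yields an $\wt{O}(T/\sqrt{B})$ overhead. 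It therefore remains to compare $\sum_{t} \min_{i \in \mP^{(t)}} \mL_t(i)$ to $\sum_{t} \mL_t(i^*)$.

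The main work is handling the good intervals via Definition \ref{def:cover}. For each good $I_\tau$, the hypothetical expert $i^*$ (entering at $a_\tau$) is covered at epoch $t(a_\tau)$ by the alive active set $\mF = \{i_1 \preceq \cdots \preceq i_m\}$. Unpacking the definition, $I_\tau$ splits into consecutive sub-intervals delimited by the entry times of the members of $\mF$ that come after $i^*$, and on each sub-interval the definition assigns an earlier-entering witness expert from $\mF$ whose cumulative loss on that sub-interval, summed over all sub-intervals of $I_\tau$, is at most $\mL_{I_\tau}(i^*) + 1$. The key observation I plan to invoke is that each witness expert entered the pool no later than the start of its sub-interval and was still alive at the later epoch $t(a_\tau) \in I_\tau$; since experts are never re-inserted once evicted, the witness belongs to $\mP^{(t)}$ for every $t$ in its sub-interval, and hence the per-epoch min over $\mP^{(t)}$ is upper bounded by the witness's per-epoch loss. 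Summing over the sub-intervals of $I_\tau$ gives $\sum_{t \in I_\tau} \min_{i \in \mP^{(t)}} \mL_t(i) \leq \mL_{I_\tau}(i^*) + 1$.

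Summing this inequality over good intervals (which are pairwise disjoint subsets of $[T/B]$) yields $\sum_{\mathrm{good}\,\tau}\sum_{t \in I_\tau} \min_{i \in \mP^{(t)}} \mL_t(i) \leq \sum_{t=1}^{T/B} \mL_t(i^*) + \tau_{\max}$, and the $\tau_{\max} \leq T/B$ additive slack is absorbed into $\wt{O}(T/\sqrt{B})$. Each bad epoch $t \in \mH$ contributes at most $\min_{i \in \mP^{(t)}} \mL_t(i) \leq B$, accounting for the extra $B|\mH|$ term. Combining the three pieces (MWU overhead, good-interval covering bound, and crude bad-epoch bound) gives the claimed inequality. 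The one delicate point I anticipate is making sure the witness expert $i_s^*$ really lies in $\mP^{(t)}$ throughout its entire sub-interval; this rests on the monotonicity of the alive-active status together with the fact that $\mF$ is the alive active set at the \emph{endpoint} $t(a_\tau)$ of $I_\tau$, so any expert in $\mF$ has been continuously alive since its entry up through $t(a_\tau)$, in particular across every sub-interval earlier than its own entry.
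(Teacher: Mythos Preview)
Your proposal is essentially the paper's proof: compare the algorithm's per-epoch loss to $\min_{i\in\mP^{(t)}}\mL_t(i)$ via the MWU guarantee, then split the comparison against $i^*$ into good intervals (where the cover definition yields a $+1$ slack each, totaling at most $\tau_{\max}\le T/B$) and bad epochs (where the trivial bound $B$ applies). The paper is even terser on your ``delicate point'' --- it simply writes ``the definition of a cover'' --- so your explicit discussion of why the witnesses lie in $\mP^{(t)}$ is a welcome addition; note only that your final clause has a small slip: an expert in $\mF$ is continuously alive \emph{from its entry onward} through $t(a_\tau)$, so it is present in sub-intervals \emph{starting no earlier than} its entry, not ``earlier than its own entry.''
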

\begin{proof}
Let $i_{t}^{*}$ be the best expert in the pool $\mP^{(t)}$ during epoch $t$, then with probability at least $1 - 1/(nT)^{\omega(1)}$, we have
\begin{align*}
&~ \sum_{t=1}^{T/B} \sum_{b=1}^{B} \ell_{(t-1)B + b}(i_{(t-1)B + b}) - \sum_{t=1}^{T/B}\mL_t(i^{*})\\
= &~ \sum_{t=1}^{T/B} \sum_{b=1}^{B} \ell_{(t-1)B + b}(i_{(t-1)B + b}) - \sum_{t=1}^{T/B} \mL_t(i^{*}_t) + \sum_{t=1}^{T/B} \mL_t(i^{*}_t) - \sum_{t=1}^{T/B}\mL_t(i^{*}) \\
\leq &~ \sum_{t=1}^{T/B}\mL_{t}(i_t^{*}) - \sum_{t=1}^{T/B} \mL_t(i^{*}) + \frac{T}{B}\cdot O\left(\sqrt{B}\log (nT)\right) \\
= &~ \sum_{t \in \mH}\mL_{t}(i_t^{*}) - \mL_t(i^{*}) + \sum_{\tau=1}^{\tau_{\max}}\sum_{t\in [a_{\tau}:a_{\tau+1}-1]\backslash \mH} \mL_{t}(i_t^{*}) - \mL_t(i^{*}) + O\left(\frac{T}{\sqrt{B}}\log (nT)\right)\\
\leq &~ B \cdot |\mH| + \tau_{\max} + O\left(\frac{T}{\sqrt{B}}\log (nT)\right)\\
\leq &~ B \cdot |\mH| + O\left(\frac{T}{\sqrt{B}}\log (nT)\right).
\end{align*}
The second step follows from the regret guarantee of MWU. We decompose the regret according to the assignment in the third step. The fourth step follows from the naive bound of $\mL_t(i_t) \leq B$ and the definition of a cover. The last step follows from $\tau_{\max} \leq \frac{T}{B}$. This finishes the proof.
\end{proof}

We next prove the size of $\mH$ is small with high probability.

\begin{lemma}
\label{lem:base-size}
With probability at least $1 - 1/(nT)^{\omega(1)}$, $|\mH| \leq O(n\log^{10}(nT))$.
\end{lemma}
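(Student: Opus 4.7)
The plan is to exploit the fact that a single sampled-and-passive copy of $i^*$ placed into $\mP$ at a bad epoch must survive there permanently, and then use the polylogarithmic pool bound of Lemma~\ref{lem:pool_size} to cap the number of such events.

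First, I would note that with $L = \log_2 T$ and $K = 16\log(nT)$, so $2L\cdot 2K = O(\log^2(nT))$, the probability that $i^*$ is simultaneously sampled and passive at a fixed epoch $a$ is
\[
\Pr\bigl[\xi_{a, i^*, 1} = 1,\ \xi_{a, i^*, 2} = \vec{0}\bigr] \;=\; \frac{1}{n}\Bigl(1-\tfrac{1}{\log^4(nT)}\Bigr)^{2L\cdot 2K} \;\geq\; \frac{1}{2n},
\]
and these events are mutually independent across $a \in [T/B]$ because they depend only on $\xi_{a, i^*, \cdot}$.

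Second, I would establish the key invariance claim: if $a \in \mH$ and $i^*$ is sampled-and-passive at $a$, then $i^*$ enters $\mP_0^{(a)}$ at Line~\ref{line:base_new} and is never evicted before the final epoch. Since $\xi_{a, i^*, 2} = \vec{0}$, this passive copy of $i^*$ is never selected by any \textsc{Sample} call (so $i^* \notin \mF_k$ for any $k$) and contributes $0$ to every \textsc{EstimateSize} call; hence inserting it perturbs neither the sets $\mF_k$, nor the size estimates $s$, nor the alive set of active experts produced by any subsequent \textsc{Merge}. The only eviction mechanism is \textsc{Filter}, which removes $i^*$ precisely when some $\mF_k$ covers it. But each $\mF_k$ is a subset of the alive active experts at its epoch, and the covering relation of Definition~\ref{def:cover} is monotone in $\mF$ (enlarging $\mF$ can only lower the right-hand side); by the very definition of a bad epoch, no set of alive active experts at any subsequent epoch covers $i^*$, so no such $\mF_k$ ever does, and $i^*$ persists in $\mP$ through the last epoch.

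Third, I would condition on the realized sampled-and-active sets $\{Y_t\}_{t \in [T/B]}$, which fix the algorithm's execution on active experts and thereby fix $\mH$. Conditional on $\{Y_t\}$, the indicator events ``$i^*$ sampled-and-passive at $a$'' for $a \in \mH$ remain mutually independent Bernoullis with success probability $\geq 1/(2n)$ (the conditioning reduces the success probability by only a $1 + o(1)$ factor, since $\Pr[i^* \in Y_a] = O(1/n)$). Now suppose for contradiction that $|\mH| \geq M := C n\log^{10}(nT)$ for a sufficiently large constant $C$. A Chernoff bound then yields, with probability at least $1 - 1/(nT)^{\omega(1)}$, at least $M/(4n) = (C/4)\log^{10}(nT)$ indices $a \in \mH$ at which $i^*$ is sampled-and-passive. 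By the invariance claim each such $a$ contributes a distinct permanent copy of $i^*$ to the pool at the final epoch, so $|\mP^{(T/B)}| \geq (C/4)\log^{10}(nT)$, contradicting the $O(\log^{10}(nT))$ pool bound of Lemma~\ref{lem:pool_size} for $C$ sufficiently large. A union bound over this Chernoff failure and the failure of Lemma~\ref{lem:pool_size} then delivers the claim.

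The main obstacle is the invariance step: one must verify carefully that inserting a passive expert literally changes nothing from the viewpoint of every other expert --- every size estimate, every filter set, and every eviction outcome is identical --- so that the counterfactual definition of a bad epoch (``the alive active experts never cover the hypothetical $i^*$'') correctly predicts survival of the actually-inserted passive copy. Once this invariance is in hand, the rest is a standard capacity-vs-Chernoff counting argument against the polylogarithmic pool bound.
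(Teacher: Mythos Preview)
Your proposal is correct and follows essentially the same approach as the paper: condition on the sampled-and-active sets $\{Y_t\}$ to fix $\mH$, observe that a sampled-and-passive copy of $i^*$ inserted at a bad epoch survives forever (since passive experts never enter any $\mF_k$ or affect size estimates, and $\mF_k$ is always a subset of the alive active experts, which by definition of $\mH$ never cover $i^*$), and then use independence plus Chernoff against the $O(\log^{10}(nT))$ pool bound. Your explicit appeal to the monotonicity of the covering relation in $\mF$ is a nice clarification that the paper leaves implicit; one small wording quibble is that conditioning on $i^* \notin Y_a$ actually \emph{increases} (not decreases) the probability of $i^* \in W_a$, but your conclusion $\geq 1/(2n)$ is correct either way.
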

\begin{proof}
We first fix the loss sequence $\ell_1, \ldots, \ell_{T}$ as well as the set of sampled and active experts $Y_t$ of each epoch $t$. 
We prove the size of pool $\mP$ at the end satisfies 
\[
\Pr\left[|\mP|\geq \frac{|\mH|}{4n} \mid Y_1, \ldots,Y_{T/B}, \ell_1, \ldots, \ell_T \right] \geq \frac{1}{2},
\]
whenever $|\mH| \geq \Omega(n)$. Here the probability is taken over the randomness of the remaining  {\em sampled and passive} experts. 

Let $W_t$ be the set of sampled and passive experts of epoch $t$. For any epoch $t \in \mH$, by the epoch assignment mechanism, the expert $i^{*} \notin Y_{t}$ (otherwise it is covered by itself at epoch $t$).
Condition on $\{Y_t\}_{t\in [T/B]}$, the probability that $i^{*} \in W_t$ obeys
\begin{align*}
\Pr[i^{*}\in W_t | Y_1, \ldots, Y_{T/B}, \ell_1, \ldots, \ell_T] = &~  \Pr[i^{*}\in Y_t | i^{*}\notin Y_{t}] \\
\geq &~ \Pr[i^{*}\in W_t] = \frac{1}{n} \cdot \left(1 - \frac{1}{\log^4 (n)}\right)^{2L \times 2K} \geq \frac{1}{2n}.
\end{align*}
The key observation is that once $i^{*} \in W_{t}$, i.e., $i^{*}$ is sampled while passive at epoch $t$, it would survive till the end. This is because the filter set $\mF$ and the estimate size $s$ are fully determined by the sampled and active experts $\{Y_t\}_{t\in [T/B]}$ and the loss sequence, which has already been fixed. 
Therefore, if a passive expert $i^{*}$ enters at epoch $t$, it survives till the end.

The event of $i^{*}\in W_t$ are independent for $t \in \mH$ (condition on the loss sequence and $\{Y_t\}_{t\in [T/B]}$), hence, by Chernoff bound, the pool $\mP$ at the end of epoch $T/B$ satisfies
\begin{align}
\Pr\left[|\mP|\leq \frac{|\mH|}{4n} \mid Y_1, \ldots, Y_{T/B}, \ell_1, \ldots, \ell_T \right] 
\leq &~ \Pr\left[\sum_{t\in \mH}\mathsf{1}\{i^{*}\in W_t\}\leq \frac{|\mH|}{4n} \mid Y_1, \ldots, Y_{T/B}, \ell_1, \ldots, \ell_T \right] \notag \\
\leq &~ \exp(-|\mH|/16n). \label{eq:bad-size-baseline}
\end{align}

Note we have already proved in Lemma \ref{lem:memory-baseline} that $|\mP| \geq  \Omega(\log^{10}(nT))$ happens with probability at most $1/(nT)^{\omega(1)}$, by Eq.~\eqref{eq:bad-size-baseline}, this implies 
\[
\Pr[|\mH|\geq \Omega(n\log^{10}(nT))] \leq 1/(nT)^{\omega(1)}.
\] 
We conclude the proof here.
\end{proof}

Combining Lemma \ref{lem:regret-decompose} and Lemma \ref{lem:base-size}, we obtain the regret bound of Proposition \ref{prop:oblivious-base}. The memory guarantee of Proposition \ref{prop:oblivious-base} has already been established in Lemma \ref{lem:memory-baseline}.

\subsection{Boosting}
\label{sec:bootstrap-optimal}

We achieve the optimal regret guarantee of Theorem \ref{thm:oblivious-main} by maintaining multiple threads of {\sc Baseline} and {\em amortizing} the regret carefully. 
The new algorithmic ingredient includes
(1) leveraging the pool selection from a high frequency run of {\sc Baseline} for a lower frequency run of {\sc Baseline}; and (2) a low memory monocarpic expert algorithm with interval regret guarantees.

We first obtain the optimal rate when the space $S$ is small.
\begin{proposition}
\label{prop:oblivious-n}
Let $n , T\geq 1$ be sufficiently large, there is an online learning algorithm that uses at most $\polylog(nT)$ space and achieves $\widetilde{O}\left(\sqrt{nT}\right)$ regret against an oblivious adversary.
\end{proposition}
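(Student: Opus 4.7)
The plan is to build on the \textsc{Baseline} of Proposition~\ref{prop:oblivious-base} and drive its regret down to $\wt{O}(\sqrt{nT})$ by running multiple threads of \textsc{Baseline} at geometrically different frequencies. Optimizing Proposition~\ref{prop:oblivious-base} alone yields $\wt{O}(n^{1/3}T^{2/3})$, and a naive recursive bootstrap (sketched at the end of Section~\ref{sec:base}'s overview) only reaches $\wt{O}(n\sqrt{T})$, which is suboptimal by a $\sqrt{n}$ factor. To close this gap, I would maintain $R = O(\log T)$ threads of \textsc{Baseline}: the lowest thread uses $n$ epochs of length $B_1 = T/n$, and the $r$-th thread restarts every $T_r = B_{r-1}$ days with just two epochs of length $B_r = T/(2^{r-1}n)$. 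The crucial difference from the naive bootstrap is that when a higher thread restarts, its current pool is \emph{inherited} by the lowest thread as if those experts were freshly sampled at the start of a new epoch, rather than being discarded; this allows a promising expert discovered at high frequency to be preserved and re-examined at lower frequencies.

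The algorithm's action on each day would be chosen by a \textsc{MonocarpicExpert} subroutine that aggregates the union of alive pooled experts across all $R$ threads. \textsc{MonocarpicExpert} is designed to guarantee, for any expert $i$ alive throughout any contiguous sub-interval $\mI$ of its lifetime, a regret of $\wt{O}(\sqrt{|\mI|})$ against $i$ on $\mI$. Its memory scales linearly with the maximum number of simultaneously alive experts, which is $\polylog(nT)$ per thread by Lemma~\ref{lem:memory-baseline} and hence $\polylog(nT)$ in total across the $R = O(\log T)$ threads. The monocarpic guarantee is what lets us compare the algorithm against different experts on different intervals, a capability the single-thread analysis of Section~\ref{sec:base} does not require.

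For the regret analysis, I would fix (for every thread) the set of sampled and active experts at every epoch and then partition the timeline via a walk across the threads, as illustrated in Figure~\ref{fig:walk}. At each stop the walk asks whether hypothetically inserting $i^{*}$ at the current epoch would lead to its being covered by the global set of alive active experts before the stream ends; this classifies the epoch as good or bad. On a good epoch the walk follows $i^{*}$'s trajectory across thread restarts (including the inheritance drops to the lowest thread) until the covering epoch, producing an interval $\mI$ on which \textsc{MonocarpicExpert} charges only $\wt{O}(\sqrt{|\mI|})$ regret against a covering expert. On a bad epoch the walk moves up to a thread with shorter epochs, so the uncontrolled loss charged is at most $B_r$ rather than the full $B_{r-1}$. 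A thread-by-thread application of the argument in Lemma~\ref{lem:base-size} then bounds the number of bad epochs per thread by $\wt{O}(n)$ with high probability.

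The main obstacle, and the crux of the argument, is to show that the interval regrets sum to $\wt{O}(\sqrt{nT})$. Many short intervals are the danger since $\sum_{\mI} \sqrt{|\mI|}$ is maximized by splitting; the key structural claim is that any interval of length less than $B_r$ produced by the walk must sit inside a bad epoch of thread $r$, which bootstraps bad-epoch counts across threads. Combining this with a Cauchy--Schwarz step then makes the total regret telescope geometrically to $\wt{O}(\sqrt{nT})$. I expect the most delicate technical step to be verifying that the pool-inheritance operation is compatible with the insensitive merge/filter machinery of Section~\ref{sec:base}: we need the hypothetical-$i^{*}$ coupling (conditioning on $i^{*}$ being sampled but passive) to continue guaranteeing that $i^{*}$ survives undisturbed across inherited thread transitions, so that the bad-epoch counting from Lemma~\ref{lem:base-size} lifts cleanly from single-thread to multi-thread.
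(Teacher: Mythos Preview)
Your proposal is correct and follows essentially the same approach as the paper: the multi-thread $\textsc{Baseline}_{+}$ with pool inheritance to the lowest thread, the \textsc{MonocarpicExpert} layer for interval regret, the walk of Figure~\ref{fig:walk} defining per-thread bad-epoch sets $\mH_r$, the structural observation that short intervals are contained in bad epochs of the thread below, and the Cauchy--Schwarz amortization are exactly the components the paper assembles in Lemmas~\ref{lem:pool_size_full}--\ref{lem:regret-full}. You have also correctly flagged the main technical wrinkle, namely extending the passive-expert coupling across thread transitions so that a hypothetically sampled (passive) $i^{*}$ survives the inheritance merges; the paper handles this in Lemma~\ref{lem:bad-size-full} via Observation~\ref{obs:fix2}.
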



\paragraph{Full algorithm}
The full algorithm is presented as Algorithm \ref{algo:full}. WLOG, we assume $T/n$ is a tower of $2$.
Let $R = \log_2(T/n) $ be the total number of threads.
Let $B_{r} = \frac{T}{n 2^{r-1}}$ ($r \in [R]$), $T_1 = T$ and $T_{r} = B_{r-1}$ for $r\in [2:R]$.
The full algorithm maintains $R$ different threads of the $\textsc{Baseline}_{+}$ algorithm, where the $r$-th thread ($r \in [R]$) $\textsc{Baseline}_{+}(r)$ restarts every $T_{r}$ days, with epoch size $B_{r}$. 
The $\textsc{Baseline}_{+}$ differs slightly from $\textsc{Baseline}$ by (1) at the beginning of each epoch, in addition to sampling $O(1)$ experts, it also inherits experts from pools of higher threads; (2) it only maintains its pool (running {\sc Merge} procedure) but does not run MWU over the epoch. 
In other words, $\textsc{Baseline}_{+}(r)$ ($r\in [R]$) determines the ``alive'' experts, while the regret is controlled by a separate procedure {\sc MonocarpicExpert}.
Instead of the epoch-wise regret guarantee, the {\sc MonocarpicExpert} procedure guarantees an $\wt{O}(\sqrt{|\mI|})$ regret over {\em any} interval $\mI$ of the life time of an expert.

\begin{algorithm}[!htbp]
\caption{Full algorithm}
\label{algo:full}
\begin{algorithmic}[1]
\State Initialize $\textsc{Baseline}_{+}(r)$ ($\forall\, r\in [R]$) \Comment{$\textsc{Baseline}_{+}(r)$ maintains the pool $\mP_{r,\cdot}$}
\State $\mP \leftarrow \mP_{1,\cdot}\cup \cdots \cup \mP_{R,\cdot}$ \Comment{$\mP$ aggregates pools from all threads}
\For{$t = 1,2, \ldots, T$}
\State Run $\textsc{Baseline}_{+}(r)$ ($r\in [R]$) 
\State Run {\sc MonocarpicExpert} over $\mP$ 
\EndFor
\end{algorithmic}
\end{algorithm}

\begin{algorithm}[!htbp]
\caption{$\textsc{Baseline}_{+}(r)$ \Comment{$r \in [R]$}
}
\label{algo:base+}
\begin{algorithmic}[1]
\For{$s = 1,2,\ldots, T/T_{r}$} \Comment{$s$-th restart}
\State Initiate the pool $\mP_{r,\cdot} \leftarrow \emptyset$ and sub-pools $\mP_{r, \ell} \leftarrow \emptyset$ ($i \in [0:L]$)
\For{$t = 1,2, \ldots, T_r/B_r$} \Comment{Epoch $t$} 
\State \texttt{// At the beginning of epoch $t$  } 
\If{$r$ is the lowest thread with a new epoch}\label{line:cont-merge1} 
\For{$r' = R, \ldots, r+1$} 
\State $\mP_{r,0} \leftarrow \textsc{Merge}(\mP_{r, 0} \cup \mP_{r', \cdot}$) \Comment{Inherit from higher thread pools}
\EndFor 
\EndIf\label{line:cont-merge2}
\State $\mP_{r,0} \leftarrow \mP_{r,0} \cup \textsc{Sample}(\N, 1/n)$ \Comment{Sample new experts into pool} \label{line:base_new+}\\
\State \texttt{// At the end of epoch $t$  }
\For{$\ell = 0,1, \ldots, \mathsf{pw}(t)$}
\State $\mP_{r,\ell+1} \leftarrow \textsc{Merge}(\mP_{r,\ell+1}, \mP_{r,\ell})$
\State $\mP_{r,\ell} \leftarrow \emptyset$
\EndFor 
\EndFor
\EndFor
\end{algorithmic}
\end{algorithm}

\subsubsection{{\sc MonocarpicExpert}}
\label{sec:sleep}

An important component of our full algorithm is the  {\sc MonocarpicExpert} procedure. In the problem of monocarpic experts, instead of having all experts presented at the beginning, an expert could enter and exit at any time (but each expert only ever wakes up once). The {\sc MonocarpicExpert} procedure achieves low regret with respect to every interval during the live period of an expert, while its memory scales with the maximum number of alive experts. 

\begin{theorem}
\label{thm:sleeping}
Let $T \geq 1$. For any expert $i$ that is alive over interval $\mathcal{I} \subseteq [T]$, the {\sc MonocarpicExpert} procedure (Algorithm \ref{algo:sleep}) guarantees that with probability at least $1-1/(nT)^{\omega(1)}$, 
\begin{align*}
\sum_{t \in \mathcal{I}'}\ell_t(i_t) - \sum_{t \in \mathcal{I}'}\ell_t(i) \leq O\left(\sqrt{|\mathcal{I'}|\log (nT)}\right)
\end{align*}
holds for every interval $\mI'\subseteq  \mI$, against an adaptive adversary.
Furthermore, suppose the maximum number of alive experts at most $M$, then the {\sc MonocarpicExpert} procedure uses up to $O(M\log^2(nT))$ words of memory.
\end{theorem}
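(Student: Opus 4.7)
The plan is to build {\sc MonocarpicExpert} from a dyadic cover of the time horizon combined with the {\sc Squint} algorithm (Algorithm~\ref{algo:squint}), whose second-order regret guarantee (Lemma~\ref{lem:squint}) is particularly well suited for interval regret. First, partition the horizon $[T]$ into geometric covering intervals: for each level $\ell \in \{0,1,\ldots,\lceil \log_2 T\rceil\}$, the dyadic intervals of length $2^{\ell}$ are $J^{\ell}_k = \{k\cdot 2^{\ell}+1,\ldots,(k+1)\cdot 2^{\ell}\}$ for $k \geq 0$; at every day $t$, exactly one dyadic interval at each level contains $t$, so $O(\log T)$ dyadic intervals are simultaneously \emph{active}. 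For each active interval $J$, I maintain one {\sc Squint} instance $\A_J$ initialized at the start of $J$, ranging over the experts that are alive at some point in $J$ (a new expert entering during $J$ simply joins $\A_J$ with a freshly initialized weight, and is dropped from $\A_J$ the moment it dies, in the style of the sleeping experts framework). A small meta-MWU over the $O(\log T)$ sub-algorithms $\{\A_J\}$ produces the final prediction.

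For the interval regret against $i$ on $\mI' \subseteq \mI$, I decompose $\mI'$ into at most $2\lceil\log_2 T\rceil$ disjoint dyadic intervals $J_1,\ldots,J_k$. Since $J_j \subseteq \mI$, expert $i$ is alive throughout $J_j$, so by Lemma~\ref{lem:squint} the regret of $\A_{J_j}$ against $i$ on $J_j$ is $O(\sqrt{|J_j|\log(nT)})$, using the trivial bound $V_{J_j}^i \leq |J_j|$ on the second-order term. The meta-MWU contributes an additional $O(\sqrt{|\mI'|\log\log T})$ for competing with the best sub-algorithm per dyadic piece. Summing via Cauchy--Schwarz,
\[
\sum_{j=1}^k \sqrt{|J_j|} \leq \sqrt{k \cdot |\mI'|} = O\bigl(\sqrt{|\mI'|\log T}\bigr),
\]
which yields the claimed $O\bigl(\sqrt{|\mI'|\log(nT)}\bigr)$ after absorbing the extra $\sqrt{\log T}$ into the polylog factor. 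The guarantee is in expectation; the high-probability bound follows from Azuma--Hoeffding on the martingale $\sum_t \ell_t(i_t) - \E[\ell_t(i_t)\mid \ell_1,\ldots,\ell_t,i_1,\ldots,i_{t-1}]$, which remains valid against an adaptive adversary because the {\sc Squint} analysis controls the conditional expected loss at every round pointwise in the loss realization.

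The memory bound is immediate: at most $O(\log T)$ active dyadic {\sc Squint} instances are maintained simultaneously, each storing a weight and a cumulative second-order statistic per alive expert (so at most $M$ entries), with $O(\log(nT))$-bit precision. This totals $O(M\log^2(nT))$ words, matching the claim. The dyadic cover can be updated incrementally in $O(\log T)$ work per day, since at each time only one instance at each level either starts or continues.

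The technical obstacle I expect is the clean $O(\sqrt{|\mI'|\log(nT)})$ rate rather than $\wt{O}(\sqrt{|\mI'|})$, because the naive dyadic decomposition above loses a $\sqrt{\log T}$ factor. The cleanest way to shave it is to collapse the meta-MWU layer into a single {\sc Squint} whose ``experts'' are sleeping copies of the original experts, one per (expert, dyadic interval) pair, using the Koolen--Van~Erven prior $\gamma(\eta) \propto 1/(\eta \log^2\eta)$ from Lemma~\ref{lem:squint}; this directly yields the second-order bound with $V^i$ restricted to $\mI'$ and avoids the Cauchy--Schwarz loss. The rest of the argument---handling of newly woken experts, memory accounting, and adaptive-adversary high-probability conversion---then proceeds as sketched above.
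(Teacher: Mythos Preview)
Your proposal is on the right track---dyadic covering plus {\sc Squint} is exactly the engine the paper uses---but there are two concrete gaps.

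First and most importantly, plain MWU at the meta-level does not give interval regret: MWU's guarantee against a fixed meta-expert is $O(\sqrt{T\log\log T})$ over the whole horizon, not $O(\sqrt{|J_j|\log\log T})$ on a sub-interval $J_j$. Your sentence ``the meta-MWU contributes an additional $O(\sqrt{|\mI'|\log\log T})$'' does not follow from any MWU analysis. The paper handles this inside the {\sc IntervalRegret} subroutine (Lemma~\ref{lem:interval}): it runs {\sc Squint} over the dyadic meta-experts $\textsc{SingleInterval}_{a,b}$, but crucially assigns to each \emph{inactive} meta-expert the algorithm's own expected loss $\bar\ell_t$ (Eq.~\eqref{eq:loss-construct}). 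This makes the second-order term $V_T^{(a,b)}$ vanish outside the effective window of $\textsc{SingleInterval}_{a,b}$, so Lemma~\ref{lem:squint} yields regret depending only on the length of that window. Your last-paragraph fix (one big {\sc Squint} over sleeping (expert, interval) copies) is heading the right way, but without the $\bar\ell_t$ device or an equivalent, the second-order term accumulates over all $T$ rounds and the interval guarantee does not come out.

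Second, the $\sqrt{\log T}$ loss from Cauchy--Schwarz is unnecessary and the paper does not incur it. The dyadic decomposition of any interval $\mI'$ has at most two pieces at each scale $2^x$, so $\sum_j\sqrt{|J_j|}\le 2\sum_{x\le\log|\mI'|}\sqrt{2^x}=O(\sqrt{|\mI'|})$ directly; this geometric sum is used both in the proof of Lemma~\ref{lem:interval} and in the proof of Theorem~\ref{thm:sleeping} (where the pieces are indexed by which bucket $\mU_{\ell_x}$ the expert currently sits in, and again have geometrically increasing lengths).

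There is also a structural difference worth noting for the memory argument. Rather than letting experts enter and exit each $\A_J$ mid-interval---which forces you to argue that {\sc Squint} still behaves correctly under a dynamically changing expert set---the paper buckets experts by lifetime into $\mU_1,\ldots,\mU_L$ (Algorithm~\ref{algo:sleep}) and restarts {\sc IntervalRegret} on $\mU_\ell$ every $2^{\ell-1}$ days, so that within each restart the expert set is \emph{fixed}. This makes both the regret and the memory bounds immediate: $|\mU_\ell|\le M$ at the moment of each restart, and Lemma~\ref{lem:squint} applies verbatim.
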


It is worthy noting that the memory/regret guarantee of {\sc MonocarpicExpert} works even against an adaptive adversary.
The pseudocode of {\sc MonocarpicExpert} is presented as Algorithm \ref{algo:sleep}.
The set of experts is divided into $L$ subsets $\mU_{1} \cup \cdots \cup \mU_{L}$ ($L = \log_2 (T)$), and $\mU_{\ell}$ contains experts that live for at most $2^{\ell}$ days. 
The subset $\mU_{\ell}$ is updated every $2^{\ell-1}$ days (either receives experts from $\mU_{\ell-1}$ or gives away experts to $\mU_{\ell+1}$) and removes inactive expert. The membership of $\mU_{\ell}$ is fixed between two updates.
{\sc MonocarpicExpert} relies heavily on the {\sc IntervalRegret} subroutine. It runs {\sc IntervalRegret} over $L$ experts $\textsc{Exp}_{\ell}$ ($\ell \in [L]$), where $\textsc{Exp}_{\ell}$ itself runs $\textsc{IntervalRegret}$ on $\mU_{\ell}$ every $2^{\ell-1}$ days (between two updates, where membership of $\mU_{\ell}$ is fixed).
 If an expert in $\mU_{\ell}$ becomes inactive between two updates (i.e., it is alive at the beginning but exits in the middle), then we assign unit loss to it till the next update.

\begin{algorithm}[!htbp]
\caption{\textsc{MonocarpicExpert} 
}
\label{algo:sleep}
\begin{algorithmic}[1]
\State Initialize $\mU_\ell \leftarrow \emptyset$, $\textsc{Exp}_{\ell}$ ($\ell \in [L]$) \Comment{$L = \log_2(T) $}
\State $\textsc{Exp}\leftarrow \textsc{IntervalRegret}(\textsc{Exp}_1, \ldots, \textsc{Exp}_{L}, T)$
\For{$t=1,2, \ldots, T$}
\State Add newly activated experts to $\mU_1$  
\State Follow the decision of $\textsc{Exp}$
\For{$\ell = 1,2, \ldots, \mathsf{pw}(t)$} \Comment{Update membership}
\State $\mU_{\ell+1} \leftarrow \mU_{\ell+1} \cup \mU_{\ell}, \mU_{\ell} \leftarrow \emptyset$
\State Remove inactive experts in $\mU_{\ell+1}$
\EndFor
\EndFor\\ 

\Procedure{$\textsc{Exp}_{\ell}$}{} \Comment{$\ell \in [L]$}
\For{$s=1,2,\ldots, T/2^{\ell-1}$}\Comment{$s$-th restart}
\State $\textsc{IntervalRegret}(\mU_{\ell}, 2^{\ell-1})$
\EndFor
\EndProcedure
\end{algorithmic}
\end{algorithm}

The regret and memory guarantee largely follows from the {\sc IntervalRegret} subroutine. 
\begin{lemma}
\label{lem:interval}
Let $T \geq 1$. For any expert $i \in \mU$ and time interval $\mI \subseteq [T]$, Algorithm {\sc IntervalRegret} guarantees that with probability at least $1 - 1/(nT)^{\omega(1)}$,
\begin{align*}
\sum_{t \in \mathcal{I}}\ell_t(i_t) - \sum_{t \in \mathcal{I}}\ell_t(i) \leq O\left(\sqrt{|\mathcal{I}|\log (nT)}\right)
\end{align*}
holds against an adaptive adversary.
Moreover, {\sc IntervalRegret} uses up to $O(|\mU|\log(nT))$ words of memory.
\end{lemma}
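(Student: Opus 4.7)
The plan is to implement \textsc{IntervalRegret} via the dyadic-cover construction of Hazan--Seshadhri~\cite{hazan2007adaptive}, composed with a high-probability base learner such as \textsc{Squint} (Algorithm~\ref{algo:squint}) or MWU (Algorithm~\ref{algo:mwu}). Concretely, for each dyadic window $\mathcal{J}_{k,s} = \{s\cdot 2^k + 1, \ldots, (s+1)\cdot 2^k\}$ with $k \in [0 : \log_2 T]$, I would spawn a fresh base learner that runs a single MWU over $\mU$ restricted to $\mathcal{J}_{k,s}$. Treating each such instance as a ``specialist'' active only during its window, at most $O(\log T)$ specialists are alive on any given day, and I would run a sleeping-experts style MWU meta-learner over the active specialists to produce the actual $i_t$.

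For the regret analysis, first decompose $\mathcal{I}$ into $m = O(\log T)$ disjoint dyadic sub-intervals $\mathcal{I}_1, \ldots, \mathcal{I}_m$ using the standard interval-cover property of the dyadic tree. On each $\mathcal{I}_j$ the matching specialist is a fresh base MWU run over $\mU$ on exactly that interval, so the MWU guarantee combined with Azuma--Hoeffding (to pass from expected to realized regret against an adaptive adversary) yields, with probability $1 - 1/(nT)^{\omega(1)}$, regret $\wt{O}(\sqrt{|\mathcal{I}_j|})$ against the fixed $i \in \mU$. The sleeping-experts meta-learner, which selects among only $O(\log T)$ currently-active specialists, loses an additional $\wt{O}(\sqrt{|\mathcal{I}_j|})$ against the correct specialist on $\mathcal{I}_j$. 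Summing and applying Cauchy--Schwarz:
\begin{align*}
\sum_{j=1}^{m}\wt{O}\!\left(\sqrt{|\mathcal{I}_j|}\right) \;\le\; \wt{O}\!\left(\sqrt{m\cdot |\mathcal{I}|}\right) \;=\; O\!\left(\sqrt{|\mathcal{I}|\,\log(nT)}\right),
\end{align*}
after absorbing the $m = O(\log T)$ factor and the polylogarithmic cost of driving the failure probability down to $1/(nT)^{\omega(1)}$ into the $\log(nT)$ under the square root.

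The memory bound is immediate: on each day there are at most $\log_2 T + 1$ active specialists, each storing a weight vector over $\mU$ of size $O(|\mU|)$, while the meta-learner adds only $O(\log T)$ weights; the total is $O(|\mU|\log(nT))$ words. The main obstacle I anticipate is the \emph{adaptive adversary} part of the high-probability statement: the MWU guarantee recalled in the preliminaries is stated in expectation (over the algorithm's sampling), so one must invoke a martingale concentration argument to convert it into a tail bound on the realized cumulative loss $\sum_{t\in\mathcal{I}}\ell_t(i_t)$, and then union-bound this tail across the $O(\log T)$ dyadic scales (and, if the caller needs uniformity over sub-intervals $\mathcal{I}' \subseteq \mathcal{I}$ as in Theorem~\ref{thm:sleeping}, across the at most $O(T^2)$ choices of $\mathcal{I}'$). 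Using \textsc{Squint} in place of MWU as the base learner would make this step cleanest, since its second-order regret bound (Lemma~\ref{lem:squint}) already accommodates adaptively chosen losses via the variance term, and the resulting extra $\polylog(nT)$ in the failure probability is absorbed by the $\wt{O}$/the $\log(nT)$ already sitting under the square root.
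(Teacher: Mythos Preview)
Your overall architecture---dyadic windows, a base learner per window, a meta-learner over the $O(\log T)$ active specialists, then summing over the dyadic cover of $\mathcal{I}$---matches the paper's construction exactly. The gap is in which layer gets \textsc{Squint}. You propose a ``sleeping-experts style MWU meta-learner'' and relegate \textsc{Squint} to the base; the paper does the opposite (MWU as base, \textsc{Squint} as meta), and this is not cosmetic. With the sleeping trick of assigning $\hat\ell_t(a,b)=\bar\ell_t$ to an inactive specialist, plain MWU at the meta level still only gives $\sum_{t=1}^T(\bar\ell_t-\hat\ell_t(a,b))\le O(\sqrt{T\log L})$, because its regret scales with the horizon $T$ regardless of how many terms vanish. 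What makes the per-window bound work is that the \emph{variance} $V_T^{a,b}=\sum_t(\bar\ell_t-\hat\ell_t(a,b))^2$ is supported only on the $2^a$ active days, so \textsc{Squint}'s second-order guarantee (Lemma~\ref{lem:squint}) yields $O(\sqrt{2^a\log(nT)})$ meta-regret against $\textsc{SingleInterval}_{a,b}$. Your claimed $\wt O(\sqrt{|\mathcal I_j|})$ meta-regret is exactly this step, and it does not follow from MWU alone.

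Two smaller points. First, the role of \textsc{Squint} here has nothing to do with adaptive adversaries per se; adaptivity is handled (as you correctly note elsewhere) by Azuma--Hoeffding applied to $\sum_t(\ell_t(i_t)-\bar\ell_t)$, which is a bounded-difference martingale. Second, your Cauchy--Schwarz sum $\sum_j\sqrt{|\mathcal I_j|}\le\sqrt{m|\mathcal I|}$ costs an extra $\sqrt{\log T}$; the paper instead uses that the dyadic cover has at most two pieces of each scale, so the sum is geometric and gives $O(\sqrt{|\mathcal I|\log(nT)})$ on the nose, matching the stated bound.
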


The {\sc IntervalRegret} maintains a set of meta experts $\textsc{SingleInterval}_{a, b}$ ($a \in [L], b \in [T/2^a]$). 
The $\textsc{SingleInterval}_{a, b}$ is {\em effective} over the time interval $[2^a(b-1)+1: 2^a b]$. 
It runs MWU over $\mU$ in $[2^a(b-1)+1: 2^a b]$ (initiates with uniform weight) and does nothing outside of the interval. 
Let $h(t, a, b)$ denote the effectiveness of $\textsc{SingleInterval}_{a, b}$, that is, it $h(t, a, b) = 1$ when $t\in [2^a(b-1)+1: 2^a b]$ and $h(t, a,b) = 0$ otherwise.
We use $i_{t, a, b}$ to denote the action of $\textsc{SingleInterval}_{a, b}$ at day $t$.
{\sc IntervalRegret} maintains a set of weights $\{w_{a, b}\}_{a \in [L], b \in [T/2^a]}$ over $\textsc{SingleInterval}_{a, b}$.
At day $t$, there are $L$ effective meta experts and the algorithm follows the advice from one of them, by sampling proportional to $\{w_{a, b}\}_{h(t, a, b) =1}$.
The loss vector is constructed as follow: For an effective meta expert $\textsc{SingleInterval}_{a, b}$ (i.e., $h(t, a, b) = 1$), it simply equals $\ell_{t}(i_{t, a, b})$, while for non-effective expert (i.e., $h(t, a, b)=0$), it is set to $\bar{\ell}_t$, where $\bar{\ell}_t$ is the ``expected'' loss computed by Eq.~\eqref{eq:loss-construct}. It equals the expected loss received by {\sc IntervalRegret} at day $t$.
The weight $w_{a, b}$ is updated by the {\sc Squint} algorithm \cite{koolen2015second}.

\begin{algorithm}[!htbp]
\caption{\textsc{IntervalRegret}($\mU, T$)}
\label{algo:interval}
\begin{algorithmic}[1]
\State Initialize $w_{a, b} \leftarrow 1$ over $\textsc{SingleInterval}_{a,b}$ ($a \in [L], b \in [T/2^a]$)
\For{$t=1,2, \ldots, T$}
\State Sample action $i_{t, a,b}$ from $\{w_{a, b}\}_{h(t, a, b) =1}$  \Comment{$h(t, a, b) = 1$ if $t \in [2^a(b-1)+1: 2^a b]$}
\State Compute the expected loss
\begin{align}
\bar{\ell}_t \leftarrow \sum_{a, b: h(t, a, b) = 1}\frac{w_{a, b}}{\sum_{a, b: h(t, a, b) = 1} w_{a, b} } \cdot \ell_{t}(i_{t, a, b})
\label{eq:loss-construct}
\end{align}
\State Assign loss 
\[
\hat{\ell}_{t}( a, b ) = \left\{
\begin{matrix}
\ell_{t}(i_{t, a, b}) &  h(t, a, b) =1\\
\bar{\ell}_t & h(t, a, b) =0
\end{matrix}
\right.
\]
\State Update the weight distribution using {\sc Squint}
\begin{align}
w_{a, b}  \leftarrow \E_{\eta}\left[\eta \cdot \exp\left(\eta \sum_{\tau=1}^{t-1}v_{\tau}(a, b) - \eta^2 \sum_{\tau=1}^{t-1}v_{\tau}^2(a, b)\right)\right] \label{eq:squint}
\end{align}
where $v_{\tau}(a, b) = \bar{\ell}_{\tau} - \hat{\ell}_{\tau}(a, b)$
\EndFor\\
\Procedure{$\textsc{SingleInterval}_{a, b}$}{}
\For{$t = 2^{a}(b-1) +1, \ldots, 2^{a}b$}
\State Run MWU over $\mU$.
\EndFor
\EndProcedure
\end{algorithmic}
\end{algorithm}

\begin{proof}
For any $a \in [L], b \in [T/2^{a}]$ and expert $i \in \mU$, due to the regret guarantee of MWU, one has, with probability at least $1- 1/(nT)^{\omega(1)}$,
\begin{align}
\sum_{t=2^a(b-1)+1}^{2^a b}\ell_{t}(i_{t, a, b}) -\sum_{t=2^a(b-1)+1}^{2^a b}\ell_{t}(i) \leq O\left(\sqrt{2^a \log (nT)}\right) \label{eq:interval1}.
\end{align}

Recall $\bar{\ell}_t$ is the solution of Eq.~\eqref{eq:loss-construct} at day $t$, it equals the expected loss received by {\sc IntervalRegret}, since
\begin{align*}
\bar{\ell}_t  = \sum_{a,b: h(t,a, b) = 1}\frac{w_{t, a, b}}{\sum_{a, b: h(t, a, b) = 1}w_{t, a, b}}\ell(i_{t, a, b}) = \sum_{a,b: h(t,a, b) = 1}p_{t,a, b}\ell(i_{t, a, b}) 
\end{align*}
where $p_{t,a, b} = \frac{w_{t, a, b}}{\sum_{a, b: h(t, a, b) = 1}w_{t, a, b}}$ is the probability of following $i_{t, a, b}$.

By the regret guarantee of {\sc Squint} (see Lemma \ref{lem:squint}), one has
\begin{align}
&~\E\left[\sum_{t\in [2^a (b-1) + 1: 2^a b]}\ell_t(i_t) -  \sum_{t\in [2^a (b-1) + 1: 2^a b]}\ell_t(i_{t,a, b})\right]\notag \\
= &~ \E\left[\sum_{t=1}^{T}\ell_t(i_t) - \sum_{t=1}^{T}\hat{\ell}_t(a, b)\right]\notag \\
\leq &~  O\left(\sqrt{\left(\sum_{t\in [T]}(\E[\ell_t(i_t)] - \hat{\ell}_t(a, b))^2\right)\cdot \log (nT)}\right) =  O\left(\sqrt{2^a\log (nT)}\right) \label{eq:interval2}.
\end{align}
The first and third steps hold since $\hat{\ell}_t(a, b) = \ell_t(i_{t, a, b})$ for $t \in [2^a (b-1) + 1: 2^a b]$ and loss $\hat{\ell}_t(a, b) = \bar{\ell}_t = \E[\ell_t(i_t)]$ for any $t \notin [2^a (b-1) + 1: 2^a b]$, the second step follows from the regret guarantee of {\sc Squint}.

Applying the Azuma-Hoeffding inequality to Eq.~\eqref{eq:interval2} and note $|\E[\ell_t(i_t)] - \ell_t(i_t)| \leq 2$, one obtains
\begin{align}
\sum_{t\in [2^a (b-1) + 1: 2^a b]}\ell_t(i_t) -  \sum_{t\in [2^a (b-1) + 1: 2^a b]}\ell_t(i_{t,a, b}) \leq O\left(\sqrt{2^a \log (nT)}\right) \label{eq:interval3}
\end{align}
holds with probability at least $1-1/(nT)^{\omega(1)}$.

Combining Eq.~\eqref{eq:interval1} and Eq.~\eqref{eq:interval3}, we have that with probability at least $1-1/(nT)^{\omega(1)}$,
\begin{align*}
\sum_{t\in [2^a (b-1) + 1: 2^a b]}\ell_t(i_t) - \sum_{t\in [2^a (b-1) + 1: 2^a b]}\ell_{t}(i)\leq O\left(\sqrt{2^a \log (nT)}\right).
\end{align*}

To conclude the regret analysis, we note for any interval $\mI = [t_1: t_2]\subseteq T$, one can split $\mI$ into $X \leq 2\log_2(|\mI|)$ disjoint intervals $\mI = \mI_1 \cup \mI_{2}\cup \cdots \cup \mI_X$, such that  (1) $\mI_{x}$ ($x \in [X]$) exactly spans the lifetime of some meta expert $\textsc{SingleInterval}_{a_x, b_x}$ and (2) there are at most two length-$2^{x}$ intervals. Then we conclude 
\begin{align*}
\sum_{t\in \mI}\ell_t(i_t) - \sum_{t\in \mI}\ell_{t}(i) = &~ \sum_{x=1}^{X}\sum_{t\in \mI_x}(\ell_t(i_t) - \ell_{t}(i)) \leq  \sum_{x=1}^{X} O\left(\sqrt{|\mI_x|\log(nT)}\right)\\
\leq &~ \sum_{x=1}^{\log (|\mI|)}O\left(\sqrt{2^{x}\log (nT)}\right) = O\left(\sqrt{|\mI|\log(nT)}\right).
\end{align*}

\paragraph{Memory usage of {\sc IntervalRegret}}
A naive implementation takes $O(T|\mU|\log(nT))$ as one needs to maintain $O(T\log T)$ $\textsc{SingleInterval}_{a, b}$ procedures.
However, note that at day $t$, the algorithm only needs to know the weights $\{w_{a, b}\}_{h(t, a, b)=1}$ to determine its action and compute the average loss $\bar{\ell}_t$. 
That is, it suffices to know the weight $w_{a, b}$ of effective experts.
There are $L = \log_2 (T)$ effective meta experts at any time, and $\textsc{SingleInterval}_{a, b}$ is effective for a consecutive period with weight $w_{a, b}$ starting from $\E_{\eta}[\eta]$.
Hence {\sc IntervalRegret} requires $O(|\mU|\log (nT))$ words of memory in total. We conclude the proof here.

\end{proof}

We can now conclude the analysis of {\sc MonocarpicExpert}.
\begin{proof}[Proof of Theorem \ref{thm:sleeping}]
We first provide the regret analysis. 
For any expert $i$ that is alive over interval $\mI$. Let $\mI' = [t_1: t_2] \subseteq \mI$ be any sub-interval. One can split the interval $\mI'$ into $X \leq \log(|\mI'|)$ consecutive intervals $\mI' = \mI'_1 \cup \mI'_2 \cup \ldots \cup \mI'_X$, such that expert $i$ resides in $\mU_{\ell_x}$ during the interval of $\mI'_x$ ($\ell_1 < \ell_2 < \cdots < \ell_{X}$).
Moreover, the size of $\mI'_x$ is exponentially increasing, except for $\mI'_1$ and $\mI'_X$
($i$ may not stay for an entire update at $\mU_{\ell_1}$ and $\mU_{\ell_X}$).
Let $i_{t,\ell}$ be the action of $\textsc{Exp}_{\ell}$ at day $t$, Then due to the {\sc IntervalRegret} guarantee, with probability at least $1-1/(nT)^{\omega(1)}$, one has
\begin{align*}
\sum_{t \in \mI'}\ell_{t}(i_t) - \sum_{t \in \mI'}\ell_t(i) =  &~ \sum_{x=1}^{X}\sum_{t \in \mI'_{x}}\ell_{t}(i_t) - \ell_t(i) \\
= &~ \sum_{x=1}^{X}\sum_{t \in \mI'_{x}}\ell_{t}(i_{t}) - \ell_t(i_{t, \ell_x}) +  \sum_{x=1}^{X}\sum_{t \in \mI'_{x}}\ell_{t}(i_{t, \ell_x}) - \ell_t(i)\\
\leq &~ \sum_{x=1}^{X} O(\sqrt{|\mI_{x}'|\log (nT)}) \leq O(\sqrt{|\mI'|\log (nT)}).
\end{align*}
Here the third step follows from the interval guarantee (see Lemma \ref{lem:interval}) of $\textsc{Exp}_{\ell_x}$, the interval regret guarantee of $\textsc{Exp}_{\ell_x}$ on $\mU_{\ell_{x}}$,  and the fact that $\textsc{Exp}_{\ell_x}$ restarts at most twice when $i$ resides in $\mU_{\ell_x}$.
The last step follows from the increasing size of $\mI_{x}'$.

Finally, for the memory usage, there are $L = \log_2(T)$ procedures $\textsc{Exp}_{\ell}$ ($\ell \in [L]$), and $\textsc{Exp}_{\ell}$ runs on $\mU_{\ell}$.
It suffices to bound the size of $\mU_{\ell}$. 
Note that $\mU_{\ell}$ is updated every $2^{\ell-1}$ days, and at the time of update, all experts in $\mU_{\ell}$ are alive and the set remains the same during the next $2^{\ell-1}$ days. Hence $|\mU_{\ell}| \leq M$ and the memory usage of $\textsc{Exp}_{\ell}$ is at most $O(M\log (nT))$.  
The total memory is at most $O(M\log^2(nT))$.
\end{proof}

\subsubsection{Analysis of full algorithm}
We next analyse the memory and regret guarantee of the full algorithm.

\subsubsection*{Memory bound}
The memory bound is largely inherited from {\sc Baseline}.
\begin{lemma}
\label{lem:pool_size_full}
With probability at least $1 - 1/(nT)^{\omega(1)}$, for any $r \in [R]$, the pool $\mP_{r, \cdot}$ never exceeds $O(\log^{10}(nT))$.
\end{lemma}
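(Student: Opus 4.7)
The plan is a backward induction on the thread index $r$ from $R$ down to $1$, maintaining the inductive hypothesis that $|\mP_{r',\cdot}| \leq C$ at all times for every $r' > r$, where $C$ is a sufficiently large constant times $\log^{10}(nT)$. The base case $r=R$ is immediate from Lemma~\ref{lem:pool_size}: thread $R$ performs no inheritance, so its pool dynamics coincide with {\sc Baseline}, each sub-pool is bounded by $2\log^9(nT)$ with probability $1-1/(nT)^{\omega(1)}$, and summing gives $|\mP_{R,\cdot}| \leq (L+1)\cdot 2\log^9(nT) = O(\log^{10}(nT))$.

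For the inductive step the new feature is the inheritance step (lines~\ref{line:cont-merge1}--\ref{line:cont-merge2} of Algorithm~\ref{algo:base+}): when thread $r$ is the lowest with a new epoch, $\mP_{r,0}$ (freshly emptied at the previous end-of-epoch) is sequentially merged with $\mP_{r',\cdot}$ for $r'=R,R-1,\ldots,r+1$. Applying Lemma~\ref{lem:merge_size} iteratively together with the inductive bound $|\mP_{r',\cdot}| \leq C$, the intermediate sizes obey the recursion $S_{k+1} \leq \max\{2\log^9(nT),\,(S_k+C)/4\}$ starting from $S_0=0$; this map is monotone with fixed point $C/3$, so $|\mP_{r,0}| \leq C/3$ after all inheritance merges. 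Adding at most $2\log^9(nT)$ freshly sampled experts (via the same Chernoff bound used in the base case of Lemma~\ref{lem:pool_size}) then yields $|\mP_{r,0}| \leq C/3 + 2\log^9(nT)$.

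For $\ell \geq 1$ I would apply a parallel fixed-point argument to the end-of-epoch merges within thread $r$: each merge produces a size at most $\max\{2\log^9(nT),\,(|\mP_{r,\ell}|_{\text{old}}+|\mP_{r,\ell-1}|)/4\}$, whose fixed point in its first argument is $|\mP_{r,\ell-1}|/3$. By forward induction on $\ell$,
\[
|\mP_{r,\ell}| \leq \max\left\{2\log^9(nT),\ |\mP_{r,0}|/3^\ell\right\}.
\]
Summing over $\ell \in [0:L]$ gives $|\mP_{r,\cdot}| \leq \tfrac{3}{2}|\mP_{r,0}| + (L+1)\cdot 2\log^9(nT) \leq C/2 + O(\log^{10}(nT))$, which is at most $C$ once $C$ is a large enough $\Theta(\log^{10}(nT))$, closing the induction. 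Each appeal to Lemma~\ref{lem:merge_size} or to Chernoff fails with probability $1/(nT)^{\omega(1)}$, and I would finish by union-bounding over the $\poly(nT)$ merge and sampling events across all threads, restarts, and epochs.

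The main technical obstacle I anticipate is verifying that the fixed-point bounds hold uniformly in time, not merely at equilibrium. Fortunately the iterated map $s \mapsto (s+b)/4$ starting from $s_0=0$ is monotone increasing and upper-bounded by its fixed point $b/3$, so transient values never exceed the steady-state bound, provided the ``input'' $b$ (the size of the lower-index sub-pool) is itself uniformly bounded along the entire prior history --- this has to be threaded carefully through the double induction on $(r,\ell)$ and on time. A secondary subtlety is to confirm that between consecutive inheritance events $\mP_{r,0}$ is refilled only by sampling (since inheritance fires only when $r$ is the lowest-new-epoch thread), so that the post-inheritance size $C/3 + 2\log^9(nT)$ is indeed the worst case governing the downstream bounds on $\mP_{r,\ell}$ for $\ell \geq 1$.
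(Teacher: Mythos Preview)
Your proposal is correct and follows the same backward induction on $r$ as the paper, but you miss a simplification that makes the paper's argument cleaner. The paper observes that for $r' > r$, thread $r'$ has only $T_{r'}/B_{r'} = 2$ epochs per restart, so at the moment of inheritance the entire higher-thread pool has been collapsed into a single sub-pool: $|\mP_{r',\cdot}| = |\mP_{r',2}| \leq 2\log^9(nT)$ by the sub-pool bound already established for $r'$. With this, each inheritance merge has both inputs bounded by $2\log^9(nT)$, so Lemma~\ref{lem:merge_size} directly gives $|\mP_{r,0}| \leq 2\log^9(nT)$ after all inheritance merges, and $\leq 3\log^9(nT)$ after sampling. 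The paper can then maintain the tight invariant $|\mP_{r,\ell}| \leq 2\log^9(nT)$ for all $\ell \geq 1$ by a single application of Lemma~\ref{lem:merge_size} (since $\tfrac{1}{4}(2\log^9 + 3\log^9) < 2\log^9$), with no fixed-point or geometric-series argument needed.

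Your approach instead carries the coarser invariant $|\mP_{r',\cdot}| \leq C = \Theta(\log^{10}(nT))$, which forces the fixed-point analysis on the iterated merges and the geometric sum over $\ell$ to close the induction. This is more work but also more robust: it would survive if higher threads had more than two epochs per restart, whereas the paper's shortcut relies on that specific design choice. Your identification of the ``double induction on $(r,\ell)$ and on time'' and the monotonicity-below-fixed-point observation are exactly what is needed to make the coarser argument rigorous.
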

\begin{proof}
We prove with probability at least $1- 1/(nT)^{\omega(1)}$, at any time, the size of $\mP_{r, \ell}$ is at most $2\log^9 (nT)$ ($r \in [R]$, $\ell \in [L]$) and $\mP_{r, 0}$ is at most $3\log^9 (nT)$.  
The case of $r = R$ holds trivially and suppose the induction holds for thread $r+1$. 

For the $r$-th thread, $\mP_{r, 0}$ is initiated as $\textsc{Sample}(\mN, 1/n)$, or the union of $\textsc{Sample}(\mN, 1/n)$ and the result of merging $\mP_{r+1, \cdot}, \ldots, \mP_{R, \cdot}$ (Line \ref{line:cont-merge1} -- \ref{line:cont-merge2} of Algorithm \ref{algo:base+}).
We have by induction $|\mP_{r', \cdot}| = |\mP_{r', 2}| \leq 2\log^9 (nT)$ for any $r' > r$ (there are only two epochs of  $\textsc{Baseline}_{+}(r')$ per restart). Hence by Lemma \ref{lem:merge_size}, the pool obtained from merging  $\mP_{r+1, \cdot}, \ldots, \mP_{R, \cdot}$ has size at most $2\log^9(nT)$, and by Chernoff bound, the sample set $\textsc{Sample}(\mN, 1/n)$ is at most $\log^9(nT)$ with high probability. 

While for $\ell \geq 1$, by Lemma \ref{lem:merge_size}, the \textsc{Merge} procedure guarantees that the sub-pool $\mP_{r, \ell}$ satisfies
\begin{align*}
|\mP_{r, \ell}| \leq &~ \max\left\{2\log^9(nT), \frac{1}{4}(|\mP_{r, \ell}| + |\mP_{r, \ell-1}|)\right\}\\
\leq &~ \max\left\{2\log^9(nT), \frac{1}{4}(2\log^9(nT) + 3\log^9(nT))\right\} = 2\log^9(nT).
\end{align*}
\end{proof}

Consequently, the memory usage of the full algorithm is not large.
\begin{lemma}
\label{lem:memory}
With probability at least $1 - 1/(nT)^{\omega(1)}$, the pool size of $\mP$ never exceeds $O(\log^{11}(nT))$. 
Furthermore, the memory usage of Algorithm \ref{algo:full} is bounded by $O(\log^{22}(nT))$ words.
\end{lemma}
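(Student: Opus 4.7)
The plan is to assemble the bound on $|\mP|$ directly from Lemma \ref{lem:pool_size_full} and then account separately for the storage overhead on top of that pool.

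First I would observe that by construction $\mP = \bigcup_{r=1}^{R} \mP_{r,\cdot}$ and that $R = \log_2(T/n) = O(\log(nT))$. Lemma \ref{lem:pool_size_full} gives, for every fixed $r \in [R]$, that $|\mP_{r,\cdot}| \leq O(\log^{10}(nT))$ at all times with probability at least $1 - 1/(nT)^{\omega(1)}$. Taking a union bound over the $R = O(\log(nT))$ threads and over the $\poly(nT)$ time steps (all absorbed into the $\omega(1)$ exponent) yields
\begin{align*}
|\mP| \;\leq\; \sum_{r=1}^{R} |\mP_{r,\cdot}| \;\leq\; O(\log(nT)) \cdot O(\log^{10}(nT)) \;=\; O(\log^{11}(nT))
\end{align*}
with the claimed probability, which is the first part of the statement.

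Next I would account for the total words of memory. Each thread $\textsc{Baseline}_+(r)$ needs, exactly as in the proof of Lemma \ref{lem:memory-baseline}, to store for each pair $i,j$ in its pool the cumulative loss $\mathcal{L}_{\Gamma_{i,j}}(i)$ required to evaluate the cover predicate in \textsc{Filter}; this costs $O(|\mP_{r,\cdot}|^2) = O(\log^{20}(nT))$ words per thread and hence $O(R \cdot \log^{20}(nT)) = O(\log^{21}(nT))$ across threads. On top of this we run \textsc{MonocarpicExpert} over the aggregated pool $\mP$, and by Theorem \ref{thm:sleeping} its memory footprint is $O(M \log^2(nT))$ where $M$ is the maximum number of experts simultaneously alive; since $M \leq |\mP| = O(\log^{11}(nT))$, this contributes $O(\log^{13}(nT))$ words, which is dominated. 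Summing everything gives the claimed $O(\log^{22}(nT))$ bound.

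There is no real obstacle here beyond careful bookkeeping: the only subtle point is making sure that the pairwise-loss storage that was analyzed per-thread in the baseline still suffices in the boosted algorithm even after the inherit step (Lines \ref{line:cont-merge1}--\ref{line:cont-merge2} of Algorithm \ref{algo:base+}), but inheritance only \emph{imports} experts whose pairwise-loss records can be reconstructed from the merged pool going forward, so the $O(|\mP_{r,\cdot}|^2)$ budget per thread is preserved. With that observation in place both bullets follow from the already-proved Lemmas \ref{lem:pool_size_full} and \ref{lem:merge_size} together with the memory guarantee of Theorem \ref{thm:sleeping}.
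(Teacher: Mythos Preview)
Your bound on $|\mP|$ is correct and identical to the paper's: sum the $R=O(\log(nT))$ thread pools, each of size $O(\log^{10}(nT))$ by Lemma~\ref{lem:pool_size_full}. The \textsc{MonocarpicExpert} contribution of $O(\log^{13}(nT))$ via Theorem~\ref{thm:sleeping} is also handled exactly as in the paper.

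Where you diverge is in the pairwise-loss bookkeeping, and there your per-thread accounting has a real gap at precisely the point you flag as ``subtle.'' During the inherit step (Lines~\ref{line:cont-merge1}--\ref{line:cont-merge2}) the call to \textsc{Merge} invokes \textsc{Filter} on a set containing experts that were, until that moment, living in \emph{different} threads. Evaluating the cover predicate there requires $\mathcal{L}_{\Gamma_{i,j}}(i)$ for $i$ coming from thread $r'$ and $j$ coming from thread $r''\neq r'$. This is a historical quantity --- the loss of $i$ on the past interval $[E_i,E_j)$ --- which can only be known if $i$'s cumulative loss was snapshotted at time $E_j$. With purely per-thread records you never took that snapshot (since $j$ was not in $i$'s thread when $j$ entered), so the claim that these records ``can be reconstructed from the merged pool going forward'' is not correct: the information is simply gone.

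The paper sidesteps this by doing the accounting globally: it stores $\mathcal{L}_{\Gamma_{i,j}}(i)$ for all pairs $i,j\in\mP$, which costs $O(|\mP|^2)=O(\log^{22}(nT))$ words. That is the source of the $\log^{22}$ in the statement, and replacing your per-thread step with this one-line global count fixes your argument and makes it coincide with the paper's.
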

\begin{proof}
By Lemma \ref{lem:pool_size_full}, we know that the size of pool $\mP$ satisfies $|\mP| = |\mP_{1, \cdot} \cup \cdots \cup \mP_{R, \cdot}| \leq O(\log^{11}(nT))$. 
The {\sc Merge} procedure needs to know the loss on each interval $\Gamma_{i, j}$ of $i, j\in \mP$, this takes $O(\log^{22}(nT))$ words of memory. 
By Theorem \ref{thm:sleeping}, the {\sc MonocarpicExpert} requires memory $O(|\mP|\log^2(nT)) = O(\log^{13}(nT))$ . In summary, the total memory is bounded by $O(\log^{22}(nT))$.
We conclude the proof here.
\end{proof}

\subsubsection*{Regret analysis}
The main departure (and complication) comes from the regret analysis.

\paragraph{Notation} 
Let $i^{*} \in [n]$ be the best expert and let
\[
K_1 = [0:n-1], \quad K_2 = [0:n-1]\times \{0,1\}  \quad \ldots  \quad K_R=  [0:n-1] \times \underbrace{\{0,1\}\times \cdots \times \{0,1\}}_{R-1}
\]
and $K$ be the union of $K_1, \ldots, K_R$, i.e.,
\[
K = K_1 \cup \cdots \cup K_{R}.
\]
Given any timestep $a = (a_{1}, \ldots, a_{r(a)}) \in K$ (where $r(a)$ is defined such that $a \in K_{r(a)}$), the timestep $a$ uniquely identifies an epoch of $\textsc{Baseline}_{+}(r(a))$, i.e., it refers to the $a_{r(a)}$-th epoch of the $(\sum_{r =1}^{r(a)-1}a_r 2^{r(a)-r-1})$-th restart.

\begin{definition}[Operator $\oplus$]
For any timestep $a\in K$, we write 
\[
a'  = (a_1', a_2', \ldots a_{r(a')}') = a\oplus 1 \in K 
\]
as the unique timestep that satisfies
\[
\sum_{i=1}^{r(a')} a_i'B_i = \sum_{i=1}^{r(a)} a_i B_i + B_{r(a)} \quad \text{and} \quad a'_{r(a')} \neq 0. 
\]
That is to say, $a' = a \oplus 1$ is the next number under $2$-base with $0$ truncated at the end  (except the first coordinate, which belongs to $[0:n-1]$).
\end{definition}
Intuitively, $a\oplus 1$ is the next complete epoch after epoch $a$, if there are multiple epochs of different threads, it refers to the one at the lowest thread.

Let $K(a)$ contain all timesteps that succeed $a$ under the $\oplus$ operation, i.e., 
\[
K(a) := \{a\} \cup \{a \oplus 1\} \cup \{(a \oplus 1) \oplus 1\} \cup \cdots  \subseteq K 
\]
Roughly speaking, $K(a)$ includes all timesteps/epochs that an expert enters at $a$ would reside. 

\paragraph{Random bits} 
Again, it is useful to understand the random bits used for pool selection before defining an active/passive expert.
For any timestep $a \in K$, let the random bits $\xi_{a} = (\xi_{a, 1}, \ldots, \xi_{a, n})$, where $\xi_{a, i} = (\xi_{a,i, 1}, \xi_{a, i,2})$ is used for expert $i$ ($i \in [n]$).
The first coordinate $\xi_{a, i, 1} \in \{0, 1\}$ is a Bernoulli variable with mean $1/n$. It is used for sampling new expert into the pool at the beginning of epoch $a$ (i.e. Line \ref{line:base_new+} of Algorithm \ref{algo:base+}).
The second part of random bits $\xi_{a, i, 2} \in \{0, 1\}^{R \times 2L \times 2K}$ are used for estimating size and filtering in the {\sc Merge} procedure of each sub-pool, they are of mean $\frac{1}{\log^4 (nT)}$.

Recall the concept of passive/active expert.
\begin{definition}[Active/Passive expert]
At any timestep $a \in K$, an expert $i\in [n]$ is said to be {\em passive}, if $\xi_{a, i, 2} = \vec{0}$. It is said to be an {\em active} expert otherwise.
\end{definition}

\paragraph{Epoch Assignment}  Our accounting argument will split the entire sequence of epochs into a  collection of disjoint subsequences, and assign each subsequence to a different set of experts in the pool. 

Fix the loss sequence $\ell_1, \ldots, \ell_{T}$ and the set of {\em sampled and active} experts $Y_a \subseteq [n]$ for each timestep $a \in K$. 
Our key observation is
\begin{observation}
\label{obs:fix2}
Suppose the loss sequence $\{\ell_t\}_{t\in [T]}$ and the set of sampled and active experts $\{Y_a\}_{a \in K}$ are fixed, then at any time during the execution of Algorithm \ref{algo:full}, the estimate size $s$, the filter set $\mF$ and the set of alive active experts are also fixed, regardless of the set of sampled and passive experts.
\end{observation}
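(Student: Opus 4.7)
The plan is to argue structurally that passive experts are ``invisible'' to every random procedure that determines $s$, $\mF$, and the alive active set, and then to push this observation through the algorithm by induction on the ordered sequence of epochs.

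The first step is an inspection of the random bits. By definition, a passive expert $i$ at timestep $a$ has $\xi_{a,i,2}=\vec 0$, which are precisely the bits used for all Bernoulli trials inside \textsc{EstimateSize} and \textsc{Sample} that are invoked from \textsc{Merge}. Consequently, for a passive expert we always have $p_i=0$ in both Algorithm \ref{algo:estimate} and Algorithm \ref{algo:sample}. This immediately yields two key facts: (i) the counter $s$ returned by every \textsc{EstimateSize} call equals the count of \emph{active} experts in the queried sub-pool whose corresponding $p_i=1$, and (ii) every filter set $\mF_k$ produced inside a \textsc{Merge} consists exclusively of active experts. The bit $\xi_{a,i,1}$, which controls whether expert $i$ is sampled at Line \ref{line:base_new+} of Algorithm \ref{algo:base+}, is independent of $\xi_{a,i,2}$, so passive experts may still enter the pool — but once inside they never appear in any $\mF_k$ and never contribute to any $s$.

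The second ingredient is that the \textsc{Filter} routine decouples the fate of active and passive experts. Recall Definition~\ref{def:cover}: whether $\mF$ covers an expert $j$ is a predicate on the cumulative losses of $j$ and of the members of $\mF$ over deterministic intervals determined by entry times. Since $\mF$ is an active-only set and the loss sequence is fixed, for every active expert $j$ the outcome of the coverage test (and hence survival through \textsc{Filter}) is a function of $\{\ell_t\}$, $\mF$, and the entry time of $j$ alone — with no dependence on which passive experts currently share the pool.

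The third step is the induction. Order the timesteps of $K$ by $\oplus$, and maintain the invariant that, immediately after every line of Algorithm \ref{algo:full} that mutates a sub-pool, (a) the set of alive active experts in every $\mP_{r,\ell}$, (b) every $s$ produced so far inside a \textsc{Merge}, and (c) every filter set $\mF_k$ produced so far, are functions solely of $\{\ell_t\}$, $\{Y_{a'}\}_{a'\preceq a}$, and the random bits $\xi_{\cdot,\cdot,2}$ restricted to active experts. The base case is trivial since all pools are empty. For the inductive step one walks through the body of the loop in Algorithm \ref{algo:base+}: the sampling of new experts at Line \ref{line:base_new+} adjoins exactly $Y_a$ to the active part of $\mP_{r,0}$ (plus some passive experts that do not enter the invariant); the inherit-from-higher-threads block on Lines \ref{line:cont-merge1}--\ref{line:cont-merge2} and the end-of-epoch merges both reduce to \textsc{Merge} calls, whose $s$ and $\mF_k$ values by the first two steps are determined by the active input and $\{\ell_t\}$, and whose \textsc{Filter} calls preserve the active-only invariant by the second step. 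Applying this across every thread $r\in[R]$ at every timestep proves the observation.

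The only delicate bookkeeping concern is that different \textsc{Merge} invocations inside a single epoch consume different coordinates of $\xi_{\cdot,\cdot,2}$; one must verify that the indexing scheme of Algorithm \ref{algo:base+} assigns disjoint coordinates per call so the ``$\xi=\vec 0$ implies $p_i=0$ in every call'' argument is valid in all $R\times 2L \times 2K$ places a passive expert might have been polled. That is the only non-mechanical point, and it is immediate from the declared dimension of $\xi_{a,i,2}$, so the induction goes through without further complication.
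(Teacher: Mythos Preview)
Your proposal is correct and essentially matches the paper's reasoning; the paper in fact states Observation~\ref{obs:fix2} without proof, treating it as immediate from the definition of passive experts (that $\xi_{a,i,2}=\vec 0$ means they are never counted in \textsc{EstimateSize} nor selected into any filter set $\mF$). Your induction and the decoupling argument for \textsc{Filter} are exactly the intended justification, just spelled out in more detail than the paper bothers with.
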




\begin{definition}[Eviction time, full algorithm]
For any timestep $a \in K$, consider the expert $i^{*}$ with entering time $a$, the eviction time $t(a) \in K(a)\cup \{+\infty\}$ is defined as the first timestep of $K(a)$, such that $i^{*}$ is covered by the set of alive active experts at the end of $t(a)$.
If $i^{*}$ would not be covered, then set $t(a) = +\infty$.
\end{definition}

\begin{algorithm}[!htbp]
\caption{Epoch assignment (Note: only used in analysis)}
\label{algo:sequence}
\begin{algorithmic}[1]
\State Initialize $\mH_{r} \leftarrow \emptyset$ ($r\in [R]$), $\tau \leftarrow 1$, $a_{1} \leftarrow 0$
\While{$\cup_{\tau'\leq \tau}\mI_{\tau'}  \neq [T]$}
\If{$t(a_{\tau}) = +\infty$}\Comment{$i^{*}$ survives till the end} 
\State $\mH_{r(a_{\tau})} \leftarrow \mH_{r(a_{\tau})} \cup a_{\tau}$ \Comment{$a_\tau$ is a bad epoch at thread $r(a_\tau)$}\label{line:augment}
\If{$r(a_{\tau}) = R$} \Comment{Stop at the top thread}
\State $\mI_{\tau} \leftarrow a_\tau$, $a_{\tau + 1} \leftarrow a_{\tau} \oplus 1$, $\tau \leftarrow \tau + 1$, 
\Else 
\State $a_{\tau} \leftarrow (a_{\tau}, 0)$ \Comment{Move to next thread}
\EndIf
\Else
\State $a_{\tau + 1} \leftarrow t(a_{\tau}) \oplus 1$, $I_{\tau} \leftarrow [a_{\tau}: t(a_{\tau})]$, $\tau \leftarrow \tau + 1$
\EndIf
\EndWhile
\end{algorithmic}
\end{algorithm}

The key ingredient in our analysis is the epoch assignment mechanism, whose pseudocode is presented as Algorithm \ref{algo:sequence} (note again that this algorithm is only for analysis). Algorithm \ref{algo:sequence} splits the entire sequence and determines the bad epochs $\mH_{r}$ of each thread $r$ ($r\in [R]$) through a bottom-up walk.
It starts from the bottom thread $r=1$. 
At each step $\tau$, if the expert $i^{*}$ could survive till the end with entering time $a_{\tau}$, then it is a bad epoch at thread $r(a_{\tau})$ and the algorithm augments the collection of bad epochs $\mH_{r(a_{\tau})}$ of thread $r(a_{\tau})$ (Line \ref{line:augment}). 
Instead of moving to the next epoch of the same thread, Algorithm~\ref{algo:sequence} walks to the next thread, unless it is already at the top thread $R$. 
If the expert $i^{*}$ would be covered until timestep $t(a_{\tau}) \in K(a)$, 
then Algorithm~\ref{algo:sequence} sets the $\tau$-th interval as $\mI_{\tau} = [a_{\tau}:t(a_{\tau})]$ and moves to the next epoch $a_{\tau+1} = t(a_{\tau})\oplus 1$.  
Here $a_{\tau}$ and $t(a_{\tau})$ do not need to be at the same thread, and we write $\mI_{\tau} = [a_{\tau}:t(a_{\tau})]$ to denote the time interval between the beginning of $a_\tau$ and the end of $t(a_{\tau})$.
See Figure \ref{fig:walk} for an illustration.

Let $\tau_{\max}$ be the total number of intervals in the partition generated by Algorithm~\ref{algo:sequence}.
We first make a few simple observations about the intervals $\{\mI_{\tau}\}_{\tau \in [\tau_{\max}]}$.

\begin{lemma}
\label{lem:obv-interval}
We have
\begin{itemize}
\item The intervals $\{\mI_{\tau}\}_{\tau \in [\tau_{\max}]}$ are disjoint and $\bigcup_{\tau \in [\tau_{\max}]}\mI_{\tau} = [T]$.
\item Let $L_1 = \{\mI_{\tau}\}_{\tau \in [\tau_{\max}] }$ and let $L_{r}:= \{\mI_{\tau}:  |\mI_{\tau}| < B_{r-1}\}$ ($r \in [2:R]$) contain intervals of length less than $B_{r-1}$, then 
\begin{align*}
\sum_{\mI\in L_{r}} |\mI| \leq |\mH_{r-1}| \cdot B_{r-1}.
\end{align*}
\end{itemize}
\end{lemma}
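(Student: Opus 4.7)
The plan is to handle the two claims in sequence. The first claim follows directly from the construction of Algorithm~\ref{algo:sequence}: the climbing branch modifies $a_\tau$ in place without emitting an interval, while each interval-emitting branch sets $a_{\tau+1} = t(a_\tau) \oplus 1$, which by the definition of $\oplus$ starts exactly at the end of the just-emitted $\mI_\tau$. Hence the emitted intervals are pairwise disjoint and contiguous in time, and the loop condition $\bigcup_{\tau' \leq \tau} \mI_{\tau'} \neq [T]$ forces them to eventually cover $[T]$.

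For the second claim I would charge each short interval to a thread-$(r-1)$ epoch that is guaranteed to lie in $\mH_{r-1}$. Fix $r \in [2:R]$ and $\mI_\tau \in L_r$. First, $|\mI_\tau| \geq B_{r(a_\tau)}$ because $\mI_\tau$ contains the epoch $a_\tau$ itself; combined with $|\mI_\tau| < B_{r-1}$, this forces $r(a_\tau) \geq r$, so $a_\tau$ lies inside a unique thread-$(r-1)$ epoch, which I denote $e(\mI_\tau)$. I then claim $\mI_\tau \subseteq e(\mI_\tau)$. The reason is structural: successive elements of $K(a_\tau)$ ``fill up'' $e(\mI_\tau)$ via contributions $B_m$ with $m \geq r$, and the very first $\oplus$-successor that leaves $e(\mI_\tau)$ starts a fresh thread-$(r-1)$ epoch at the boundary, so this escaping step alone contributes a block of length at least $B_{r-1}$. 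If $\mI_\tau$ included that block, the total length would exceed $B_{r-1}$, a contradiction.

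Next I would prove $e(\mI_\tau) \in \mH_{r-1}$ by induction on $\tau$. Let $r^*_0$ denote the thread of $a_\tau$ at the start of iteration $\tau$, before any climbing. If $r^*_0 \leq r-1$, the climb from $r^*_0$ up to $r(a_\tau) \geq r$ must traverse thread $r-1$, and the epoch visited there---namely $a_\tau^{init}$ padded with $r-1-r^*_0$ trailing zeros---coincides with $e(\mI_\tau)$ and is therefore inserted into $\mH_{r-1}$ by Line~\ref{line:augment}. Otherwise $r^*_0 \geq r$; unpacking $a_\tau^{init} = t(a_{\tau-1}) \oplus 1$ and the definition of $\oplus$, the only way the result can land at thread $\geq r$ is if $t(a_{\tau-1})$ has a trailing zero coordinate. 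But every element of $K(a_{\tau-1})$ other than $a_{\tau-1}$ itself has non-zero trailing coordinate (by the $\oplus$ construction), so we must have $t(a_{\tau-1}) = a_{\tau-1}$, with $a_{\tau-1}$ in the climbed state. Then $\mI_{\tau-1} = [a_{\tau-1}: a_{\tau-1}]$ has length $B_{r^*_0} < B_{r-1}$, hence $\mI_{\tau-1} \in L_r$; and incrementing the trailing zero of $t(a_{\tau-1})$ preserves the lower-thread coordinates, giving $e(\mI_\tau) = e(\mI_{\tau-1})$, whereupon the inductive hypothesis supplies the bad witness.

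Combining containment with the bad-witness property, each $\mI_\tau \in L_r$ is a subinterval of some $e \in \mH_{r-1}$. Intervals assigned to a common $e$ are pairwise disjoint subintervals of a time range of length $B_{r-1}$, so their aggregate length is at most $B_{r-1}$; summing over $\mH_{r-1}$ yields $\sum_{\mI \in L_r} |\mI| \leq |\mH_{r-1}| \cdot B_{r-1}$. The main technical obstacle is the induction step in Case $r^*_0 \geq r$, which requires carefully tracking how trailing zeros propagate through the $\oplus$ operator across consecutive iterations; the remaining pieces are direct bookkeeping.
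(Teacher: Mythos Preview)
Your overall architecture is the same as the paper's: show that every $\mI_\tau \in L_r$ is contained in some thread-$(r-1)$ epoch lying in $\mH_{r-1}$, then sum. The first bullet, the containment $\mI_\tau \subseteq e(\mI_\tau)$, and Case $r_0^* \le r-1$ are all fine.

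The gap is in Case $r_0^* \ge r$. You claim that $r(a_\tau^{\mathrm{init}}) \ge r$ forces $t(a_{\tau-1})$ to have trailing coordinate $0$, and hence $t(a_{\tau-1}) = a_{\tau-1}$. This is false. Take $r = 3$, $a_{\tau-1} = (c_1,c_2,0,0)$ at thread $4$, and suppose $i^*$ is covered one step later at $t(a_{\tau-1}) = (c_1,c_2,0,1)$ (thread $4$, trailing coordinate $1$, a legitimate non-initial element of $K(a_{\tau-1})$). Then $a_\tau^{\mathrm{init}} = (c_1,c_2,0,1)\oplus 1 = (c_1,c_2,1)$ at thread $3 = r$, yet $t(a_{\tau-1}) \neq a_{\tau-1}$. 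So your deduction ``$\mI_{\tau-1}$ is a single climbed epoch'' fails.

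The conclusion of the inductive step is still true, but the correct reason is different. Since $\oplus$ never increases the thread index, $r_0^* \ge r$ forces $r(t(a_{\tau-1})) \ge r$; combined with $r(a_{\tau-1}) \ge r(t(a_{\tau-1}))$, both $a_{\tau-1}$ and $t(a_{\tau-1})$ sit inside a common thread-$(r-1)$ epoch $e_{\tau-1}$, and $a_\tau^{\mathrm{init}}$ (which begins immediately after $t(a_{\tau-1})$ ends and is at thread $\ge r$) also lies strictly inside $e_{\tau-1}$. Hence $e(\mI_\tau) = e(\mI_{\tau-1})$ and $\mI_{\tau-1} \subsetneq e_{\tau-1}$, so $\mI_{\tau-1} \in L_r$ and the induction goes through. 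The paper avoids this induction entirely with a cleaner barrier argument: if $e(\mI_\tau) \notin \mH_{r-1}$, then between the nearest thread-$(r-1)$ bad epochs on either side the walk can never climb above thread $r-1$, contradicting $r(a_\tau) \ge r$.
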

\begin{proof}
The first claim follows directly from the assignment process and we focus on the second claim.
For any thread $r \in [2:R]$ and any interval $\mI_{\tau}\in L_{r}$, we prove $\mI_{\tau}$ is contained in a bad epoch of thread $r-1$.

Recall $\mI_{\tau}$ starts from the beginning of epoch $a_\tau$ and terminates at the end of epoch $t(a_{\tau})$ ($a_{\tau}, t(a_{\tau})$ are not necessarily at the same thread).
We first observe $t(a_\tau)$ is of thread at least $r$, otherwise $\mI$ spans at least an epoch of thread $r-1$.
Suppose the timestep $t(a_\tau)$ is contained in epoch $a'$ of thread $r-1$ (i.e., $r(a') = r-1$), it suffices to prove (1) $\mI$ starts within $a'$ and (2) $a'$ is a bad epoch of thread $r-1$ (i.e., $a' \in \mH_{r-1}$). 
The first claim follows from the epoch assignment procedure.
For the second claim, suppose $a'$ is not a bad epoch of thread $r-1$, then let $a_1'$ be the closest thread $r-1$ bad epoch before $a'$ and $a_2'$ be the closest thread $(r-1)$ bad epoch after $a'$. The ``walk''  defined by Algorithm \ref{algo:sequence} would not go above thread $r-1$ between $a_1'$ and $a_2'$, which contradicts our assumption of $\mI\in L_{r}$.
We conclude the proof here.

\end{proof}

We next prove the size of bad epochs $\mH_{r}$ is at most $O(n\log^{11} (nT))$ with high probability.

\begin{lemma}
\label{lem:bad-size-full}
With probability at least $1 - 1/(nT)^{\omega(1)}$, $|\mH_r| \leq O(n\log^{11}(nT))$ holds for any $r\in [R]$.
\end{lemma}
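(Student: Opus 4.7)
The plan mirrors the analysis of $|\mH|$ in Lemma \ref{lem:base-size}, applied separately to each thread. First I would fix the loss sequence $\{\ell_t\}_{t\in [T]}$ and the sampled-and-active set $\{Y_a\}_{a\in K}$. By Observation \ref{obs:fix2}, the estimate sizes, the filter sets encountered at every invocation of \textsc{Merge} across all threads, and the entire alive-active-experts trajectory are then determined; in particular, the collections $\{\mH_r\}_{r\in [R]}$ produced by Algorithm \ref{algo:sequence} are fixed as well, and it remains only to reason over the residual randomness determining which not-yet-fixed experts are sampled passively at each epoch.

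Fix a thread $r$. For any $a\in \mH_r$ I would first observe that $i^{*}\notin Y_a$, for otherwise the $a$-entry of $i^{*}$ would be an alive active expert covering itself immediately, giving $t(a)=a$ and contradicting $t(a)=+\infty$. Letting $W_a := \{i : \xi_{a,i,1}=1,\ \xi_{a,i,2}=\vec 0\}$ denote the sampled-and-passive set at epoch $a$, the residual probability that $i^{*}\in W_a$ is at least $\frac{1}{n}\bigl(1-1/\log^4(nT)\bigr)^{R \cdot 2L \cdot 2K} \ge \frac{1}{2n}$, and the events $\{i^{*}\in W_a\}_{a \in \mH_r}$ are mutually independent. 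The key structural claim, matching the baseline, is that whenever $i^{*}\in W_a$ with $a \in \mH_r$, the $a$-entry of $i^{*}$ survives in the aggregate pool $\mP$ until the end of the execution: passivity means the $a$-entry never joins any \textsc{Sample} or \textsc{EstimateSize} call inside \textsc{Merge} (including during inheritance across threads via Lines~\ref{line:cont-merge1}--\ref{line:cont-merge2} of Algorithm~\ref{algo:base+}), so by Observation~\ref{obs:fix2} every filter $\mF_k$ it encounters is contained in the alive active experts at that moment, and since $a \in \mH_r$ asserts that the $a$-entry is never covered by the alive active experts, \textsc{Filter} never removes it.

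Each such surviving entry contributes a distinct element to the final pool, hence $|\mP| \ge \sum_{a\in \mH_r}\mathsf{1}\{i^{*}\in W_a\}$. A Chernoff bound over these independent Bernoullis gives
\begin{align*}
\Pr\!\left[\sum_{a\in \mH_r}\mathsf{1}\{i^{*}\in W_a\} \le \tfrac{|\mH_r|}{4n}\right] \le \exp\!\left(-|\mH_r|/(16n)\right).
\end{align*}
Combining this with the high-probability bound $|\mP|\le O(\log^{11}(nT))$ from Lemma~\ref{lem:pool_size_full} and taking a union bound over the $R = O(\log(nT))$ threads yields $|\mH_r| \le O(n\log^{11}(nT))$ simultaneously for all $r$ with probability at least $1-1/(nT)^{\omega(1)}$.

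The most delicate part will be verifying the survival claim in the multi-threaded setting: I need to confirm that the bits $\xi_{a,i^{*},2}$ really do control participation in every \textsc{Merge} invocation along the entire migration path $K(a)$ --- including those triggered by lower-thread restarts that re-initialize other sub-pools around the $a$-entry --- so that passivity at epoch $a$ insulates the $a$-entry from eviction for the whole remainder of the horizon.
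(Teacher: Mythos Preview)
Your proposal is correct and follows essentially the same argument as the paper: fix the sampled-and-active experts, observe that $i^{*}\notin Y_a$ for $a\in\mH_r$, use passivity to argue survival till the end, and contrast the resulting Chernoff lower bound on $|\mP|$ with the polylogarithmic upper bound already established. The only minor slip is the reference for the aggregate pool bound $|\mP|\le O(\log^{11}(nT))$, which is Lemma~\ref{lem:memory} rather than Lemma~\ref{lem:pool_size_full}.
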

\begin{proof}
Recall we first fixed the loss sequence $\ell_1, \ldots, \ell_{T}$ as well as the set of sampled and active experts $\{Y_a\}_{a\in K}$.
For any $r\in [R]$, we prove the size of pool $\mP$ at the end satisfies 
\[
\Pr\left[|\mP|\geq \frac{|\mH_r|}{4n} \mid \{Y_a\}_{a\in K}, \{\ell_t\}_{t\in[T]} \right] \geq \frac{1}{2},
\]
whenever $|\mH_r| \geq \Omega(n)$. Here the probability is taken over the randomness of the remaining  {\em sampled and passive} experts. 

The proof is similar to Lemma \ref{lem:base-size}.
Let $W_a$ be the set of sampled and passive experts of timestep $a$. For any timestep $a \in \mH_r$, by our assignment mechanism (Algorithm \ref{algo:sequence}), the expert $i^{*} \notin Y_{a}$ (otherwise $i^{*}$ is covered by itself.)
The probability that $i^{*} \in W_a$ obeys
\begin{align*}
\Pr\left[i^{*}\in W_a | \{Y_a\}_{a\in K}, \{\ell_t\}_{t\in[T]} \right] = &~  \Pr[i^{*}\in W_a | i^{*}\notin Y_{a}] \\
\geq &~ \Pr[i^{*}\in W_a] = \frac{1}{n} \cdot \left(1 - \frac{1}{\log^4 (n)}\right)^{R\times 2L \times 2K} \geq \frac{1}{2n}.
\end{align*}

The key observation is that once $i^{*} \in W_{a}$, i.e., $i^{*}$ is sampled while passive at timestep $t$, it would survive till the end. This is because the filter set $\mF$, the estimate size $s$ and the set of alive active experts are all fully determined by the sampled and active experts $\{Y_a\}_{a\in K}$ and the loss sequence $\{\ell_t\}_{t\in [T]}$, which has already been fixed when defining the eviction time $t(a)$. 
Therefore, if a passive expert $i^{*}$ enters at timestep $a$, it survives till the end.

The event of $i^{*}\in W_a$ are independent for $a \in \mH_r$ (condition on the loss sequence and $\{Y_a\}_{a\in [K]}$), hence, by Chernoff bound, one has
\begin{align*}
\Pr\left[|\mP|\leq \frac{|\mH_r|}{4n} \mid \{Y_a\}_{a\in K}, \{\ell_t\}_{t\in[T]}\right] \leq &~ \Pr\left[\sum_{a\in \mH_r}\mathsf{1}\{i^{*}\in W_{a}\} \leq \frac{|\mH_r|}{4n} \mid \{Y_a\}_{a\in K}, \{\ell_t\}_{t\in[T]}\right]\\
\leq &~ \exp(-|\mH_r|/16n). 
\end{align*}

Note we have already proved in Lemma \ref{lem:memory} that $|\mP| \geq  \Omega(\log^{11}(nT))$ happens with probability at most $1/(nT)^{\omega(1)}$, this implies 
\[
\Pr[|\mH_r|\geq \Omega(n\log^{11}(nT))] \leq 1/(nT)^{\omega(1)}.
\] 
We conclude the proof here.
\end{proof}

Now we apply the regret guarantee of {\sc MonocarpicExpert} and prove that Algorithm \ref{algo:full} achieves low regret over each interval $\mI_{\tau}$.
\begin{lemma}
\label{lem:interval-regret-apply}
With probability at least $1-1/(nT)^{\omega(1)}$, for any $\tau \in [\tau_{\max}]$ and $a_{\tau} \notin \mH_{R}$, 
\begin{align*}
\sum_{t \in \mI_{\tau}}\ell_{t}(i_t) - \sum_{t \in \mI_{\tau}}\ell_{t}(i^{*}) \leq O\left(\sqrt{|\mI_{\tau}|} \cdot \log^6(nT)\right).
\end{align*}
\end{lemma}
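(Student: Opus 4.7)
The plan is to combine the covering structure at eviction time with the interval regret guarantee of {\sc MonocarpicExpert}. By inspection of Algorithm~\ref{algo:sequence}, the condition $a_\tau \notin \mH_R$ forces $t(a_\tau) < +\infty$: otherwise the walk would have augmented $\mH_R$ at the top thread. Hence, writing $\mF$ for the set of alive active experts at the end of epoch $t(a_\tau)$, Definition~\ref{def:cover} applies with $j = i^*$ and certifies that $i^*$ is covered by $\mF$.

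First I would unpack this covering. Writing the experts of $\mF = \{i_1,\ldots,i_{|\mF|}\}$ in the order $i_1 \preceq \cdots \preceq i_r \preceq i^* \preceq i_{r+1} \preceq \cdots \preceq i_{|\mF|}$, the cover gives a partition of $\mI_\tau$ into at most $k \le |\mF|+1$ sub-intervals $\mJ_1,\ldots,\mJ_k$, with boundaries given by the entry epochs of $i_{r+1},\ldots,i_{|\mF|}$, together with a minimizer $\tilde{i}_s \in \mF$ for each $\mJ_s$ such that
\[
\sum_{s=1}^{k}\sum_{t\in \mJ_s}\ell_t(\tilde{i}_s) \;\le\; \sum_{t\in \mI_\tau}\ell_t(i^*) + 1.
\]
A small but essential check is that each $\tilde{i}_s$ is alive throughout $\mJ_s$: by the ordering in Definition~\ref{def:cover}, $E_{\tilde{i}_s}$ is no later than the left endpoint of $\mJ_s$, and since $\tilde{i}_s\in \mF$ it remains in the pool at least through the end of $t(a_\tau)$, which is no earlier than the right endpoint of $\mJ_s$.

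Second I would apply Theorem~\ref{thm:sleeping} to each $\mJ_s$ with the alive expert $\tilde{i}_s$ (recall {\sc MonocarpicExpert} is run over $\mP \supseteq \mF$), obtaining with probability $1-1/(nT)^{\omega(1)}$,
\[
\sum_{t\in \mJ_s}\ell_t(i_t) - \sum_{t\in \mJ_s}\ell_t(\tilde{i}_s) \;\le\; O\bigl(\sqrt{|\mJ_s|\log(nT)}\bigr).
\]
Summing these bounds, combining with the covering inequality, and using Cauchy--Schwarz together with $k \le |\mF|+1 \le |\mP|+1 = O(\log^{11}(nT))$ from Lemma~\ref{lem:memory}, I obtain
\[
\sum_{t\in \mI_\tau}\ell_t(i_t) - \sum_{t\in \mI_\tau}\ell_t(i^*) \;\le\; 1 + \sum_{s=1}^{k} O\bigl(\sqrt{|\mJ_s|\log(nT)}\bigr) \;\le\; O\bigl(\sqrt{k\,|\mI_\tau|\,\log(nT)}\bigr) \;=\; O\bigl(\sqrt{|\mI_\tau|}\,\log^6(nT)\bigr).
\]
A union bound over $\tau \in [\tau_{\max}]$ (and the at most $O(T)$ sub-intervals produced across all $\tau$) preserves the failure probability $1/(nT)^{\omega(1)}$.

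The main conceptual hurdle is the ``aliveness'' check for each $\tilde{i}_s$, since {\sc MonocarpicExpert}'s interval regret only applies on sub-intervals contained in an expert's lifetime; once this is established via the ordering convention in Definition~\ref{def:cover} and the definition of $\mF$, the rest is a clean Cauchy--Schwarz sum controlled by the polylogarithmic bound on the pool size. The $-1$ slack from the covering definition and the $+1$ union-bound slack are both absorbed trivially into the $O(\sqrt{|\mI_\tau|}\log^6(nT))$ target.
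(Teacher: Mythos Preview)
Your proposal is correct and follows essentially the same approach as the paper: partition $\mI_\tau$ via the covering certificate at the eviction time, apply the interval regret guarantee of {\sc MonocarpicExpert} on each sub-interval, and close with Cauchy--Schwarz and the polylogarithmic pool-size bound. Your additional care in verifying that each covering expert $\tilde{i}_s$ is alive throughout its assigned sub-interval $\mJ_s$ (needed for Theorem~\ref{thm:sleeping} to apply) is a detail the paper leaves implicit; otherwise the arguments are identical.
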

\begin{proof}
We condition on the event of Lemma \ref{lem:pool_size_full}.
Given any interval $I_{\tau}$ starting with $a_\tau$, ending with $t(a_{\tau})$ and $a_{\tau} \notin \mH_{R}$, the expert $i^{*}$ with entering time $a_\tau$ is covered by $\mP$ the end of $t(a_{\tau})$. Let $i_1^{*}, \ldots, i_{s}^{*}$ be the set of experts that cover $i^{*}$.
By Definition \ref{def:cover}, we can partition the interval $\mI_{\tau} = \mI_{\tau, 1} \cup \cdots \cup \mI_{\tau, s}$ and obtain
\begin{align}
\label{eq:mg1}
\sum_{t \in \mI_{\tau}}\ell_{t}(i^{*}) \geq \sum_{j=1}^{s}\sum_{t\in \mI_{\tau, j}}\ell_t(i_{j}^{*}) - 1
\end{align}
By the regret guarantee of {\sc MonocarpicExpert}, with probability at least $1 - 1/(nT)^{\omega(1)}$, one has
\begin{align}
\sum_{t \in \mI_{\tau}}\ell_{t}(i_t) - \sum_{t \in \mI_{\tau}}\ell_{t}(i^{*}) = &~ \sum_{t \in \mI_{\tau}} \ell_{t}(i_t) - \sum_{j=1}^{s}\sum_{t\in \mI_{\tau, j}}\ell_t(i_{j}^{*}) +  \sum_{j=1}^{s}\sum_{t\in \mI_{\tau, j}}\ell_t(i_{j}^{*}) - \sum_{t \in \mI_{\tau}}\ell_{t}(i^{*})\notag\\
\leq &~ \sum_{j=1}^{s}O\left(\sqrt{|\mI_{\tau, j}|\log(nT)}\right) + \sum_{j=1}^{s}\sum_{t\in \mI_{\tau, j}}\ell_t(i_{j}^{*})  - \sum_{t \in \mI_{\tau}}\ell_{t}(i^{*}) \notag  \\
\leq &~ \sum_{j=1}^{s}O\left(\sqrt{|\mI_{\tau, j}|\log(nT)}\right) + 1\notag \\
\leq &~ \sqrt{|\mI_{\tau}|} \cdot \log^{6}(nT).\notag
\end{align}
The second step follows from Theorem \ref{thm:sleeping}, the third step follows from Eq.~\eqref{eq:mg1} and the last step follows from $s \leq \log^{11}(nT)$ (Lemma \ref{lem:pool_size_full}), $\mI_{\tau} = \mI_{\tau, 1} \cup \cdots \cup \mI_{\tau, s}$ and Cauchy-Schwarz. We conclude the proof here.

\end{proof}

Now we can bound the regret of full algorithm.

\begin{lemma}
\label{lem:regret-full}
With probability at least $1- 1/(nT)^{\omega(1)}$, one has
\[
\sum_{t \in [T]}\ell_{t}(i_t) - \sum_{t \in [T]}\ell_{t}(i^{*}) \leq \wt{O}\left(\sqrt{nT}\right).
\]
\end{lemma}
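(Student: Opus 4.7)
The plan is to use the partition $\{\mI_\tau\}_{\tau \in [\tau_{\max}]}$ produced by Algorithm~\ref{algo:sequence} and decompose the total regret into a per-interval sum. We condition on the high-probability events from Lemma~\ref{lem:memory}, Lemma~\ref{lem:bad-size-full}, and Lemma~\ref{lem:interval-regret-apply}, which hold jointly with probability at least $1 - 1/(nT)^{\omega(1)}$ by a union bound. We first isolate the intervals $\mI_\tau$ whose starting epoch $a_\tau$ lies in $\mH_R$: for such intervals the walk stops immediately, so $\mI_\tau = \{a_\tau\}$ is a single top-thread epoch of length $B_R = O(1)$ (by our choice $R = \log_2(T/n)$). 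Their total regret contribution is crudely bounded by $|\mH_R| \cdot B_R \leq \wt{O}(n)$, which is absorbed into $\wt{O}(\sqrt{nT})$.

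For the remaining intervals (those with $a_\tau \notin \mH_R$), Lemma~\ref{lem:interval-regret-apply} bounds the regret on each $\mI_\tau$ by $O(\sqrt{|\mI_\tau|} \log^6(nT))$. The main task is to bound $\sum_\tau \sqrt{|\mI_\tau|}$. I would group these intervals by size class using the nested family $L_r = \{\mI_\tau : |\mI_\tau| < B_{r-1}\}$ from Lemma~\ref{lem:obv-interval}. For $r \geq 2$, the size class $L_r \setminus L_{r+1}$ contains intervals of length in $[B_r, B_{r-1})$, so by Lemma~\ref{lem:obv-interval} the total length of such intervals is at most $|\mH_{r-1}| \cdot B_{r-1}$, and their count is at most $|\mH_{r-1}| B_{r-1}/B_r = 2|\mH_{r-1}|$. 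Cauchy--Schwarz then gives
\begin{align*}
\sum_{\mI \in L_r \setminus L_{r+1}} \sqrt{|\mI|} \;\leq\; \sqrt{|L_r \setminus L_{r+1}|} \cdot \sqrt{\textstyle\sum_{\mI \in L_r \setminus L_{r+1}} |\mI|} \;\leq\; \sqrt{2}\,|\mH_{r-1}|\sqrt{B_{r-1}}.
\end{align*}
For $r = 1$ (intervals of length $\geq B_1 = T/n$), I use directly that their count is at most $T/B_1 = n$ and their total length is at most $T$; Cauchy--Schwarz yields $\sum \sqrt{|\mI|} \leq \sqrt{n \cdot T} = \sqrt{nT}$.

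Plugging in $|\mH_{r-1}| \leq \wt{O}(n)$ from Lemma~\ref{lem:bad-size-full} and $B_{r-1} = T/(n \cdot 2^{r-2})$, the $r$-th size class contributes $\wt{O}\!\left(\sqrt{nT/2^{r-2}}\right)$, and summing the resulting geometric series over $r \in [2:R]$ gives $\wt{O}(\sqrt{nT})$. Combining the $r=1$ contribution, the size-class sum, the $\log^6(nT)$ factor from Lemma~\ref{lem:interval-regret-apply}, and the $\wt{O}(n)$ from top-thread bad epochs yields the claimed $\wt{O}(\sqrt{nT})$ bound. The main conceptual obstacle is verifying that Lemma~\ref{lem:obv-interval}'s containment argument correctly accounts for every short interval --- the Cauchy--Schwarz step crucially needs the number of intervals in each size class to scale like $|\mH_{r-1}|$ rather than $T/B_r$; the length lower bound $|\mI| \geq B_r$ within $L_r \setminus L_{r+1}$ is what saves us and ties the count back to bad epochs of the adjacent thread.
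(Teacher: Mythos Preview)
Your proposal is correct and follows essentially the same route as the paper's own proof: partition the intervals by size class via the sets $L_r$, apply Cauchy--Schwarz using the total-length bound $\sum_{\mI\in L_r}|\mI|\le |\mH_{r-1}|B_{r-1}$ from Lemma~\ref{lem:obv-interval} together with the count bound coming from $|\mI|\ge B_r$, and then sum using $|\mH_{r-1}|\le \wt{O}(n)$ from Lemma~\ref{lem:bad-size-full}. The only cosmetic difference is that you isolate the intervals with $a_\tau\in\mH_R$ explicitly and bound them by $|\mH_R|\cdot B_R$, whereas the paper simply applies the crude bound to all of $L_R$ (which automatically contains those intervals since $|\mI_\tau|=B_R<B_{R-1}$); and you sum a geometric series in $r$ while the paper bounds each term uniformly by $\wt{O}(\sqrt{T/n})$ and multiplies by $R$.
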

\begin{proof}
In the proof, we first fix the loss sequence $\{\ell_t\}_{t\in [T]}$ and the set of sampled and active experts $\{Y_a\}_{a\in K}$. 
Now we can use the epoch assignment mechanism (Algorithm \ref{algo:sequence}) and split the entire sequence $[T]$ into a collection of intervals $\{I_{\tau}\}_{\tau \in [\tau_{\max}]}$.

We first categorize the intervals based on their length. With probability at least $1-1/(nT)^{\omega(1)}$, we have
\begin{align}
\sum_{t \in [T]}\ell_{t}(i_t) -\sum_{t \in [T]}\ell_{t}(i^{*}) = &~ \sum_{r=1}^{R-1}\sum_{\mI \in L_{r}\setminus L_{r+1}}\sum_{t\in \mI}(\ell_{t}(i_t) -  \ell_{t}(i^{*})) + \sum_{\mI \in L_R}\sum_{t\in \mI}(\ell_{t}(i_t) -  \ell_{t}(i^{*}))\notag \\
\leq &~ \sum_{r=1}^{R-1}\sum_{\mI \in L_{r}\setminus L_{r+1}}\sum_{t\in \mI}(\ell_{t}(i_t) -  \ell_{t}(i^{*})) + O(|\mH_{R-1}|)\notag \\
\leq &~ \sum_{r=1}^{R-1}\sum_{\mI \in L_{r}\setminus L_{r+1}}O\left(\sqrt{|\mI|}\log^6 (nT)\right) + O(|\mH_{R-1}|).\label{eq:regret-last1}
\end{align}
The first step holds from $[T] = \cup_{\mI\in L_1} \mI$, the second step uses the naive bound of
\[
\sum_{\mI \in L_R}\sum_{t\in \mI}(\ell_{t}(i_t) -  \ell_{t}(i^{*})) \leq \sum_{\mI\in L_{R}}|\mI| \leq |\mH_{R-1}|\cdot B_{R-1} = O(|\mH_{R-1}|),
\]
and the last step follows from the interval regret bound of Lemma \ref{lem:interval-regret-apply}. 

We bound the regret of each group separately.
Note for any $r\in [R-1]$, one has 
\begin{align}
\sum_{\mI \in L_{r}\setminus L_{r+1}}\sqrt{|\mI|} \leq &~ \sqrt{| L_{r}\setminus L_{r+1}| \cdot \sum_{\mI \in L_{r} \backslash L_{r+1}}|\mI|}\notag \\
\leq &~ |\mH_{r-1}| B_{r-1} \cdot  \sqrt{\frac{1}{B_{r}}} \notag \\
\leq &~ \left\{
\begin{matrix}
|\mH_{r-1}| \cdot \sqrt{2T/n}  & r\geq 2\\
\sqrt{nT}& r = 1
\end{matrix}
\right..\label{eq:regret-last2}
\end{align}
The first step follows from Cauchy-Schwarz, the second step follows from (1) the length of an interval $\mI \in L_{r} \backslash L_{r+1}$ satisfies $|\mI| \in [B_{r}: B_{r-1})$ and (2) $\sum_{\mI \in L_{r} \backslash L_{r+1}}|\mI| \leq |\mH_{r-1}| \cdot B_{r-1}$ (see Lemma \ref{lem:obv-interval}). 
The last step follows from the choice of $\{B_r\}_{r\in [R]}$. Note we slightly abuse notation and set $|\mH_{0}| = 1$ and $B_{0} = T$.

We can bound the RHS of Eq.~\eqref{eq:regret-last1} using Eq.~\eqref{eq:regret-last2}. With probability at least $1-1/(nT)^{\omega(1)}$, one has
\begin{align}
 &~ \sum_{r=1}^{R-1}\sum_{\mI \in L_{r}\setminus L_{r+1}}O\left(\sqrt{|\mI|}\log^6 (nT)\right) + O(|\mH_{R-1}|)\notag\\
\leq &~ O\left( \sqrt{n T}\log^6(nT) + \sum_{r=2}^{R-1} |\mH_{r-1}|\cdot \sqrt{T/n}\log^6(nT)  + |\mH_{R-1}|\right)\notag\\
\leq &~ \wt{O}\left(\sqrt{T/n}\right),
\label{eq:regret-last3}
\end{align}
where the second step follows from Lemma \ref{lem:bad-size-full}, i.e., $|\mH_{r}| \leq \wt{O}(n)$ holds with high probability for any $r \in [R]$.

Combining Eq.~\eqref{eq:regret-last1} and Eq.~\eqref{eq:regret-last3}, we conclude 
\begin{align*}
\sum_{t \in [T]}\ell_{t}(i_t) -\sum_{t \in [T]}\ell_{t}(i^{*})  \leq \wt{O}\left(\sqrt{nT}\right) 
\end{align*}
holds with probability at least $1-1/(nT)^{\omega(1)}$.
We complete the proof here.
\end{proof}

Combining the regret analysis (Lemma \ref{lem:regret-full}) and the memory bounds (Lemma \ref{lem:memory}), we conclude the proof of Proposition \ref{prop:oblivious-n}. We complete the proof of Theorem \ref{thm:oblivious-main} by a grouping trick.

\begin{proof}[Proof of Theorem \ref{thm:oblivious-main}]
We partition the experts $[n]$ into $G = \wt{O}(S)$ groups, with $n' = \wt{O}(n/S)$ experts per group. 
We instantiate a meta-thread%
\footnote{here we use {\em meta-thread} to distinguish from the $R$ {\em threads} internal to Algorithm \ref{algo:full}.} of Algorithm \ref{algo:full} over each group $j \in [G]$, and each meta-thread is viewed as an individual meta expert. 
The final decision comes from running the MWU over the set of meta experts $[G]$. 
Let the optimal expert $i^{*} \in [n]$ be contained in the $j^{*}$-th group ($j^{*} \in [G]$), the regret guarantee follows from 
\begin{align*}
\sum_{t=1}^{T}\ell_t(i_t) - \sum_{t=1}^{T}\ell_t(i^{*}) = &~ \sum_{t=1}^{T}\ell_t(i_t) - \sum_{t=1}^{T}\ell_t(j^{*}) + \sum_{t=1}^{T}\ell_t(j^{*}) - \sum_{t=1}^{T}\ell_t(i^{*})\\
\leq &~ \wt{O}(\sqrt{T}) + \wt{O}(\sqrt{n' T}) =  \wt{O}\left(\sqrt{\frac{nT}{S}}\right),
\end{align*}
where the second step follows from the regret guarantee of MWU and Proposition \ref{prop:oblivious-n}.
To bound the memory, note the algorithm maintains $G$ threads of Algorithm \ref{algo:full} , each takes $\polylog(nT)$ space, so the total space used is at most $G \cdot \polylog(nT) \leq S$. We conclude the proof here.

\end{proof}

\section{Vanishing regret against adaptive adversary}
\label{sec:adaptive-upper}

We have the regret guarantee against an adaptive adversary.

\Adaptive*

We focus on the case of $T \geq S$ and provide an algorithm that obtains a total regret of $\wt{O}\left(\frac{\sqrt{n}T}{S}\right)$ when $T \geq S$. 
This implies the regret guarantee of Theorem \ref{thm:adaptive-upper-main} over all range of $T$, using the same grouping trick as Theorem \ref{thm:oblivious-main}.
Concretely, we prove
\begin{proposition}
\label{prop:adaptive-upper}
Let $n, T \geq 1, \eps \in (1/\sqrt{n}, 1)$ and $S = \sqrt{n}/\eps$. Suppose $T \geq S$, then here is an online learning algorithm that uses up to $S \cdot\polylog(nT)$ space and achieves $\eps T \cdot \polylog(nT)$ regret against an adaptive adversary.
\end{proposition}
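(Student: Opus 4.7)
The plan is to instantiate the blueprint of Section~\ref{sec:tech-adaptive} at a geometric scale of resolutions $\eps_r = 2^r \eps$ for $r \in [R]$ with $R = \lceil \log_2(1/\eps)\rceil$, and combine the resulting threads with a top-level MWU plus the \textsc{MonocarpicExpert} subroutine of Theorem~\ref{thm:sleeping}. We divide time into epochs of $1/\eps^2$ days, and within each epoch run, in parallel for each $r$, a \textsc{RandomExpert}$_r$ thread that restarts every $1/\eps_r^2$ days, each restart sampling $\eps_r^2\sqrt{n}/\eps$ fresh experts, partitioning them into $1/\eps_r^2$ groups, running an independent MWU on each group, and at day $b$ of the restart following and then discarding the $b$-th group; and a \textsc{LongExpert}$_r$ thread that over the same $1/\eps_r^2$ days samples an independent set of $\sqrt{n}/\eps$ experts, observes (without committing) their losses, and at the end admits into a sub-pool $\mP_r$ those sampled experts whose average loss undercuts that of \textsc{RandomExpert}$_r$ by at least $\Omega(\eps_r)$, retaining each admitted expert for $\sqrt{n}/\eps$ days. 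A \textsc{MonocarpicExpert} instance plays on $\mP = \mP_1 \cup \cdots \cup \mP_R$, and the final action is chosen by a top-level MWU over the $R$ \textsc{RandomExpert}$_r$ threads together with the \textsc{MonocarpicExpert} module; by Algorithm~\ref{algo:mwu}, this meta-MWU costs only $\wt{O}(\sqrt{T})$ extra regret.

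For the memory bound, the cyclic group trick lets \textsc{RandomExpert}$_r$ discard each group after a single day of use, so it needs only $\wt{O}(\eps_r^2\sqrt{n}/\eps)$ space at any given time, summing to $\wt{O}(\sqrt{n}/\eps)$ across $r$. For \textsc{LongExpert}$_r$, the $\eps_r$-competitiveness of \textsc{RandomExpert}$_r$ against the top $\eps\sqrt{n}/\eps_r^2$-th expert implies (by a counting/Markov argument on the number of experts beating the competitor by $\Omega(\eps_r)$) that the number of sampled experts passing the admission test is $\wt{O}(1/\eps_r^2)$ per restart with high probability. Multiplying by $\eps_r^2/\eps^2$ restarts per epoch, the $\eps\sqrt{n}$-epoch lifetime, and summing over $r$ gives $\wt{O}(\sqrt{n}/\eps)$ simultaneously-live admitted experts, whence \textsc{MonocarpicExpert} of Theorem~\ref{thm:sleeping} adds only a polylogarithmic overhead to the total $\wt{O}(S)$ space.

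The regret analysis is driven by a per-comparator accounting: fix any $i^{*}\in[n]$ and, for each sub-epoch $E$ of length $1/\eps_r^2$ carved out by \textsc{RandomExpert}$_r$, let $r^{*}(E)$ be the largest $r$ for which $i^{*}$'s per-day loss on $E$ is at least $\eps_r$-better than that of the $(\eps\sqrt{n}/\eps_r^2)$-th best expert on $E$. If $i^{*}$'s advantage over \textsc{RandomExpert}$_{r^{*}}$ on $E$ is only $O(\eps_{r^{*}})$ per day, the meta-MWU already pays at most $O(\eps_{r^{*}}\cdot|E|)$ against $i^{*}$ on $E$, and these terms telescope across $r^{*}$ to $\wt{O}(\eps T)$ since at each level the tight bound is charged only on the sub-epochs where the rank condition is newly satisfied. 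Otherwise $i^{*}$ beats \textsc{RandomExpert}$_{r^{*}}$ by $\Omega(\eps_{r^{*}})$ per day on $E$; then with probability $1/(\eps\sqrt{n})$ expert $i^{*}$ is among the $\sqrt{n}/\eps$ experts sampled by \textsc{LongExpert}$_{r^{*}}$ and is admitted to $\mP_{r^{*}}$, after which the interval regret guarantee of Theorem~\ref{thm:sleeping} makes \textsc{MonocarpicExpert} be $\wt{O}(\eps)$-competitive per day with $i^{*}$ over the following $\sqrt{n}/\eps$ days. Amortizing the $\eps_{r^{*}}$ per-day gain on $E$ over a $\sqrt{n}/\eps$-day follow-up, weighted by the $1/(\eps\sqrt{n})$ sampling probability, yields $\wt{O}(\eps)$ expected per-day regret, and summing over sub-epochs and threads gives total regret $\wt{O}(\eps T)$.

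The main obstacle is the interaction with the adaptive adversary inside \textsc{LongExpert}$_r$: once an expert is admitted to $\mP_{r^{*}}$ the adversary can observe this and try to punish it, and the analysis must certify that the admission decision itself is made against an adversary who did not yet see which experts are being observed. The cyclic group trick inside \textsc{RandomExpert}$_r$ is what makes this possible, since on each day the algorithm's committed action comes from a group that is immediately discarded, so the adversary's losses within a sub-epoch are committed before the identity of the $\sqrt{n}/\eps$ observed experts can be inferred. Making this conditional-independence argument rigorous, patching the $O(\eps_r)$ (rather than $O(\eps)$) slack that \textsc{RandomExpert}$_r$ leaves when $i^{*}$'s rank does not exactly match the thread's target, and then chaining the \textsc{MonocarpicExpert} interval guarantee cleanly across admission and eviction cycles is the delicate work that the rest of Section~\ref{sec:adaptive-upper} will have to carry out.
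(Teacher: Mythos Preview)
Your high-level architecture matches the paper's almost exactly, but there is one genuine gap: the top-level combiner cannot be plain MWU. Your regret accounting repeatedly uses statements of the form ``the meta-MWU already pays at most $O(\eps_{r^{*}}\cdot|E|)$ against $i^{*}$ on $E$'' for a sub-epoch $E$, and later ``\textsc{MonocarpicExpert} is $\wt{O}(\eps)$-competitive per day with $i^{*}$ over the following $\sqrt{n}/\eps$ days''. Both of these are \emph{interval} regret claims: you are asking the top-level combiner to be competitive with \textsc{RandomExpert}$_{r^{*}}$ on some sub-intervals and with \textsc{MonocarpicExpert} on others. Plain MWU only guarantees that the total loss over $[T]$ is within $\wt{O}(\sqrt{T})$ of the single best of the $R{+}1$ options over the whole sequence, and neither any \textsc{RandomExpert}$_r$ alone nor \textsc{MonocarpicExpert} alone is $\wt{O}(\eps T)$-competitive with $i^{*}$. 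The paper fixes this by using \textsc{IntervalRegret} (Lemma~\ref{lem:interval}) at the top level over $\{\textsc{RandomExpert},\textsc{LongExpert}\}$, and again inside \textsc{RandomExpert} over the $R$ threads $\textsc{Rexp}_r$; this is what makes the per-interval switching in Eq.~\eqref{eq:split} and Eq.~\eqref{eq:random} legitimate.

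Two smaller points. First, your memory claim that \textsc{RandomExpert}$_r$ can ``discard each group after a single day of use'' and therefore stores only $\wt{O}(\eps_r^2\sqrt{n}/\eps)$ experts is not right: the group used on day $b$ must have been running MWU on losses from days $1,\dots,b-1$, so all $1/\eps_r^2$ groups must coexist at the start of the restart, giving $\wt{O}(\sqrt{n}/\eps)$ per thread. The total over $r$ is still $\wt{O}(S)$, so the final bound survives. Second, the ``$O(\eps_r)$ slack'' issue you flag as delicate is exactly what the paper resolves with the recursive \textsc{AssignInterval} procedure (Algorithm~\ref{algo:assign-interval}) and the counting $M_{t,r-1}\geq (N_{t,r}-3)/4$ of Claims~\ref{claim:random-1}--\ref{claim:random-2}; your ``telescope across $r^{*}$'' sketch is pointing at the right idea but does not yet supply the mechanism.
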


The pseudocode of the main algorithm is presented as Algorithm \ref{algo:adaptive}. It maintains two components, $\textsc{RandomExpert}$ and $\textsc{LongExpert}$, and runs {\sc IntervalRegret} over them.

Let $B = 1/\eps^2$ be the epoch size. 
For each epoch, the {\sc RandomExpert} (Algorithm \ref{algo:random-expert}) runs $\textsc{IntervalRegret}$ over $R = \lceil\log_4 (1/\eps^2)\rceil = \lceil\log_2 (1/\eps)\rceil$ procedures $\textsc{Rexp}_{r}$ ($r\in [R]$).
For any $r \in [R]$, $\textsc{Rexp}_r$ restarts every $1/\eps_r^2$ days, and during each restart, it runs MWU over $N_r = \eps_r^2\sqrt{n}\log^2(nT)/\eps$ randomly selected experts. 
We apply a common trick to make it work against an adaptive adversary:
$\textsc{Rexp}_{r}$ maintains $1/\eps_r^2$ different pools of experts and the decision of day $b$ comes from the $b$-th pool ($b \in [1/\eps_r^2]$).
This reduces an adaptive adversary to an oblivious adversary. 
From a high level, $\textsc{Rexp}_{r}$ is guaranteed to be $\eps_r$-competitive to all but the top $\eps\sqrt{n}/\eps_r^2$ experts during each restart.

The $\textsc{LongExpert}$ (Algorithm \ref{algo:random-expert}) aims to hedge the top experts. 
It maintains a pool $\mP$ that contains experts perform well during some past epochs and runs {\sc MonocarpicExpert} over $\mP$.
For each epoch, $\textsc{LongExpert}$ calls $\textsc{MaintainPool}_r$ ($r\in [R]$). For each $r \in [R]$, $\textsc{MaintainPool}_r$ restarts every $1/\eps_r^2$ days. During each restart, it samples a set of experts $O_r$ ($|O_r| = O(\sqrt{n}/\eps)$) and observes their loss for $1/\eps_r^2$ days. An expert $i \in O_r$ is said to survive and added to $\mP_r$ if its loss is significantly less than $\textsc{Rexp}_r$.
The pool $\mP_r$ is updated at the end of each epoch by removing experts that live for $\eps\sqrt{n}$ epochs.


\begin{algorithm}[!htbp]
\caption{Algorithm against adaptive adversary}
\label{algo:adaptive}
\begin{algorithmic}[1]
\State Run {\sc IntervalRegret} over $\textsc{RandomExpert}$ and $\textsc{LongExpert}$
\end{algorithmic}
\end{algorithm}

\begin{algorithm}[!htbp]
\caption{$\textsc{RandomExpert}$}
\label{algo:random-expert}
\begin{algorithmic}[1]
\State Initialize $\textsc{Rexp}_r$, $\eps_r \leftarrow 2^{r-1}\eps$, $N_{r}\leftarrow \eps_r^2 \sqrt{n}\log^2(nT)/\eps$ ($\forall\,r \in [R]$) \Comment{$R = \lceil\log_2(1/\eps)\rceil$}
\For{$t=1,2,\ldots, \eps^2 T$} \Comment{Epoch $t$}
\State Run {\sc IntervalRegret} over $\textsc{Rexp}_{r}$ ($r \in [R]$) every $1/\eps^2$ days. \label{line:interval-random}
\EndFor\\
\Procedure{$\textsc{Rexp}_r$}{} \Comment{$r\in [R]$}
\For{$s= 1, 2, \ldots, \eps_r^2/\eps^2$} \Comment{Restart every $1/\eps_r^2$ days}
\State Sample $N_r \times 1/\eps_r^2$ experts $\{i_{\alpha, \beta}\}_{\alpha \in [N_r], \beta \in  [1/\eps_r^2] }$ 
\For{$b = 1,2, \ldots, 1/\eps_{r}^2$}
\State Using MWU to maintain weights over $\{i_{\alpha, \beta}\}_{\alpha \in [N_{r}]}$ ($\forall\beta \in [1/\eps_r^2]$)
\State Play $i_{\alpha(b), b}$  \Comment{The decision of the $b$-th pool}
\EndFor
\EndFor
\EndProcedure
\end{algorithmic}
\end{algorithm}

\begin{algorithm}[!htbp]
\caption{$\textsc{LongExpert}$}
\label{algo:long}
\begin{algorithmic}[1]
\State Run $\textsc{MaintainPool}_{r}$ ($r\in [R]$)
\State Run {\sc MonocarpicExpert} over $\mP$ \Comment{$\mP = \mP_1 \cup \cdots \cup \mP_R$}


\Procedure{$\textsc{MaintainPool}_r$}{} \Comment{$r\in [R]$}
\State $\mP_r \leftarrow \emptyset$ 
\For{$t=1,2,\ldots, \eps^2 T$} \Comment{Epoch $t$}
\For{$s = 1,2,\ldots, \eps_r^2/\eps^2$} \Comment{Restart $s$}
\State $O_{r} \leftarrow \textsc{Sample}(\mathcal{N}, \frac{1}{\eps\sqrt{n}})$ \Comment{Sample $\Theta(\sqrt{n}/\eps)$ experts}\label{line:long-sample}
\State Observe $O_r$ for $1/\eps^2_r$ days
\For{each expert $i \in O_r$}
\State $\mP_r \leftarrow \mP_r \cup \{i\}$ if $\mathcal{L}_{s}(i) <  \mathcal{L}_s(\textsc{Rexp}_r) - 2\log(nT)/\eps_r$. \label{line:keep}
\EndFor 
\EndFor
\State Evict expired experts in $\mP_r$ \Comment{An expert lives in $\mP$ for $\eps\sqrt{n}$ epochs }
\EndFor
\EndProcedure

\end{algorithmic}
\end{algorithm}

\paragraph{Notation} For any epoch $t \in [\eps^2 T]$, thread $r\in [R]$, restart $s \in [\eps_r^2/\eps^2]$ and day $\beta \in [1/\eps_r^2]$, let $i_{t, r, s, \beta}$ be the action taken by $\textsc{Rexp}_r$, and $\ell_{t, r, s, \beta}$ be the loss vector on the $\beta$-th day, the $s$-th restart, $t$-th epoch of thread $r$.
For any $k\in [n]$, let $i_{t, r, s, k}^{*} \in [n]$ be the top $k$-th expert during $s$-th restart, $t$-th epoch of thread $r$, with ties breaking arbitrarily and consistently.

\subsubsection*{Regret guarantee of {\sc RandomExpert}} 
We first provide the regret guaranteee of {\sc RandomExpert} and we aim for a high probability bound.

\begin{lemma}
\label{lem:random}
With probability at least $1 - 1/(nT)^{\omega(1)}$, for each thread $r \in [R]$, each epoch $t \in [\eps^2 T]$ and restart $s \in [\eps_r^2/\eps^2]$, one has
\begin{align}
\sum_{\beta=1}^{1/\eps_r^2}\ell_{t,r,s,\beta}(i_{t,r,s, \beta}) - \sum_{\beta=1}^{1/\eps_r^2}\ell_{t,r, s,\beta}(i_{t,r,s, k}^{*}) \leq \eps_r^{-1}\log(nT) \label{eq:compare-k}
\end{align}
holds for any $k \geq \eps \sqrt{n} /\eps_r^2$, against an adaptive adversary.
\end{lemma}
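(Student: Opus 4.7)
The plan is to combine a sampling argument, the standard MWU pseudo-regret bound, and a symmetrization step that converts a per-pool bound into a bound on the actually played loss. First, I would show that with high probability every pool $\{i_{\alpha,\beta}\}_{\alpha \in [N_r]}$ drawn during any restart contains at least one expert from the top $k^\star := \eps\sqrt{n}/\eps_r^2$ experts of that restart. Since each pool has size $N_r = \eps_r^2\sqrt{n}\log^2(nT)/\eps$, the expected number of top-$k^\star$ experts in the pool is $N_r \cdot k^\star/n = \log^2(nT)$, and a Chernoff bound combined with a union bound over the $\poly(nT)$ many quadruples $(t,r,s,\beta)$ yields the claim; the statement for arbitrary $k \geq k^\star$ follows by monotonicity. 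Second, with learning rate $\eps_r\sqrt{\log(nT)}$, the standard MWU analysis gives a deterministic pseudo-regret bound for each pool $\beta$:
\[
\sum_{b'=1}^{1/\eps_r^2}\langle p_{b',\beta},\ell_{t,r,s,b'}\rangle \;-\; \sum_{b'=1}^{1/\eps_r^2}\ell_{t,r,s,b'}(i^*_{t,r,s,k}) \;\leq\; O(\eps_r^{-1}\log(nT)),
\]
where $p_{b',\beta}$ is pool $\beta$'s MWU distribution on day $b'$ and we used step one to bound "best in pool $\beta$" by $i^*_{t,r,s,k}$. This inequality is deterministic in the realized loss sequence, hence valid against an adaptive adversary.

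The third step, and the main obstacle, is passing from the per-pool pseudo-regret bound to a bound on the actually played cumulative loss $\sum_b\ell_{t,r,s,b}(i_{\alpha(b),b})$, since the algorithm plays from pool $b$ (not an averaged or mixed pool) on day $b$. I would symmetrize over pool labels: introduce a uniformly random permutation $\pi$ of $[1/\eps_r^2]$ and consider the alternative algorithm that on day $b$ plays from pool $\pi(b)$'s MWU. Because the pools are i.i.d.~uniform samples of $[n]^{N_r}$, relabeling them via $\pi$ preserves the joint distribution of (pool samples, played actions, adversary losses), so the alternative algorithm is distributionally identical to $\textsc{Rexp}_r$. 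The expected restart loss of the alternative algorithm equals $\eps_r^2\sum_\beta\sum_b\langle p_{b,\beta},\ell_{t,r,s,b}\rangle$, and averaging the per-pool bound over $\beta$ shows this is at most $\sum_b\ell_{t,r,s,b}(i^*_{t,r,s,k}) + O(\eps_r^{-1}\log(nT))$. The subtle point is that under an adaptive adversary the losses depend on past actions and hence on which pools were used, but invariance under relabeling still holds because the adversary reacts only to realized actions and never to the pool indices that produced them.

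Finally, passing from expected to high-probability regret is done via Azuma–Hoeffding on the martingale difference sequence $\ell_{t,r,s,b}(i_{\alpha(b),b}) - \langle p_{b,b},\ell_{t,r,s,b}\rangle$ (bounded by $1$ per term, over $1/\eps_r^2$ terms), contributing an $O(\eps_r^{-1}\sqrt{\log(nT)})$ deviation; a union bound over the $\poly(nT)$ many triples $(t,r,s)$ then completes the proof.
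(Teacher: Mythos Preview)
Your argument has a genuine gap in handling the adaptive adversary; it surfaces in both Step~1 and Step~3. In Step~1, the set of top-$k^\star$ experts is determined by the realized losses over the restart, which (adaptively) depend on all played actions and hence on \emph{every} pool sample; so the event ``pool $\beta$ hits a top-$k^\star$ expert'' is entangled with the identity of that set, and the Chernoff bound does not apply as written. In Step~3, permutation invariance only says that relabeling pools yields a process with the same law; it does not give the identity you need. When you swap which pool is used on day~$b$, the adversary produces a \emph{different} loss sequence, so the expected restart loss is not $\eps_r^2\sum_\beta\sum_b\langle p_{b,\beta},\ell_{t,r,s,b}\rangle$ computed with the \emph{original} $\ell$. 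Concretely, $\E[\langle p_{b,\beta},\ell_b\rangle]=\E[\langle p_{b,b},\ell_b\rangle]$ holds for $\beta\ge b$ (both pools are fresh with respect to $\ell_1,\dots,\ell_b$), but can fail for $\beta<b$ because action $i_\beta$ from pool~$\beta$ already influenced $\ell_{\beta+1},\dots,\ell_b$: e.g.\ an adversary that sets $\ell_1\equiv 0$ and $\ell_2(i)=1-\mathsf{1}\{i=i_1\}$ makes $\E[\langle p_{2,1},\ell_2\rangle]\approx 1-1/N_r$ but $\E[\langle p_{2,2},\ell_2\rangle]=1-1/n$, so the diagonal term is not equal to (and here strictly larger than) the $\beta$-average.

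The paper handles this by first reducing to an oblivious adversary, which is precisely what the fresh-pool design buys and what your symmetrization was trying to capture. The key observation is that, since day~$\beta$ uses a brand-new pool independent of the past, $\Pr[i_\beta=i\mid i_1,\dots,i_{\beta-1},\ell_1,\dots,\ell_{\beta-1}]=\Pr[i_\beta=i\mid \ell_1,\dots,\ell_{\beta-1}]$; this conditional independence lets one exchange $\sup_{\ell_\beta}$ and $\E_{i_{\beta-1}}$ step by step, pushing all the $\sup$'s outside and bounding the (Azuma-concentrated) regret by the worst \emph{fixed} loss sequence. Once the losses are fixed, the top-$k^\star$ set is fixed too, a single-copy analysis suffices (all pools are exchangeable against fixed losses), and your Chernoff-plus-MWU reasoning from Steps~1--2 then goes through cleanly. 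Your Step~4 is correct and corresponds to the paper's first step.
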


\begin{proof}
For an epoch $t\in [\eps^2 T]$, a thread $r\in [R]$ and a restart $s\in [\eps_r^2/\eps^2]$, let $B_r = 1/\eps_r^2$. It would be convenient to simplify the notation and remove the subscript on $t,r,s$ when there is no confusion.

The high level idea has appeared in previous work, e.g. \cite{cesa2006prediction, gonen2019private}. By taking decisions from different copies on different days, the decision satisfies 
\begin{align}
\Pr[i_\beta = i | i_1, \ldots, i_{\beta-1}, \ell_{1}, \ldots, \ell_{\beta-1}] = \Pr[i_{\beta} = i | \ell_1, \ldots, \ell_{\beta-1}]. \label{eq:not-adaptive}
\end{align}That is, the dependence of $i_\beta$ on previous actions is explained away by the dependence based on $\ell_{1}, \ldots, \ell_{\beta-1}$. This intuitively means that adaptivity (adaptively chosen loss sequence on previous actions) does not help. 
Let the distribution of Eq.~\eqref{eq:not-adaptive} be $P(\ell_{1}, \ldots, \ell_{\beta -1})$ and we abbreviated as $p_\beta$ when there is no confusion.

\paragraph{Step 1. Reduction to expected regret.} 
Let 
\[
X_\beta = \ell_\beta(i_\beta) - \E_{i_\beta \sim p_\beta}[\ell_\beta(i_\beta)] \quad \text{and} \quad Y_{\beta} = X_1 + \cdots + X_\beta, \forall \beta \in [B]
\]
$\{Y_\beta\}_{\beta \in [B]}$ forms a martingale with respect to the randomness of $\{\ell_\beta\}_{\beta\in [B]}$ and $\{i_\beta\}_{\beta\in [B]}$, and it has bounded difference of at most $2$ (i.e., $|X_{\beta}| \leq 2$), by Azuma-Hoeffding bound, one has
\begin{align*}
\Pr\left[|Y_B| \geq \frac{1}{2}\sqrt{B}\log(nT)\right] \leq 2\exp(-B\log^2(nT)/32B) \leq 1/(nT)^{\omega(1)}.
\end{align*}
Hence, with probability at least $1 - 1/(nT)^{\omega(1)}$ over the choice of $i_t$, one has
\begin{align}
\sum_{\beta=1}^{B}\left(\ell_\beta(i_\beta) - \E_{i_\beta \sim p_\beta}[\ell_t(i_\beta)]\right) \leq \frac{1}{2}\sqrt{B}\log(nT).
\label{eq:step-1}
\end{align}

\paragraph{Step 2. Reduction to oblivious adversary.}  
We next show it suffices to consider the regret w.r.t. an oblivious adversary.
Recall that $i^{*}_k$ is the top $k$-th expert.
Condition on the high probability event of Step 1, one has
\begin{align}
&~ \sum_{\beta=1}^{B}\left(\ell_\beta(i_\beta) - \ell_\beta(i_k^{*})\right) \notag \\
\leq &~\sum_{\beta=1}^{B}\left( \E_{i_\beta \sim p_\beta}[\ell_\beta(i_\beta)] - \ell_\beta(i_k^{*}) + \frac{1}{2}\sqrt{B}\log(nT)\right) \notag\\
\leq &~ \sup_{\ell_1} \E_{i_1 \sim p_1}\cdots \sup_{\ell_{B-1}}\E_{i_{B-1} \sim p_{B-1}} \sup_{\ell_{B}}\E_{i_B \sim p_{B}} \left[\sum_{\beta=1}^{B}\ell_\beta(i_\beta) - \sum_{\beta=1}^{B}\ell_\beta(i_{k}^{*})\right] + \frac{1}{2}\sqrt{B}\log(nT) \notag\\
= &~ \sup_{\ell_1} \E_{i_1 \sim p_1}\cdots \sup_{\ell_{B-1}} \sup_{\ell_B}\E_{i_{B-1} \sim p_{B-1}} \E_{i_{B} \sim p_{B}} \left[\sum_{\beta=1}^{B}\ell_\beta(i_\beta) - \sum_{\beta=1}^{B}\ell_\beta(i_{k}^{*})\right] + \frac{1}{2}\sqrt{B}\log(nT) \notag \\
\vdots \notag \\
= &~ \sup_{\ell_1} \cdots \sup_{\ell_{B}}\E_{i_1 \sim p_1} \cdots \E_{i_B\sim p_B} \left[\sum_{\beta=1}^{B}\ell_\beta(i_\beta) - \sum_{\beta=1}^{B}\ell_\beta(i_k^{*})\right] + \frac{1}{2}\sqrt{B}\log(nT). \label{eq:adaptive-oblivious}
\end{align}
The first step follows from Eq.~\eqref{eq:step-1}. 
In the third step, we can exchange the expectation and the supreme because $i_\beta$ does not depend on $i_1, \ldots,i_{\beta-1}$ when conditioned on $\ell_1, \ldots, \ell_{\beta-1}$.

\paragraph{Step 3. Reduction to one copy}
Now fix a loss sequence $\ell_1,\ldots, \ell_{B}$, we bound the expected regret. Consider an algorithm that maintains only one copy of experts: It samples $N_r = \eps_r^2 \sqrt{n}\log^2(nT)/\eps$ experts at the beginning and runs MWU over them for $B_r = 1/\eps_r^2$ days. 
Let $\hat{i}_\beta$ be the action taken by this (single copy) algorithm at day $\beta$ and it is easy to see 
\[
\Pr\left[\hat{i}_\beta = i | \ell_1, \ldots \ell_{\beta-1}\right] =  \Pr[i_\beta = i | \ell_1, \ldots \ell_{\beta-1}] .
\]
Hence, fixing the loss sequence, it suffices to bound
\begin{align}
\textsf{reg} = \E_{i_t \sim p_1} \cdots \E_{i_B \sim p_{B}} \left[\sum_{\beta=1}^{B}\ell_t(i_t) - \sum_{\beta=1}^{B}\ell_\beta(i_k^{*})\right] = \E\left[\sum_{\beta=1}^{B}\ell_\beta(\hat{i}_\beta) - \sum_{\beta=1}^{B}\ell_\beta(i_k^{*})\right].\label{eq:regret-one-copy}
\end{align}

\paragraph{Step 4. Bounding the regret.} Finally we bound the regret (i.e. RHS of Eq.~\eqref{eq:regret-one-copy}). Note that the loss sequence is fixed and so are the experts $i_1^{*}, \ldots, i_{n}^{*}$. The single copy algorithm samples $N_r$ experts at the beginning, and therefore, with probability at least 
\[
1 - \left(1 - \frac{\eps}{\eps_r^2\sqrt{n}}\right)^{N_r} = 1 - \left(1 - \frac{\eps}{\eps_r^2\sqrt{n}}\right)^{\eps_r^2\sqrt{n}\log^2(nT)/\eps} \geq 1 - 1/(nT)^{\omega(1)},
\]
the algorithm samples an expert $i \in \{i_1^{*}, \ldots, i^{*}_{\eps \sqrt{n}/\eps_r^2}\}$. Condition on this, for any $k \geq \eps\sqrt{n}/\eps_r^2$, one has
\begin{align}
\E\left[\sum_{\beta=1}^{B}\ell_{\beta}(\hat{i}_{\beta}) - \sum_{\beta=1}^{B}\ell_{\beta}(i_{k}^{*})\right] = &~ 
\E\left[\sum_{\beta=1}^{1/\eps_r^2}\ell_{\beta}(\hat{i}_{\beta}) - \sum_{\beta=1}^{1/\eps_r^2}\ell_{\beta}(i_{k}^{*})\right]\notag\\
\leq &~  \E\left[\sum_{\beta=1}^{1/\eps_r^2}\ell_{\beta}(\hat{i}_{\beta}) - \sum_{\beta=1}^{1/\eps_r^2}\ell_{\beta}(i)\right] \leq O(\eps_r^{-1}\sqrt{\log(nT)}).\label{eq:oblivious-sample}
\end{align}
Here we plug in $B = B_r = 1/\eps_r^2$ in the first step. As the loss can be at most $1/\eps_r^2$, one can prove the expected regret is at most $O(\eps_r^{-1}\sqrt{\log(nT)})$ (without conditioning on the high probability event).

Combining the above four steps, i.e. Eq.~\eqref{eq:step-1}\eqref{eq:adaptive-oblivious}\eqref{eq:regret-one-copy}\eqref{eq:oblivious-sample}, we conclude the proof.

\end{proof}

\subsubsection*{Memory bounds}
We first provide an upper bound on the memory of our main algorithm.
\begin{lemma}
\label{lem:memory-adaptive}
With probability at least $1 - 1/(nT)^{\omega(1)}$, the memory used by Algorithm \ref{algo:adaptive} is at most $ O(\sqrt{n}\log^2(nT)\log(1/\eps)/\eps)$ words.
\end{lemma}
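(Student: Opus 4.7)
The plan is to bound the memory of the two sub-procedures, {\sc RandomExpert} and {\sc LongExpert}, separately, plus a small overhead for the top-level {\sc IntervalRegret} call (Lemma \ref{lem:interval}). For {\sc RandomExpert}, this is a direct accounting: each $\textsc{Rexp}_r$ simultaneously runs MWU on $1/\eps_r^2$ disjoint pools of $N_r = \eps_r^2\sqrt{n}\log^2(nT)/\eps$ experts, so at any time it stores $N_r\cdot 1/\eps_r^2 = \sqrt{n}\log^2(nT)/\eps$ weights. Since there are $R = O(\log(1/\eps))$ threads, and wrapping them in an $R$-arm {\sc IntervalRegret} costs only $O(R\log(nT))$ additional words, the total memory of {\sc RandomExpert} is $O(\sqrt{n}\log^2(nT)\log(1/\eps)/\eps)$.

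The real work is bounding $|\mP|=\sum_{r}|\mP_r|$ for {\sc LongExpert}, since by Theorem \ref{thm:sleeping} the {\sc MonocarpicExpert} call on $\mP$ then uses $O(|\mP|\log^2(nT))$ memory. I will show $|\mP_r|=O(\sqrt{n}/\eps)$ w.h.p.\ for each $r$, which gives $|\mP|=O(\sqrt{n}\log(1/\eps)/\eps)$ and thus matches the target. The first ingredient is to combine the admission rule (Line \ref{line:keep}) with Lemma \ref{lem:random}: admission at restart $s$ of thread $r$ requires loss at least $2\log(nT)/\eps_r$ better than $\textsc{Rexp}_r$, while Lemma \ref{lem:random} guarantees $\textsc{Rexp}_r$ is within $\log(nT)/\eps_r$ of the top-$k$ expert for every $k\geq \eps\sqrt{n}/\eps_r^2$. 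Hence only experts strictly better than the $(\eps\sqrt{n}/\eps_r^2)$-th best can ever enter $\mP_r$, so the eligible set $E_{s,r}$ at each restart has size at most $K:=\eps\sqrt{n}/\eps_r^2$.

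The second ingredient is that the samples $O_r$ drawn in Line \ref{line:long-sample} are \emph{observed} but never played, so they remain hidden from the adaptive adversary. Conditioning on the filtration up to the start of restart $s$, the loss vectors during restart $s$ depend only on actions played during that restart, which in turn are generated by $\textsc{Rexp}$ and {\sc MonocarpicExpert} using independent randomness; therefore $E_{s,r}$ is independent of the Bernoulli$(1/(\eps\sqrt{n}))$ indicators that define $O_{s,r}$. Consequently the number of admissions at restart $s$ is stochastically dominated by $\mathrm{Bin}(K,1/(\eps\sqrt{n}))$ with mean $1/\eps_r^2$. Summing over the at most $\sqrt{n}\eps_r^2/\eps$ restarts whose admissions can still be alive in $\mP_r$ (lifetime $\eps\sqrt{n}$ epochs $\times$ $\eps_r^2/\eps^2$ restarts per epoch), the total expected size is $\sqrt{n}/\eps$, and a Chernoff bound over the underlying independent Bernoullis yields $|\mP_r|=O(\sqrt{n}/\eps)$ with probability $1-1/(nT)^{\omega(1)}$ (absorbing the additive $\log(nT)$ slack into the polylog factors).

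The main obstacle is the conditional-independence step in the previous paragraph: the adaptive adversary can shape $E_{s,r}$ arbitrarily through past-action-dependent losses, so it is crucial to pinpoint exactly \emph{which} randomness the adversary cannot see at the moment eligibility is fixed. Everything else---counting restarts, invoking Chernoff, and feeding the resulting $|\mP|$ into Theorem \ref{thm:sleeping}---is routine once this independence is established, and summing the {\sc RandomExpert} bound with the $O(|\mP|\log^2(nT))$ bound for {\sc MonocarpicExpert} gives the claimed $O(\sqrt{n}\log^2(nT)\log(1/\eps)/\eps)$.
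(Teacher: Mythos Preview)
Your proposal is correct and follows the same overall approach as the paper: bound {\sc RandomExpert} by direct accounting, then for {\sc LongExpert} combine Lemma~\ref{lem:random} with the admission rule (Line~\ref{line:keep}) to show that only experts among the top $\eps\sqrt{n}/\eps_r^2$ of a restart can enter $\mP_r$, and finish with a Chernoff bound on the intersection with the Bernoulli sample $O_r$.

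The one cosmetic difference is where the Chernoff bound is applied. The paper applies it \emph{per restart}, obtaining at most $\log^2(nT)/\eps_r^2$ survivors per restart w.h.p., summing to $|\mP_r|\le \sqrt{n}\log^2(nT)/\eps$, and then simply reports $|\mP|$ as the memory. You instead pool the Bernoullis across all restarts in the $\eps\sqrt{n}$-epoch window to get the tighter $|\mP_r|=O(\sqrt{n}/\eps)$, and then explicitly pay the $\log^2(nT)$ factor from Theorem~\ref{thm:sleeping} for {\sc MonocarpicExpert}. Your pooled Chernoff is valid because, as you correctly note, the current restart's sample $O_{s,r}$ is observed but not played and hence is independent of the eligible set $E_{s,r}$ even against an adaptive adversary; across restarts the counts $|O_{s,r}\cap E_{s,r}|$ are then stochastically dominated by i.i.d.\ $\mathrm{Bin}(\eps\sqrt{n}/\eps_r^2,\,1/(\eps\sqrt{n}))$ random variables. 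Either route lands on the same $O(\sqrt{n}\log^2(nT)\log(1/\eps)/\eps)$ bound.
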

\begin{proof}
We bound the memory usage of {\sc RandomExpert} and {\sc LongExert} separately.
The {\sc RandomExpert} runs {\sc IntervalRegret} over $R = O(\log(1/\eps))$ threads $\textsc{Rexp}_r$, where each thread $\textsc{Rexp}_r$ samples $N_r \cdot 1/\eps_r^2 = \eps_r^2\sqrt{n}\log^2(nT)/\eps \cdot 1/\eps_r^2 = \sqrt{n}\log^2(nT)/\eps$ experts in $[n]$ and maintain their weight during each restart. Hence the memory usage of {\sc RandomExpert} is bounded by $O(\sqrt{n}\log^2(nT)\log(1/\eps)/\eps)$.

The crucial part is bounding the memory of {\sc LongExpert} procedure. For any epoch $t \in [\eps^2T]$, thread $r \in [R]$, restart $s\in [\eps_r^2/\eps^2]$, let $O_{t,r,s}$ be the sample set (Line \ref{line:long-sample} of Algorithm \ref{algo:long}).
We condition on the high probability event of Lemma \ref{lem:random} and bound the number of experts that survive after Line \ref{line:keep}. By Lemma \ref{lem:random}, any expert $i \in [n]\setminus\{i_{t,r, s,1}^{*}, \ldots, i_{t,r, s,\eps\sqrt{n}/\eps_r^2}^{*}\}$ satisfies 
\[
\sum_{\beta=1}^{1/\eps_r^2}\ell_{t,r, s,\beta}(i_{t,r, s,\beta}) - \sum_{\beta=1}^{1/\eps_r^2}\ell_{t,s,\beta, r}(i) \leq \eps_r^{-1}\log(nT),
\]
hence it would not survive due to the eviction rule of {\sc LongExpert} (see Line \ref{line:keep} of Algorithm \ref{algo:long}). It remains to bound the size of the intersection $|O_{t,r, s} \cap \{i_{t,r,s,1}^{*}, \ldots, i_{t,r, s,\eps\sqrt{n}/\eps_r^2}^{*}\}|$. Note that each expert is sampled independently with probability $\frac{1}{\eps\sqrt{n}}$, i.e.,
\[
\E\left[\left|O_{t,r, s} \cap \{i_{t,r,s,1}^{*}, \ldots, i^{*}_{t,r,s,\eps\sqrt{n}/\eps_r^2}\}\right|\right] = \eps\sqrt{n}/\eps_r^2 \cdot \frac{1}{\eps\sqrt{n}} = 1/\eps_r^2,
\]
therefore, by Chernoff bound, one has
\begin{align*}
\Pr\left[|O_{t,r, s} \cap \{i_{t,r, s,1}^{*}, \ldots, i^{*}_{t,r, s,\eps\sqrt{n}/\eps_r^2}\}| \geq \log^{2}(nT)/\eps_r^2\right] \leq \exp(-\log^{2}(nT)/3\eps_r^2) \leq 1/(nT)^{\omega(1)}.
\end{align*}
Taking a union bound over $s \in [\eps_r^2/\eps^2]$, the number of survived experts at one epoch is at most $\log^2(nT)/\eps^2$. Taking a union bound over $t \in [\eps^2 T]$ and note an expert stays at $\mP_r$ for $\eps\sqrt{n}$ epochs, the size of $\mP_r$ satisfies $|\mP_r| \leq \log^2(nT)/\eps^2\cdot \eps \sqrt{n} = \sqrt{n}\log^2(nT)/\eps$. The size of $\mP$ satisfies
\[
|\mP| = |\mP_1 \cup \cdots \mP_{R}| \leq \sqrt{n}\log^2(nT)/\eps \cdot R =\sqrt{n}\log^2(nT)\log(1/\eps)/\eps. 
\]
\end{proof}

\subsubsection*{Regret analysis}

We next bound the regret with respect to each expert separately. For any day $t \in [T]$, let $i_t^{(L)}$ be the action of {\sc LongExpert} at day $t$ and $i_t^{(R)}$ be the action of {\sc RandomExpert}. Let $i_t^{(r)}$ be the action of $\textsc{Rexp}_r$ at day $t$.


\begin{lemma}
\label{lem:regret-adaptive}
With probability at least $1 - 1/(nT)^{\omega}$, for any expert $i \in [n]$, the regret is bounded as
\begin{align*}
\sum_{t=1}^{T}(\ell_{t}(i_{t}) -\ell_{t}(i)) \leq O(\eps T\log^2(nT)).
\end{align*}
\end{lemma}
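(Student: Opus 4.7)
The plan is to bound the regret versus any fixed expert $i\in[n]$ by combining the interval-regret guarantees of the nested {\sc IntervalRegret} procedures with a fine-grained case analysis of the restarts of the $\textsc{Rexp}_r$ procedures. I would first condition on the high-probability events of Lemma~\ref{lem:random} (per-restart regret of each $\textsc{Rexp}_r$), Lemma~\ref{lem:interval} and Theorem~\ref{thm:sleeping} (interval regret of the {\sc IntervalRegret} and {\sc MonocarpicExpert} subroutines), and Lemma~\ref{lem:memory-adaptive} (pool-size bounds), all of which jointly hold with probability $1-1/(nT)^{\omega(1)}$ after a union bound.

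Next, I would classify every restart $(t,r,s)$ of $\textsc{Rexp}_r$ relative to $i$ into one of three cases: \emph{easy}, meaning $i\notin\{i^{*}_{t,r,s,1},\ldots,i^{*}_{t,r,s,\eps\sqrt n/\eps_r^2}\}$; \emph{moderate}, meaning $i$ is top but $\mathcal{L}_s(\textsc{Rexp}_r)-\mathcal{L}_s(i)\le 2\log(nT)/\eps_r$; or \emph{hard} otherwise. By the admission rule at Line~\ref{line:keep} of {\sc LongExpert}, a hard restart on which $i$ happens to land in $O_{t,r,s}$ (an event with probability $1/(\eps\sqrt n)$) adds $i$ to $\mP_r$ for the next $\eps\sqrt n$ epochs. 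For each day $t$, I would then define $r^*(t)\in[R]$ as the smallest index for which either (a) the restart of $\textsc{Rexp}_{r^*(t)}$ covering $t$ is easy or moderate, or (b) $i\in\mP_{r^*(t)}$ on day $t$. Routing through the outer {\sc IntervalRegret} (selecting between {\sc RandomExpert} and {\sc LongExpert}) and the inner {\sc IntervalRegret} of Line~\ref{line:interval-random} (selecting among the $\textsc{Rexp}_r$), the algorithm competes with $\textsc{Rexp}_{r^*(t)}$ in case~(a) and with $i\in\mP_{r^*(t)}$ via {\sc MonocarpicExpert} in case~(b), up to an additive interval-regret overhead of $\tilde{O}(\sqrt{|\mI|})$ per interval.

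The regret against $i$ then decomposes into (i) the interval-regret overhead, bounded by $\tilde{O}(\sqrt{JT})$ where $J$ is the number of intervals in the partition of $[T]$ induced by changes of $r^*(\cdot)$ and by entry/exit events of $i$ into $\mP$. Since $J=\tilde{O}(\eps^2T)$, this contributes $\tilde{O}(\eps T)$. Term (ii) is the per-day $\tilde{O}(\eps_{r^*(t)}\log(nT))$ regret of $\textsc{Rexp}_{r^*(t)}$ on case-(a) days. The hard part is controlling $\sum_t\eps_{r^*(t)}$, and I expect this to be the main obstacle. The plan is the amortization sketched in Section~\ref{sec:tech-adaptive}: for each thread $r$, focus on the sequence of hard restarts of $\textsc{Rexp}_r$ during which $i\notin\mP_r$; each independently samples $i$ with probability $1/(\eps\sqrt n)$, so a Chernoff bound forces $i$ to enter $\mP_r$ after $\tilde{O}(\eps\sqrt n)$ such restarts with probability $1-1/(nT)^{\omega(1)}$, and $i$ then remains in $\mP_r$ for the next $\sqrt n/\eps$ days. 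The accumulated $\textsc{Rexp}_r$-vs-$i$ regret across these $\tilde{O}(\eps\sqrt n)$ hard restarts is $\tilde{O}(\eps\sqrt n/\eps_r)$, which when amortized over the subsequent $\sqrt n/\eps$ days covered by $\mP_r$ gives an average of $\eps^2/\eps_r\le\eps$ per day. Summing the amortized per-thread contributions over the $R=O(\log(1/\eps))$ threads absorbs an extra logarithmic factor, yielding the claimed $O(\eps T\log^2(nT))$ bound; the main delicacy will be handling the coupling between the adaptive loss sequence, the classification of restarts, and the randomness of the sampling step, which I would address by carefully conditioning on the sampled experts outside of the relevant thread before applying Chernoff.
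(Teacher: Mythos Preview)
Your approach is essentially the paper's: split into $i\in\mP$ versus $i\notin\mP$, route to {\sc MonocarpicExpert} in the former case and to the smallest ``OK'' $\textsc{Rexp}_r$ in the latter, then bound the number of hard restarts per thread via the independent sampling of $i$ into $O_r$ and Chernoff. The paper packages the $i\notin\mP$ case with a recursive {\sc AssignInterval} procedure and the counts $N_{t,r},M_{t,r}$ (Claims~\ref{claim:random-1}--\ref{claim:random-2}), but your per-day assignment $r^*(t)$ yields the same decomposition.

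One point to tighten: the assertion $J=\tilde O(\eps^2T)$ for the number of intervals is not immediate, since $r^*(\cdot)$ can change at restart boundaries of \emph{every} thread and there are $\Theta(T)$ such boundaries in total. The bound on $J$ is actually a consequence of the \emph{same} hard-restart count you describe for term~(ii): the number of thread-$r$ intervals in your partition is at most four times the number of hard thread-$(r-1)$ restarts with $i\notin\mP_{r-1}$ (this is exactly the paper's Claim~\ref{claim:random-1}), and your Chernoff argument bounds the latter by $\tilde O(\eps^2 T)$. Once you see that both term~(i) and term~(ii) collapse to $\sum_r J_r/\eps_r\le\tilde O(\eps T)$, the proof goes through as in Eq.~\eqref{eq:bound-ntr}. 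Your adaptivity concern also resolves more simply than you suggest: $O_{t,r,s}$ is drawn fresh at the start of each restart and is only \emph{observed} (never played) during that restart, so whether the restart is hard is independent of the event $\{i\in O_{t,r,s}\}$; this gives genuinely independent Bernoulli trials without any extra conditioning.
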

\begin{proof}
We fix the randomness of {\sc RandomExpert}.
The proof proceeds by assigning each day $[T]$ to either {\sc RandomExpert} or {\sc LongExpert}. 
In particular, we partition the entire sequence $[T]$ into intervals as follows. The first interval lasts till the first time that expert $i$ is added to $\mP$, the second interval terminates when $i$ gets expired and removed from $\mP$, and we repeat the above process till the end of sequence. We use $E_R$ to denote the collections of odd intervals ($\mP$ does not contain $i$ in odd intervals), and $E_{L}$ to denote the collections of even intervals ($\mP$ contains $i$ in an even interval).
For proof convenience, if an interval $\mI\in E_{R}$ spans more than one epoch, we split it according to the epoch boundary (so the length of $\mI$ never exceeds one epoch). 
For an interval $\mI \in E_{R}$, suppose it is contained in epoch $t(\mI) \in [\eps^2 T]$, note that $\mI$ is not necessarily an entire epoch since expert $i$ could join $\mP$ in the middle of epoch $t(\mI)$.

\paragraph{Notation} We first introduce some definitions. 
\[
K_1 = [\eps^{2} T], K_2 = [\eps^2 T] \times [4], \ldots, K_R = [\eps^2 T] \times \underbrace{[4] \times \cdots \times [4]}_{R-1}  \quad \text{and}\quad K = K_1 \cup \cdots \cup K_R.
\]
For a timestep $a \in K$, let $r(a)$ be defined as $a \in K_{r(a)}$. The timestep $a$ uniquely identifies a restart of $\textsc{Rexp}_{r(a)}$, i.e., the $\sum_{r=2}^{r(a)}a_{r}4^{r-2}$-th restart at the $a_1$-th epoch.
Let $\mL_a(i) = \sum_{t\in a}\ell_t(i)$ be the loss of timestep $a$, here we slightly abuse of notation and also use $a \in K$ to denote the entire period of the timestep/restart $a$.
For any interval $\mI \in E_R$, it starts from the beginning of epoch $t(\mI) \in [\eps^2 T]$ and ends at some timestep $a(\mI) \in K$. 
We further split $\mI$ according to the 4-base representation of $a(\mI)$: $\mI = a_1 \cup \cdots \cup a_{\tau(\mI)}$.

\paragraph{Split interval} The key step is to split the interval $\mI \in E_{R}$ according to the performance of expert $i$ and $\textsc{Rexp}_{r}$.
For any $\mI \in E_{R}$ with $\mI = a_1 \cup \cdots \cup a_{\tau(\mI)}$, we call {\sc AssignInterval}($a_\tau, i$) for each sub-interval $a_{\tau}$ ($\tau \in [\tau(\mI)]$). We note the {\sc AssignInterval} procedure is only used in the analysis. It returns a collection of timesteps  $M(a_{\tau}) \subseteq K$ that exactly spans $a_{\tau}$. 
Let $M(I) = \cup_{\tau \in [\tau(\mI)]}M(a_{\tau})$ be the union of the collections, we have
\begin{align}
\bigcup_{a \in M(\mI)}a = \mI.\label{eq:interval-cover}
\end{align}

The {\sc AssignInterval} procedure (pseudocode in Algorithm \ref{algo:assign-interval}) works as follow. It receives a timestep $a \in K$ and a target expert $i$, and it outputs a collection of disjoint timesteps $M(a)$ that exactly covers $a$.
{\sc AssignInterval} is a recursive procedure. It first tries to return the entire interval $a$ by checking if $\textsc{Rexp}_{r(a)}$ obtains good performance (Line \ref{line:good} of Algorithm \ref{algo:assign-interval}). 
If not, it splits the timestep $a$ into four parts $\{(a,j) \in K_{r(a) + 1}: j \in [4]$\} and seeks help from $\textsc{Rexp}_{r(a)+1}$, by recursively calling $\textsc{AssignInterval}((a, j), i)$ ($j \in [4]$).
Note it is guaranteed to terminate since $\mL_{a}(i) \geq 0 > \mL_a(i^{(R)}) - 2\log(nT)/\eps_R$ for any $a \in K_R$.

\begin{algorithm}[!hbtp]
\caption{$\textsc{AssignInterval}(a, i)$ (Note: only used in analysis) \Comment{Timestep $a$, expert $i$}}
\label{algo:assign-interval}
\begin{algorithmic}[1]
\If{$\mL_{a}(i) \geq \mL_a(i^{(r(a))}) - 2\log(nT)/\eps_{r(a)}$}\label{line:good}
\State Return $a$
\Else 
\State Call and return the union of $\textsc{AssignInterval}((a, j), i)$ for $j \in [4]$ 
\EndIf
\end{algorithmic}
\end{algorithm}

Now we can bound the regret, first we split the entire sequence according to $E_{L}$ and $E_R$. With probability at least $1-1/(nT)^{\omega(1)}$ over the randomness of {\sc IntervalRegret} of the main algorithm, one has 
\begin{align}
&~\sum_{t=1}^{T}(\ell_{t}(i_{t}) -\ell_{t}(i))\notag \\
= &~ \sum_{\mI\in E_{R}}\sum_{t\in \mI}(\ell_t(i_t) - \ell_t(i)) + \sum_{\mI\in E_L}\sum_{t\in \mI}(\ell_t(i_t) - \ell_t(i)) \notag\\
\leq &~ \sum_{\mI\in E_R}\sum_{t\in \mI}(\ell_{t}(i^{(R)}_{t}) -\ell_{t}(i))+ \sum_{\mI\in E_L}\sum_{t\in \mI}(\ell_{t}(i_{t}^{(L)}) -\ell_{t}(i)) + \sum_{\mI \in E_{R}\cup E_L}O(\sqrt{|\mI|}\log(nT))\notag \\
\leq &~ \underbrace{\sum_{\mI\in E_R}\sum_{t\in \mI}(\ell_{t}(i^{(R)}_{t}) -\ell_{t}(i))}_{\text{($R$-regret)}}+ \underbrace{\sum_{\mI\in E_L}\sum_{t\in \mI}(\ell_{t}(i_{t}^{(L)}) -\ell_{t}(i))}_{\text{($L$-regret)}} + 
 O(\eps T\log(nT)) \label{eq:split}.
\end{align}
The second step holds since Algorithm \ref{algo:adaptive} runs {\sc IntervalRegret} over {\sc LongExpert} and {\sc RandomExpert}\, the last step follows from Cauchy-Schwarz and the total number of interval in $E_{R} \cup E_{L}$ is at most $\eps^2 T$.

We bound the two sums in the RHS of~\eqref{eq:split} separately. ($L$-regret) is easy to deal with as $i \in \mP$ during any interval $\mI \in E_L$ (this holds by definition), hence, with probability at least $1- 1/(nT)^{\omega(1)}$,
\begin{align}
\sum_{\mI\in E_{L}}\sum_{t\in \mI}(\ell_{t}(i_{t}^{(L)}) -\ell_{t}(i)) \leq O\left(\sum_{\mI\in E_{L}}\sqrt{|\mI|}\log(nT)\right) \leq O(\eps T\log(nT)). \label{eq:long}
\end{align}
The first step follows from the regret guarantee of {\sc MonocarpicExpert} and the second step follows from Cauchy-Schwarz.

To bound ($L$-regret), we split it according to $M(\mI)$. For each thread $r\in [R]$, epoch $t\in [\eps^2 T]$ such that there exists an interval $\mI \in E_{R}$ with $t(I) = t$, define $A_{t, r}$ to be the collection of $r$-th level timestep/restart in $M(\mI)$, i.e.,
\[
A_{t, r} = \{a: a \in M(\mI), a\in K_r \},
\]
and $N_{t,r} = |A_{t,r}|$. 
We set $A_{t, r} = \emptyset$ and $N_{t, r} = 0$ if epoch $t$ is contained in $E_{L}$.

With probability at least $1- 1/(nT)^{\omega(1)}$, we have
\begin{align}
\sum_{\mI\in E_{R}}\sum_{t\in \mI}(\ell_{t}(i^{(R)}_{t}) -\ell_{t}(i)) = &~\sum_{\mI\in E_{R}}\sum_{a \in M(\mI)}\sum_{t\in a}(\ell_{a}(i^{(R)}_{t}) -\ell_{t}(i)) \notag \\
= &~ \sum_{\mI\in E_{R}}\sum_{a \in M(\mI)}\sum_{t\in a}(\ell_{t}(i^{(R)}_{t}) -\ell_{t}(i^{(r(a))}_t) + \ell_{t}(i^{(r(a))}_t) -\ell_{t}(i)) \notag \\
\leq &~ \sum_{\mI\in E_R}\sum_{a \in M(\mI)} O(\log(nT)/\eps_{r(a)})\notag \\
= &~ \sum_{t=1}^{\eps^2 T} \sum_{r=1}^{R}\frac{N_{t,r}}{\eps_{r}} \cdot O(\log(nT)) \label{eq:random}
\end{align}
The first step holds since $M(\mI)$ covers $\mI$ (see Eq.~\eqref{eq:interval-cover}).
The third step follows from the interval regret guarantee of {\sc RandomExpert} (Line \ref{line:interval-random} of Algorithm \ref{algo:random-expert}) and the {\sc AssignInterval} procedure (Line \ref{line:good} of Algorithm \ref{algo:assign-interval}). 

It remains to bound the RHS of Eq.~\eqref{eq:random} and we bound $\sum_{t=1}^{\eps^2 T} \frac{N_{t,r}}{\eps_{r}}$ ($r \in [2:R]$) separately. 

For any epoch $t \in [\eps^2 T]$ and thread $r \in [R]$, define
\begin{align}
\label{eq:def-b}
B_{t, r} = \{a \in K_r: a_1 = t, i \notin \mP_{r, a} \text{ and } \mL_a(i) < \mL_a(i^{(r)}) - 2\log(nT)/\eps_r \}
\end{align}
where $\mP_{r, a}$ is the pool maintained by thread $r$ at the beginning of timestep $a$.
In other words, $B_{t,r}$ contains all restarts $a$ in epoch $t$ such that the performance of expert $i$ is much better than $\textsc{Rexp}_r$, and expert $i$ is not at pool $\mP_{r}$ yet. Let $M_{t, r} = |B_{t,r}|$. The random variable $M_{t,r}$ is easily to handle with than $N_{t,r}$. First, we observe
\begin{claim}
\label{claim:random-1}
For any $t\in [\eps^2 T], r \geq 2$, $M_{t,r -1} \geq \frac{N_{t,r} - 3}{4}$.
\end{claim}
\begin{proof}
For any timestep $a = a_1\ldots a_{r} \in A_{t,r}$, we have $a'=a_{1}\ldots a_{r-1}\in B_{t,r-1}$ due to Line \ref{line:good} of Algorithm \ref{algo:assign-interval}. The only exception is that $\textsc{AssignInterval}$ is initiated with $a$, this happens only $3$ times per epoch. Hence we have $M_{t,r -1} \geq \frac{N_{t,r} - 3}{4}$.
\end{proof}

We next bound the term $\sum_{t=1}^{\eps^2 T}M_{t,r}$.
\begin{claim}
\label{claim:random-2}
For any $t\in [\eps^2 T], r \in [R]$, with probability at least $1-1/(nT)^{\omega}$, one has 
$\sum_{t=1}^{\eps^2 T}M_{t,r} \leq  O(\eps^2 T\log(nT))$.
\end{claim}
\begin{proof}
For any timestep $a \in B_{t,r}$, with probability $\frac{1}{\eps\sqrt{n}}$ (Line \ref{line:long-sample} of {\sc LongExpert}), {\sc LongExert} samples the expert $i$ into $O_{r}$ (which is not known to the adversary). By Eq.~\eqref{eq:def-b} and Line~\ref{line:keep} of {\sc LongExpert}, the expert $i$ would then be added to $\mP$ and live for $\eps\sqrt{n}$ epochs. Therefore, with probability at least $1 - 1/(nT)^{\omega(1)}$, it takes no more than $O(\eps\sqrt{n}\log(nT))$ timesteps in $\bigcup_{t\in [\eps^2 T]}B_{t,r}$ to add expert $i$ into the pool for $\eps\sqrt{n}$ epochs. The later happens at most $(\eps^2 T + \eps\sqrt{n})/\eps\sqrt{n} = O(\frac{\eps T}{\sqrt{n}})$ times (note we require $\eps^2 T \geq \eps\sqrt{n}$ here), and therefore, 
\[
\sum_{t=1}^{\eps^2 T}M_{t,r}  = \sum_{t=1}^{\eps^2 T}|B_{t,r}| \leq O(\eps\sqrt{n}\log(nT)) \cdot O(\frac{\eps T}{\sqrt{n}}) = O(\eps^2 T\log(nT)).\qedhere
\]
\end{proof}

Combining Claim \ref{claim:random-1} and Claim \ref{claim:random-2}, we conclude
\begin{align}
\label{eq:bound-ntr}
\sum_{t=1}^{\eps^2 T}\frac{N_{t,r}}{\eps_r} \leq\sum_{t=1}^{\eps^2 T}\frac{N_{t,r}}{\eps} = 3\eps T +  \sum_{t=1}^{\eps^2 T}\frac{4M_{t,r}}{\eps} \leq O(\eps T\log(nT)).
\end{align}
holds for any $r \geq 2$, and for $r = 1$, $N_{t,1} \leq 1$. Combining Eq.~\eqref{eq:split}\eqref{eq:long}\eqref{eq:random}\eqref{eq:bound-ntr}, one has
\[
\sum_{t=1}^{T}(\ell_{t}(i_{t}) -\ell_{t}(i)) \leq O(\eps T\log^2(nT)).
\]
We complete the proof here.\qedhere


\end{proof}

Combining the regret analyse (Lemma \ref{lem:regret-adaptive}) and the memory bound (Lemma \ref{lem:memory-adaptive}), we have proved Proposition \ref{prop:adaptive-upper}.

\begin{proof}[Proof of Theorem \ref{thm:adaptive-upper-main}]
The Proposition \ref{prop:adaptive-upper} guarantees $\wt{O}(\frac{\sqrt{n}T}{S})$ regret when $T \geq S$. For the corner case of $T \in [\frac{n}{S}, S]$ (note it suffices to consider $\wt{S} \geq \Omega(\sqrt{n})$ due to our lower bound), our goal is to obtain a total regret of $\wt{O}(\sqrt{\frac{nT}{S}})$. 
This can be obtained by the grouping trick. 
In particular, we partition the experts $[n]$ into $G = O(\frac{S}{T})$ group, with $n' = O(\frac{n}{G}) = O(\frac{nT}{S}) \in [1, n]$ experts per group.
We run Algorithm \ref{algo:adaptive} over each group $j \in [G]$ using space $S' = S/G = O(T)$. By viewing each group as a meta expert, we run MWU over the set of meta experts $[G]$.  
For any expert $i \in [n]$, suppose it is contained in the $j(i)$-th group ($j(i) \in [G]$). After $T$ days, one has
\begin{align*}
\sum_{t=1}^{T}\ell_t(i_t) - \sum_{t=1}^{T}\ell_t(i) = &~ \sum_{t=1}^{T}\ell_t(i_t) - \sum_{t=1}^{T}\ell_t(j(i)) + \sum_{t=1}^{T}\ell_t(j(i)) - \sum_{t=1}^{T}\ell_t(i)  \\
\leq &~ \wt{O}\left(\sqrt{T}\right) + \wt{O}\left(\frac{\sqrt{n'}T}{S'}\right) = \wt{O}\left(\sqrt{T}\right) + \wt{O}\left(\frac{\sqrt{nT/S}\cdot T}{T}\right) = \wt{O}\left(\sqrt{\frac{nT}{S}}\right).
\end{align*}
Here the second step follows from the regret guarantee of MWU and Proposition \ref{prop:adaptive-upper}.
We complete the proof here.

\end{proof}
\section{Lower bound of adaptive adversary}
\label{sec:adaptive-lower}

\begin{theorem}[Lower bound, formal version of Theorem \ref{thm:lower-main}]
\label{thm:lower}
Let $n, T \geq 1$ and $\eps \in (\frac{\log^4 n}{\sqrt{n}}, \frac{1}{2})$, there is no online learning algorithm with space $o(\frac{\sqrt{n}}{\eps \log^3 n})$ can achieve $\frac{\eps^2}{20} T$ regret against an adaptive adversary.
\end{theorem}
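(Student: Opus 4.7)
The plan is to reduce the online learning task to a three-party communication problem encoding many independent instances of set-disjointness, then remove the ``advice'' portion of the protocol by public-randomness guessing, and finally contradict a direct-product information-complexity lower bound on the multi-copy set-disjointness problem.

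\emph{The hard instance.} Choose $N$ and $m$ with $Nm = n$ (intended values $N = \tilde\Theta(\sqrt{n}/\eps)$ and $m = \tilde\Theta(\eps\sqrt{n})$). Partition the $n$ experts into $N$ blocks of size $m$, and independently for each $k\in[N]$ draw $(x_A^{(k)},x_B^{(k)})\sim\mu$, where $\mu$ is a standard hard distribution for $\mathrm{DISJ}_m$ (inputs intersect in a single coordinate with constant probability). Split the $T$ days into epochs of length $\Theta(N)$. At the start of each epoch, and for every block $k$, the adaptive adversary flips a fair coin $\sigma_k\in\{A,B\}$; on every day of the epoch the loss on coordinate $i$ of block $k$ equals $1-x^{(k)}_{\sigma_k}(i)$, unless $i$ was already played during the current epoch, in which case it is reset to $1/2$ (this realizes the ``attack the most recent actions'' principle). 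A direct expected-loss calculation shows that a best expert sitting on an intersecting coordinate of some block accumulates very small loss, whereas any algorithm that in most epochs fails to identify (and commit to) a suitable number of intersecting coordinates incurs additional loss $\Omega(\eps^2 T)$. Consequently, any algorithm with regret below $\eps^2 T/20$ must, on a carefully identified ``good'' collection of epochs, output intersecting coordinates in $\Omega(\eps^2 N)$ blocks per epoch.

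\emph{From learning to three-party communication.} Treat the per-epoch coins $\sigma_k$ as public randomness, and give $x_A$ to Alice and $x_B$ to Bob. By averaging over epochs, fix one successful epoch $t^*$. Introduce a prophet Charlie who holds both inputs and transmits the $S$-bit internal memory state $M_{t^*-1}$ at the start of epoch $t^*$. On each day of the epoch, whichever player holds $x_{\sigma_k}$ for the block of the currently played coordinate can locally simulate the algorithm (computing the loss received and the next memory state) and forward the updated $S$-bit memory to the other player. Over $\Theta(N)$ days of the epoch, Alice and Bob exchange $O(NS)$ bits and, with probability $\Omega(1)$ over inputs and coins, output intersecting coordinates solving $\Omega(\eps^2 N)$ copies of $\mathrm{DISJ}_m$ simultaneously.

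\emph{Removing the advice and concluding.} Alice and Bob next dispense with Charlie by guessing his $S$-bit message uniformly at random using shared public randomness; this is correct with probability $2^{-S}$. Conditioning on a correct guess yields a two-party public-coin protocol that solves a constant fraction of $N$ independent copies of $\mathrm{DISJ}_m$, with total communication $\tilde O(NS)$ and success probability at least $\exp(-\tilde O(S))$. Since the single-copy internal information cost is $\IC_\mu(\mathrm{DISJ}_m)=\Omega(m)$ (Bar-Yossef et al., Sherstov), the interactive direct-product theorem of Braverman--Rao et al.\ forces $\tilde\Omega(Nm)=\tilde\Omega(n)$ bits of communication whenever the $N$-fold success probability exceeds $\exp(-\Theta(N))$. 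With $S = o(\sqrt{n}/(\eps\log^3 n))$ we have $S \ll N$, so $\exp(-\tilde O(S)) \gg \exp(-\Theta(N))$ and the direct-product hypothesis is satisfied; yet $NS = \tilde o(n)$, a contradiction.

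\emph{Main obstacle.} The most delicate step is translating the qualitative learning failure mode (``commit $\Omega(\eps^2 N)$ intersecting coordinates in a positive fraction of epochs'') into the precise success-probability profile required by the direct-product theorem, while simultaneously (i) handling the public coins $\sigma_k$ without extra communication, and (ii) verifying that the $S$-bit memory is genuinely the only information channel between epochs so that guessing it truly decouples the successful epoch from its past and the guess-success probability multiplies cleanly with the per-epoch success probability.
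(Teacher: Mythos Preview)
Your high-level plan---encode $N\approx\sqrt n/\eps$ disjointness instances in the experts, simulate via a three-party protocol with $S$ bits of prophet advice, guess the advice, and invoke a direct-product theorem---matches the paper. But two features of your construction break it. First, flipping $\sigma_k$ once per epoch (and per block) makes the instance trivially easy in the full-feedback model: after day~1 the learner has seen the entire vector $(1-x^{(k)}_{\sigma_k})_k$ and can thereafter play any revealed-$1$ coordinate in a fresh block, incurring total loss $O(T/N)=O(\eps T/\sqrt n)\ll\eps^2 T/20$ without ever touching an intersecting index. (Relatedly, with per-block coins the full loss vector on each day depends on both $x_A$ and $x_B$, so neither party alone can simulate the learner's memory update; your rule ``whichever player holds $x_{\sigma_k}$ for the block of the played coordinate'' is a bandit-feedback simulation, not a full-feedback one.) The paper instead flips a \emph{single global} coin \emph{afresh each day}, so a non-intersecting coordinate has expected loss $\Omega(\eps^2)$ regardless of past observations, and the learner is forced to commit $\Omega(N)$ (not $\Omega(\eps^2 N)$) intersecting indices per epoch. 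Second, even with that fix, your two-party communication is $O(NS)=o\bigl(n/(\eps^2\log^3 n)\bigr)$, which is \emph{not} below the $\Omega(n/\log n)$ direct-product barrier once $\eps\ll 1/\log n$; the paper resolves this by making the loss vector $\vec 0$ with probability $1-\eps^2$, so Alice and Bob exchange the $S$-bit memory only on the $\approx\eps\sqrt n$ informative days per epoch, giving communication $\tilde O(\eps\sqrt n\cdot S)=o(n/\log^2 n)$ across the full range of $\eps$. This same $\eps^2$ scaling is also what makes the best expert's loss $\Theta(\eps^2 T)$, aligning with the target regret.

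The step you flag as the main obstacle---upgrading ``solves a constant fraction of the $N$ copies'' to the success profile the direct-product theorem requires---is handled in the paper by a short boosting argument: after guessing the advice, repeat the protocol $O(\log n)$ times with independent random permutations of blocks and within-block coordinates (the hard distribution $\mu$ is symmetric) and take the union of outputs; this recovers all $N$ indices with probability $2^{-O(S\log n)}\gg\exp(-\Theta(N))$.
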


\paragraph{Hard instance}

Let $M = \eps\sqrt{n}$, divide experts $[n]$ into $N = n/M = \sqrt{n}/\eps$ blocks and each block contains $M$ experts.  
At the beginning, the adversary samples $N$ independent copies of \textsc{SetDisjointness} instance (each has size $M$), where each instance is drawn from the following distribution $\mu = (\mu_A, \mu_B)$
\begin{itemize}
\item Sample an index $i^{*} \in [M]$ uniformly at random and set $(x_{A, i^{*}},x_{B, i^{*}}) = (1,1)$
\item For the rest of coordinates, $\frac{M-1}{3}$ of them to be $(0,0)$, $(1, 0)$, $(0,1)$.
\end{itemize}
In other words, the adversary draws $x_A = (x_{A, 1}, \ldots, x_{A, N}) \in \{0, 1\}^n, x_{B} = (x_{B, 1}, \ldots, x_{B, N}) \in \{0,1\}^n$ from $\mu^{N}$ where $(x_{A, \alpha}, x_{B, \alpha}) \in \{0,1\}^{2M}$ is drawn from $\mu$ for each $\alpha \in [N]$.
Let $i_{\alpha}^{*}$ be the intersecting index of the $\alpha$-th copy and $I^{*} = \{i_{\alpha}^{*}\}_{\alpha \in[N]}$ be the collection of intersecting indices.


\paragraph{Adaptive loss sequence}
The entire loss sequence is divided into $\frac{10T}{N}$ epochs and each epoch contains $\frac{N}{10}$ days.
The loss sequence of each epoch is constructed separately. 
For any epoch $t \in [\frac{10 T}{N}]$ and day $b \in [\frac{N}{10}]$, let $N_{t,b} \subseteq [N]$ be the set of blocks where the online algorithm have played an action in during the first $(b-1)$ days.
The loss vector $\ell_{t, b} = (\ell_{t,b,1}, \ldots, \ell_{t, b,N}) \in [0, 1]^{n}$ is then constructed as 
\begin{itemize}
\item With probability $1 - \eps^2$, $\ell_{t, b, \alpha} = \vec{0}$ for any block $\alpha \in [M]$;
\item With probability $\eps^2$,
\begin{itemize}
\item For any played block $\alpha \in N_{t,b}$, $\ell_{t, b, \alpha} = \frac{1}{2} \cdot \vec{1}$;
\item For the rest of blocks,
\begin{itemize}
\item With probability $1/2$, $\ell_{t, b, \alpha} = \vec{1} - x_{A, \alpha}$ for all $\alpha \in [N]\backslash N_{t,b}$, 
\item With probability $1/2$, $\ell_{t, b, \alpha} = \vec{1} - x_{B, \alpha}$ for all $\alpha \in [N]\backslash N_{t,b}$, 
\end{itemize}
\end{itemize}
\end{itemize}
In other words, the nature sets $\ell_{t, b}$ to be zero loss $\vec{0}$ with probability $1-\eps^2$, and with probability $\eps^2$, it sets $\ell_{t, b}$ to be $\vec{1} - x_{A}$ or $\vec{1} - x_{B}$ with probability $1/2$, except for those blocks that have been by ``explored'' by the online learning algorithm, where the loss is set to $1/2$.

It is fairly straightforward to verify that the optimal expert has expected loss at most $\frac{\eps^2}{20}$.
\begin{lemma}[Optimal expert]
\label{lem:optimal-expert}
The optimal expert has expected average loss at most $\frac{\eps^2}{20}$.
\end{lemma}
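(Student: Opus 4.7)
The plan is to bound the expected loss of a random \emph{intersecting} expert $i_{\alpha}^*$ (averaging $\alpha$ over $[N]$), and then use the trivial inequality $\min_{i} \mathcal{L}(i) \le \mathcal{L}(i_{\alpha}^*)$ together with linearity of expectation to pass to the optimal expert.

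The key structural observation is that the intersecting expert $i_\alpha^*$ only ever suffers loss when block $\alpha$ has been ``explored'' earlier in the current epoch. Indeed, by construction, on day $(t,b)$ with $\alpha \notin N_{t,b}$, the nature either sets the entire loss vector to $\vec{0}$, or sets it to $\vec{1} - x_{A,\alpha}$ or $\vec{1} - x_{B,\alpha}$; in both of the latter two cases the coordinate $i_\alpha^*$ gets loss $0$ because $x_{A,\alpha}(i_\alpha^*) = x_{B,\alpha}(i_\alpha^*) = 1$. On the other hand, when $\alpha \in N_{t,b}$, the loss is $1/2$ with probability $\eps^2$ and $0$ otherwise. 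Hence
\begin{equation*}
\E\bigl[\ell_{t,b}(i_\alpha^*) \,\big|\, N_{t,b}\bigr] \;=\; \tfrac{\eps^2}{2}\,\mathbbm{1}[\alpha \in N_{t,b}].
\end{equation*}

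Now sum over $\alpha \in [N]$: the right-hand side becomes $\tfrac{\eps^2}{2}|N_{t,b}|$, and crucially $|N_{t,b}| \le b-1$ holds deterministically since at most one block is played per day. Summing next over $b \in [N/10]$ within a single epoch gives a bound of $\tfrac{\eps^2}{2}\sum_{b=1}^{N/10}(b-1) \le \tfrac{\eps^2 N^2}{200}$, and summing over the $10T/N$ epochs yields $\sum_{\alpha \in [N]} \E[\mathcal{L}(i_\alpha^*)] \le \tfrac{\eps^2 T N}{20}$, i.e., the average expected total loss of an intersecting expert is at most $\tfrac{\eps^2 T}{20}$.

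Finally, for every realization $\min_{i \in [n]} \mathcal{L}(i) \le \mathcal{L}(i_\alpha^*)$ for each fixed $\alpha$, so averaging over $\alpha$ and taking expectation gives $\E[\min_i \mathcal{L}(i)] \le \tfrac{1}{N}\sum_\alpha \E[\mathcal{L}(i_\alpha^*)] \le \tfrac{\eps^2 T}{20}$, which divided by $T$ is the claimed bound on the expected average loss. There is no real obstacle here --- the computation is a direct averaging argument; the only subtlety is recognizing that the conditioning on $N_{t,b}$ is harmless because $|N_{t,b}| \le b-1$ is a hard bound independent of the randomness.
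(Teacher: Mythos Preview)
Your proof is correct and follows essentially the same averaging argument as the paper: both bound the expected loss averaged over the intersecting experts $\{i_\alpha^*\}_{\alpha\in[N]}$, using that $i_\alpha^*$ incurs nonzero loss only on days where its block has already been played, and then invoke $\min_i \mathcal{L}(i)\le \tfrac{1}{N}\sum_\alpha \mathcal{L}(i_\alpha^*)$. The only cosmetic difference is that the paper bounds the number of explored blocks per epoch coarsely by $N/10$ (so at most $\tfrac{1}{10}$ of the $i_\alpha^*$ can receive any loss in a given epoch, each at rate $\le \eps^2/2$), whereas you track $|N_{t,b}|\le b-1$ day by day; both yield the same $\eps^2/20$ bound.
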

\begin{proof}
Consider the set of $I^{*} = \{i_{1}^{*}, \ldots, i_{N}^{*}\}$. For each epoch $t \in [\frac{10 T}{N}]$, since there are $\frac{N}{10}$ days during the epoch, we know that $\frac{9}{10}$-fraction of indices in $I^{*}$ receive $0$ loss in the entire epoch, and the rest $\frac{1}{10}$-fraction of indices receive an (average) loss at most $1/2 \cdot \eps^2$.
Hence there exists at least one expert in $I^{*}$ that receives an average loss of at most $\frac{\eps^2}{20}$ during the entire $T$ days.
\end{proof}

The major work is devoted to bounding the expected loss of an online learning algorithm with memory $S = o(\frac{\sqrt{n}}{\eps\log^3 n})$. 
\begin{lemma}
\label{lem:lower-algo}
For any epoch $t\in [T/B]$, an online learning algorithm with memory $S = o(\frac{\sqrt{n}}{\eps\log^3 n})$ obtains an expected (average) loss of at least $\frac{\eps^2}{10}$.
\end{lemma}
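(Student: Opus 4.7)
The plan is to reduce the low-loss event in a single epoch to a three-party communication task for $N$-fold set disjointness, and then combine the prophet-removal trick sketched in the overview with a direct product theorem to derive a contradiction.

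First I would establish a combinatorial bridge between the loss and the set disjointness copies. Fix an epoch. On any active day (probability $\eps^2$), playing expert $i$ in an unplayed block $\alpha$ incurs expected loss $1-\tfrac{1}{2}(x_{A,\alpha,i}+x_{B,\alpha,i})$, which is $0$ exactly when $i=i^*_\alpha$ and otherwise at least $1/2$ in expectation over the random $A/B$ choice; playing in an already-played block incurs loss $1/2$. A Markov/Chernoff argument then shows that if the algorithm's average per-day loss falls below $\eps^2/10$, then with constant probability over the adversary's randomness it plays $i^*_\alpha$ on an unplayed block for $\Omega(\eps^2 N)=\Omega(M)$ distinct blocks $\alpha$ during the epoch.

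Second, I would simulate the algorithm during an epoch as a three-party protocol. A prophet Charlie holds the $S$-word initial memory state and sends it as advice. Alice receives $x_A$ and Bob receives $x_B$; using public randomness for the day-type and $A/B$ labels, Alice maintains the current memory state, updating it locally on inactive and $A$-active days and exchanging $\wt{O}(S)$ bits with Bob on each $B$-active day so that the update can be computed from $x_B$. Per-epoch communication is thus $\wt{O}(NS)$, and by the first step the transcript together with the public randomness identifies $\Omega(M)$ intersecting coordinates with constant probability. Alice and Bob then eliminate Charlie by guessing his $\wt{O}(S)$-bit advice from public randomness, producing a two-party protocol of communication $\wt{O}(NS)$ and success probability at least $2^{-\wt{O}(S)}$ against $\mu^{\otimes N}$.

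Finally, I would invoke the direct product theorem for interactive information complexity of \cite{braverman2015interactive}, combined with the $\Omega(M)$ lower bound on the internal information cost of a single $M$-sized set disjointness instance under $\mu$ \cite{bar2004information,sherstov2014communication}. The hypothesis $S=o(\sqrt{n}/(\eps\log^3 n))$ forces $\wt{O}(S) = \wt{o}(N)$, so $2^{-\wt{O}(S)}$ still exceeds the $\exp(-\wt{o}(N))$ threshold at which the direct product theorem remains effective; hence any successful protocol has internal information cost $\wt{\Omega}(NM)=\wt{\Omega}(n)$, contradicting the $\wt{O}(NS)=\wt{o}(n)$ upper bound in our regime. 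The main obstacle will be calibrating the direct product theorem against the \emph{partial}-success nature of the task: only $\Omega(M)$ of the $N$ intersecting coordinates are revealed, not all $N$. I expect to bridge this gap by planting a hard copy in a uniformly random block and using symmetry so that the planted copy is solved with probability $\Omega(M/N)=\Omega(\eps^2)$, then invoking direct product on a carefully chosen sub-collection; tracking the $\polylog$ slack through this reduction should account for the $\log^3 n$ factor in the statement.
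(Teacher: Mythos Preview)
Your high-level three-step plan matches the paper's, but there is a quantitative miscount in Step~1 that creates an artificial obstacle and leads you to the wrong fix.

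The number of distinct intersecting indices the low-loss algorithm reveals in one epoch is $\Omega(N)$, not $\Omega(\eps^2 N)=\Omega(M)$. The algorithm commits an action on \emph{every} one of the $N/10$ days, not only on active days, and in your own simulation Alice sees all of these actions. On any day where the action is not a fresh intersecting index, the \emph{expected} loss over the active/inactive coin and the $A/B$ coin is at least $\eps^2/2$; so if the average loss is at most $\eps^2/10$, at most $N/50$ days are ``bad'' and at least $2N/25=\Omega(N)$ days hit distinct intersecting indices. With this correct count the partial-success gap is only a constant factor, not a factor of~$\eps^2$.

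The paper closes that constant-factor gap by a different mechanism than planting: it repeats the two-party protocol $O(\log n)$ times, each time precomposing with independent uniform permutations of the block labels and of the within-block indices. Because $\mu$ is permutation-symmetric, the repetitions are independent, and each fixed $i^*_\alpha$ is recovered with probability at least $1/50$ per run; after $O(\log n)$ runs all $N$ copies are solved with probability $1-1/n$, at the cost of an extra $\log n$ factor in both communication and in the advice-guessing exponent. This is where two of the three $\log n$ factors in the space bound come from. Your planting idea would only solve a single planted copy with probability $\Omega(\eps^2)\cdot 2^{-\wt O(S)}$, which is well below the $2/3$ threshold the direct-product theorem needs as its base case, and it is not clear how ``direct product on a sub-collection'' would recover from that.

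One smaller point: the per-epoch communication you state, $\wt O(NS)$, is inconsistent with your own simulation (you only exchange state on $B$-active days) and is too large to yield a contradiction for small~$\eps$; the correct bound is $\wt O(\eps^2 N S)=\wt O(MS)$ plus an additive $\wt O(N)$ to transmit the action list, which is $o(n/\log n)$ in the stated regime.
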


The proof splits into three steps. 
We first formulate a three-party communication problem, \textsc{SetDisjointnessAd} (set disjointness with advice), and prove that a low memory online learning algorithm implies a low communication cost protocol for \textsc{SetDisjointnessAd}.
We then remove the ``advice'' and prove that a low communication cost protocol of \textsc{SetDisjointnessAd} implies a low communication cost protocol for multiple copies of \textsc{SetDisjointness} with non-negligible success probability.
Finally we use the direct-product theorem of \cite{braverman2015interactive} and the lower bound on internal information cost to prove an upper bound on the success probability.

\paragraph{Step 1} We first introduce the three-party communication problem of \textsc{SetDisjointnessAD}.

\begin{definition}[\textsc{SetDisjointnessAd}]
In the communication problem of \textsc{SetDisjointnessAd}, there are three parties -- Alice, Bob and Charlie.
The input of Alice and Bob, $x_A$ and $x_B$, are drawn from $\mu^{N}$. Charlie receives both $x_A$ and $x_B$.
The communication proceeds as follow. 
Charlie first sends a transcript $\Pi_C$ (of size at most $S$) to both Alice and Bob, and it is not allowed to speak anymore. 
Alice and Bob then perform multi-rounds of communication and output a set of indices $I_{\out} \subseteq I^{*}$ as large as possible.
\end{definition}

Let $\Pi_{A}$ and $\Pi_{B}$ be the transcripts of Alice and Bob, and $\Pi = (\Pi_A, \Pi_B)$ be the entire transcript between them. The following reduction comes from the fact that Alice, Bob and Charlie can simulate a low space online learning algorithm.


\begin{lemma}
\label{lem:online-simulate}
Suppose there is an online learning algorithm that uses at most $S$ space and achieves at most $\frac{\eps^2}{10}$ loss in expectation during the $t$-th epoch. 
Then there is a randomized public coin communication protocol for \textsc{SetDisjointnessAd}, such that 
\begin{enumerate}
\item $|\Pi_C|\leq S$; 
\item $|\Pi| \leq \eps \sqrt{n}\log n\cdot (S + \log n) + 2\eps^{-1}\sqrt{n}\log n = o(\frac{n}{\log^2 n})$; and
\item $\E[|I_{\out}|] \geq \frac{1}{25}N$.
\end{enumerate}
\end{lemma}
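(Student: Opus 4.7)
The plan is to have Charlie act as a \emph{preloader}. Because he sees both $x_A$ and $x_B$ and shares the public randomness with the other two parties, he can compute every loss vector in epochs $1,\ldots,t-1$ and locally run the online algorithm to obtain its memory state $M_0$ at the start of epoch~$t$; he transmits $M_0$ as $\Pi_C$, using at most $S$ bits. Alice and Bob then jointly simulate epoch~$t$ from this shared initial state, using public randomness both for the algorithm's internal coins and for the adversary's per-day choices (zero-loss vs.\ non-zero, and within non-zero whether to use $x_A$ or $x_B$); they will read $I_{\out}$ off the transcript at the end.

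For the epoch-$t$ simulation I would let Alice carry the running memory state $M_{b-1}$. On day $b$ the two parties read off the zero/$A$/$B$ label of the day from public randomness, Alice computes the algorithm's action $a_b=(\alpha_b,j_b)$ from $M_{b-1}$, and she transmits $a_b$ together with the single bit $x_{A,\alpha_b,j_b}$ to Bob; Bob replies with $x_{B,\alpha_b,j_b}$. Both parties therefore maintain $N_{t,b}$ locally. On a zero day Alice privately updates to $M_b$ using the all-zeros loss vector. On an $A$-day she assembles $\ell_{t,b}$ from $x_A$ and $N_{t,b}$ (both known to her) and again updates privately. On a $B$-day she additionally sends her current $M_{b-1}$ to Bob ($S$ bits), who uses $x_B$ and $N_{t,b}$ to compute $M_b$ and returns it to Alice ($S$ more bits).

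The $\log n+2$ bits transmitted every day sum to $\frac{N}{10}(\log n+2)=O(\eps^{-1}\sqrt n\log n)$, matching the second term of the budget. The memory-swap only fires on non-zero days, at cost $2S$ bits each. Since the number of non-zero days is $\mathrm{Binomial}(N/10,\eps^2)$ with mean $\eps\sqrt n/10$, a Chernoff bound gives that it exceeds $\eps\sqrt n\log n$ with probability $n^{-\omega(1)}$; on that negligible failure event the protocol aborts, which will be absorbed into the output-size analysis below. On the good event the memory cost is at most $2S\cdot\eps\sqrt n\log n$, fitting the first term $\eps\sqrt n\log n\cdot(S+\log n)$. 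Substituting the hypothesis $S=o(\sqrt n/(\eps\log^3 n))$ then confirms $|\Pi|=o(n/\log^2 n)$.

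Finally, define $I_{\out}:=\{\alpha_b:x_{A,\alpha_b,j_b}=x_{B,\alpha_b,j_b}=1\}$. Within each block only the coordinate $i^*_\alpha$ has both the $x_A$-bit and the $x_B$-bit equal to $1$, so $I_{\out}\subseteq I^*$ deterministically. Let $D$ count non-zero days $b$ with $\alpha_b\notin N_{t,b}$ and $j_b=i^*_{\alpha_b}$; each such day inserts a distinct fresh block into $I_{\out}$, so $|I_{\out}|\ge D$. The heart of the analysis is that on every other non-zero day the conditional expected loss is at least $1/2$: either $\alpha_b\in N_{t,b}$ and $\ell_{t,b}(i_b)=1/2$ exactly, or $j_b\ne i^*_{\alpha_b}$ and then at most one of $x_{A,\alpha_b,j_b},x_{B,\alpha_b,j_b}$ equals $1$, making the $A/B$-average of $1-x_{A,\alpha_b,j_b}$ and $1-x_{B,\alpha_b,j_b}$ at least $1/2$. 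Because the zero/$A$/$B$ coin is independent of the algorithm's action (the algorithm commits before seeing its loss), summing over days and using the hypothesis $\E[L_{\mathrm{epoch}}]\le\eps^2 N/100$ yields
\begin{align*}
\frac{\eps^2}{2}\!\left(\frac{N}{10}-\E[D]\right) \;\le\; \E[L_{\mathrm{epoch}}] \;\le\; \frac{\eps^2 N}{100}, \qquad\text{so}\qquad \E[|I_{\out}|] \;\ge\; \E[D] \;\ge\; \frac{2N}{25}.
\end{align*}
The main obstacle I anticipate is precisely this loss-to-discovery conversion: threading the $1/2$ per-non-zero-day lower bound through the correct conditioning on the adversary's and algorithm's joint (possibly adaptive) randomness, and verifying that the ``fresh-block-discovery'' count $D$ lower bounds $|I_{\out}|$ tightly enough to clear $N/25$ with room to absorb the Chernoff failure event from the communication analysis.
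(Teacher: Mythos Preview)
Your approach mirrors the paper's: Charlie preloads the memory state after simulating epochs $1,\ldots,t-1$, Alice and Bob simulate epoch $t$ exchanging the memory only on non-zero days, and the intersecting indices are read off from the played actions. The communication accounting is correct and matches the claimed bound.

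The displayed loss inequality, however, is false with your definition of $D$. You let $D$ count only \emph{non-zero} discovery days, but $\frac{\eps^2}{2}\bigl(\tfrac{N}{10}-\E[D]\bigr)\le\E[L_{\mathrm{epoch}}]$ only holds when $D$ counts \emph{all} discovery days. Concretely: if the algorithm plays the fresh intersecting coordinate on every day then $L_{\mathrm{epoch}}=0$ while $\E[D]=\eps^2 N/10$, and your inequality would assert $0\ge\frac{\eps^2 N}{20}(1-\eps^2)$. With $D$ as you defined it, the correct lower bound is $\frac{1}{2}\bigl(\eps^2 N/10-\E[D]\bigr)\le\E[L_{\mathrm{epoch}}]$, which only yields $\E[D]\ge 2\eps^2 N/25$, short of the target by a factor of $\eps^2$. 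The fix is immediate and already implicit in your protocol: since you exchange $x_{A,\alpha_b,j_b}$ and $x_{B,\alpha_b,j_b}$ on \emph{every} day, discoveries on zero days also enter $I_{\out}$. Redefine $D$ to count all days $b$ with $\alpha_b\notin N_{t,b}$ and $j_b=i^*_{\alpha_b}$; then $|I_{\out}|\ge D$ still holds, and the per-day bound $\E[\ell_{t,b}(a_b)\mid a_b]\ge\frac{\eps^2}{2}(1-G_b)$ (with $G_b$ the discovery indicator and the day-$b$ zero/A/B coin independent of $a_b$) sums to exactly your displayed inequality, giving $\E[|I_{\out}|]\ge\E[D]\ge 2N/25$. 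This is precisely the paper's argument.
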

\begin{proof}
Given a \textsc{SetDisjointnessAd} instance, Charlie simulates the online learning algorithm on its input $(x_A, x_B)$ for $(t-1)$ epochs. Its transcript $\Pi_C$ is set to be the memory state at the beginning of epoch $t$. 
Alice and Bob then simulate the online learning algorithm. 
They proceed in $\frac{N}{10}$ rounds and at round $b \in [\frac{N}{10}]$, they use public randomness to determine the loss vector. That is, 
\begin{itemize}
\item {\bf Case 1.} With probability $1-\eps^2$, the loss vector is $\vec{0}$. Alice and Bob then simulate the online learning algorithm locally (both of them possess the memory state up to round $(b-1)$). This incurs zero communication cost.
\item {\bf Case 2.} With probability $\eps^2$, the loss vector depends on Alice or Bob's input. In particular, with probability $1/2$, the loss vector is set to be $\vec{1} - x_{A}$ for blocks that have not been played, and $\frac{1}{2}\cdot \vec{1}$ for played blocks. Alice simulates the online learning algorithm, adds its memory state as well as the action $i_{t, b}$ to the transcript. This incurs $S + \log n$ communication cost. With the rest $1/2$ probability , Bob performs the same operation.
\end{itemize}
Alice and Bob stop simulating the online learning algorithm if the total number of Case 2 exceeds $\eps \sqrt{n}\log n = 10 \log n \cdot \frac{\eps^2N}{10}$. Note this happens with probability at most $1/n^{\omega(1)}$ by Chernoff bound.

Let $I = \{i_{t, 1}, \ldots, i_{t, \frac{N}{10}}\}$ be the collection of actions played in these $\frac{N}{10}$ rounds, the final output set $I_{\out} = I\cap I^{*}$ is determined using two extra rounds of communication.

The total communication cost can be bounded as 
\[
|\Pi_C|\leq S \quad  \text{and} \quad  |\Pi| \leq \eps \sqrt{n}\log n\cdot (S + \log n) + 2\eps^{-1}\sqrt{n}\log n.
\]
To bound the size of $I_{\out}$, note Alice, Bob and Charlie simulate the online learning algorithm, which gets at most $\frac{\eps^2}{10}$ (average) loss in expectation.
In particular, at day $b \in [N/10]$, if the algorithm commits an action $i_{t, b} \in I^{*}\backslash N_{t, b}$, then it receives expected loss $0$, otherwise, it receives at least $\frac{1}{2}\eps^2$ loss in expectation. The later can happen at most 
$
\frac{(\eps^2/10) \cdot (N/10)}{\eps^2/2} = \frac{N}{50}
$
days. This implies the online learning algorithm commits at least $\frac{N}{10} - \frac{N}{50} = \frac{2}{25}N$ intersecting indices in $I^{*}$ in expectation. Note Alice and Bob fail to simulate the online learning protocol with probability no more than probability $1/n^{\omega(1)}$, hence their output satisfies $\E[|I_{\out}|] \geq \frac{N}{25}$. This finishes the proof.
\end{proof}

\paragraph{Step 2} We next define the communication problem of multi-copy \textsc{SetDisjointness}.

\begin{definition}[Multi-copy \textsc{SetDisjointness}]
In the communication problem of multi-copy \textsc{SetDisjointness}, Alice and Bob receive input $x_{A}, x_{B}$ separately, drawn from $(x_A, x_B) \sim  \mu^{N}$. They perform multi-rounds communication and the goal is to output the index set $I^{*}$.
\end{definition}

Let $\suc(\Pi, \mu^{N})$ be the success probability of a communication protocol $\Pi$ over the distribution $\mu^{N}$.
The key observation is that a low communication cost protocol of \textsc{SetDisjointnessAd} implies a low communication cost protocol for multi-copy \textsc{SetDisjointness} with non-negligible success probability.
Formally,

\begin{lemma}
\label{lem:suc}
Suppose there is a communication protocol $(\Pi_{C}, \Pi)$ for \textsc{SetDisjointnessAd} with $|\Pi_{C}| \leq S$ and $|\Pi| \leq S_2$, and its output $I_{\out} \subseteq I^{*}$ satisfies $\E[|I_{\out}|] \geq \frac{N}{25}$. Then there exists a randomized public coin communication protocol $\Pi_{N}$ that solves the Multi-copy \textsc{SetDisjointness} with communication cost $O(S_2 \log n)$ and success probability at least $2^{-O(S \log n)}$.
\end{lemma}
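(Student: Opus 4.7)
The protocol $\Pi_N$ will simulate Charlie by public randomness. Using $k = \Theta(\log n)$ independent rounds of public coin tosses, Alice and Bob sample fresh uniform guesses $g_1,\ldots,g_k \in \{0,1\}^S$ for Charlie's transcript, together with fresh uniform permutations $\sigma_1,\ldots,\sigma_k$ of the $N$ blocks. In round $j$, they each locally reorder their input via $\sigma_j$ (no communication needed, since $\sigma_j$ is public and each party acts on its own input only), execute the two-party protocol $\Pi$ with the advice hard-coded to $g_j$, and then invert $\sigma_j$ on the resulting set to obtain $I_{\out}^{(j)}$. The final output is $\bigcup_j I_{\out}^{(j)}$. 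The communication cost is $k \cdot |\Pi| = O(S_2 \log n)$, since the guesses and permutations contribute none.

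For the success analysis, I would condition on the event $\mathcal{E}$ that $g_j = \Pi_C(x_A, x_B)$ for every $j$. Since each $g_j$ is independent of the input and uniform over $\{0,1\}^S$, one has $\Pr[\mathcal{E}] \geq 2^{-Sk} = 2^{-O(S \log n)}$, which supplies exactly the $2^{-O(S \log n)}$ factor in the target success bound. On $\mathcal{E}$, each round is a faithful execution of the original three-party protocol on input $\sigma_j(x_A, x_B)$, whose distribution is still $\mu^N$ by the block-symmetry of the product measure; hence $I_{\out}^{(j)} \subseteq I^*$ and $\E[|I_{\out}^{(j)}| \mid \mathcal{E}] \geq N/25$. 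The random permutation $\sigma_j$ symmetrizes the per-block coverage, making $\Pr[\beta \in I_{\out}^{(j)} \mid \mathcal{E}]$ identical across $\beta \in [N]$ and therefore at least $1/25$.

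The final step upgrades marginal coverage to joint coverage $\bigcup_j I_{\out}^{(j)} = I^*$ (automatically an equality, since each $I_{\out}^{(j)} \subseteq I^*$). Because the $k$ rounds use independent randomness, conditional on the input the coverage events across rounds are independent; a union bound over $\beta \in [N]$ combined with $k$ chosen as a sufficiently large multiple of $\log n$ drives $\Pr[\bigcup_j I_{\out}^{(j)} = I^* \mid \mathcal{E}]$ up to a constant. Multiplying by $\Pr[\mathcal{E}]$ yields overall success $\Omega(2^{-O(S \log n)})$.

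The delicate point I expect to require the most care is that the per-input coverage probability $q_\beta(x) := \Pr[\beta \in I_{\out}^{(j)} \mid x, \mathcal{E}]$ is only controlled \emph{in expectation} over $x \sim \mu^N$, not pointwise; in pathological cases $q_\beta(x)$ could vanish on an input-dependent subset of blocks, in which case independent repetition cannot recover those blocks. This will be handled by a Markov-type truncation restricting attention to a constant-probability ``good'' event over the input on which $\sum_\beta q_\beta(x) = \Omega(N)$, after which a refined union bound delivers the joint-coverage claim above.
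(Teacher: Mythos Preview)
Your high-level plan (guess Charlie's advice, run $\Pi$ on a randomly permuted input, repeat $\Theta(\log n)$ times and take the union) is the same as the paper's, but the final ``upgrade marginal coverage to joint coverage'' step has a genuine gap. Block permutations alone do \emph{not} equalize the per-block coverage once you condition on the input: your $q_\beta(x)=\Pr[\beta\in I_{\out}^{(j)}\mid x,\mathcal{E}]$ asks for the probability that the position where block $x_\beta$ lands is output, and this depends on the \emph{content} of $x_\beta$, not just on $\beta$. Concretely, suppose $(\Pi_C,\Pi)$ is the (valid) protocol that, on correct advice, outputs the intersection of every block whose intersecting coordinate is not $1$. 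Then $q_\beta(x)=\mathbb{1}[i^*_\beta(x)\neq 1]$, so $\sum_\beta q_\beta(x)=N(1-1/M)\geq N/25$ always, yet with probability $1-(1-1/M)^N\geq 1-e^{-1/\eps^2}$ over $x$ there exists $\beta$ with $q_\beta(x)=0$; that block is never covered no matter how many independent rounds you run. Your proposed Markov truncation on $\sum_\beta q_\beta(x)$ cannot rescue this, since it gives no lower bound on $\min_\beta q_\beta(x)$.

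The paper closes this gap by also applying independent \emph{within-block} permutations $\theta_{r,\alpha}$ in each round. Because $\mu$ is uniform on its (single-orbit) support, the combined permutation makes the permuted input $x^{(r)}$ uniform on the support of $\mu^N$ and in particular independent of both $x$ and the block permutation $\gamma_r$. Then the protocol's output $J^{(r)}$ is independent of $\gamma_r$, so $\gamma_r(J^{(r)})$ is a uniformly random $|J^{(r)}|$-subset of $[N]$; conditioning on $|J^{(r)}|\geq N/50$ gives each block coverage probability $\geq 1/50$, uniformly in $\beta$ and $x$, and the union bound goes through without any Markov step. (A minor side remark: your event $\mathcal{E}$ should be $g_j=\Pi_C(\sigma_j(x))$ rather than $g_j=\Pi_C(x)$, since you run $\Pi$ on the permuted input; the probability is still $2^{-Sk}$ so this does not affect the count.)
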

\begin{proof}
It is w.l.o.g.~to assume the protocol of \textsc{SetDisjointnessAd} is deterministic as it is a distributional communication problem. Let $f_C: \{0,1\}^{2n}\rightarrow \{0,1\}^{S}$ be Charlie's (deterministic) strategy that maps an input $(x_A, x_B) \in \{0,1\}^{2n}$ to a transcript $\Pi_C \in \{0,1\}^S$. 
For any feasible transcript $\Pi_C \in \{0,1\}^S$, define
\[
f_{C}^{-1}(\Pi_C) = \left\{(x_A, x_B): f_C(x_A, x_B) = \Pi_C  \right\}
\]

That is, $f_{C}^{-1}(\Pi_C)$ is the pre-image of $\Pi_C$. 
Let $\mu_{C}$ be the distribution of $\Pi_C$.

\paragraph{Step 2-1}  Consider the following communication protocol $\Pi_{N,1}$ for multi-copy \textsc{SetDisjointness}.
\begin{itemize}
\item Alice and Bob first sample a possible advice $\hat{\Pi}_C \sim \mu_{C}$ using public randomness;
\item Alice and Bob then simulate the protocol $\Pi$ of \textsc{SetDisjointnessAd} given $\hat{\Pi}_C$.
\end{itemize}

The communication cost is $|\Pi_{N, 1}| = |\Pi| = S_2$. We prove with probability at least $\frac{1}{2500} \cdot 2^{-S}$, the output index set $I_{\out}^{\Pi, \hat{\Pi}_C}\subseteq I^{*}$ satisfies $|I_{\out}| \geq \frac{N}{50}$. 
In particular, one has
\begin{align*}
&~ \Pr\left[|I_{\out}^{\Pi, \hat{\Pi}_C}| \geq  \frac{N}{50}\right]\\
\geq &~ \sum_{\Pi_C \in \{0,1\}^S}\Pr[(x_A, x_{B}) \in f_{C}^{-1}(\Pi_{C})]\cdot \Pr[\hat{\Pi}_C =\Pi_{C}] \cdot \Pr\left[|_{\out}^{\Pi, \hat{\Pi}_C} | \geq \frac{N}{50} \mid  (x_A, x_B) \in f_{C}^{-1}(\hat{\Pi}_C) \right] \\
= &~ \sum_{\Pi_C \in \{0,1\}^S}\Pr[(x_A, x_{B}) \in f_{C}^{-1}(\Pi_{C})]^2 \cdot \Pr\left[|I_{\out}^{\Pi, \Pi_C} | \geq \frac{N}{50} \mid  (x_A, x_B) \in f_{C}^{-1}(\Pi_C) \right] \geq \frac{1}{2500} \cdot 2^{-S}.
\end{align*}
We restrict the success event to the case of $\hat{\Pi}_C = \Pi_{C}$ in the first step, the second step holds since Alice and Bob draw $\hat{\Pi}_{C}$ from $\mu_C$, and therefore $\Pr[\hat{\Pi}_C = \Pi_C] = \Pr[(x_A, x_B) \in f_{C}^{-1}(\Pi_C)]$. 
The last step follows from Cauchy-Schwartz and the fact that 
\begin{align*}
&~\sum_{\Pi_C \in \{0,1\}^S}\Pr[(x_A, x_{B}) \in f_{C}^{-1}(\Pi_{C})] \cdot \Pr\left[|I_{\out}^{\Pi, \Pi_C} \mid \geq \frac{N}{50} \mid  (x_A, x_B) \in f_{C}^{-1}(\Pi_C) \right] \\
= &~ \Pr\left[|I_{\out}^{\Pi, \Pi_{C}} | \geq \frac{N}{50}\right] > \frac{1}{50},
\end{align*}
we use the fact that $\E[|I_{\out}^{\Pi, \Pi_C}|] \geq \frac{N}{25}$ in the last step.

\paragraph{Step 2-2} The above protocol $\Pi_{N,1}$ guarantees that Alice and Bob find a set of intersecting indices of size at least $\frac{N}{50}$ with probability at least $2^{-O(S)}$. We next boost the size of set by repeating $\Pi_{N,1}$ for $R = 100\log n$ times. Our final protocol $\Pi_N$ works as follow.
\begin{itemize}
\item For $r = 1,2, \ldots, R$:
\begin{itemize}
\item Using public randomness, Alice and Bob sample permutation functions $\gamma_{r}: [N]\rightarrow [N]$  and $\theta_{r, \alpha}: [M] \rightarrow [M]$ ($\alpha \in [M]$)
\item Alice and Bob reset their input $x_{A}^{(r)}, x_{B}^{(r)}$ as $x_{A,\alpha, \beta}^{(r)} = x_{A, \gamma_r(\alpha), \theta_{r, \alpha}(\beta)}$ and $x_{B,\alpha, \beta}^{(r)} = x_{B, \gamma_r(\alpha), \theta_{r, \alpha}(\beta)}$
\item Alice and Bob run the communication protocol $\Pi_{N, 1}$ over input $x_{A}^{(r)}$ and $x_{B}^{(r)}$ and perform the same permutation to the output set, i.e., $I_{\out}^{(r)} \leftarrow \{(\gamma_r(\alpha), \theta_{r, \alpha}(\beta)): (\alpha, \beta) \in I_{\out}^{(r)} \}$
\end{itemize}
\item Return $I_{\out} \leftarrow I_{\out}^{(1)} \cup \cdots \cup I_{\out}^{(r)}$.
\end{itemize}

The key observation is that the collection of inputs over $R$ iterations $(x_A^{(1)}, x_{B}^{(1)} \ldots, x_{A}^{(R)}, x_{B}^{(R)})$ is drawn from the product distribution $(\mu^N)^{R}$, this crucially uses the symmetric property of input distribution $\mu$. 

Let $\mE_r$ be the success event of $r$-th iteration, i.e., $|I_{\out}^{(r)}| \geq \frac{N}{50}$, then we have
\begin{align}
\Pr[\mE_1 \cup \cdots \cup \mE_{R}] = \Pr[\mE_1] \cdots \Pr[\mE_R] = 2^{-SR} \label{eq:independent1}
\end{align}
and for any $\alpha \in [n]$
\begin{align}
\Pr\left[i_{\alpha}^{*} \notin \bigcup_{r\in [R]} I_{\out}^{(r)} \,\mid\, \bigcup_{r\in [R]}\mE_{r}\right] = \prod_{r\in [R]}\Pr\left[i_{\alpha}^{*} \notin  I^{(r)}_{\out} | \mE_{r}\right] \leq \left(1 - \frac{1}{50}\right)^{R} \leq \frac{1}{n^2}. \label{eq:independent2}
\end{align}
Both equations use the independent property of $(x_A^{(1)}, x_{B}^{(1)} \ldots, x_{A}^{(R)}, x_{B}^{(R)})$. The second step of Eq.~\eqref{eq:independent2} holds due to the fact that each index of $I^{*}$ are contained in $I_{\out}^{(r)}$ with equal probability (because of the shuffling step) and the last step holds due to the choice of $R = 100 \log n$  

Taking an union bound over $\alpha \in [N]$, we obtain
\[
\Pr\left[I^{*} \neq \bigcup_{r\in [R]} I^{(r)}_{\out} \,\mid\, \bigcup_{r\in [R]}\mE_{r}\right] \leq \frac{1}{n}
\]
and therefore 
\[
\Pr[I_{\out} = I^{*}] \geq \Pr[\mE_1 \cup \cdots \cup \mE_{R}] \cdot \Pr\left[I^{*} =  \bigcup_{r\in [R]} I^{(r)}_{\out} \,\mid\, \bigcup_{r\in [R]}\mE_{r}\right] \geq 2^{-O(S\log n)}
\]
where the second step comes from Eq.~\eqref{eq:independent1}\eqref{eq:independent2}. Finally, we note the communication cost equals $O(S_2 R)= O(S_2\log n)$.
\end{proof}

\paragraph{Step 3}

Finally, we prove a lower bound on the success probability of multi-copy {\sc SetDisjointness}. In particular, we prove
\begin{lemma}
\label{lem:product}
Suppose a communication protocol $\Pi$ of multi-copy \textsc{SetDisjointness} succeeds with probability at least $\exp(-O(N))$, then $|\Pi| \geq \Omega(n/\log n)$.
\end{lemma}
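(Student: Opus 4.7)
\textbf{Proof plan for Lemma \ref{lem:product}.} The plan is to combine two standard ingredients from communication complexity: a single-copy internal information cost lower bound for the distributional {\sc SetDisjointness} problem, and the direct-product theorem of Braverman--Rao--Weinstein--Yehudayoff \cite{braverman2015interactive} that converts such a single-copy bound into a communication lower bound for $N$ copies degrading gracefully with the success probability.

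First, I would observe that the input distribution $\mu$ used here coincides (up to cosmetic relabeling) with the classical ``collision-free'' hard distribution of Bar-Yossef--Jayram--Kumar--Sivakumar \cite{bar2004information}, later refined in \cite{sherstov2014communication}: the pairs $(x_{A,i},x_{B,i})$ take values $(0,0),(0,1),(1,0)$ in equal proportion, with exactly one ``planted'' coordinate set to $(1,1)$. On a single copy of size $M$, any protocol that with constant probability outputs the unique intersection coordinate has internal information cost $\Omega(M)$; this is the search version of set disjointness and reduces to the usual decision version by the standard transformation (a protocol that finds $i^*$ trivially decides between the ``planted'' distribution $\mu$ and a ``no-intersection'' distribution). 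Hence $\IC_\mu(\textsc{SetDisj}_M) \geq \Omega(M)$.

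Next, I would apply the direct-product theorem \cite{braverman2015interactive}: if $\Pi$ solves $N$ independent copies of a task $f$ under $\mu^N$ with success probability at least $\exp(-\delta N)$, then for a sufficiently small constant $\delta > 0$,
\[
|\Pi| \;\geq\; \Omega\!\left(\frac{N \cdot \IC_\mu(f)}{\log N}\right).
\]
Plugging in $f = \textsc{SetDisj}_M$ with $\IC_\mu(f) = \Omega(M)$, $NM = n$, and noting that the success probability $\exp(-O(N))$ in the lemma can be taken smaller than the threshold $\exp(-\delta N)$ (this is where the hidden constant in the $O(\cdot)$ of the lemma hypothesis must be tuned to be smaller than the constant $\delta$ supplied by the direct-product theorem), we conclude
\[
|\Pi| \;\geq\; \Omega\!\left(\frac{NM}{\log N}\right) \;=\; \Omega\!\left(\frac{n}{\log n}\right).
\]

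The main obstacle is bookkeeping around the ``search versus decision'' form of the single-copy problem and matching the success-probability regime to the direct-product theorem. For the former, I would verify that the information cost lower bound of \cite{bar2004information,sherstov2014communication} indeed yields $\Omega(M)$ for recovering $i^*$ under $\mu$ (not merely for the AND/OR decision problem), either by invoking the stronger statement already implicit in those works or by a short direct-sum reduction that attributes $\Omega(1)$ bits of information to each coordinate through the planted/non-planted indistinguishability. For the latter, I would make sure that the hypothesis $\exp(-O(N))$ of the lemma is instantiated with a hidden constant small enough (and that the $\log N$ loss factor in the direct product theorem is at most $O(\log n)$, which is immediate since $N \leq n$), so the final bound $\Omega(n/\log n)$ goes through cleanly.
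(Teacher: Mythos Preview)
Your proposal is correct and follows essentially the same approach as the paper: a single-copy internal information cost lower bound for {\sc SetDisjointness} under $\mu$ combined with the direct-product theorem of~\cite{braverman2015interactive}. One minor caveat: the paper's distribution $\mu$ fixes \emph{exactly} $(M-1)/3$ coordinates of each non-intersecting type rather than sampling them i.i.d., so the paper includes a short reduction (Lemma~\ref{lem:internal} and its appendix proof) from the standard BJKS-style distribution $\mu'$ to $\mu$---a step you dismiss as ``cosmetic relabeling'' but which does require a brief information-cost-preserving embedding argument.
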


We make use of the following direct product theorem due to \cite{braverman2015interactive}
\begin{theorem}[Direct product, Theorem 2 of \cite{braverman2015interactive}]
\label{thm:direct-product}
There is a global constant $a > 0$ such that for any two party communication problem $f$ defined over distribution $\nu$. If $\suc(\nu, f, I) \leq \frac{2}{3}$, and $K \log (K) \leq \alpha N \cdot I$, it holds that $\suc(\mu^{N}, f^{N}, K) \leq \exp(-\Omega(N))$. Here $\suc(\nu, f, I)$ is the maximal success probability of a protocol with internal information cost $I$ of computing $f$ under $\mu$.
\end{theorem}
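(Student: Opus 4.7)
The plan is to prove the theorem by contrapositive: given a protocol $\Pi$ for $f^N$ with communication cost at most $K$ and success probability $\sigma = \suc(\nu^N, f^N, K) \geq \exp(-aN)$ for a small constant $a>0$ to be chosen, I construct a single-copy protocol $\Pi'$ for $f$ with internal information cost strictly less than $I$ and success probability strictly above $2/3$, contradicting the hypothesis $\suc(\nu,f,I)\leq 2/3$.

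First I would do an information-theoretic accounting conditioned on the success event $\mathcal{S}$ that $\Pi$ is correct on all $N$ coordinates. The chain rule together with the fact that conditioning on an event of probability $\sigma$ can inflate conditional entropies by at most $\log(1/\sigma)$ gives
\[
\sum_{i=1}^{N}\bigl[I(X_i;\Pi\mid Y_i,\mathcal{S}) + I(Y_i;\Pi\mid X_i,\mathcal{S})\bigr] \;\leq\; H(\Pi) + O(\log(1/\sigma)) \;\leq\; K + O(aN).
\]
Averaging over $i$, there exists a coordinate $i^*\in[N]$ whose conditional internal information is at most $(K + O(aN))/N$. Under the hypothesis $K\log K \leq \alpha N I$, this per-coordinate quantity is $O(\alpha I/\log K + a)$, which is strictly below $I/10$ provided $\alpha$ and $a$ are chosen small enough.

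The core step, and the main obstacle, is converting this average-case bound on the behavior of $\Pi$ at coordinate $i^*$ under the conditioned distribution into an honest single-copy protocol. Given $(x,y)\sim \nu$, Alice and Bob use shared randomness to embed $(x,y)$ at coordinate $i^*$, and must jointly sample the remaining $N-1$ coordinates from the input distribution conditioned on $\mathcal{S}$. Because the conditioned joint distribution does not factor between Alice's and Bob's sides, they cannot sample it locally; instead they run a density-restoring correlated-sampling procedure (in the spirit of Braverman--Weinstein), realigning the marginals using shared randomness plus communication proportional to the relative entropy between the conditioned joint distribution and the product of the two sides' marginals, which totals $O(\log(1/\sigma)) = O(aN)$ across all coordinates on average. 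They then simulate $\Pi$ on the constructed product input and output its answer on coordinate $i^*$.

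To complete the reduction I would invoke an interactive protocol compression theorem (e.g.~Barak--Braverman--Chen--Rao) to convert the simulated protocol, whose information cost is $O((K+aN)/N)$, into a genuine protocol of communication at most $O((K/N+a)\cdot\log K)$, where the $\log K$ factor is exactly what is absorbed by the slack $K\log K\leq \alpha N I$ in the hypothesis. The success probability of $\Pi'$ on coordinate $i^*$ is at least $1 - O(a) - o(1) > 2/3$, since by a Markov argument the conditioned marginal on most coordinates is close in total variation to $\nu$, and the overall sampling error introduced by the density-restoring step is $o(1)$. This yields a single-copy protocol for $f$ contradicting $\suc(\nu,f,I)\leq 2/3$, and hence proves the direct-product bound $\suc(\mu^N, f^N, K)\leq \exp(-\Omega(N))$.
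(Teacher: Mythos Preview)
The paper does not prove this theorem at all: it is quoted verbatim as Theorem~2 of \cite{braverman2015interactive} and used as a black box in the lower-bound argument (combined with Lemma~\ref{lem:internal} to derive Lemma~\ref{lem:product}). There is therefore no ``paper's own proof'' to compare your proposal against.

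Your sketch is a plausible high-level outline of how direct-product theorems of this type are proved in the Braverman--Weinstein line of work, but several steps are imprecise as written. The inequality
\[
\sum_{i=1}^{N}\bigl[I(X_i;\Pi\mid Y_i,\mathcal{S}) + I(Y_i;\Pi\mid X_i,\mathcal{S})\bigr] \leq H(\Pi) + O(\log(1/\sigma))
\]
does not follow directly from the chain rule plus the $\log(1/\sigma)$ penalty for conditioning; one needs to track the conditional independence structure coordinate-by-coordinate, and the correct bound involves additional terms (in particular, mutual information between $X_i$ and the other coordinates' inputs conditioned on $\mathcal{S}$, which need not vanish). The correlated-sampling step you describe is the genuine technical heart of these proofs and cannot be dispatched in one sentence: sampling from the joint input distribution conditioned on $\mathcal{S}$ with low communication overhead requires a careful round-by-round simulation, not a single application of density restoration. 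Finally, the compression argument you invoke loses a multiplicative $\polylog$ factor in information, which is consistent with the $K\log K$ slack in the hypothesis, but you would need to verify that the losses from sampling, embedding, and compression compose correctly to keep the success probability above $2/3$. As a roadmap your proposal points in the right direction; as a proof it has real gaps. But for the purposes of this paper none of this is needed---the result is simply cited.
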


We prove that a communication protocol for {\sc SetDisjointness} must have $\Omega(M) = \Omega(\eps\sqrt{n})$ internal information cost under $\mu$, in order to succeed with probability at least $2/3$. 
\begin{lemma}
\label{lem:internal}
The internal information cost of any communication protocol for {\sc Disjointness} (with success probability at least $2/3$) under distribution $\mu$ is at least $\Omega(M) = \Omega(\eps\sqrt{n})$.
\end{lemma}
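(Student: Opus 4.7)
The plan is to apply the Bar-Yossef--Jayram--Kumar--Sivakumar (BJKS) information-complexity framework~\cite{bar2004information} for set disjointness. The hard distribution $\mu$ plants a unique intersecting coordinate $i^* \in [M]$ uniformly at random, with the remaining $M-1$ coordinates drawn i.i.d.~from the ``null'' distribution $\nu_0$ uniform over $\{(0,0),(0,1),(1,0)\}$. A protocol $\Pi$ that recovers $i^*$ with probability $\geq 2/3$ must implicitly ``test'' every coordinate to determine which one is $(1,1)$; the plan is to show each test contributes $\Omega(1)$ to the internal information cost and that these $\Omega(1)$ contributions add up via a direct sum to $\Omega(M)$.

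First I would perform the direct-sum decomposition. Conditioning on the value of $i^*$ and using that the non-planted coordinates are conditionally i.i.d.~from $\nu_0$, the chain rule applied to $\IC_\mu(\Pi) = I(X_A;\Pi \mid X_B) + I(X_B;\Pi \mid X_A)$ yields
\begin{align*}
\IC_\mu(\Pi) \geq \sum_{j=1}^{M} \IC_{\zeta_j}(\Pi^{(j)}),
\end{align*}
where $\Pi^{(j)}$ is the single-coordinate protocol induced on coordinate $j$ by having Alice and Bob sample all other coordinates using shared public randomness, and $\zeta_j$ is the marginal distribution of $(X_{A,j}, X_{B,j})$ under $\mu$ (a mixture of $\nu_0$ and the planted $(1,1)$).

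Next I would establish $\IC_{\zeta_j}(\Pi^{(j)}) = \Omega(1)$ for every $j$. The induced single-coordinate protocol inherits correctness from $\Pi$: when the $j$-th coordinate equals $(1,1)$ (so $i^*=j$), the global protocol should output $j$, whereas when the $j$-th coordinate is drawn from $\nu_0$ (and the planted $(1,1)$ sits elsewhere), the protocol should output some other index. Therefore $\Pi^{(j)}$ distinguishes ``my coordinate is $(1,1)$'' from ``my coordinate is from $\nu_0$'' with constant advantage, i.e.~it computes AND on inputs from $\zeta_j$ with constant bias. This is precisely the setup of the classical BJKS argument, whose core analytic step lower-bounds the squared Hellinger distance between Alice's transcript distributions on $x_A=0$ versus $x_A=1$ by a constant, yielding $\IC_{\zeta_j}(\Pi^{(j)}) = \Omega(1)$.

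The main technical obstacle is implementing the direct-sum decomposition cleanly for $\mu$, which is not a product distribution due to the ``exactly one intersection'' constraint. The standard remedy is to first condition on the high-probability event $i^* \neq j$ (so the $j$-th coordinate marginally follows $\nu_0$ and the remaining coordinates can be simulated from public randomness), paying only a $1-1/M$ factor in the information. Putting these pieces together yields $\IC_\mu(\Pi) \geq \sum_{j=1}^M \Omega(1) = \Omega(M) = \Omega(\eps\sqrt{n})$, as claimed.
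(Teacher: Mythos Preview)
Your proposal misreads the distribution $\mu$. Under $\mu$ as defined in the paper, the non-planted $M-1$ coordinates are \emph{not} drawn i.i.d.\ from $\nu_0$: exactly $(M-1)/3$ of them equal $(0,0)$, exactly $(M-1)/3$ equal $(0,1)$, and exactly $(M-1)/3$ equal $(1,0)$. The distribution you describe---planted $(1,1)$ at $i^*$ and i.i.d.\ $\nu_0$ elsewhere---is a different distribution that the paper calls $\mu'$. This matters because your entire direct-sum step hinges on conditional independence of the coordinates given $i^*$; under the exact-count distribution $\mu$, fixing coordinate $j$ changes the joint distribution of the remaining coordinates, so the ``simulate all other coordinates from public randomness'' embedding and the chain-rule additivity both break down. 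Your closing remark attributes the non-product structure only to the planted coordinate, but the exact-count constraint is a separate and more serious obstruction to the BJKS decomposition.

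The paper's proof does essentially what you outline, but for $\mu'$ rather than $\mu$: it cites the known $\Omega(M)$ internal information lower bound for $\mu'$ (Theorem~3 of~\cite{assadi2019polynomial}, which is the BJKS-style argument you sketch), and then adds a short reduction step you are missing. Given an instance $(x_A',x_B')\sim\mu'$, Alice and Bob each pad with enough $1$'s in disjoint fresh coordinates to make the counts of $(0,0),(0,1),(1,0)$ exact, publicly shuffle, and run the protocol $\Pi$ designed for $\mu$. A short mutual-information calculation then shows $\IC_{\mu'}(\Pi')\le \IC_\mu(\Pi)$, transferring the $\Omega(M)$ bound to $\mu$. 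Your write-up would be correct for $\mu'$; to match the lemma as stated you need this padding-and-shuffle reduction on top.
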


Let alone the (standard) proof of Lemma \ref{lem:internal} (see Appendix \ref{sec:lower-app}), together with Theorem \ref{thm:direct-product}, we have proved Lemma \ref{lem:product}. Combine Lemma \ref{lem:product}, Lemma \ref{lem:suc} and Lemma \ref{lem:online-simulate}, we have proved Lemma \ref{lem:lower-algo}, i.e. the online learning algorithm obtains an average regret at least $\frac{\eps^2}{10}$. This finishes the proof of Theorem \ref{thm:lower} since the optimal expert has expected average loss at most $\frac{\eps^2}{20}$.

\subsection{A \texorpdfstring{$\Omega(n)$}{Lg} memory lower bound for strong adaptive adversary}
\label{sec:strong-adap-app}

Finally, we adopt the methodology of Theorem \ref{thm:lower-main} and prove a linear memory lower bound $\Omega(n)$ against against {\em a strong adaptive adversary}, this improves the result of \cite{peng2022online}.
A strong adaptive adversary can choose the loss vector based on the algorithm's strategy (a probability distribution over $[n]$) on the current round (instead of only past actions).
We refer readers to \cite{peng2022online} for a detailed discussion on this notion.

\begin{theorem}
\label{thm:strong-adaptive}
Any online learning algorithm that can obtain $o(T)$ regret requires $\Omega(n)$ space when facing a strong adaptive adversary.
\end{theorem}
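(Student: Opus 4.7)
I would adapt the methodology of Theorem~\ref{thm:lower} to the strong adaptive setting, exploiting the fact that a strong adaptive adversary can react to the algorithm's current mixed strategy $p_t$ rather than only its past actions. This enables the adversary to ``punish'' experts the instant they carry non-trivial probability mass, removing the weak adversary's constraint of waiting until an expert is actually played.

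\textbf{Hard instance.} I would set up the same kind of multi-copy \textsc{SetDisjointness} construction as in Section~\ref{sec:adaptive-lower}: sample $(x_A, x_B) \sim \mu^N$ consisting of $N$ independent copies of \textsc{SetDisjointness}, each of size $M$, with $NM = n$, and partition the $n$ experts into $N$ blocks of size $M$. On day $t$, the strong adversary observes $p_t$ and, for each block $\alpha$, defines the explored set $S_{t,\alpha} = \{i \in \alpha : p_t(i) \geq \tau\}$ for a threshold $\tau$ chosen so that $|S_{t,\alpha}|$ is forced to be small. The loss vector within each block is then built analogously to the weak adaptive construction: with equal probability it equals $\vec{1} - x_A^{(\alpha)}$ or $\vec{1} - x_B^{(\alpha)}$ on $\alpha \setminus S_{t,\alpha}$, and $1/2$ on $S_{t,\alpha}$. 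The two essential differences from Theorem~\ref{thm:lower} are that the adversary acts on \emph{every} day (no $\eps^2$ gating), and that ``explored'' is measured via $p_t$ rather than via the realized actions $i_1, \ldots, i_{t-1}$.

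\textbf{Reduction to multi-copy disjointness.} Mirroring Lemma~\ref{lem:online-simulate}, I would cast a sublinear-regret algorithm as a three-party communication protocol for \textsc{SetDisjointnessAd}. Charlie sends the initial memory state as an $S$-bit advice to both Alice and Bob. Alice (holding $x_A$) and Bob (holding $x_B$) then simulate the algorithm day by day; on each day, both parties compute the shared $p_t$ from the current memory state without communication, flip a public coin to decide who determines that day's loss vector from their private input, and the chosen party then transmits the updated $S$-bit memory state. The regret analysis mirrors Lemma~\ref{lem:optimal-expert}--\ref{lem:lower-algo}: the optimal intersecting-index expert has loss $o(T)$, so $o(T)$ total regret forces the algorithm's trajectory to certify intersecting indices in $\Omega(N)$ copies of \textsc{SetDisjointness}, and this certification becomes the protocol's output.

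\textbf{Direct product and conclusion.} As in Lemmas~\ref{lem:suc}--\ref{lem:product}, I would remove Charlie's advice by public-coin sampling (incurring a $2^{-\tilde{O}(S)}$ success-probability factor), then invoke the direct product theorem of~\cite{braverman2015interactive} (Theorem~\ref{thm:direct-product}) together with the $\Omega(M)$ lower bound on internal information cost of \textsc{SetDisjointness} (Lemma~\ref{lem:internal}). Because the strong adversary acts each day, the naive communication bound rises from $\eps\sqrt{n}\cdot S$ (weak adaptive) to essentially $\tilde{\Theta}(N\cdot S)$ per epoch, which we must drive below the $\tilde{\Omega}(NM)=\tilde{\Omega}(n)$ direct-product floor; with the right choice of $M$ and $N$ (e.g.\ $M$ polylogarithmic, $N = n/\polylog n$, one epoch per problem instance), this forces $S = \tilde{\Omega}(n)$. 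The main obstacle is the careful choice of $\tau$ and the epoch/copy sizes so that the explored set is simultaneously (i) small enough that unexplored intersecting indices remain the only low-loss option and (ii) large enough that the protocol's output contains the requisite $\Omega(N)$ intersecting indices, thereby activating the direct product bound at strength $\Omega(n)$ rather than the weaker $\Omega(\sqrt{n})$ obtained in the weak adaptive regime.
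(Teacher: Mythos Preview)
Your approach diverges substantially from the paper's, and with the parameters you give it does not reach $\Omega(n)$.

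The paper drops the communication-complexity machinery entirely and uses a direct counting argument. The adversary picks a random set $I\subset [n]$ of $10S$ ``special'' experts; given the current mixed strategy $p$, it assigns loss $1$ to every non-special expert and to the $2S$ special experts carrying the largest $p$-mass, and loss $0$ to the remaining $8S$ special experts. A short lemma shows that any fixed $p$ can achieve expected loss below $2/5$ against at most $2^{10S}\binom{n}{8S}$ choices of $I$. Since the memory state determines $p$, a union bound over the $2^S$ possible states shows that when $S<n/2000$ a random $I$ defeats \emph{every} memory state, so the algorithm's per-day loss is at least $3/10$ while the best special expert has loss at most $1/5$. No \textsc{SetDisjointness}, no Charlie, no direct product.

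Your direct-product calculation has a concrete error in the binding constraint. With the strong adversary acting every day, Alice--Bob communication per epoch is $\tilde\Theta(N\cdot S)$ as you say. For Theorem~\ref{thm:direct-product} to apply, this must sit below $\tilde O(N\cdot M)=\tilde O(n)$, which forces $S=\tilde O(M)$ --- not $S=\tilde O(N)$. With your choice $M=\polylog n$ this yields only $S=\tilde\Omega(\polylog n)$; the success-probability constraint $\tilde O(S)<N$ you appeal to is not the bottleneck. More generally, balancing $S\lesssim M$ against $S\lesssim N/\log n$ under $MN=n$ tops out at $\tilde\Omega(\sqrt n)$, no better than the weak-adaptive bound. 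The insight you are missing is that once the adversary observes $p_t$, the per-day loss is a deterministic function of the $S$-bit memory state and the hidden set, so the question is purely combinatorial and a counting argument over $2^S$ states suffices --- the two-party reduction is a detour that throws away exactly this leverage.
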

\begin{proof}
The construction of hard instance follows from the high-level methodology of Theorem \ref{thm:lower-main}, despite the proof uses a simple counting argument.
We prove by contradiction and suppose $S < \frac{n}{2000}$. 
Let $I$ be a set of special experts and $|I| = 10S$. Given a strategy $p\in \Delta_{n}$, the loss vector $\ell_{I, p}$ is constructed as follow.
\begin{itemize}
\item For any non-special expert $i \in [n]\backslash I$, $\ell_{I, p}(i) = 1$;
\item For any special expert $i \in [I]$, we set $\ell_{I, p}(i) = 0$ if $i$ is not among the top-$2S$ heavy experts of $I$, i.e., $|\{i'\in I: p(i')\geq p(i)\}| \geq 2 S$; we set $\ell_{I, p}(i) = 1$ otherwise (i.e., $i$ is among the top-$2S$ heavy experts in $I$).
\end{itemize}

We immediately have
\begin{lemma}\label{lem:count}
For any fixed strategy $p \in \Delta_{n}$, there are at most $2^{10S} \cdot \binom{n}{8S}$ number of sets $I \subseteq [n]$ ($|I| = 100S$) such that $\langle p, \ell_{I, p}\rangle < \frac{2}{5}$.
\end{lemma}
\begin{proof}
We assume $p_1\geq \cdots \geq p_n$, this is wlog. 
If $\sum_{i=10S+1}^{n}p_{i} \geq \frac{2}{5}$, then it is easy to verify that $\langle p, \ell_{I, p}\rangle \geq \frac{2}{5}$ for any $I$. 
Consider the case of $\sum_{i=10S+1}^{n}p_{i} < \frac{2}{5}$, if $|I \cap [10S]| \leq 2S$, then $\ell_{I, p}(i) = 1$ for any $i \in [10S]$, and therefore,
\[
\langle p, \ell_{I, p}\rangle \geq \sum_{i=1}^{10S} p_i \ell_{I, p}(i) = 3/5.
\]
Hence, the total number of $I$ is at most 
\[
\sum_{k = 2S +1}\binom{10S}{k}\binom{n - 10 S}{10S - k} \leq  2^{10S} \cdot \binom{n}{8S}.
\]
\end{proof}

Back to the proof of Theorem \ref{thm:strong-adaptive}.
The loss sequence is generated as follow.
\begin{itemize}
\item The adversary chooses the set of special experts $I$ ($|I| = 100 S$) uniformly at random from $[n]$. The set is chosen at the beginning and fixed afterwards.
\item At day $t$, given the strategy $p_t$ of the algorithm, the adversary constructs the loss vector $\ell_{I, p_t}$.
\end{itemize}
At each day, there are at $2S$ special experts receives loss $1$ and other $8S$ special experts receive $0$ loss. Hence the best expert receives loss at most $\frac{1}{5}$.
We aim to prove that Alice has expected loss at least $\frac{3}{10}$ for every day, with high probability over the choice of $I$. 

The algorithm has $2^{S}$ possible memory states $X$ in total. 
At each memory state $x \in X$, suppose the algorithm outputs a strategy $p_x \in \Delta_n$ (note $p_x$ could be a random variable). Define 
\begin{align*}
T_x := \left\{I: \Pr[\langle p_x, \ell_{I, p_x}\rangle < \frac{2}{5}] \geq 0.1, I\subseteq [n], |I| = 10S       \right\}.
\end{align*}

By Lemma \ref{lem:count}, we know that $|T_x| \leq 10\cdot 2^{10S} \binom{n}{8S}$. Taking an union over all memory states $x\in X$, one has
\[
\left|\bigcup_{x\in X} T_x\right| \leq 2^{S} \cdot 10 \cdot 2^{10S} \cdot \binom{n}{8S} \leq \frac{1}{10}\binom{n}{10S}.
\]
The last step follows from the choice of parameters.

Hence, we conclude that with probability at least $0.9$, the adversary draws a set of special experts $I$ such that $I \notin \bigcup_{x\in X} T_x$.
This means that at each day, the algorithm receives at least $0.9 \cdot \frac{2}{5} \geq \frac{3}{10}$ loss in expectation. We conclude the proof here.
\end{proof}

\bibliographystyle{alpha}
\bibliography{ref}

\newcommand{\etalchar}[1]{$^{#1}$}
\begin{thebibliography}{ACBFS02}

\bibitem[ACBFS02]{auer2002nonstochastic}
Peter Auer, Nicolo Cesa-Bianchi, Yoav Freund, and Robert~E Schapire.
\newblock The nonstochastic multiarmed bandit problem.
\newblock {\em SIAM journal on computing}, 32(1):48--77, 2002.

\bibitem[ACK19]{assadi2019polynomial}
Sepehr Assadi, Yu~Chen, and Sanjeev Khanna.
\newblock Polynomial pass lower bounds for graph streaming algorithms.
\newblock In {\em Proceedings of the 51st Annual ACM SIGACT Symposium on theory
  of computing}, pages 265--276, 2019.

\bibitem[ACNS23]{aamand2023improved}
Anders Aamand, Justin Chen, Huy~Le Nguyen, and Sandeep Silwal.
\newblock Improved space bounds for learning with experts.
\newblock {\em https://arxiv.org/abs/2303.01453}, 2023.

\bibitem[ADF{\etalchar{+}}22]{anagnostides2022near}
Ioannis Anagnostides, Constantinos Daskalakis, Gabriele Farina, Maxwell
  Fishelson, Noah Golowich, and Tuomas Sandholm.
\newblock Near-optimal no-regret learning for correlated equilibria in
  multi-player general-sum games.
\newblock In {\em Proceedings of the 54th Annual ACM SIGACT Symposium on Theory
  of Computing}, pages 736--749, 2022.

\bibitem[AHK12]{arora2012multiplicative}
Sanjeev Arora, Elad Hazan, and Satyen Kale.
\newblock The multiplicative weights update method: a meta-algorithm and
  applications.
\newblock {\em Theory of Computing}, 8(1):121--164, 2012.

\bibitem[AKP22]{agarwal2022sharp}
Arpit Agarwal, Sanjeev Khanna, and Prathamesh Patil.
\newblock A sharp memory-regret trade-off for multi-pass streaming bandits.
\newblock {\em Conference on Learning Theory (COLT)}, 2022.

\bibitem[AW20]{assadi2020exploration}
Sepehr Assadi and Chen Wang.
\newblock Exploration with limited memory: streaming algorithms for coin
  tossing, noisy comparisons, and multi-armed bandits.
\newblock In {\em Proceedings of the 52nd Annual ACM SIGACT Symposium on Theory
  of Computing (STOC)}, 2020.

\bibitem[BBF{\etalchar{+}}21]{brown2021memorization}
Gavin Brown, Mark Bun, Vitaly Feldman, Adam Smith, and Kunal Talwar.
\newblock When is memorization of irrelevant training data necessary for
  high-accuracy learning?
\newblock In {\em Proceedings of the 53rd Annual ACM SIGACT Symposium on Theory
  of Computing}, pages 123--132, 2021.

\bibitem[BBS22]{brown2022strong}
Gavin Brown, Mark Bun, and Adam Smith.
\newblock Strong memory lower bounds for learning natural models.
\newblock {\em Conference on Learning Theory (COLT)}, 2022.

\bibitem[BCB{\etalchar{+}}12]{bubeck2012regret}
S{\'e}bastien Bubeck, Nicolo Cesa-Bianchi, et~al.
\newblock Regret analysis of stochastic and nonstochastic multi-armed bandit
  problems.
\newblock {\em Foundations and Trends{\textregistered} in Machine Learning},
  5(1):1--122, 2012.

\bibitem[BDGR22]{adam2022smooth}
Adam Block, Yuval Dagan, Noah Golowich, and Alexander Rakhlin.
\newblock Smoothed online learning is as easy as statistical learning.
\newblock In {\em Conference on Learning Theory (COLT)}, 2022.

\bibitem[BGZ15]{besbes2015non}
Omar Besbes, Yonatan Gur, and Assaf Zeevi.
\newblock Non-stationary stochastic optimization.
\newblock {\em Operations research}, 63(5):1227--1244, 2015.

\bibitem[BKM{\etalchar{+}}22]{beimel2022dynamic}
Amos Beimel, Haim Kaplan, Yishay Mansour, Kobbi Nissim, Thatchaphol Saranurak,
  and Uri Stemmer.
\newblock Dynamic algorithms against an adaptive adversary: Generic
  constructions and lower bounds.
\newblock In {\em Proceedings of the 54th Annual ACM SIGACT Symposium on Theory
  of Computing}, pages 1671--1684, 2022.

\bibitem[BLLW19]{bubeck2019improved}
S{\'e}bastien Bubeck, Yuanzhi Li, Haipeng Luo, and Chen-Yu Wei.
\newblock Improved path-length regret bounds for bandits.
\newblock In {\em Conference On Learning Theory}, pages 508--528. PMLR, 2019.

\bibitem[BM07]{blum2007external}
Avrim Blum and Yishay Mansour.
\newblock From external to internal regret.
\newblock {\em Journal of Machine Learning Research}, 8(6), 2007.

\bibitem[BRWY13]{braverman2013direct}
Mark Braverman, Anup Rao, Omri Weinstein, and Amir Yehudayoff.
\newblock Direct products in communication complexity.
\newblock In {\em 2013 IEEE 54th Annual Symposium on Foundations of Computer
  Science}, pages 746--755. IEEE, 2013.

\bibitem[BW15]{braverman2015interactive}
Mark Braverman and Omri Weinstein.
\newblock An interactive information odometer and applications.
\newblock In {\em Proceedings of the forty-seventh annual ACM symposium on
  Theory of Computing}, pages 341--350, 2015.

\bibitem[BYJKS04]{bar2004information}
Ziv Bar-Yossef, Thathachar~S Jayram, Ravi Kumar, and D~Sivakumar.
\newblock An information statistics approach to data stream and communication
  complexity.
\newblock {\em Journal of Computer and System Sciences}, 68(4):702--732, 2004.

\bibitem[BZJ23]{blanchard2023quadratic}
Mo{\"\i}se Blanchard, Junhui Zhang, and Patrick Jaillet.
\newblock Quadratic memory is necessary for optimal query complexity in convex
  optimization: Center-of-mass is pareto-optimal.
\newblock {\em arXiv preprint arXiv:2302.04963}, 2023.

\bibitem[CBL06]{cesa2006prediction}
Nicolo Cesa-Bianchi and G{\'a}bor Lugosi.
\newblock {\em Prediction, learning, and games}.
\newblock Cambridge university press, 2006.

\bibitem[CKM{\etalchar{+}}11]{christiano2011electrical}
Paul Christiano, Jonathan~A Kelner, Aleksander Madry, Daniel~A Spielman, and
  Shang-Hua Teng.
\newblock Electrical flows, laplacian systems, and faster approximation of
  maximum flow in undirected graphs.
\newblock In {\em Proceedings of the Forty-third Annual ACM Symposium on Theory
  of Computing (STOC)}, pages 273--282, 2011.

\bibitem[CP20]{chen2020hedging}
Xi~Chen and Binghui Peng.
\newblock Hedging in games: Faster convergence of external and swap regrets.
\newblock {\em Advances in Neural Information Processing Systems},
  33:18990--18999, 2020.

\bibitem[CPP22]{chen2022memory}
Xi~Chen, Christos Papadimitriou, and Binghui Peng.
\newblock Memory bounds for continual learning.
\newblock In {\em 2022 IEEE 63th Annual Symposium on Foundations of Computer
  Science (FOCS)}, 2022.

\bibitem[DHL{\etalchar{+}}20]{dudik2020oracle}
Miroslav Dud{\'\i}k, Nika Haghtalab, Haipeng Luo, Robert~E Schapire, Vasilis
  Syrgkanis, and Jennifer~Wortman Vaughan.
\newblock Oracle-efficient online learning and auction design.
\newblock {\em Journal of the ACM (JACM)}, 67(5):1--57, 2020.

\bibitem[DS18]{dagan2018detecting}
Yuval Dagan and Ohad Shamir.
\newblock Detecting correlations with little memory and communication.
\newblock In {\em Conference On Learning Theory (COLT)}, 2018.

\bibitem[Fel20]{feldman2020does}
Vitaly Feldman.
\newblock Does learning require memorization? a short tale about a long tail.
\newblock In {\em Proceedings of the 52nd Annual ACM SIGACT Symposium on Theory
  of Computing}, pages 954--959, 2020.

\bibitem[FS97]{freund1997decision}
Yoav Freund and Robert~E Schapire.
\newblock A decision-theoretic generalization of on-line learning and an
  application to boosting.
\newblock {\em Journal of computer and system sciences}, 55(1):119--139, 1997.

\bibitem[FS99]{freund1999adaptive}
Yoav Freund and Robert~E Schapire.
\newblock Adaptive game playing using multiplicative weights.
\newblock {\em Games and Economic Behavior}, 29(1-2):79--103, 1999.

\bibitem[FSSW97]{freund1997using}
Yoav Freund, Robert~E Schapire, Yoram Singer, and Manfred~K Warmuth.
\newblock Using and combining predictors that specialize.
\newblock In {\em Proceedings of the twenty-ninth annual ACM symposium on
  Theory of computing}, pages 334--343, 1997.

\bibitem[GH16]{garber2016sublinear}
Dan Garber and Elad Hazan.
\newblock Sublinear time algorithms for approximate semidefinite programming.
\newblock {\em Mathematical Programming}, 158(1):329--361, 2016.

\bibitem[GHM19]{gonen2019private}
Alon Gonen, Elad Hazan, and Shay Moran.
\newblock Private learning implies online learning: An efficient reduction.
\newblock {\em Advances in Neural Information Processing Systems}, 32, 2019.

\bibitem[GKLR21]{garg2021memory}
Sumegha Garg, Pravesh~Kumar Kothari, Pengda Liu, and Ran Raz.
\newblock Memory-sample lower bounds for learning parity with noise.
\newblock In {\em 24th International Conference on Approximation Algorithms for
  Combinatorial Optimization Problems (APPROX 2021) and 25th International
  Conference on Randomization and Computation (RANDOM 2021)}, 2021.

\bibitem[GLM20]{gonen2020towards}
Alon Gonen, Shachar Lovett, and Michal Moshkovitz.
\newblock Towards a combinatorial characterization of bounded-memory learning.
\newblock {\em Advances in Neural Information Processing Systems (NeurIPS)},
  2020.

\bibitem[GRT18]{garg2018extractor}
Sumegha Garg, Ran Raz, and Avishay Tal.
\newblock Extractor-based time-space lower bounds for learning.
\newblock In {\em Proceedings of the 50th Annual ACM SIGACT Symposium on Theory
  of Computing (STOC)}, 2018.

\bibitem[GRT19]{garg2019time}
Sumegha Garg, Ran Raz, and Avishay Tal.
\newblock Time-space lower bounds for two-pass learning.
\newblock In {\em 34th Computational Complexity Conference (CCC)}, 2019.

\bibitem[Haz16]{hazan2016introduction}
Elad Hazan.
\newblock Introduction to online convex optimization.
\newblock {\em Foundations and Trends{\textregistered} in Optimization},
  2(3-4):157--325, 2016.

\bibitem[HHSY22]{nika2022oracle}
Nika Haghtalab, Yanjun Han, Abhishek Shetty, and Kunhe Yang.
\newblock Oracle-efficient online learning for beyond worst-case adversaries.
\newblock {\em Advances in Neural Information Processing Systems (NeurIPS)},
  2022.

\bibitem[HK16]{hazan2016computational}
Elad Hazan and Tomer Koren.
\newblock The computational power of optimization in online learning.
\newblock In {\em Proceedings of the Forty-eighth Annual ACM Symposium on
  Theory of Computing (STOC)}, pages 128--141, 2016.

\bibitem[HKM{\etalchar{+}}20]{hasidim2020adversarially}
Avinatan Hasidim, Haim Kaplan, Yishay Mansour, Yossi Matias, and Uri Stemmer.
\newblock Adversarially robust streaming algorithms via differential privacy.
\newblock {\em Advances in Neural Information Processing Systems}, 33:147--158,
  2020.

\bibitem[HLZ20]{hopkins2020robust}
Sam Hopkins, Jerry Li, and Fred Zhang.
\newblock Robust and heavy-tailed mean estimation made simple, via regret
  minimization.
\newblock {\em Advances in Neural Information Processing Systems (NeurIPS)},
  2020.

\bibitem[Hol07]{holenstein2007parallel}
Thomas Holenstein.
\newblock Parallel repetition: simplifications and the no-signaling case.
\newblock In {\em Proceedings of the thirty-ninth annual ACM symposium on
  Theory of computing}, pages 411--419, 2007.

\bibitem[HRS22]{nika2022smooth}
Nika Haghtalab, Tim Roughgarden, and Abhishek Shetty.
\newblock Smoothed analysis with adaptive adversaries.
\newblock In {\em 2021 IEEE 62nd Annual Symposium on Foundations of Computer
  Science (FOCS)}, 2022.

\bibitem[HS07]{hazan2007adaptive}
Elad Hazan and Comandur Seshadhri.
\newblock Adaptive algorithms for online decision problems.
\newblock In {\em Electronic colloquium on computational complexity (ECCC)},
  2007.

\bibitem[Kla10]{klauck2010strong}
Hartmut Klauck.
\newblock A strong direct product theorem for disjointness.
\newblock In {\em Proceedings of the forty-second ACM symposium on Theory of
  computing}, pages 77--86, 2010.

\bibitem[KM17]{klivans2017learning}
Adam Klivans and Raghu Meka.
\newblock Learning graphical models using multiplicative weights.
\newblock In {\em 2017 IEEE 58th Annual Symposium on Foundations of Computer
  Science (FOCS)}, pages 343--354. IEEE, 2017.

\bibitem[KV05]{kalai2005efficient}
Adam Kalai and Santosh Vempala.
\newblock Efficient algorithms for online decision problems.
\newblock {\em Journal of Computer and System Sciences}, 71(3):291--307, 2005.

\bibitem[KVE15]{koolen2015second}
Wouter~M Koolen and Tim Van~Erven.
\newblock Second-order quantile methods for experts and combinatorial games.
\newblock In {\em Conference on Learning Theory}, pages 1155--1175. PMLR, 2015.

\bibitem[LSPY18]{liau2018stochastic}
David Liau, Zhao Song, Eric Price, and Ger Yang.
\newblock Stochastic multi-armed bandits in constant space.
\newblock In {\em International Conference on Artificial Intelligence and
  Statistics (AISTATS)}, 2018.

\bibitem[LW89]{littlestone1989weighted}
N~Littlestone and MK~Warmuth.
\newblock The weighted majority algorithm.
\newblock In {\em 30th Annual Symposium on Foundations of Computer Science
  (FOCS)}, 1989.

\bibitem[MPK21]{maiti2021multi}
Arnab Maiti, Vishakha Patil, and Arindam Khan.
\newblock Multi-armed bandits with bounded arm-memory: Near-optimal guarantees
  for best-arm identification and regret minimization.
\newblock {\em Advances in Neural Information Processing Systems (NeurIPS)},
  2021.

\bibitem[MSSV22]{marsden2022efficient}
Annie Marsden, Vatsal Sharan, Aaron Sidford, and Gregory Valiant.
\newblock Efficient convex optimization requires superlinear memory.
\newblock In {\em Conference on Learning Theory (COLT)}, 2022.

\bibitem[OC98]{ordentlich1998cost}
Erik Ordentlich and Thomas~M Cover.
\newblock The cost of achieving the best portfolio in hindsight.
\newblock {\em Mathematics of Operations Research}, 23(4):960--982, 1998.

\bibitem[PST95]{plotkin1995fast}
Serge~A Plotkin, David~B Shmoys, and {\'E}va Tardos.
\newblock Fast approximation algorithms for fractional packing and covering
  problems.
\newblock {\em Mathematics of Operations Research}, 20(2):257--301, 1995.

\bibitem[PZ23]{peng2022online}
Binghui Peng and Fred Zhang.
\newblock Online prediction in sub-linear space.
\newblock In {\em Proceedings of the Thirty Fourth Annual ACM-SIAM Symposium on
  Discrete Algorithms}, 2023.

\bibitem[Raz95]{raz1995parallel}
Ran Raz.
\newblock A parallel repetition theorem.
\newblock In {\em Proceedings of the twenty-seventh annual ACM symposium on
  Theory of computing}, pages 447--456, 1995.

\bibitem[Raz17]{raz2017time}
Ran Raz.
\newblock A time-space lower bound for a large class of learning problems.
\newblock In {\em 2017 IEEE 58th Annual Symposium on Foundations of Computer
  Science (FOCS)}, 2017.

\bibitem[Raz18]{raz2018fast}
Ran Raz.
\newblock Fast learning requires good memory: A time-space lower bound for
  parity learning.
\newblock {\em Journal of the ACM (JACM)}, 66(1):1--18, 2018.

\bibitem[Rou13]{Roughgarden13}
Tim Roughgarden.
\newblock Cs364a: Algorithmic game theory. lecture 17: No-regret dynamics,
  November 2013.

\bibitem[She14]{sherstov2014communication}
Alexander~A Sherstov.
\newblock Communication complexity theory: Thirty-five years of set
  disjointness.
\newblock In {\em International Symposium on Mathematical Foundations of
  Computer Science (MFCS)}, pages 24--43. Springer, 2014.

\bibitem[SSV19]{sharan2019memory}
Vatsal Sharan, Aaron Sidford, and Gregory Valiant.
\newblock Memory-sample tradeoffs for linear regression with small error.
\newblock In {\em Proceedings of the 51st Annual ACM SIGACT Symposium on Theory
  of Computing (STOC)}, 2019.

\bibitem[SVW16]{steinhardt2016memory}
Jacob Steinhardt, Gregory Valiant, and Stefan Wager.
\newblock Memory, communication, and statistical queries.
\newblock In {\em Conference on Learning Theory (COLT)}, 2016.

\bibitem[SWXZ22]{woodruff2022memory}
Vaidehi Srinivas, David~P. Woodruff, Ziyu Xu, and Samson Zhou.
\newblock Memory bounds for the experts problem.
\newblock In {\em Proceedings of the 54th Annual ACM SIGACT Symposium on Theory
  of Computing (STOC)}, 2022.

\bibitem[WL18]{wei2018more}
Chen-Yu Wei and Haipeng Luo.
\newblock More adaptive algorithms for adversarial bandits.
\newblock In {\em Conference On Learning Theory}, pages 1263--1291. PMLR, 2018.

\bibitem[WZZ23]{woodruff2023streaming}
David Woodruff, Fred Zhang, and Samson Zhou.
\newblock Streaming algorithms for learning with experts: Deterministic versus
  robust.
\newblock 2023.

\end{thebibliography}

\newpage
\appendix
\section{Probabilistic tool}

We state the version of Chernoff bound and Azuma-Hoeffding bound used in this paper.
\begin{lemma}[Chernoff bound]
Let $X = \sum_{i=1}^n X_i$, where $X_i=1$ with probability $p_i$ and $X_i = 0$ with probability $1-p_i$, and all $X_i$ are independent. Let $\mu = \E[X] = \sum_{i=1}^n p_i$. Then \\
1. $ \Pr[ X \geq (1+\delta) \mu ] \leq \exp ( - \delta \mu / 3 ) $, $\forall\, \delta > 1$ ; \\
2. $ \Pr[ X \leq (1-\delta) \mu ] \leq \exp ( - \delta^2 \mu / 2 ) $, $\forall\, 0 < \delta < 1$. 
\end{lemma}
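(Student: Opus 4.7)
The plan is to use the standard moment generating function (Chernoff--Cram\'er) method, applied separately to the two tails. For any $t \in \mathbb{R}$, Markov's inequality applied to the random variable $e^{tX}$ gives $\Pr[X \geq a] \leq \E[e^{tX}]/e^{ta}$ when $t > 0$, and a symmetric statement when $t < 0$. Independence of the $X_i$ factors the MGF as $\E[e^{tX}] = \prod_i \E[e^{tX_i}] = \prod_i (1 + p_i(e^t-1))$, and the elementary inequality $1 + x \leq e^x$ upgrades this to the clean bound $\E[e^{tX}] \leq \exp(\mu(e^t - 1))$. This single MGF estimate drives both parts.

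For the upper tail (part 1), I would set $a = (1+\delta)\mu$ and optimize in $t > 0$ by choosing $t = \ln(1+\delta)$, yielding the classical estimate
\begin{align*}
\Pr[X \geq (1+\delta)\mu] \leq \left(\frac{e^{\delta}}{(1+\delta)^{1+\delta}}\right)^{\mu}.
\end{align*}
It then remains to verify the numerical inequality $e^{\delta}/(1+\delta)^{1+\delta} \leq e^{-\delta/3}$ for all $\delta > 1$, equivalently $(1+\delta)\ln(1+\delta) \geq \tfrac{4}{3}\delta$ on $\delta > 1$. I would check this by differentiating: the derivative of the left side minus the right side is $\ln(1+\delta) - \tfrac{1}{3}$, which is positive for $\delta > e^{1/3} - 1$; combined with the boundary value at $\delta = 1$ (where $2\ln 2 \approx 1.386 > 4/3$) this yields the claim.

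For the lower tail (part 2), I would set $a = (1-\delta)\mu$ and take $t = \ln(1-\delta) < 0$, obtaining
\begin{align*}
\Pr[X \leq (1-\delta)\mu] \leq \left(\frac{e^{-\delta}}{(1-\delta)^{1-\delta}}\right)^{\mu},
\end{align*}
and then invoke the numerical inequality $-\delta - (1-\delta)\ln(1-\delta) \leq -\delta^2/2$ on $\delta \in (0,1)$. This last step follows from the Taylor expansion $-(1-\delta)\ln(1-\delta) = (1-\delta)\sum_{k \geq 1} \delta^k/k$, whose leading terms collapse to give exactly the required quadratic upper bound after telescoping.

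I do not expect any real obstacle here: both parts reduce to the same MGF computation, and the only non-mechanical ingredients are the two scalar inequalities above, which are standard calculus exercises. If anything, the mild annoyance is that the constants ($1/3$ in part 1 versus the sharper $1/2$ in part 2) require slightly different handling of the $\ln$-inequality, but both fall out after a single derivative-sign check.
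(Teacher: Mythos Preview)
Your proof is correct and is the standard MGF/Cram\'er argument. Note, however, that the paper does not actually prove this lemma: it is stated in the appendix under ``Probabilistic tool'' purely as a cited fact (``We state the version of Chernoff bound and Azuma-Hoeffding bound used in this paper''), with no proof given. So there is nothing in the paper to compare your argument against; you have supplied a proof where the paper simply quotes a well-known inequality.
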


\begin{lemma}[Azuma-Hoeffding bound]
Let $X_0, \ldots, X_n$ be a martingale sequence with respect to the filter $F_0 \subseteq F_1 \cdots \subseteq F_n$ such that for $Y_i = X_i - X_{i-1}$, $i \in [n]$, we have that $|Y_i| = |X_i - X_{i-1}| \leq c_i$. Then
\begin{align*}
    \Pr[|X_t - X_0 | \geq t ] \leq 2\exp\left(-\frac{t^2}{2\sum_{i=1}^{n}c_i^2}\right).
\end{align*}
\end{lemma}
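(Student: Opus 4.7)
The plan is to apply the standard exponential-moment (Chernoff) method adapted to the martingale setting. Set $Y_i := X_i - X_{i-1}$ so that $X_n - X_0 = \sum_{i=1}^n Y_i$; by assumption $|Y_i| \le c_i$ and $\E[Y_i \mid F_{i-1}] = 0$. (I assume the ``$X_t$'' in the statement is a typo for $X_n$, and that the threshold $t$ on the right-hand side is a separate parameter.)

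First I would establish the one-sided tail $\Pr[X_n - X_0 \ge t] \le \exp\!\bigl(-t^2/(2\sum_i c_i^2)\bigr)$. By Markov's inequality applied to the exponential, for any $\lambda > 0$,
\begin{align*}
\Pr[X_n - X_0 \ge t] \le e^{-\lambda t}\,\E\!\bigl[e^{\lambda(X_n - X_0)}\bigr].
\end{align*}
To bound the moment generating function, I would peel off the $n$-th factor by conditioning on $F_{n-1}$ and invoking the tower rule:
\begin{align*}
\E\!\bigl[e^{\lambda\sum_{i=1}^n Y_i}\bigr] = \E\Bigl[e^{\lambda\sum_{i=1}^{n-1} Y_i}\cdot \E\!\bigl[e^{\lambda Y_n}\mid F_{n-1}\bigr]\Bigr].
\end{align*}
Since $Y_n$ is mean zero conditional on $F_{n-1}$ and takes values in an interval of length at most $2c_n$, Hoeffding's lemma gives $\E[e^{\lambda Y_n}\mid F_{n-1}] \le e^{\lambda^2 c_n^2/2}$. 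Iterating this peeling $n$ times yields $\E\!\bigl[e^{\lambda(X_n-X_0)}\bigr] \le \exp\!\bigl(\tfrac{\lambda^2}{2}\sum_{i=1}^n c_i^2\bigr)$, and optimizing $\lambda = t/\sum_i c_i^2$ produces the one-sided bound.

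The two-sided statement then follows by applying the identical argument to the martingale $-(X_i - X_0)$, whose differences are $-Y_i$ with the same absolute bound $c_i$ and the same mean-zero conditional property, and then taking a union bound over the two tails (this is where the prefactor of $2$ comes from). The only non-trivial ingredient is Hoeffding's lemma, which I would prove separately by writing $e^{\lambda z}$ for $z\in[a,b]$ as a convex combination of $e^{\lambda a}$ and $e^{\lambda b}$, then Taylor-expanding the cumulant generating function $\psi(\lambda) := \log\E[e^{\lambda Z}]$ around $0$ and bounding $\psi''$ by $(b-a)^2/4$. This ingredient is classical and mechanical, so there is essentially no main obstacle; the only thing to be careful about is using the martingale property $\E[Y_i\mid F_{i-1}]=0$ (rather than unconditional independence) at the right point in the peeling argument, which is precisely what allows Hoeffding's lemma to be applied to the \emph{conditional} MGF.
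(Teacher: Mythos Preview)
Your proposal is correct and follows the standard exponential-moment argument for Azuma--Hoeffding. Note, however, that the paper does not actually prove this lemma: it is stated in the appendix as a known probabilistic tool without proof, so there is no ``paper's own proof'' to compare against. Your outline is exactly the classical proof one would expect.
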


\section{Missing proof from Section \ref{sec:adaptive-lower}}
\label{sec:lower-app}
\begin{proof}[Proof of Lemma \ref{lem:internal}]
Consider the following distribution $\mu'$ defined over $\{0, 1\}^{2M}$
\begin{itemize}
\item Sample an index $i^{*} \in [M]$ uniformly at random and set $(x_{A, i}, x_{B, i}) = (1,1)$
\item For the rest of coordinates, set $(x_{A, i}, x_{B, i})$ to be $(0,1), (1, 0), (0,0)$ with equal probability.
\end{itemize}
The only (minor) difference between $\mu$ and $\mu'$ is that $(x_{A}, x_{B})$ contains exactly $\frac{M-1}{3}$ pairs of $(0,0), (0,1), (1,1)$ under $\mu$.

An internal information cost lower bound of $\mu'$ is known, in particular, we have
\begin{lemma}[Theorem 3 of \cite{assadi2019polynomial}]
\label{lem:disjoint}
The internal information cost of any communication protocol for {\sc Disjointness} (with success probability at least $2/3$) under distribution $\mu'$ is at least $\Omega(M)$.
\end{lemma}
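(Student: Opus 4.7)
The plan is to prove Lemma~\ref{lem:disjoint} via the classical direct-sum argument, reducing the $M$-coordinate disjointness (search) problem under $\mu'$ to a single-coordinate AND information lower bound under the marginal distribution $\nu = \text{Uniform}\{(0,0),(0,1),(1,0)\}$. The distribution $\mu'$ is deliberately a product distribution across coordinates conditional on $i^*$: each non-intersecting coordinate is drawn i.i.d.\ from $\nu$, and coordinate $i^*$ is fixed at $(1,1)$. This product structure is exactly what enables a coordinate-wise chain-rule decomposition of the internal information cost.

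The first step is a single-coordinate AND information lower bound: for the problem where Alice holds $a \in \{0,1\}$, Bob holds $b \in \{0,1\}$, and $(a,b) \sim \nu$, any randomized protocol that with constant bias detects whether $(a,b) = (1,1)$ has internal information cost $\Omega(1)$ under $\nu$. This is the classical Bar-Yossef--Jayram--Kumar--Sivakumar bound, proved by lower-bounding the squared Hellinger distance between the protocol's transcript distribution on the input $(1,1)$ versus on typical $\nu$-draws, then converting Hellinger distance into mutual information. I would invoke this as a black box.

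The second step is direct-sum decomposition. Let $\Pi$ be any protocol finding $i^*$ with probability $\geq 2/3$ under $\mu'$ with internal information cost $I$. By the chain rule and the standard ``conditioning'' trick (introducing for each coordinate $j$ a public random variable $D_j \in \{A,B\}$ that allows Alice's and Bob's sides of the padding to be sampled separately), $I$ decomposes into $M$ per-coordinate conditional information terms, each of which upper-bounds the internal information cost of a single-coordinate distinguishing protocol $\Pi_j$ as follows: given $(a,b) \sim \nu$, Alice and Bob use public randomness (in particular $\{D_k\}_{k \neq j}$) to locally sample the other $M-1$ coordinates consistently with $\mu'$'s structure, embed $(a,b)$ at coordinate $j$, run $\Pi$ on the padded input, and output ``$(a,b)=(1,1)$'' iff $\Pi$ reports the intersection at position $j$. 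Since $\Pi$ succeeds with probability $\geq 2/3$ on $\mu'$, averaging over $j$ shows each $\Pi_j$ distinguishes $(1,1)$ from $\nu$ with constant bias, and its internal information cost is bounded by the $j$-th chain-rule summand. Step~1 then gives $\Omega(1)$ per coordinate, so $I = \Omega(M)$.

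The main obstacle is Step~1, the single-coordinate AND information lower bound, whose proof requires Hellinger-distance machinery and is nontrivial. A further technical subtlety in Step~2 is that $\nu$ is \emph{not} a product distribution on $(a,b)$ (its support excludes $(1,1)$ and $(1,1)$ is exactly what we need to distinguish), so a naive embedding where Alice samples her padding independently of Bob would not reproduce $\mu'$ exactly. The $D$-trick fixes this: conditional on $D_k = A$, coordinate $k$ is $(a_k,0)$ with $a_k$ sampled by Alice alone, and conditional on $D_k = B$ it is $(0,b_k)$ with $b_k$ sampled by Bob alone; averaging over public $D_k \in \{A,B\}$ reproduces the $\nu$-marginal while keeping the embedding ``one-sided'' per coordinate, which is precisely what is needed for the chain-rule summand to match the internal information of the induced $\Pi_j$.
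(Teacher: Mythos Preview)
The paper does not prove this lemma at all; it is stated as a black-box citation to Theorem~3 of \cite{assadi2019polynomial} and invoked inside the proof of Lemma~\ref{lem:internal}. Your proposal to prove it from scratch via the Bar-Yossef--Jayram--Kumar--Sivakumar direct-sum argument is the standard route to this result and is correct in outline.

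One technical point deserves more care than you give it. You write that ``$\mu'$ is a product distribution across coordinates conditional on $i^*$,'' but the internal information cost you must bound is under $\mu'$ \emph{without} conditioning on $i^*$ (conditioning on $i^*$ would trivialize the search problem), and $\mu'$ itself is not a product distribution because of the planted $(1,1)$. Your chain-rule decomposition therefore does not directly yield $M$ independent per-coordinate $\nu$-terms. The usual fix is either (i) to lower-bound $\IC_{\nu^M}(\Pi)$ instead---the fully-product no-intersection distribution---and note that $\mu'$ and $\nu^M$ differ in a single coordinate so their information costs agree up to an additive $O(1)$; or (ii) to make $i^*$ and the $D$-variables public in the decomposition, after which the planted coordinate contributes $O(1)$ and each of the remaining $M-1$ terms lower-bounds the single-coordinate AND information cost. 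Either way the $\Omega(M)$ bound follows, but the embedding you describe (padding with $\nu$-samples at all $k\neq j$) produces an input with \emph{no} intersection when $(a,b)\sim\nu$, which is outside $\supp(\mu')$; this is exactly why route~(i) is cleanest.
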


Given an instance of {\sc Setdisjointness} $(x_A', x_{B}')$ under distribution $\mu'$, we turn it into an input from $\mu$ by augmenting a few coordinates.
Alice and Bob augment their input to $x_{A}, x_{B}$ ($x_A, x_B \in \{0, 1\}^{3M+1}$ such that there are exact $M$ pairs of $(0,1), (1,0), (0,1)$. This can be done by (1) Alice appends $M - |x_A'| + 1$ coordinates of $1$ and (2) Bob appends $M - |x_B'| +1$ (different) coordinates of $1$. They then use public randomness to shuffle the coordinates and then run the protocol $\Pi$ (designed for input from $\mu$) on $x_{A}, x_{B}$. Let $\gamma$ be the public randomness used for shuffling.

It is clearly that Alice and Bob would success with probability at least $2/3$, to bound the internal information cost, we note
\begin{align*}
\IC_{\mu'}(\Pi') = &~ I(X_A'; \Pi'|X_B') + I(X_B'; \Pi'|X_A') \\
= &~ I(X_A'; \Pi, \gamma |X_B') + I(X_B'; \Pi, \gamma|X_A') \\
= &~ I(X_A'; \Pi |X_B', \gamma) + I(X_B'; \Pi|X_A', \gamma) \\
= &~ I(X_A; \Pi |X_B, \gamma) + I(X_B; \Pi |X_A, \gamma)\\
\leq &~ I(X_A; \Pi |X_B) + I(X_B; \Pi |X_A)\\
= &~ \IC_{\mu}(\Pi).
\end{align*}
The third step follows from the chain rule of mutual information and $I(X_A'; \gamma| X_{B}') = 0$.
The fourth step holds since $(X_B, \gamma)$ are one to one $(X_B', \gamma)$, and  $(X_A, \gamma)$ are one to one $(X_A', \gamma)$.
The fifth step holds since $I(\gamma; \Pi | X_{A}, X_{B}) = 0$, i.e., the transcript $\Pi$ is independent of the  permutation function given $X_A$ and $X_B$. We finish the proof here.
\end{proof}

\end{document}